\newcites{Supp}{References for Supplementary}
\theoremstyle{plain}
\newtheorem{proposition}{\bf Proposition}[section]
\theoremstyle{definition}
\def\qmo{``}
\def\qmc{''}
\def\qmcsp{'' }
\theoremstyle{remark}
\begin{document}
\begin{frontmatter}
\title{Fast QR updating methods for statistical applications}
\runtitle{Fast QR methods for statistical applications}

\begin{aug}
\author[A]{\fnms{Mauro}~\snm{Bernardi}\ead[label=e1]{mauro.bernardi@unipd.it}\orcid{0000-0001-5759-9892}},
\author[A]{\fnms{Claudio}~\snm{Busatto}\ead[label=e2]{claudio.busatto@unifi.it}\orcid{0009-0008-6629-903X}}
\and
\author[A]{\fnms{Manuela}~\snm{Cattelan}\ead[label=e3]{manuela.cattelan@unipd.it}\orcid{0000-0002-5631-5293}}
\address[A]{Department of Statistical Sciences,
University of Padova\printead[presep={,\ }]{e1}}

\runauthor{M. Bernardi et al.}
\end{aug}

\begin{abstract}
This paper introduces fast R updating algorithms specifically designed for statistical applications, including regression, filtering, and model selection, where data structures change frequently. Although traditional QR decomposition is essential for matrix operations, it becomes computationally intensive when dynamically updating the design matrix in statistical models. The proposed algorithms efficiently update the R matrix without the need for recalculation of Q, thereby significantly reducing computational costs in practical computational scenarios. The provision of scalable solutions for high-dimensional regression models is a key strength of these algorithms, enhancing the feasibility of large-scale statistical analyses and model selection in data-intensive fields. A thorough simulation study and the analysis of real-world data demonstrate that the methods achieve a substantial reduction in computational time without compromising accuracy. The discussion illustrates the benefits of these algorithms across a wide range of models and applications in statistics and machine learning.
\end{abstract}

\begin{keyword}[class=MSC]
\kwd[Primary ]{62-08}
\kwd[; secondary ]{65F05}
\end{keyword}

\begin{keyword}
\kwd{high-dimensional statistical methods}
\kwd{machine learning}
\kwd{model selection}
\kwd{QR factorization}
\kwd{sparsity}
\end{keyword}

\end{frontmatter}

\section{Introduction}

\noindent The QR decomposition is a foundational technique in computational statistics and machine learning, providing an efficient method for matrix factorization that guarantees numerical stability. Its versatility encompasses a wide range of applications,  including  Cholesky factorization \citep{van_loan.1997}, eigenvalue computation \citep{jolliffe_cadima.2016},  canonical correlation analysis \citep{monahan.2001}, and, most importantly, solving linear systems and computing least squares estimates, which are cornerstones of most statistical analyses \citep{golub_van_loan.2013}.  For instance, in stepwise regression and model selection \citep{hastie_etal.2020}, the QR decomposition significantly improves the reliability of recursive least squares by effectively addressing multicollinearity and minimizing numerical precision errors. In filtering theory and sequential learning, where error propagation can lead to instability, QR-based square root Kalman filtering \citep{tracy.2022} offers a fast and robust approach for recursive calculations. Furthermore, the QR decomposition is essential for maintaining computational speed and accuracy when fitting complex structured regressions and generalized additive models, especially with large datasets and complex smoothing functions \citep{wood_etal.2016}, where alternative methods could become computationally intensive or prone to error.\par 
The QR decomposition breaks an $N\times p$ matrix, where $N>p$, into an $N \times N$ orthogonal matrix Q and an upper trapezoidal $N\times p$ matrix R. Although the QR factorization is computationally demanding---with a complexity of $\mathcal{O}(Np^2)$ for operations and $\mathcal{O}(Np)$ for storage \citep[see, e.g.][]{golub_van_loan.2013}---it is often unnecessary to recompute both Q and R from scratch when the matrix is modified by adding or removing elements. This is particularly advantageous in large-scale statistical methods, such as Bayesian model selection, hypothesis testing, routinely applied, for example, to infer graph structures \citep{colombo.2014}, iterative regression techniques \citep{efron_etal.2004}, and optimization via data augmentation \citep{boyd_etal.2011}, where frequent QR updates are required.\par  
Methods for updating matrix factorizations have a long-established history in statistics, particularly in solving least squares problems and computing covariance matrices along with their inverses \citep{hager.1989}. Block downdating has also been extensively explored, with significant contributions from \cite{yanev_kontoghiorghes.2004}, which have developed robust techniques for adjusting factorizations. Recent advancements by \cite{wei_etal.2020} have refined these methods, especially in recursive least squares applications, yielding significant improvements in computational efficiency. A major limitation of these techniques lies in their need to compute and store both the Q and R matrices, which can place significant strain on memory and computational resources. While both matrices are essential to the overall algorithm, in most statistical applications the Q matrix is calculated primarily to update R, adding unnecessary complexity and increasing resource demands. To overcome this challenge, we propose a streamlined approach to QR updating and downdating that emphasizes direct updates to the R matrix, thereby eliminating the need to store and compute the Q matrix. By updating the R matrix incrementally, the computational cost can be significantly reduced, avoiding the need for full matrix re-decomposition each time a change occurs \citep[see, e.g.][]{bjorck.2015}.  This efficiency is vital for handling large datasets and enabling real-time model adjustments. To support this, we provide a detailed analysis of the computational costs associated with various algorithms, empowering users to make well-informed decisions and select the most effective approach for their specific requirements.\par
We validate the proposed R updating algorithms through a combination of simulation studies and real-world data experiments. The simulation studies assess the performance of the algorithms in the challenging context of Bayesian model selection for high-dimensional regression. By simulating datasets with varying number of covariates, observations, and levels of correlation among explanatory variables, we evaluate how effectively the methods address large-scale problems. The results demonstrate a significant reduction in computational time while maintaining the same accuracy in posterior inference, compared to traditional QR decomposition methods. Furthermore, the proposed methods achieve up to 1500-fold  speed improvements compared to state-of-the-art algorithms, even in worst-case scenarios. Real-data experiments further illustrate the broad applicability of the R updating and downdating algorithms, underscoring their practical value in complex statistical modeling scenarios.\par 
The proposed algorithms efficiently address the computational challenges of high-dimensional data and solution path computation, enabling effective parameter tuning, model selection and validation, and parameter estimation. By providing access to the full solution path and demonstrating strong practical performance across diverse applications, they integrate seamlessly into modern data analysis workflows and advance contemporary statistical methodology.\par
The remainder of the paper is structured as follows. Section \ref{sec:qrupdate_algo} provides a comprehensive review of classical QR updating and downdating methods, detailing their mathematical formulations and algorithmic implementations. Section \ref{sec:thinqrupdate_algo} introduces the proposed R updating and downdating algorithms. Section \ref{sec:computational_costs} presents a precise analysis of the computational costs, quantified in terms of floating-point operations (FLOPS), for each algorithm. Sections \ref{sec:simulations} and \ref{sec:applications} offer empirical results from simulations and real-world case studies, demonstrating the effectiveness of R updates in various statistical applications. 
Section \ref{sec:discussion} provides a comprehensive analysis of how the proposed methods can improve the efficiency of existing techniques in statistics and machine learning, while also identifying promising directions for future research. The paper concludes with Section \ref{sec:conclu}. An extensive supplementary material document provides additional details on the proofs of the main results, algorithmic procedures, computational cost analyses, and further results from the simulation studies and real data analyses. Building on the methods developed in this paper, an open-source R package, \qmo\texttt{fastQR}\qmc, has been released. This is available on CRAN. The package provides efficient functions for building, updating and downdating QR decompositions, including the simultaneous modification of multiple rows and columns, facilitating practical application of the proposed algorithms in high-dimensional regressions and model selection.
%
\subsection{Notation}
%
Let $\mathbf{X}\in\mathbb{R}^{N \times p}$ denote a generic matrix of dimension $N\times p$, with $ N > p$ and real entries $x_{ij}$ and let $\mathbf{X}^\top$ denote its transpose. $\mathbf{X}\left[r_1:r_2,\,\right]$ and $\mathbf{X}\left[\,,c_1:c_2\right]$ denote the sub-matrices which include rows from  $r_1$ to $r_2$ or columns from $c_1$ to $c_2$, respectively, while $\mathbf{X}\left[r_1:r_2,c_1:c_2\right]$ is the block of the matrix which includes entries $x_{ij}$ such that $r_1 \leq i \leq r_2$ and $c_1 \leq j \leq c_2$. $\mathbf{I}_p$ denotes the identity matrix of dimension $p$ and $\mathbf{0}_{N,p}$ is the matrix of zero elements of dimension $N\times p$. A column vector of zeros of dimension $p$ is denoted as $\mathbf{0}_{p}$, while a column vector of ones of dimension $p$ is $\boldsymbol{\iota}_p$.
The square permutation matrix, that moves row $k$ to position $l$ is denoted by $\mathcal{P}(k, l)$. 
Finally, a Givens matrix $\mathbf{G}_k(i,j)$, that zeroes element $j$ in column $k$ of matrix $\mathbf X$, is an $N \times N$ identity matrix, with specific non-zero elements in cells $(i,i)$, $(i,j)$, $(j,i)$ and $(j,j)$, see \cite{golub_van_loan.2013} for details. The QR decomposition can be computed by applying a sequence of Givens rotations to sequentially set to zero all elements under the diagonal of $\mathbf X$, and matrix $\mathbf Q$ will be the product of such Givens matrices.

\section{QR updating algorithms}
\label{sec:qrupdate_algo}
%
Let $\mathbf{X} \in \mathbb{R}^{N \times p}$, with $N \ge p$, be of full column rank, with decomposition $\mathbf{X}= \mathbf{Q} \mathbf{R}$ where $\mathbf{Q}\in\mathbb{R}^{N\times N}$ is an orthogonal matrix and $\mathbf{R}\in\mathbb{R}^{N\times p}$ is an upper trapezoidal matrix. The aim is to update the matrices $\mathbf{Q}$ and  $\mathbf{R}$ when one or more rows (or columns) are added to (or deleted from) the matrix $\mathbf{X}$. We assume that the matrix $\mathbf{X}$ remains of full column rank after the modification to the rows or columns. Hereafter, we report the results for adding or deleting one row (or column), while the extension to a block of rows (or columns) is reported in Appendix \ref{secA1}. The algorithms for implementing the updates described  here, in Section \ref{sec:thinqrupdate_algo}, and in Appendix \ref{secA1}, are reported in Section \ref{appB} of the supplementary material.
%
\subsection{Adding and deleting rows}
\label{subsec:add_del_rows}
%
Matrices  $\mathbf{Q}$ and  $\mathbf{R}$ can be updated when one or more rows are added to or deleted from the matrix $\mathbf{X}$.
First, consider the addition of one row, $\mathbf{x}_{\star}\in\mathbb{R}^p$, in position $k$, such that 
\begin{equation*}
\mathbf{X}^{+} = \begin{bmatrix} \mathbf{X}\left[ 1 : (k-1), \;\; \right] \\ \mathbf{x}_{\star} ^{\top} \\ \mathbf{X}\left[ k : N, \; \right] \end{bmatrix} \quad \text{ and } \quad
\mathcal{P}(k, N+1) \mathbf{X}^{+} = \begin{bmatrix} \mathbf{X} \\ \mathbf{x}_{\star} ^{\top} \end{bmatrix},
\end{equation*}
where $\mathbf{X}^{+} \in \mathbb{R}^{(N+1) \times p}$ and the permutation matrix moves the vector $\mathbf{x}_{\star} ^{\top}$ to the bottom of matrix $\mathbf{X}^{+}$. Then,
\begin{equation}
\begin{bmatrix} \mathbf{Q}^{\top} & \mathbf{0}_N\\ \mathbf{0}_N^\top & 1 \end{bmatrix} \mathcal{P}(k, N+1) \mathbf{X}^{+} = \begin{bmatrix} \mathbf{R} \\ \mathbf{x}_{\star} ^{\top} \end{bmatrix} = \widetilde{\mathbf{R}}.
\label{eq:12}
\end{equation}
The new matrix $\mathbf{R}^{+}$, such that $\mathbf{X}^{+}= \mathbf{Q}^{+}\mathbf{R}^{+}$, is obtained by sequentially setting to zero the last row of $\widetilde{\mathbf{R}}$ though a sequence of Givens rotations, see Figure \ref{fig:WD_add_1_row} for a graphical representation of the procedure, while $\mathbf{Q}^{+}$ is recovered by applying the same set of Givens rotations 
\begin{align}
\label{eq:addingonerow_1}
\mathbf{R}^{+} &= \mathbf{G}_p (p, N+1)^\top \cdots \mathbf{G}_1(1, N+1)^\top\widetilde{\mathbf{R}}\\
\label{eq:addingonerow_Q}
\mathbf{Q}^{+}&=(\mathcal{P}(k, N+1))^\top\begin{bmatrix} \mathbf{Q} & \mathbf{0}_N\\ \mathbf{0}_N^\top & 1 \end{bmatrix}\mathbf{G}_1(1, N+1)\cdots\mathbf{G}_p (p, N+1).
\end{align}
Now, consider the deletion of one row, so that
\begin{equation*}
\mathbf{X}^{-} = \begin{bmatrix} \mathbf{X}\left[ 1 : (k-1), \;\right] \\ \mathbf{X}\left[ (k+1) : N, \;\; \right] \end{bmatrix} \quad \text{ and } \quad \mathcal{P}(k,1) \mathbf{X} = \begin{bmatrix} \mathbf{x}_k^\top \\ \mathbf{X}^{-} \end{bmatrix}.
\end{equation*}
Thus, it is necessary to find $\mathbf{Q}^{-}$ and $\mathbf{R}^{-}$ such that
\begin{equation*}
\mathcal{P}(k,1)\mathbf{X} = \begin{bmatrix} \alpha & \mathbf{0}_{N-1}^\top \\ \mathbf{0}_{N-1} & \mathbf{Q}^{-} \end{bmatrix} \begin{bmatrix} \mathbf{z}_k^\top \\ \mathbf{R}^{-} \end{bmatrix}=\mathcal{P}(k,1)\mathbf{Q}\mathbf{R}=\mathbf{Q}_{p}\mathbf{R},
\end{equation*}
where $\mathbf{Q}_{p}=\mathcal{P}(k,1)\mathbf{Q}$. The $(N-1)$ elements of the first row of $\mathbf{Q}_p$ can be zeroed  leveraging a sequence of Given rotations
\begin{equation} 
\label{eq:deleteonerow_givens}
\mathbf{G}_1(1, 2)^{\top} \cdots \mathbf{G}_1(N-1, N)^{\top}  \mathbf{Q}_p^\top = \begin{bmatrix} \alpha & \mathbf{0}_{N-1}^\top \\ \mathbf{0}_{N-1} & (\mathbf{Q}^{-})^\top \end{bmatrix}.
\end{equation}
The same set of Givens rotations applied to $\mathbf{R}$ yields
\begin{equation*} 
\mathbf{G}_1(1, 2)^{\top} \cdots\mathbf{G}_1(N-1, N)^{\top}  \mathbf{R}  = \begin{bmatrix} \mathbf{z}_k^\top \\ \mathbf{R}^{-} \end{bmatrix},
\end{equation*}
where $\mathbf{ R}^{-}$ is upper trapezoidal, so
\begin{align*}
\mathcal{P}(k,1)\mathbf{X} &= \Big[\mathbf{Q}_p  \mathbf{G}(N-1, N)  \cdots \mathbf{G}(1, 2)\Big]
 \Big[ \mathbf{G}(1, 2)^{\top} \cdots \mathbf{G}(N-1, N)^{\top} \mathbf{R}\Big]  \nonumber \\
&= \begin{bmatrix} \alpha & \mathbf{0}_{N-1}^\top \\ \mathbf{0}_{N-1} & \mathbf{Q}^{-} \end{bmatrix} \begin{bmatrix} \mathbf{z}_k^\top \\ \mathbf{R}^{-} \end{bmatrix}.
\end{align*}
Hence, the new matrices $\mathbf{R}^{-}$ and $\mathbf{Q}^{-}$ are obtained by appropriate multiplication of $\mathbf{R}$ and $\mathbf{Q}$ by the sequence of Givens matrices used to erase $\mathbf{Q}_p[1,]$. 
%
\begin{figure}[!t]
\begin{center}
\subfigure[{\scriptsize Add $1$ row}\label{fig:WD_add_1_row}]{\includegraphics[height=2.4cm]{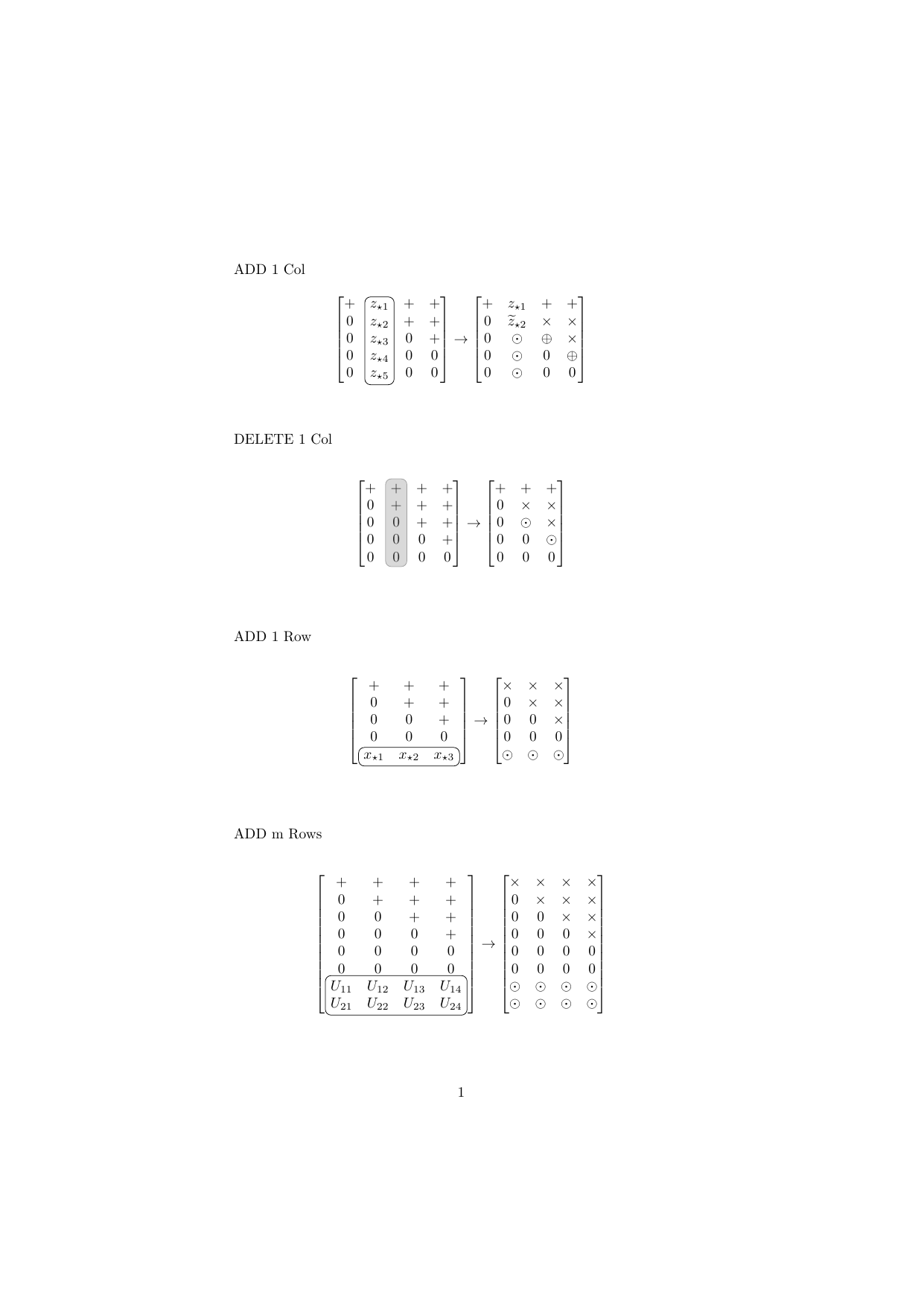}}
\hspace*{0.05em}
\subfigure[{\scriptsize Add 1 column}\label{fig:WD_add_col}]{
\raisebox{-0.5mm}{\includegraphics[height=2.6cm]{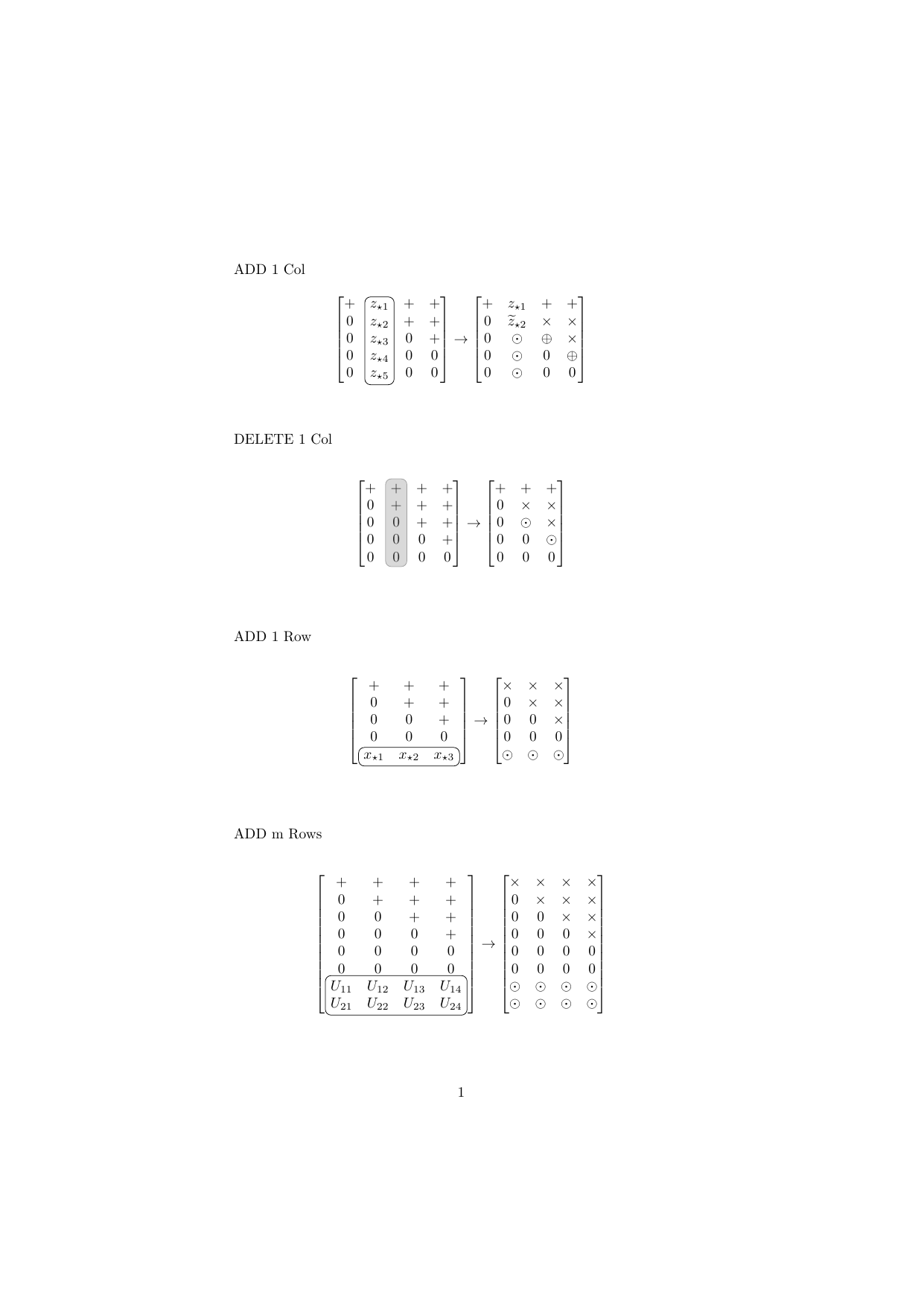}}}
\subfigure[{\scriptsize Delete 1 column}\label{fig:WD_del_col_red}]{
\includegraphics[height=2.5cm]{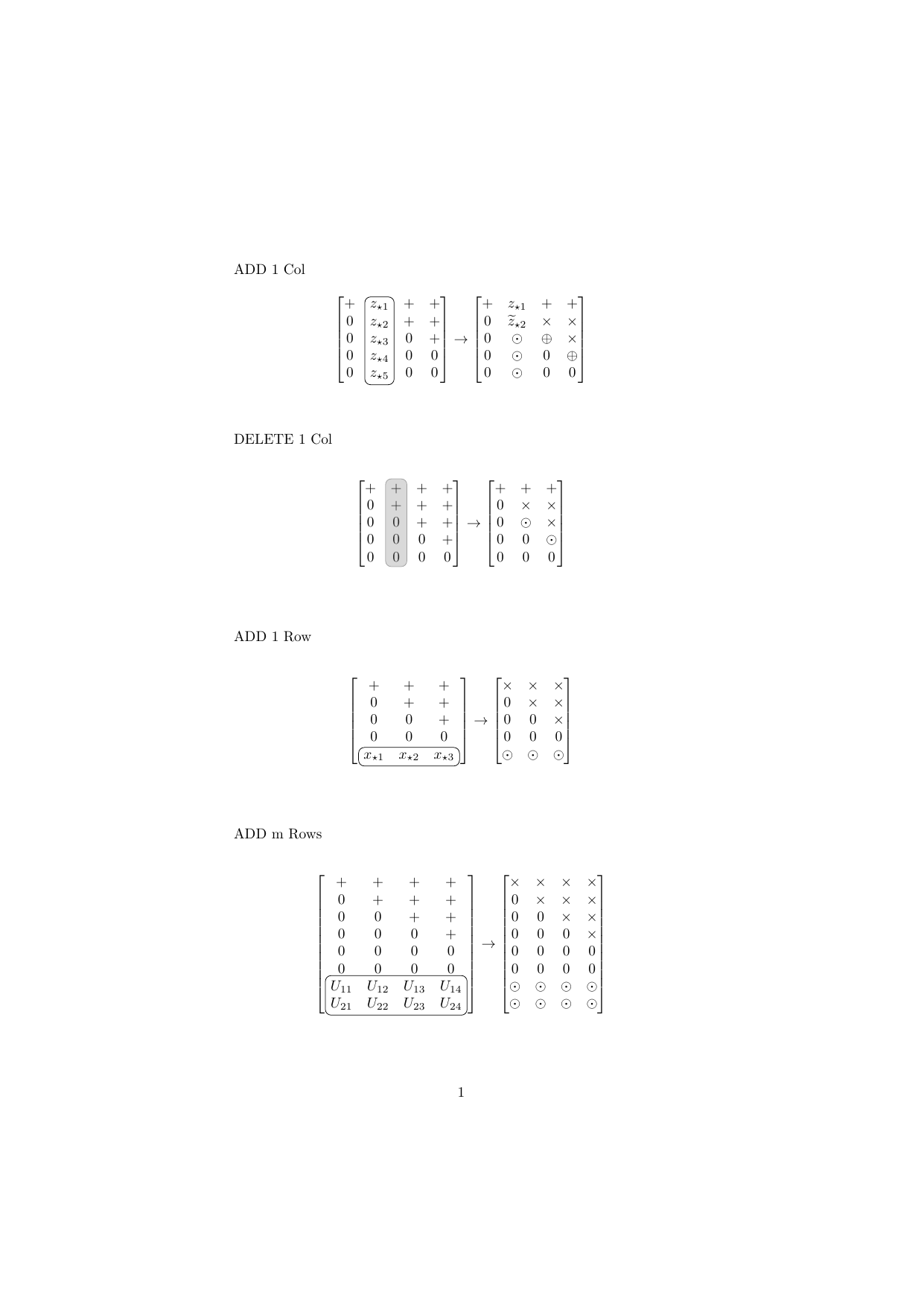}} \quad
\caption{Graphical representation of the effect of Givens rotations on $\widetilde{\mathbf{R}}$ in order to obtain the new $\mathbf{R}$. The boxed row/column are added cells, the column in gray is deleted. $\odot$ indicates cells that are zeroed, $\oplus$ indicates cells that were $0$ and after the update assume values different from $0$, finally $\times$ indicate cells whose value is modified from the starting one $(+)$ as a consequence of the update.}
\label{fig:R_modifications}
\end{center}
\end{figure}
%
%
\subsection{Adding and deleting columns}
\label{subsec:add_del_cols}
%
\noindent Now consider the update of $\mathbf{Q}$ and $\mathbf{R}$ when one or more columns are added to or deleted from the matrix $\mathbf{X}$.
First, consider the inclusion of a column, $ \mathbf{x}_\star\in\mathbb{R}^N$, in generic positions $k$, so that
$\mathbf{X}^{+}= \begin{bmatrix} \mathbf{X}\left[ ,\; 1 : (k-1) \right] & \mathbf{x}_\star & \mathbf{X}\left[,\;  k : p\right] \end{bmatrix}$ where $\mathbf{X}^{+} \in \mathbb{R}^{N \times (p+1)}$.
Leveraging the QR decomposition and defining $\mathbf{z}_\star = \mathbf{Q}^{\top} \mathbf{x}_\star$,  we get
\begin{equation}
\label{eq:addonecol}
\mathbf{Q}^{\top} \mathbf{X}^{+} = \begin{bmatrix} \mathbf{R}\left[ , \;\; 1:(k-1) \right] & \mathbf{z}_\star &  \mathbf{R}\left[ , \;\; k:p \right]\end{bmatrix} = \widetilde{\mathbf{R}}.
\end{equation}
Computing $\mathbf{R}^{+}$ requires setting to zero elements $\mathbf{z}_\star \left[ (k+1) : N \right]$ and filling entries $\mathbf{R}\left[i+1, \;\; i \right]$, $i = k, \dots, p$.
This can be achieved by applying $N-k$ Givens rotations to $\widetilde{\mathbf{R}}$, see Figure \ref{fig:WD_add_col} for a graphical representation of the procedure. The new matrices $\mathbf{R}^{+}$ and $\mathbf{Q}^{+}$ are then obtained as
\begin{align*}
\mathbf{R}^{+} &=\mathbf{G}_k (k, k+1)^{\top} \cdots \mathbf{G}_k(N-1, N)^{\top} \widetilde{\mathbf{R}}, \\ \nonumber 
\mathbf{Q}^{+}&=\mathbf{Q} \mathbf{G}_k(N-1, N) \cdots \mathbf{G}_k (k, k+1). \nonumber 
\end{align*}
Then, we obtain $\mathbf{X}^{+} = \mathbf{Q}^{+} \mathbf{R}^{+}$. \par
Suppose column $k$ is deleted from matrix $\mathbf{X}$, so that the new matrix is $\mathbf{X}^{-}= \begin{bmatrix} \mathbf{X}\left[, \; 1 : (k-1) \right] & \mathbf{X}\left[,  \; (k+1) : p \right] \end{bmatrix}$, then the updated matrix $\mathbf{R}$ is obtained by setting to zero the elements on the diagonal of the last  $p-k$ columns and then removing column $k$, see Figure \ref{fig:WD_del_col_red}. 
Again, exploiting Givens rotations, new matrices $\mathbf{R}^{-}$ and $\mathbf{Q}^{-}$ are obtained as 
\begin{align}
\label{eq:delonecol}
\mathbf{R}^{-}&=\mathbf{G}_p(p-1, p)^{\top} \cdots \mathbf{G}_{k+1}(k, k+1)^{\top} \widetilde{\mathbf{R}}, \\ 
\mathbf{Q}^{-}&=\mathbf{Q} \mathbf{G}_{k+1}(k, k+1) \cdots \mathbf{G}_p(p-1, p).\nonumber
\end{align}
where $\widetilde{\mathbf{R}}$ is the matrix $\mathbf{R}$ with column $k$ removed. The update yields  $\mathbf{X}^{-} = \mathbf{Q}^{-} \mathbf{R}^{-}$.

\section{R updating algorithms}
\label{sec:thinqrupdate_algo}
%
In this section, we focus exclusively on updating matrix $\mathbf{R}$, rather than updating the entire QR decomposition. We anticipate that the computational cost of updating just one matrix will be lower than the cost of updating both. Furthermore, the elimination of the matrix $\mathbf{Q}$ results in a considerable reduction in storage requirements, given its size of $N \times N$. 
%
\subsection{Thin QR decomposition}
%
\noindent Let $\mathbf{X} \in \mathbb{R}^{N\times p}$ be a full column rank matrix, with $N \ge p$. Following the work of \cite{golub_van_loan.2013}, it can be shown that, for $k \in \left\{1, \dots, p\right\}$ it yields
\begin{equation*}
\text{span} \left\{\mathbf{x}_1, \dots, \mathbf{x}_k\right\} = \text{span} \left\{\mathbf{q}_1, \dots, \mathbf{q}_k\right\},
\end{equation*}
where $\mathbf{q}_1, \dots, \mathbf{q}_k$, are the first $k$ columns of the matrix $\mathbf{Q}$ from the QR decomposition of $\mathbf{X}$, and form an orthonormal basis for the same subspace. Here, $\text{span} \left\{S\right\}$ denotes the smallest linear subspace containing the set $S$. This result allows a reduced QR decomposition such that
\begin{equation*}
\begin{bmatrix} \mathbf{Q}_1 & \mathbf{Q}_2 \end{bmatrix}^\top \mathbf{X} = \begin{bmatrix} \mathbf{R}_1 \\ \mathbf{0}_{N-p,p} \end{bmatrix},
\end{equation*}
where $\mathbf{Q}_1 \in \mathbb{R}^{N \times p}$ is a matrix with orthonormal columns, $\mathbf{Q}_2 \in \mathbb{R}^{N \times \left(N-p\right)}$ and $\mathbf{R}_1 \in \mathbb{R}^{p \times p}$. It is straightforward to show that $\mathbf{X} = \mathbf{Q}_1 \mathbf{R}_1$. 
In some statistical applications only matrix $\mathbf{R}_1$ may be necessary. For example, quantity $\mathbf{X}^\top \mathbf{X}$, or its inverse, can be conveniently computed using matrix $\mathbf{R}_1$. In such contexts, consistent computational savings may be possible. The following section contains the details in case one row or column is added to or deleted from the matrix $\mathbf{X}$, while the extension to more rows or columns is detailed in Appendix \ref{secA1}. 
%
\subsection{Adding and deleting rows}
%
\noindent It is rather straightforward to see that when the update of the QR factorization is based only on entries of matrices $\mathbf{R}$ and $\mathbf{X}$, the update of matrix $\mathbf{R}_1$ is not problematic. This happens when an addition of rows or a deletion of columns is required. The other instances need matrix $\mathbf{Q}$, so alternative updating methods are necessary.\par
Consider the addition of one row in position $k$ and the QR update as described in equations \eqref{eq:addingonerow_1}--\eqref{eq:addingonerow_Q}. Since the Givens rotations are computed on the basis of the previous matrix $\mathbf{R}$ and the added row, then  an updated matrix $\mathbf{R}_1$ can be obtained as follows
\begin{equation}
\label{eq:addingonerow_R}
\begin{bmatrix}\mathbf{R}_1^{+} \\
\mathbf{0}^{\top}_p
\end{bmatrix}= \mathbf{G}_p \left(p, p+1\right)^\top\cdots \mathbf{G}_1\left(1, p+1\right)^\top\widetilde{\mathbf{R}}_1,
\end{equation}
where $\widetilde{\mathbf{R}}_1 = \begin{bmatrix} \mathbf{R}_1 \\ \mathbf{x}_{\star}^\top \end{bmatrix}$. 
The complete proof is given in Section \ref{appA} of the supplementary material.\par
The update of matrix $\mathbf{R}_1$ after the deletion of one row is more challenging as, in the full QR decomposition, it requires matrix $\mathbf{Q}$, see Section \ref{subsec:add_del_rows}.
An alternative method that avoids the computation of quantities related to matrix $\mathbf{Q}$ is based on equation \eqref{eq:addingonerow_R}, in which matrix $\mathbf{R}_1^{+}$ is now known and $\widetilde{\mathbf{R}}_1$ should be recovered. This corresponds to solving
\begin{equation*}
\begin{bmatrix}\mathbf{R}_1 \\
\mathbf{0}^{\top}_p
\end{bmatrix}= \mathbf{G}_p \left(p, p+1\right)^\top\cdots \mathbf{G}_1\left(1, p+1\right)^\top\begin{bmatrix}\mathbf{R}_{1}^{-}\\ \mathbf{x}_k^\top \end{bmatrix},
\end{equation*}
where $\mathbf{R}_1$ and $\mathbf{x}_{k}^\top$ are known. Entries of matrix $\mathbf{R}^{-}_1$ can thus be computed iteratively by applying Algorithm \ref{alg:thinQRdeleterow} in Section \ref{appB} of the supplementary material. This algorithm essentially reverses the approach for adding one row and requires an iterative procedure.\par
%
\subsection{Adding and deleting columns}
%
\noindent In this Section, we consider the case of updating matrix $\mathbf{R}_1$ after the addition or deletion of one column from $\mathbf{X}$, analogously to Section \ref{subsec:add_del_cols}. 
Assume that a column is added at the end of matrix $\mathbf{X}$, i.e. $\mathbf{X}^{+}= \begin{bmatrix} \mathbf{X} & \mathbf{x}_\star  \end{bmatrix}$ where $\mathbf{X}^{+} \in \mathbb{R}^{N \times (p+1)}$, then equation \eqref{eq:addonecol} becomes 
\begin{equation*}
\begin{bmatrix} \mathbf{Q}^\top_1 \\  \mathbf{Q}^\top_2 \end{bmatrix} \mathbf{X}^{+} = \begin{bmatrix} \mathbf{R}_1 & \mathbf{z}_{\star 1} \\   \mathbf{0}_{N-p, p} & \mathbf{z}_{\star 2} \end{bmatrix} = \widetilde{\mathbf{R}}^{+},
\end{equation*}
where $\mathbf{z}_{\star 1}= \mathbf{Q}^\top_{1} \mathbf{x}_{\star}$ and $\mathbf{z}_{\star 2}= \mathbf{Q}^\top_{2} \mathbf{x}_{\star}$. Matrix $\mathbf{R}_1^{+}$ can be obtained by setting to zero the last $N-p-1$ elements of the last column of $\widetilde{\mathbf{R}}^{+}$ through a sequence of Givens matrices. However, this procedure requires the evaluation of matrices $\mathbf{Q}_{1}$ and $\mathbf{Q}_{2}$. In order to avoid such computation, it is possible to exploit the relation $(\mathbf{X}^{+})^{\top} \mathbf{X}^{+}= (\mathbf{R}_{1}^{+})^{\top} \mathbf{R}_{1}^{+}$, so 
\begin{equation*}
\mathbf{R}^{+}_1 = \begin{bmatrix} \mathbf{R}_1 & \mathbf{R}^{-\top}_1 \mathbf{X}^\top \mathbf{x}_{\star} \\
0 & (\mathbf{x}^\top_{\star}\mathbf{x}_{\star} - \mathbf{z}_{\star 1}^{\top} \mathbf{z}_{\star 1})^{1/2} \end{bmatrix},
\end{equation*}
where $\mathbf{z}_{\star 1}=\mathbf{R}_1^{-\top}\mathbf{X}^\top\mathbf{x}_\star$, 
see Section \ref{appA} of the supplementary material.\par
We note that, while the update of the full QR factorization can be obtained when a column is added in any position of the matrix $\mathbf{X}$, the proposed update of the $\mathbf{R}_1$ matrix can be performed only when the new columns are added at the right end of the matrix. Although this may appear a limit of the proposed methodology, we have realized that in statistical applications of the QR factorization this is not an actual limitation, as it typically occurs that new columns can be added at the right end of the matrix.\par
The update of the QR factorization when deleting a column does not require knowledge of matrix $\mathbf{Q}$, hence the extension to the update of matrix $\mathbf{R}_{1}$ is quite straightforward. 
Let $\mathbf{X}^-$ be the reduced form of $\mathbf{X}$ after the deletion of column $k$, then $\widetilde{\mathbf{R}}_1 = \begin{bmatrix}\textbf{R}_1[, 1:(k-1)] & \mathbf{R}_1[, (k+1):p] \end{bmatrix}$ is the $\mathbf{R}_1$ matrix without column $k$. Updated matrix $\mathbf{R}_1^-$ can be obtained by setting to $0$ the elements on the sub-diagonal of the last $p-k$ columns of $\widetilde{\mathbf{R}}_1$. So, similarly to the full QR update, see equation \eqref{eq:delonecol}, this can be done by applying a set of Givens rotations as follows
\begin{equation*}
\begin{bmatrix}\mathbf{R}_1^{-} \\
\mathbf{0}^{\top}_{p-1}
\end{bmatrix} = \mathbf{G}_p(p-1, p)^\top  \cdots \mathbf{G}_{k+1}(k, k+1)^\top \widetilde{\mathbf{R}}_1.
\end{equation*}
%

\section{Computational costs}
\label{sec:computational_costs}
%
This section investigates the computational efficiency of several QR updating strategies by combining theoretical floating-point operations analyses with empirical timing experiments. Our goal is to quantify the gains obtained by updating existing decompositions—either the full QR factors or only the upper-triangular factor R—rather than recomputing them from scratch. We begin by summarizing the  computational costs associated with the various update and downdate operations, and then examine how these theoretical differences translate into practical runtime improvements across a broad range of matrix sizes and update scenarios. Additional experiments illustrating the computational benefits of these strategies in concrete statistical applications are presented in Section \ref{sec:simulations}.\par
Table \ref{tab:QRcosts} reports the leading term of the number of floating-point operations (FLOPS) \cite[see,][Ch. 1]{golub_van_loan.2013} required to add or remove rows or columns using QR-based and R-based updating algorithms, respectively.
Proofs for these expressions and the exact computational costs are given in Section \ref{appC} of the supplementary material. 
A comparison of the dominant terms shows that updating the R factor is significantly more economical than updating the full QR decomposition. For example, adding a single row requires 
$6Np$ operations for a QR update but only $3p^2$ for an R update---a substantial reduction, since $N \ge p$. Removing rows displays an even larger disparity. Indeed, the cost for the QR update increases, whereas the R update remains essentially unchanged. Similar gains hold when adding or removing blocks of $m>1$ rows. Notably, the complexity of adding $m\geq 2$ rows sequentially, is $\mathcal{O}(mNp)$, with a dominant term of $6mNp$, which exceeds the complexity of the algorithm for adding $m$ rows in bulk, $4mNp$. Moreover, the cost of column removal depends on the position of the updated column for both approaches, with deletions near the right boundary being the least expensive.
We note, however, that the addition of columns is allowed only on the rightmost edge of the matrix $\mathbf{X}$ when employing the R updates.
%
\begin{table}[!t]
\caption{Arithmetic operations (FLOPS) required by QR and R updating algorithms for performing addition or deletion of rows or columns of an $N \times p$ matrix.}
\centering
\begin{tabular}{p{4.6cm} >{\centering\arraybackslash}p{2.6cm} >{\centering\arraybackslash}p{2.6cm}}
\toprule
Description & \multicolumn{2}{c}{Most relevant term} 	\\
\cmidrule{2-3}
& QR & R\\
\midrule
add $1$ row	&		$6Np$ & 	$3p^2$		\\
add the $k$-th column	&	$8N^2$ & --		\\
add the $(p+1)$-th column	& $8N^2$ &	$2Np$	\\
remove $1$ row	&	$6N^2$	& $3p^2$		\\
remove the $k$-th column	&	$6N(p-k)$ &	$3(p-k)^2$	\\												
add $m$ rows	&	$4mNp$	&	$2mp^2$	\\
add $m$ columns	&	$8mN^2$	&	$2mNp$	\\
remove $m$ rows	&	$6mN^2$	&	$2mp^2$	\\
remove $m$ columns	&	$4mNp$  &	$2mp^2$	\\					
\bottomrule
\end{tabular}
\label{tab:QRcosts}
\end{table}
%
It is important to emphasize that the correct baseline for evaluating the improvement offered by the proposed method is the cost of updating both
$\mathbf{Q}$ and $\mathbf{R}$, as reported in Table~\ref{tab:QRcosts}. From the perspective of this article, any fair comparison must account for the cost of maintaining the entire QR factorization under repeated modifications of the design matrix. To illustrate this point, consider the simple case of appending a column to the end of $\mathbf{X}$ (see Algorithm~\ref{alg:qr_add_one_col} in Section~\ref{appB} and Proposition~\ref{prop:qrupdate_add_one_col} in Section~\ref{appC} of the supplementary material). Appending the vector $\mathbf{x}$ yields 
$\mathbf{X}^{+}=\begin{bmatrix}\mathbf{X}&\mathbf{x}\end{bmatrix}$, and updating the triangular factor 
$\mathbf{R}$ requires computing $\mathbf{R}^{+}=\begin{bmatrix}\mathbf{R} & \mathbf{Q}^\top \mathbf{x}\end{bmatrix}$. Although this operation is algebraically simple, it depends explicitly on the orthogonal factor $\mathbf{Q}$, whose update cost cannot be ignored in an iterative setting where the matrix $\mathbf{X}$ is modified repeatedly. 
Table~\ref{tab:QRcosts} reports a baseline cost of $\mathcal{O}\left(N^2\right)$ for multiplying by the $N \times N$ matrix $\mathbf{Q}$, corresponding to a dense-matrix representation, although in practice $\mathbf{Q}$ is stored and applied implicitly via Givens rotations or Householder reflections. From a single-factorization perspective, an implicit representation of $\mathbf{Q}$ allows $\mathbf{Q}^{\top}$ to be applied to a vector at a reduced cost, of order $\mathcal{O}(pN-p^2/2)$, regardless of whether the orthogonal factor is expressed through Givens rotations or Householder reflections.
However, this perspective does not reflect the computational setting considered here. The reduced cost $\mathcal{O}\left(pN-p^2 / 2\right)$ assumes that the implicit representation of $\mathbf{Q}$ is already available and can be applied directly. In contrast, our focus is on sequentially maintaining a valid QR factorization under repeated modifications of $\mathbf{X}$. From this point of view, the representation of $\mathbf{Q}$ itself must be updated consistently at each step, requiring the accumulation, storage, and application of additional orthogonal transformations. \par
\begin{figure}[!t]
\begin{center}
\subfigure{
\includegraphics[width=0.45\textwidth]{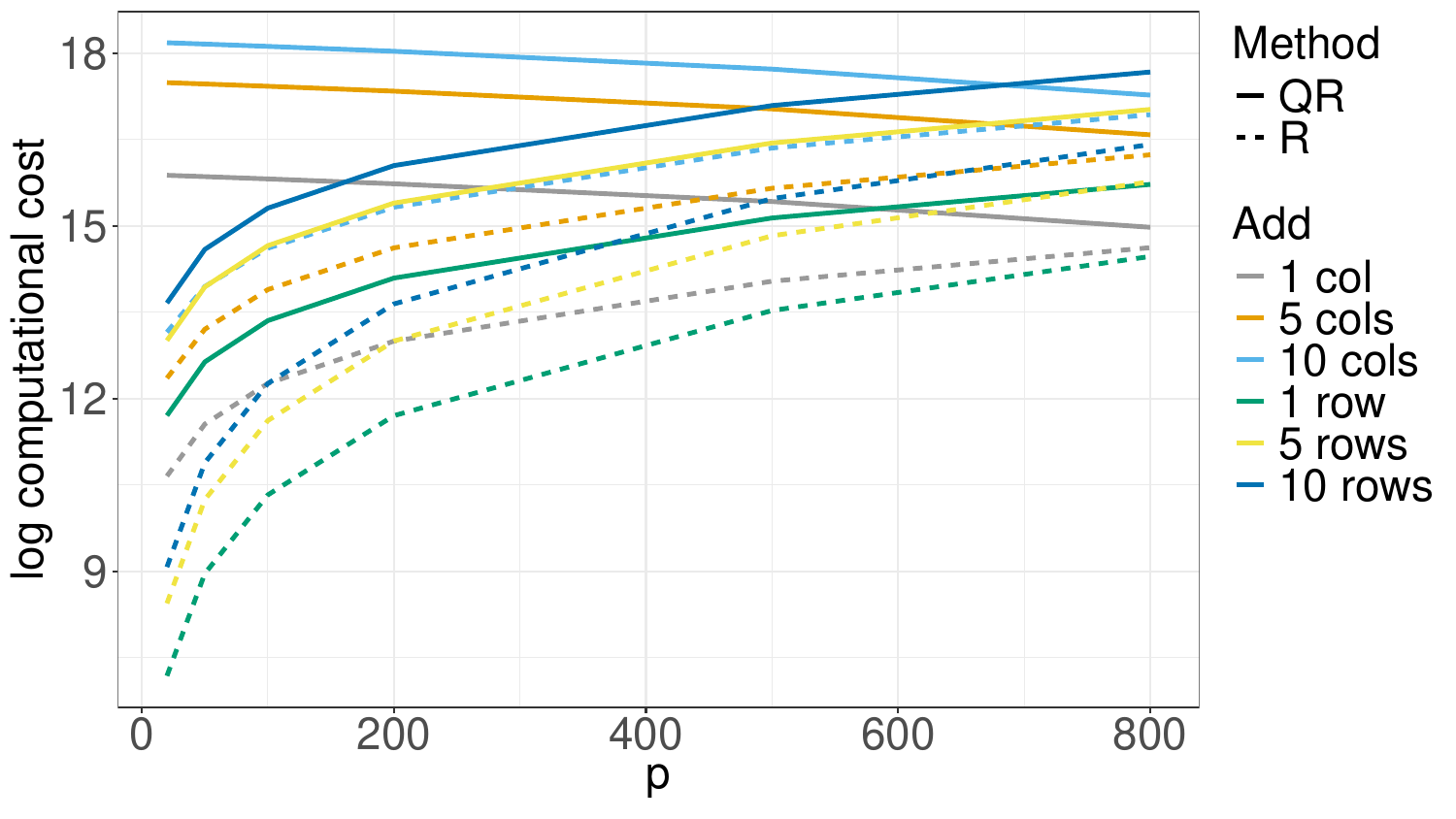}}\qquad
%
\subfigure{
\includegraphics[width=0.45\textwidth]{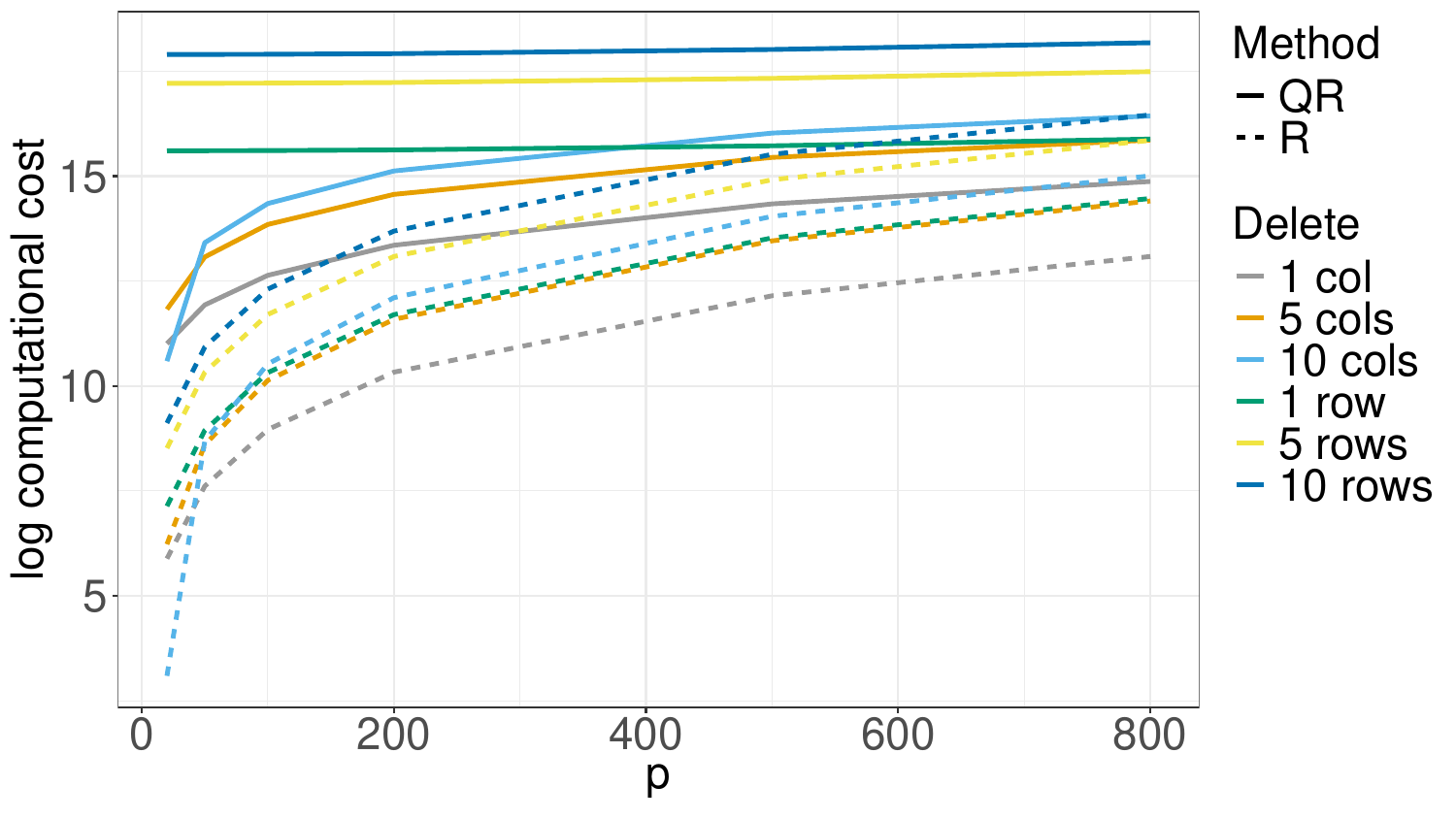}}\\
%
\subfigure{
\includegraphics[width=0.45\textwidth]{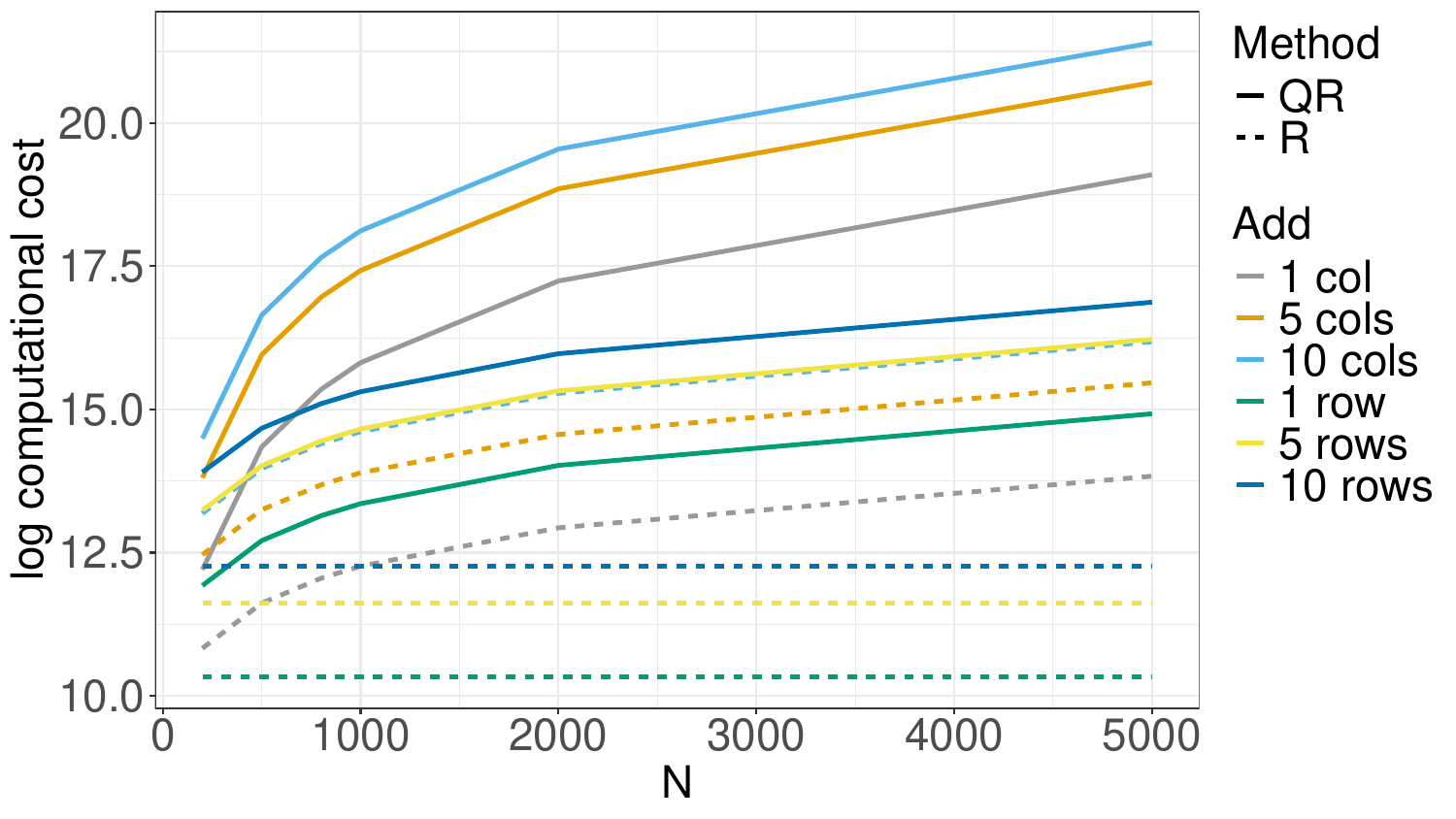}}\qquad
\subfigure{
\includegraphics[width=0.45\textwidth]{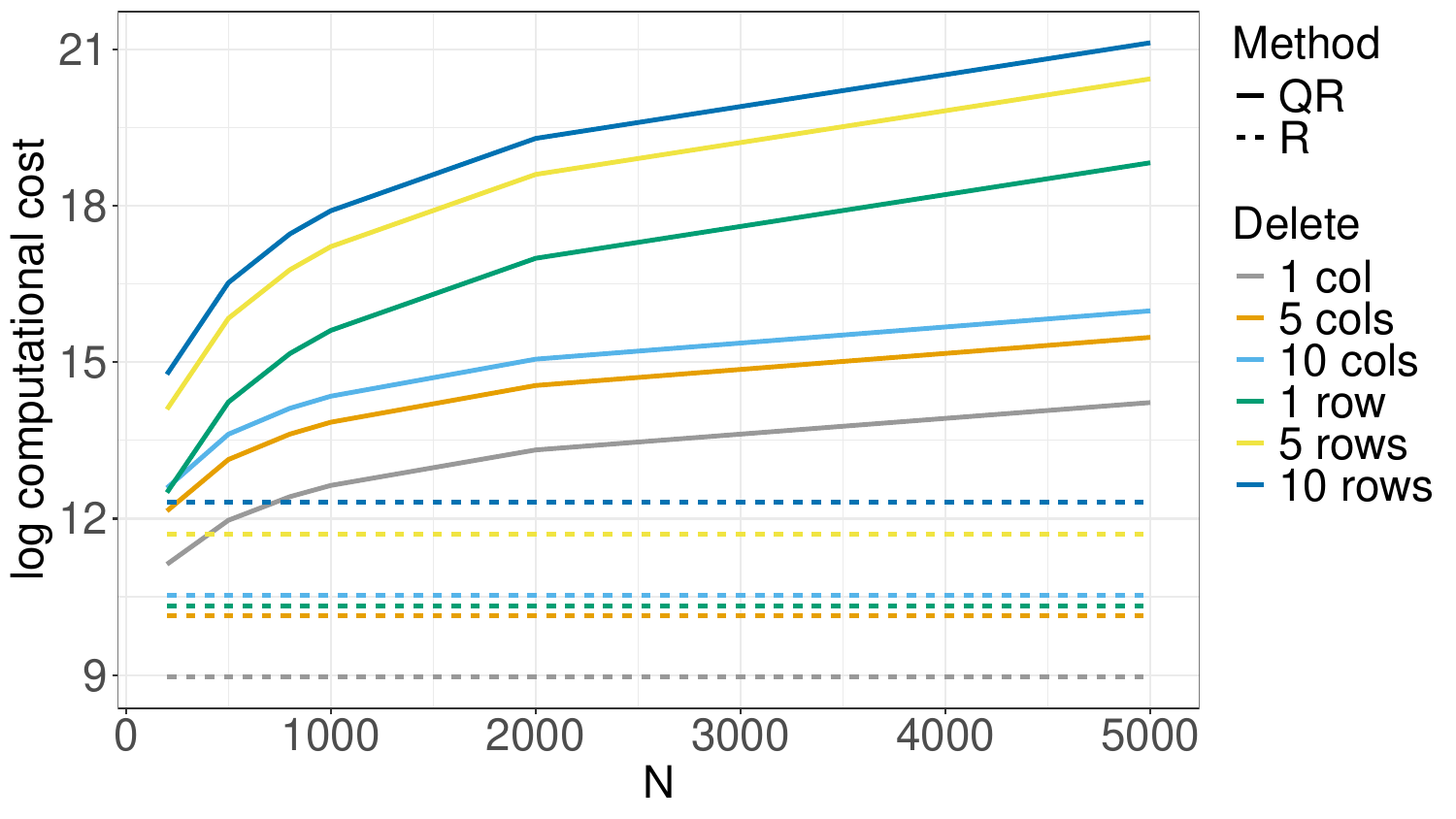}}
\caption{Logarithm of the exact computational costs of adding (left panels) or deleting (right panels) 1, 5 or 10 columns or rows. Top row: $N=1000$ and $p \in \{20, 50, 100, 200, 500, 800\}$. Bottom row: $p=100$ and $N \in \{200, 500, 800, 1000, 2000, 5000\}$.}
\label{fig:computational_costs}
\end{center}
\end{figure}
%
Figure \ref{fig:computational_costs} visualizes the exact FLOPS counts on a logarithmic scale, illustrating how the costs evolve with $N$, $p$, (with $N>p$ by construction) and the number of rows or columns modified. 
We consider the addition or deletion of $1$, $5$, or $10$ rows or columns. The costs are calculated under the assumption that columns are appended to the rightmost side of the matrix and that, when columns are deleted, they are removed from the central part of the matrix. We adopt this convention because, in the practical applications discussed in Sections \ref{sec:simulations} and \ref{sec:applications}, the columns represent specific statistical variables. Although new columns can be added in any position without affecting interpretation, any deletions must target columns that correspond precisely to the variables to be removed.
When columns are appended on the right, the QR update cost decreases as 
$p$ grows because fewer elements must be zeroed, eventually approaching the cost of the R update. Notably, R-based updates depend on $N$ only when adding columns. Across all scenarios, updating the R factor is consistently less expensive than its QR-based counterpart. It is important to note that these observations focus purely on computational costs and do not incorporate memory advantages associated with avoiding explicit storage of the Q factor. Sections \ref{appB} and \ref{appC} of the supplementary material also detail the algorithms and costs for updating the QR and R matrices when deleting $m$ non-adjacent columns.\par
%
\begin{table}[t]
\caption{Execution time (in seconds per 10 replications) for removing 10 rows or columns from the matrix $\mathbf{X}\in\mathbb{R}^{N\times p}$, stratified by matrix dimensions $(N,p)$. Times for reconstructing the upper-triangular factor $\mathbf{R}$ using the implementations in the R packages \texttt{base}, \texttt{Matrix}, and \texttt{fastQR}, or downdating the full QR decomposition (\texttt{QR downdate}) or the $\mathbf{R}$ factor alone (\texttt{R downdate}), both provided by the \texttt{fastQR} package.}
\setlength{\tabcolsep}{8pt}   
\begin{tabular}{lrrrrrr}\toprule
{$N$} & {$p$} & {\texttt{base}} & {\texttt{Matrix}} & {\texttt{fastQR}} & {\texttt{QR downdate}} & {\texttt{R downdate}} \\
\midrule
\multicolumn{7}{c}{\textit{Rows}}\\
10000 & 1000 & 50.38 & 50.88 & 20.43 & 61.37 & 0.14 \\ 
10000 & 3000 & 446.89 & 441.03 & 161.54 & 69.69 & 1.46 \\ 
10000 & 5000 & 1178.04 & 1155.34 & 415.52 & 83.81 & 4.06 \\ 
15000 & 1000 & 80.32 & 80.72 & 31.44 & 134.90 & 0.14 \\ 
15000 & 3000 & 693.65 & 685.86 & 248.54 & 145.13 & 1.46 \\ 
15000 & 5000 & 1867.08 & 1825.81 & 651.65 & 177.36 & 4.55 \\ 
20000 & 1000 & 108.39 & 107.80 & 42.79 & 245.56 & 0.14 \\ 
20000 & 3000 & 956.05 & 958.59 & 341.12 & 271.42 & 1.45 \\ 
20000 & 5000 & 2555.47 & 2429.76 & 779.00 & 236.74 & 3.70 \\ 
\midrule
\multicolumn{7}{c}{\textit{Columns}}\\
10000 & 1000 & 44.19 & 43.25 & 19.02 & 18.15 & 0.12 \\ 
10000 & 3000 & 392.36 & 391.44 & 144.67 & 20.02 & 1.39 \\ 
10000 & 5000 & 1146.06 & 1143.46 & 415.26 & 33.78 & 4.30 \\ 
15000 & 1000 & 74.48 & 74.54 & 30.31 & 25.12 & 0.13 \\ 
15000 & 3000 & 683.77 & 696.93 & 245.01 & 35.58 & 1.37 \\ 
15000 & 5000 & 1858.47 & 1836.10 & 662.18 & 53.52 & 4.25 \\ 
20000 & 1000 & 100.42 & 101.61 & 41.41 & 55.58 & 0.14 \\ 
20000 & 3000 & 933.94 & 940.96 & 340.65 & 59.85 & 1.41 \\ 
20000 & 5000 & 2508.69 & 2546.73 & 909.69 & 82.68 & 4.24 \\ 
\bottomrule
\end{tabular}
\label{tab:QR_downdate_rows_columns}
\end{table}
%
%
To complement the theoretical analysis, we benchmark several implementations across a range of matrix dimensions, block sizes, and update positions. This empirical study is essential not only for quantifying the practical behavior of the proposed update-based routines but also for situating them relative to standard QR implementations. Comparing against widely used baseline routines is necessary to assess whether the methodological gains translate into meaningful computational advantages in real settings. It also highlights additional practical costs---such as storage requirements---and clarifies the absolute computational burden of each approach. Relative improvements alone can be misleading: saving a large percentage of an already inexpensive operation may matter less than a modest improvement on a computationally heavy one. Understanding the actual magnitudes involved is therefore crucial for evaluating efficiency in practice.
The experiments compare the following methods:
\begin{itemize}
\item[{\it (i)}] full recomputation of the upper-triangular factor $\mathbf{R}$ matrix using the \texttt{base}, \texttt{Matrix}, and \texttt{fastQR} packages;
\item[{\it (ii)}] update-based routines for the full QR decomposition (\texttt{updateQR}), leveraging the \texttt{fastQR} package;
\item[{\it (iii)}] update-based routines for the upper-triangular factor alone (\texttt{updateR}), leveraging the \texttt{fastQR} package.
\end{itemize}
Tables \ref{tab:QR_downdate_rows_columns} and \ref{tab:QR_update_rows_columns} report runtimes for deleting and adding blocks of $10$ columns, averaged over $10$ repetitions and stratified by matrix size. Results are shown for update positions starting at $k=\lfloor p/4 \rfloor$ for columns and $k=\lfloor N/4 \rfloor$ for rows, with analogous behavior observed at other positions (e.g., $k=\lfloor p/2 \rfloor$, $k=\lfloor N/2 \rfloor$, $k=p-10$, or $k=N-10$). The empirical findings closely follow the theoretical expectations: full recomputation via \texttt{base} and \texttt{Matrix} is consistently the slowest, while update-based routines provide substantial speedups. Among these, \texttt{updateR} achieves the largest gains—often by orders of magnitude for large matrices—whereas the \texttt{fastQR} full decomposition remains an efficient choice when both Q and R are required.\par
%
\begin{table}[t]
\caption{Execution time (in seconds per 10 replications) for adding 10 rows or columns to the $\mathbf{X}\in\mathbb{R}^{N\times p}$ matrix, stratified by matrix dimensions $(N,p)$. 
Time for reconstruction of the upper-triangular factor $\mathbf{R}$ using the implementations provided by the R packages \texttt{base}, \texttt{Matrix}, and \texttt{fastQR}, or updating the full QR decomposition (\texttt{QR update}) or the $\mathbf{R}$ factor alone (\texttt{R update}), both provided by the \texttt{fastQR} package.}
\setlength{\tabcolsep}{8pt}   
\begin{tabular}{lrrrrrr}\toprule
{$N$} & {$p$} & {\texttt{base}} & {\texttt{Matrix}} & {\texttt{fastQR}} & {\texttt{QR update}} & {\texttt{R update}} \\
\midrule
\multicolumn{7}{c}{\it Rows}\\
10000 & 1000 & 41.37 & 39.50 & 17.34 & 16.15 & 0.71 \\ 
10000 & 3000 & 407.52 & 403.94 & 137.38 & 23.52 & 3.55 \\ 
10000 & 5000 & 1051.23 & 1051.88 & 348.85 & 31.61 & 7.63 \\ 
15000 & 1000 & 71.72 & 72.27 & 27.23 & 37.49 & 1.02 \\ 
15000 & 3000 & 635.29 & 645.79 & 213.09 & 53.83 & 4.59 \\ 
15000 & 5000 & 1680.68 & 1683.92 & 551.67 & 70.89 & 9.18 \\ 
20000 & 1000 & 95.18 & 95.24 & 37.15 & 64.62 & 1.56 \\ 
20000 & 3000 & 865.46 & 872.33 & 291.54 & 98.26 & 5.66 \\ 
20000 & 5000 & 2380.73 & 2341.71 & 764.84 & 131.87 & 11.15 \\ 
\midrule
\multicolumn{7}{c}{\it Columns}\\
10000 & 1000 & 52.39 & 55.80 & 22.24 & 62.05 & 0.63 \\ 
10000 & 3000 & 440.22 & 438.65 & 139.05 & 50.00 & 2.84 \\ 
10000 & 5000 & 1007.13 & 1005.63 & 330.20 & 42.34 & 5.52 \\ 
15000 & 1000 & 67.26 & 66.79 & 26.56 & 113.21 & 0.88 \\ 
15000 & 3000 & 610.13 & 607.19 & 203.19 & 106.03 & 3.56 \\ 
15000 & 5000 & 1617.44 & 1618.04 & 526.47 & 99.68 & 6.90 \\ 
20000 & 1000 & 91.73 & 93.26 & 36.56 & 206.98 & 1.15 \\ 
20000 & 3000 & 832.70 & 833.86 & 286.97 & 200.26 & 4.16 \\ 
20000 & 5000 & 2217.56 & 2217.70 & 724.93 & 183.86 & 8.38 \\ 
\bottomrule
\end{tabular}
\label{tab:QR_update_rows_columns}
\end{table}
%

\section{Simulation studies}
\label{sec:simulations}
%
In this section, we explore the advantages of using the R update in statistical applications. A particularly relevant area where QR decomposition methods are applied is regression analysis, where a continuous or discrete response variable $y$ is regressed on a set of covariates $\mathbf{x}\in\mathbb{R}^p$. QR methods are especially useful when comparing two or more model specifications that differ in the set of covariates included. Such comparisons are often made using techniques like automatic information criteria, likelihood ratio tests, Bayesian selection  procedures and penalized methods such as the LASSO and its extensions. Although both frequentist and Bayesian approaches can, in theory, benefit from QR updating methods, we focus on Bayesian model selection based on the spike-and-slab prior introduced by \cite{george_mcculloch.1993}  for the linear regression model with a continuous outcome. Specifically, we consider the following Gaussian univariate linear regression model for the continuous outcome $\mathbf{y}=(y_1, \ldots, y_n)^\top\in\mathbb{R}^n$:
\begin{equation}
\label{eq:gaussian_reg_model_def}
\mathbf{y}=\mathbf{X}\boldsymbol{\beta}+\boldsymbol{\varepsilon},\qquad\boldsymbol{\varepsilon}\sim\mathrm{N}\big(0,\sigma^2\mathbf{I}_n\big),
\end{equation}
where $\mathbf{X}\in\mathbb{R}^{n\times p}$, denotes the matrix of observations for the $p$ covariates and $\boldsymbol{\beta}=\big(\beta_1,\dots,\beta_p\big)^{\top}\in\mathbb{R}^p$ is the vector of regression coefficients. To induce sparse solutions we assume a Dirac spike-and-slab prior \citep[][]{george_mcculloch.1993} that relies on an auxiliary latent $p$-dimensional selection vector $\boldsymbol{\gamma}=\big(\gamma_1,\dots,\gamma_p\big)^\top$, where $\gamma_j=1$, $j=2,\dots,p$, when the $j$-th regressor is included in the model, $0$ otherwise. We assume that the intercept is always included in the model, hence $\gamma_1=1$. Moreover, we assume the following general hierarchical Dirac spike-and-slab prior distribution
\begin{align}
\label{eq:spike_slab_prior_general_def_1}
\pi\big(\boldsymbol{\beta}_{\gamma}\vert \boldsymbol{\gamma},\sigma^2\big)&\sim\phi_{p_\gamma}\big(\boldsymbol{\beta}_{\gamma}\vert  0,\sigma^2\boldsymbol{\Sigma}_{\beta_{\gamma}}\big),\qquad
\pi\big(\boldsymbol{\beta}_{-\gamma}\vert \boldsymbol{\gamma}\big)=\prod_{j=1}^p\delta\big(\beta_j,0\big)^{1-\gamma_j}, \\
\gamma_j&\sim\mathrm{Bern}(\theta),\qquad j=2,\dots,p, \nonumber
\end{align}
where $\phi_{p}(\cdot)$ and $\mathrm{Bern}(\cdot)$ denote the $p$-dimensional Gaussian  and the Bernoulli distributions, respectively, $\delta\big(x,0\big)=\mathbbm{1}_{x=0}$ denotes the Dirac function evaluated at zero, $p_\gamma=\sum_{j=1}^p \gamma_j$ denotes the number of covariates included in the regression model, $\boldsymbol{\beta}_{\gamma}\in\mathbb{R}^{p_\gamma}$ denotes the  vector consisting of all elements $\beta_j$ of $\boldsymbol{\beta}$ for which $\gamma_j=1$ for $j=1,\dots,p$, and $\boldsymbol{\beta}_{-\gamma}$ is such that $\boldsymbol{\beta}=\boldsymbol{\beta}_{\gamma}\cup\boldsymbol{\beta}_{-\gamma}$ with $\boldsymbol{\beta}_{\gamma}\cap\boldsymbol{\beta}_{-\gamma}=\emptyset$, $\boldsymbol{\Sigma}_{\beta_{\gamma}}\in\mathbb{S}_{++}^{p_\gamma}$ is a symmetric and positive-definite variance-covariance matrix and $\theta\in\big(0,1\big)$. Several alternative specification for $\boldsymbol{\Sigma}_{\beta_{\gamma}}$ are possible, we assume $\boldsymbol{\Sigma}_{\beta_{\gamma}}=\upsilon_0\mathbf{I}_{p_\gamma}$ as in \cite{george_mcculloch.1997}. Independent Bernoulli priors on the $\gamma_j$'s with a Beta hyperprior, $\theta\sim\mathrm{Be}\big(\xi,\varphi\big)$, with $\xi>0$, $\varphi>0$, are used for example by \cite{brown_etal.1998}. An attractive feature of these priors is that appropriate choices of $\theta$ that depend on the number of covariates $p$ impose an a priori multiplicity penalty, as argued in \cite{scott_berger.2010}.\par
Reversible jump \citep{green.1995} algorithms for model selection are Metropolis-type methods that require the simulation of the model indicator $\boldsymbol{\gamma}$ from its posterior distribution $m(\boldsymbol{\gamma}\vert  \mathbf{y},\mathbf{X}^{\gamma})$ which, assuming an Inverse Gamma prior for the scale parameter $\sigma^2$, i.e. $\sigma^2\sim\mathrm{IG}(\nu,\lambda)$, is proportional to
\begin{equation}
\label{eq:marginal_posterior}
m(\boldsymbol{\gamma}\vert  \mathbf{y},\mathbf{X}^{\gamma}) \propto\ell(\boldsymbol{\gamma}\vert  \mathbf{y},\mathbf{X}^{\gamma})\pi(\boldsymbol{\gamma}), \nonumber
\end{equation}
where
\begin{align}
\ell(\boldsymbol{\gamma}\vert  \mathbf{y},\mathbf{X}^{\gamma}) &\propto \vert(\widehat{\boldsymbol{\Sigma}}_{\gamma})^{-1}\vert^{-1/2} \vert\boldsymbol{\Sigma}_{\beta_\gamma}\vert^{-1/2} \bigg(\lambda + \frac{S^2_{\gamma}}{2}\bigg)^{-(\nu + n/2)}\nonumber\\
\pi(\boldsymbol{\gamma})&={p \choose p_{\gamma}}\theta^{p_{\gamma}}\big(1-\theta\big)^{p-p_{\gamma}}\nonumber,
\end{align}
with $\widehat{\boldsymbol{\Sigma}}_{\gamma}=\big((\widetilde{\mathbf{X}}^\gamma)^{\top}\widetilde{\mathbf{X}}^\gamma\big)^{-1}=\big((\mathbf{X}^{\gamma})^{\top}\mathbf{X}^{\gamma}+\boldsymbol{\Sigma}_{\beta_\gamma}^{-1}\big)^{-1}$, $S^2_{\gamma} = \mathbf{y}^{\top}\mathbf{y} - \mathbf{y}^{\top} \mathbf{X}^{\gamma} \big( (\mathbf{X}^{\gamma})^{\top} \mathbf{X}^{\gamma} + \boldsymbol{\Sigma}_{\beta_\gamma}^{-1} \big)^{-1} (\mathbf{X}^{\gamma})^{\top} \mathbf{y}$, $\mathbf{X}^{\gamma} \in \mathbb{R}^{n\times p_{\gamma}}$ is the $n\times p_\gamma$ matrix whose columns correspond to the components of $\boldsymbol{\beta}_\gamma$.
The R updating methods presented in this paper enable efficient updates to $\widetilde{\mathbf{X}}^\gamma$ as the model indicator shifts from $\boldsymbol{\gamma}$ to $\boldsymbol{\gamma}^\star$ accommodating the addition or removal of covariates in the regression model.\par
%

\begin{figure}[t]
\begin{center}
\includegraphics[width=0.95\textwidth]{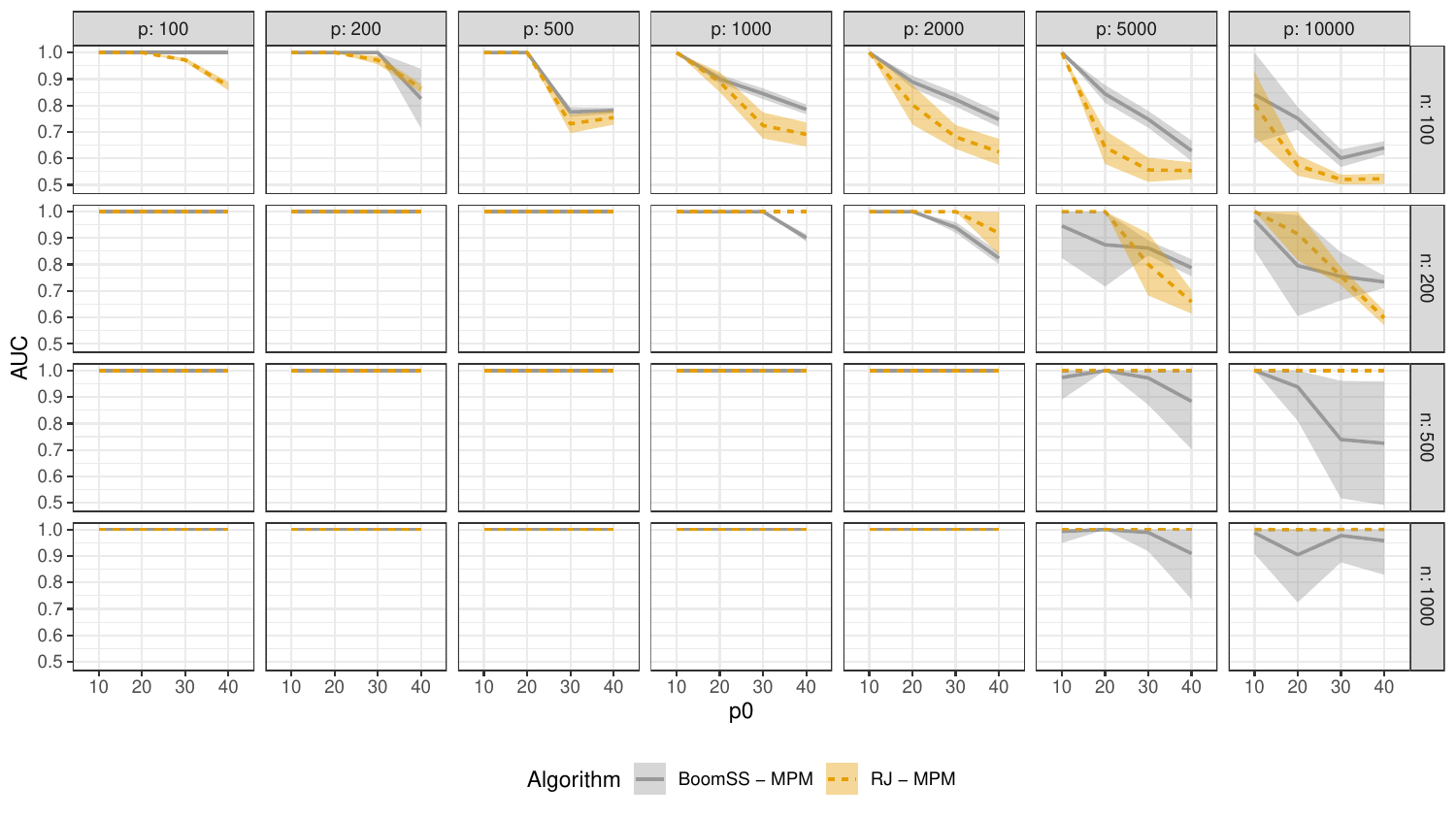}
\caption{Mean AUC with $1$ standard error bands of the MPM computed by the RJ (with R update) and the BoomSS algorithms. 40 repetitions for each setting with independent covariates.}
\label{fig:auc_Sim1}
\end{center}
\end{figure}
\begin{figure}[h]
\begin{center}
\includegraphics[width=0.95\textwidth]{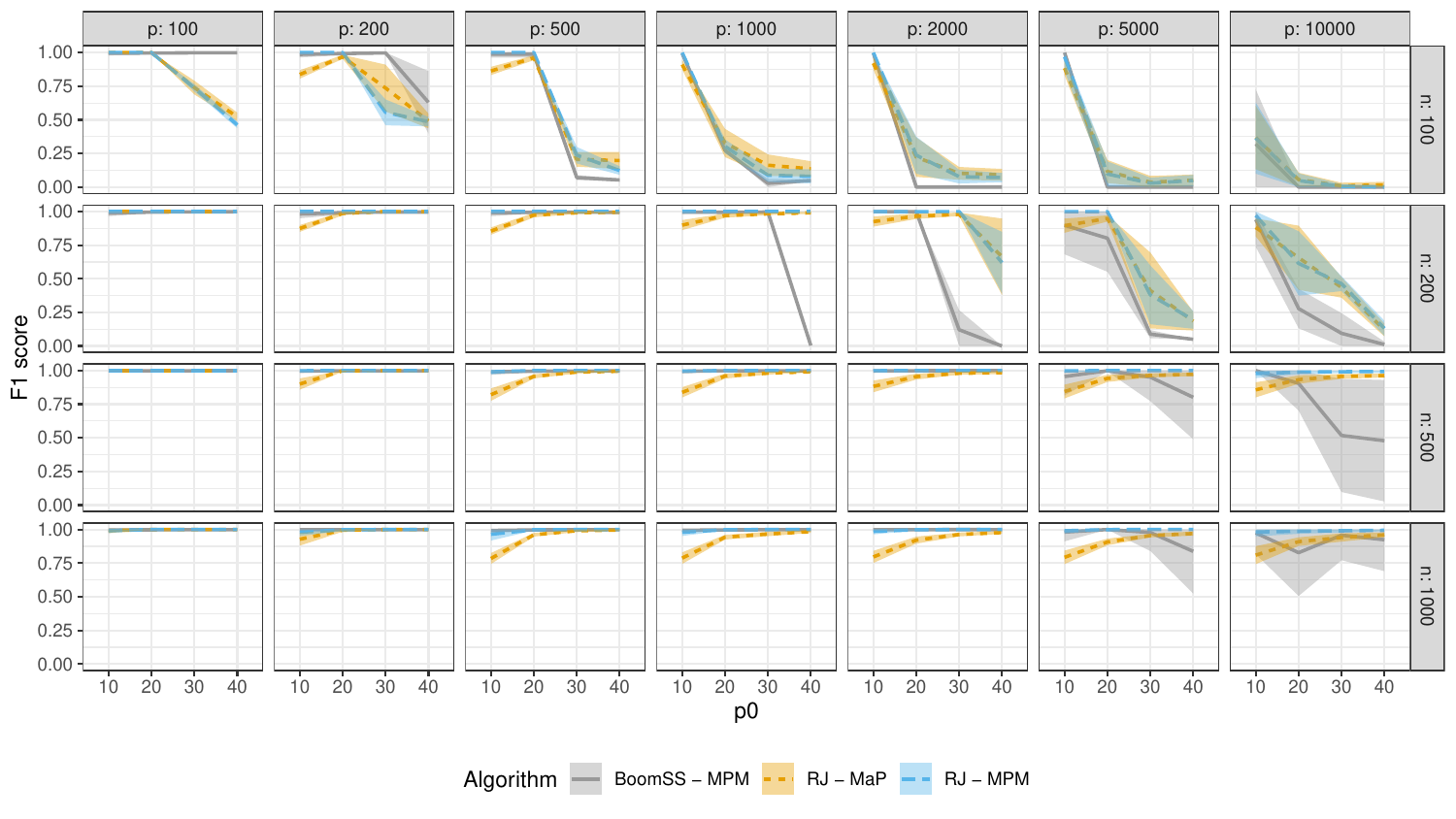}
\caption{Mean F1 score with 1 standard errors bands of the MPM and the MaP model computed by the RJ (with R update) and the BoomSS algorithms. 40 repetitions for each setting with independent covariates.}
\label{fig:F1_Sim1}
\end{center}
\end{figure}
%
%
\begin{figure}[h]
\begin{center}
\includegraphics[width=0.95\textwidth]{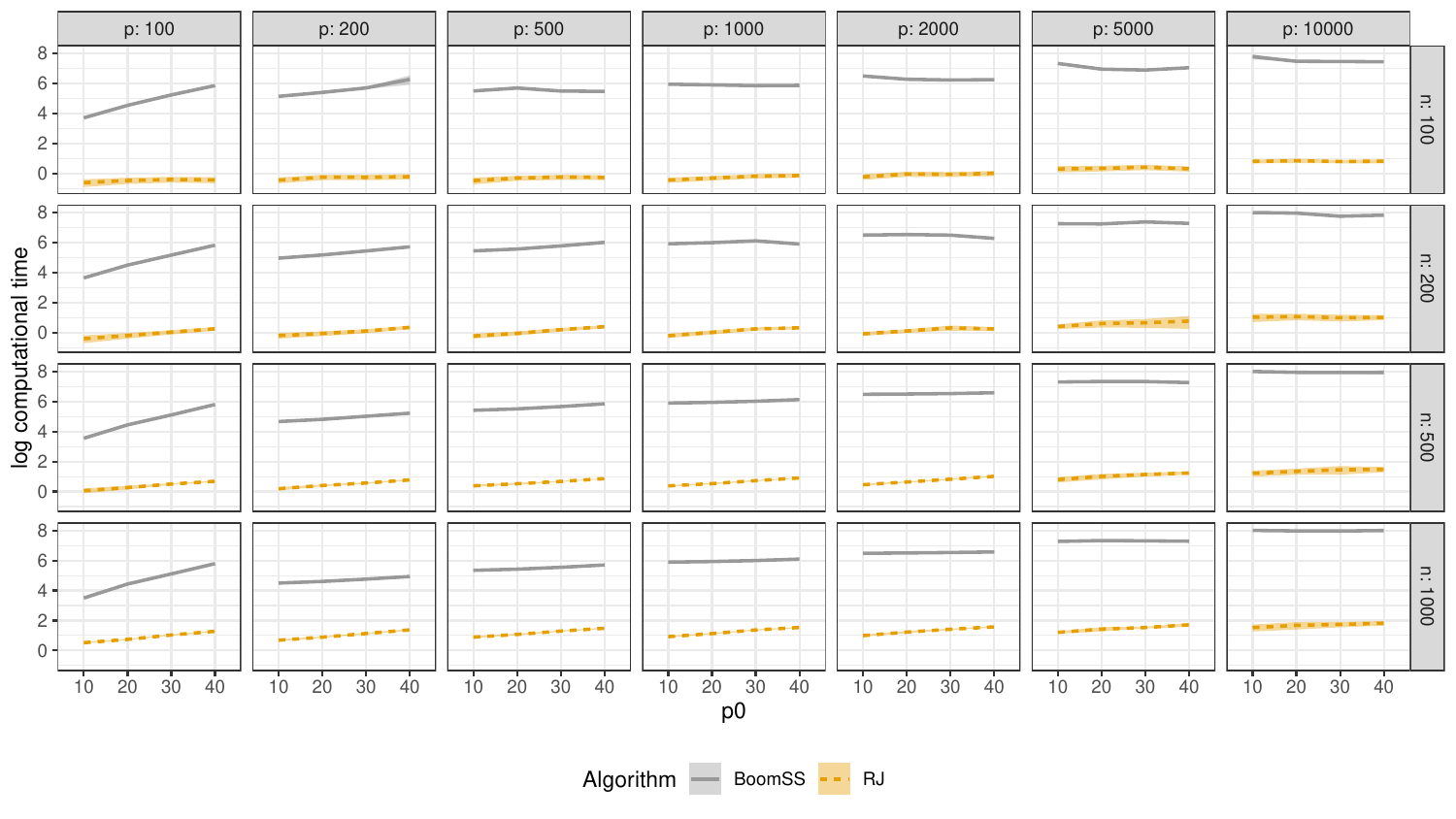}
\caption{Logarithm of the mean computational time (in seconds) with $1$ standard error bands for $50,000$ draws from the RJ (with R update) and the BoomSS algorithms. 40 repetitions for each setting with independent covariates.}
\label{fig:comp_time}
\end{center}
\end{figure}
In what follows, we consider the simulation setting of \cite{johnson_rossell.2012}. The response vector is generated as in equation \eqref{eq:gaussian_reg_model_def}, where we consider $\sigma^2 =1$ and generate the components of the design matrix $\mathbf{X}$ from a multivariate Normal distribution. In each simulation, the design matrix $\mathbf{X} = (\boldsymbol{\iota}_n, \mathbf{X}_1)$ with $\mathbf{X}_1 \sim\mathrm{N}_{p-1}(0, \boldsymbol{\Sigma}_X)$ and $\boldsymbol{\Sigma}_X$ is a $(p-1) \times (p-1)$ matrix. We consider three cases, the first study assumes independent covariates, $\boldsymbol{\Sigma}_X =\mathbf{I}_{p-1}$, the second assumes equicorrelated covariates, $\boldsymbol{\Sigma}_X=\rho\big(\boldsymbol{\iota}_{p-1}\boldsymbol{\iota}_{p-1}^\top-\mathbf{I}_{p-1}\big)+\mathbf{I}_{p-1}$, with $\rho=0.50$, while the third one assumes decreasing correlation $(\boldsymbol{\Sigma})_{i,j} = \rho^{|i-j|}$ with $\rho=0.50$. In addition, the number of predictors is $p = \{100,200,500,1000,2000,5000,10000\}$, the number of observations is $n=\{100,200,500,1000\}$ and the $p$-dimensional vector  $\boldsymbol{\beta}$ is defined as $\boldsymbol{\beta} = (\boldsymbol{\beta}_0^\top,\mathbf{0}_{p-p_0}^\top)^\top$, where $p_0=\{10,20,30,40\}$ denotes the number of non-zero coefficients and
$\boldsymbol{\beta}_0=(-1)^\mathbf{u} \Big(\frac{5\log(n)}{\sqrt{n}}+\vert\mathbf{z}\vert\Big)$, 
with $\mathbf{z}\sim\mathrm{N}_{p_0}(0,\mathbf{I}_{p_0})$ and $\mathbf{u}\sim\mathrm{Bin}(p_0,0.4)$. The details concerning the hyper-parameters $(\xi,\varphi,\nu,\lambda,\upsilon_0)$ are reported in 
Section \ref{app:prior_hyperparameters} of the supplementary material.
We implement the reversible jump algorithm with updates based on R updating algorithms (RJ) and compare the results with an alternative approach, that is the Rao-Blackwellized SSVS method by \cite{ghosh_clyde.2011} (implemented in the R package ``\texttt{BoomSpikeSlab}'' of \cite{scott.2023}, or BoomSS).
For each setting, 40 repetitions are performed and Figures \ref{fig:auc_Sim1} and \ref{fig:F1_Sim1} report the mean, with one standard error bands, of the AUC and F1 score computed for the median probability model (MPM), i.e. the model which includes all covariates with marginal inclusion probability larger than 0.5. It is evident that, as $p$ and $p_{0}$ increase, both methods have difficulties in recovering the true model. 
A comparison of the two performances reveals a certain degree of similarity, whilst it is evident that the RJ algorithm is more reliable when the sample size is greater than $200$ and $p$ is very large. Figure \ref{fig:F1_Sim1} reports the F1 score also for the maximum-a-posteriori (MaP) model, that is the model with maximum posterior probability, however this appears not to perform better than the median probability model. Note that the MaP model is readily available from the RJ algorithm while it is not directly available when the SSVS algorithm is employed. Additional results from this simulation study are provided in Section \ref{appD} of the supplementary material. Similar findings are observed when covariates are simulated with either an equicorrelated covariance matrix or a decreasing correlation structure; full results are available in Section \ref{appD} of the supplementary material.\par
The results reveal several important insights into the computational efficiency and model performance of the SSVS and RJ algorithms across different $p$ and $n$ configurations. While both methods deliver comparable performance in terms of AUC, F1 score, and other metrics---aside from some extreme cases---their computational demands  differ markedly. 
As shown in Figure \ref{fig:comp_time}, which displays the logarithmic computational time required for $50,000$ draws, the RJ algorithm with the R matrix update achieves a significant reduction in computational time relative to SSVS. 
This reduction enables longer or multiple chains, facilitating a more thorough exploration of the model space. However, both algorithms perform poorly when the number of predictors $p$ is large relative to the sample size $n$, particularly when the number of relevant predictors $p_0$ increases. High-dimensional noise complicates variable selection in these cases, limiting the effectiveness of both approaches. However, as $n$ increases, the RJ algorithm clearly outperforms SSVS, achieving higher F1 scores and AUC, while also exhibiting reduced variability. This enhanced stability suggests that the RJ algorithm’s computational approach, particularly its efficient R matrix updates, boosts its robustness in larger sample settings, enabling it to manage complexity with fewer performance fluctuations. We emphasise that selecting a model via RJ and SSVS is a standard task in Bayesian computation. The results in Figure \ref{fig:comp_time} clearly show that an RJ implementation relying on full refactoring would incur substantially higher computational costs. Taken together, the findings demonstrate that the RJ algorithm equipped with R-updates outperforms both the Rao-Blackwellised SSVS procedure and the naive full QR factorization strategy in terms of computational efficiency.
%
\section{Real data applications}
\label{sec:applications}
%
In this section, we consider some applications to real datasets that allow to highlight further advantages related to the availability of fast R updating algorithms.
The first dataset, \qmo Inflation\qmc, from \cite{bernardi_etal.2016}, focuses on predicting US inflation using quarterly changes in the consumer price index (CPIAUCSL). With $127$ observations and $21$ macroeconomic covariates covering the period from 1991 to 2023, this dataset supports two distinct analyses. The first analysis uses only covariates lagged by one quarter, enabling a direct comparison between the RJ approach and complete model enumeration, where posteriors are calculated for each model. The second analysis expands the covariate set to include lags up to four quarters, allowing us to assess the performance of the method in scenarios where $p$ approximates $n$. Details on the variables and sources can be found in Table \ref{tab:US_inflation_data} in Section \ref{sec:dataset_description} of the supplementary material.
The second dataset, \qmo Bardet-Biedl\qmc, involves microarray gene expression data from eye tissue in $120$ laboratory rats. Initially used by \cite{scheetz_etal.2006} for mammalian eye disease research, this dataset focuses on identifying genes linked to trim32, a gene associated with Bardet-Biedl syndrome, a multi-organ disorder with retinal implications \citep{chiang_etal.2006}. Of the $31,042$ probe sets available, $18,976$ exhibited sufficient signal and variation for reliable analysis, including trim32 and potentially influential genes. This additional dataset was chosen to represent a challenging scenario for variable selection, enabling a rigorous evaluation of method robustness in high-dimensional contexts where $p \gg n$. \par
The specification of the prior hyper-parameters $(\nu,\lambda,\xi,\varphi,\upsilon_0)$ is detailed in Section~\ref{app:prior_hyperparameters} of the supplementary material. Among these, $\upsilon_0$ is particularly influential, as it governs the trade-off between model fit and regularization. Several strategies for its calibration have been proposed, including asymptotic rules \citep{narisetty_he.2014}, empirical Bayes methods \citep{george_foster.2000}, and predictive criteria such as the cross-validation approach of \cite{lamnisos_etal.2012}. In this work, we adopt the latter, taking advantage of efficient posterior sampling algorithms that render its implementation both practical and computationally efficient. The adaptation of the approach by \cite{lamnisos_etal.2012} to leverage the proposed algorithms to quickly update the QR decomposition is discussed in Section~\ref{sec:model_prior_appl} of the supplementary material.
%
\subsection{Inflation}
%
\noindent The first application involves predicting  the relative change in the consumer price index from the previous quarter, using the full set of regressors listed in Table \ref{tab:US_inflation_data}. 
The first dataset, referred to as the \qmo small\qmcsp dataset, includes covariates from Table \ref{tab:US_inflation_data}, lagged by one quarter, and is used to predict inflation one quarter ahead. The second dataset expands on this by incorporating covariates for lags up to one year prior to the predicted quarter, resulting in a total of $84$ covariates. In the small dataset the exact posterior is calculated through the complete enumeration of alternative models and the computation of the marginal likelihood, which has a closed-form expression. This comparison underscores the effectiveness and accuracy of the RJ algorithm in approximating the true posterior distribution. 
%
%
\begin{figure}[!t]
\begin{center}
\subfigure[small: log predictive score]{\label{fig:ex_infl_lag1_CV_score}\includegraphics[width=0.45\textwidth]{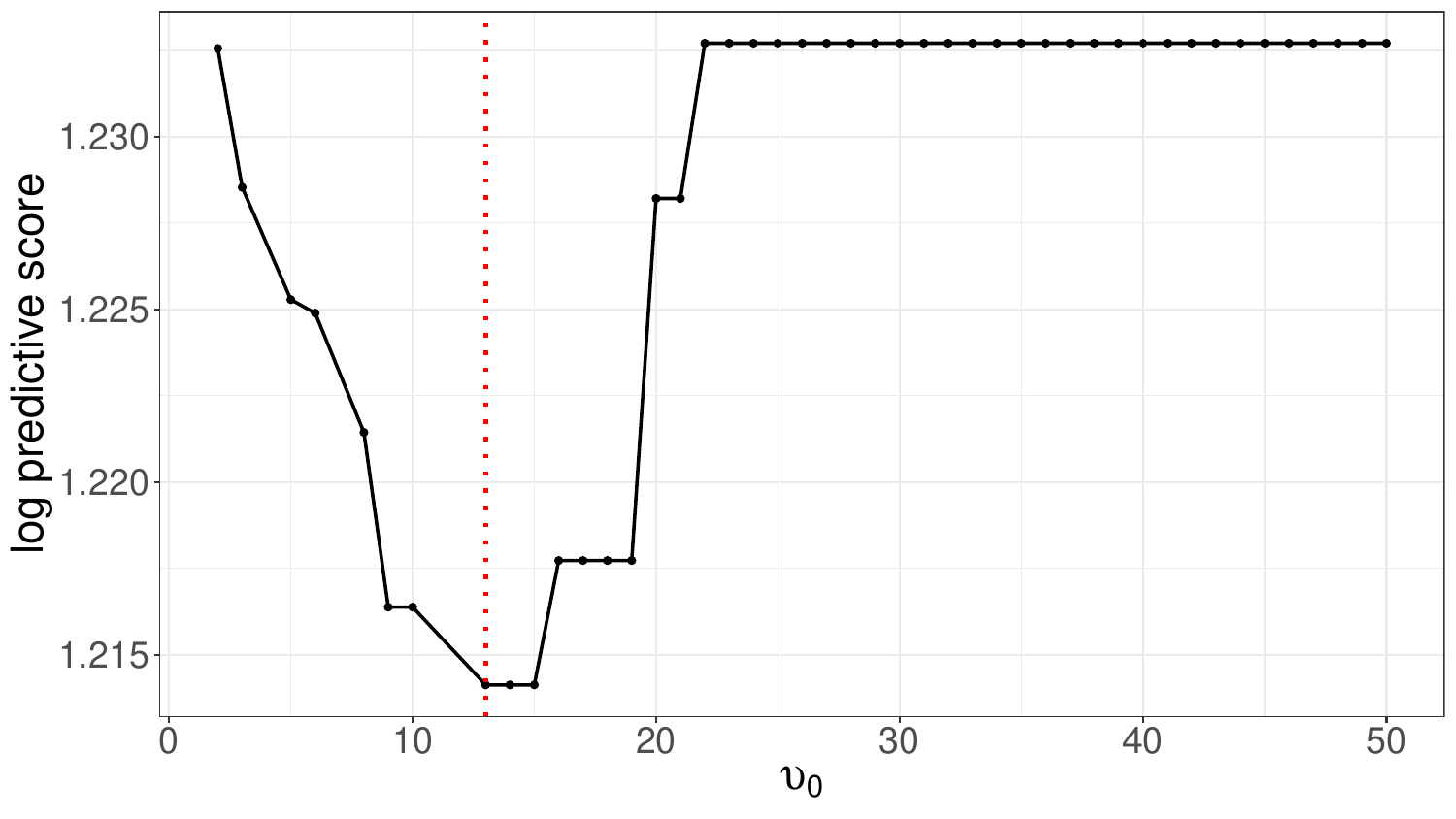}}\quad
\subfigure[small: MIP]{\label{fig:post_boxplots_inflation_lag1_MIP}\includegraphics[width=0.45\textwidth]{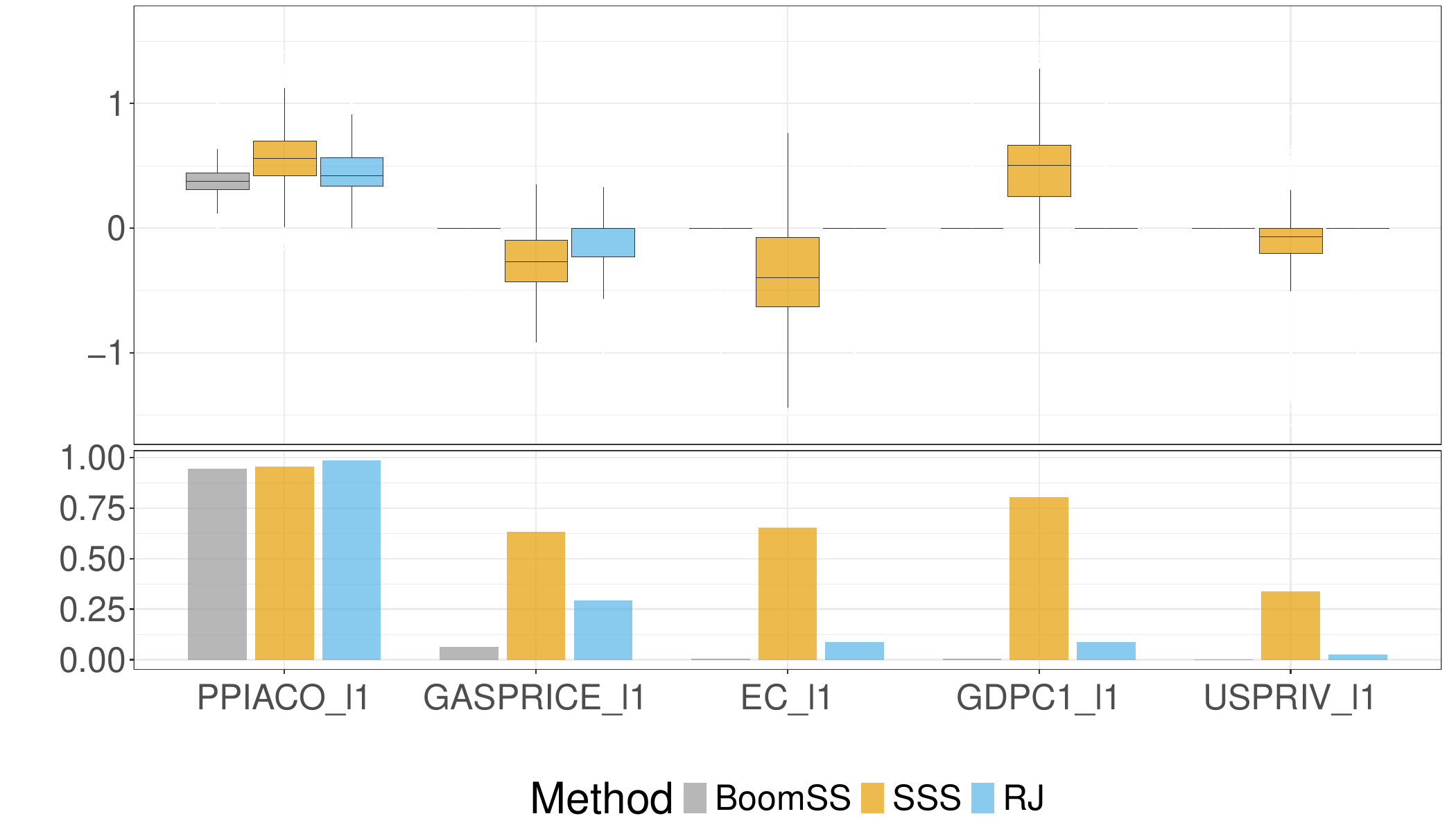}}\\
\subfigure[large: log predictive score]{\label{fig:ex_infl_lag4_CV_score}\includegraphics[width=0.45\textwidth]{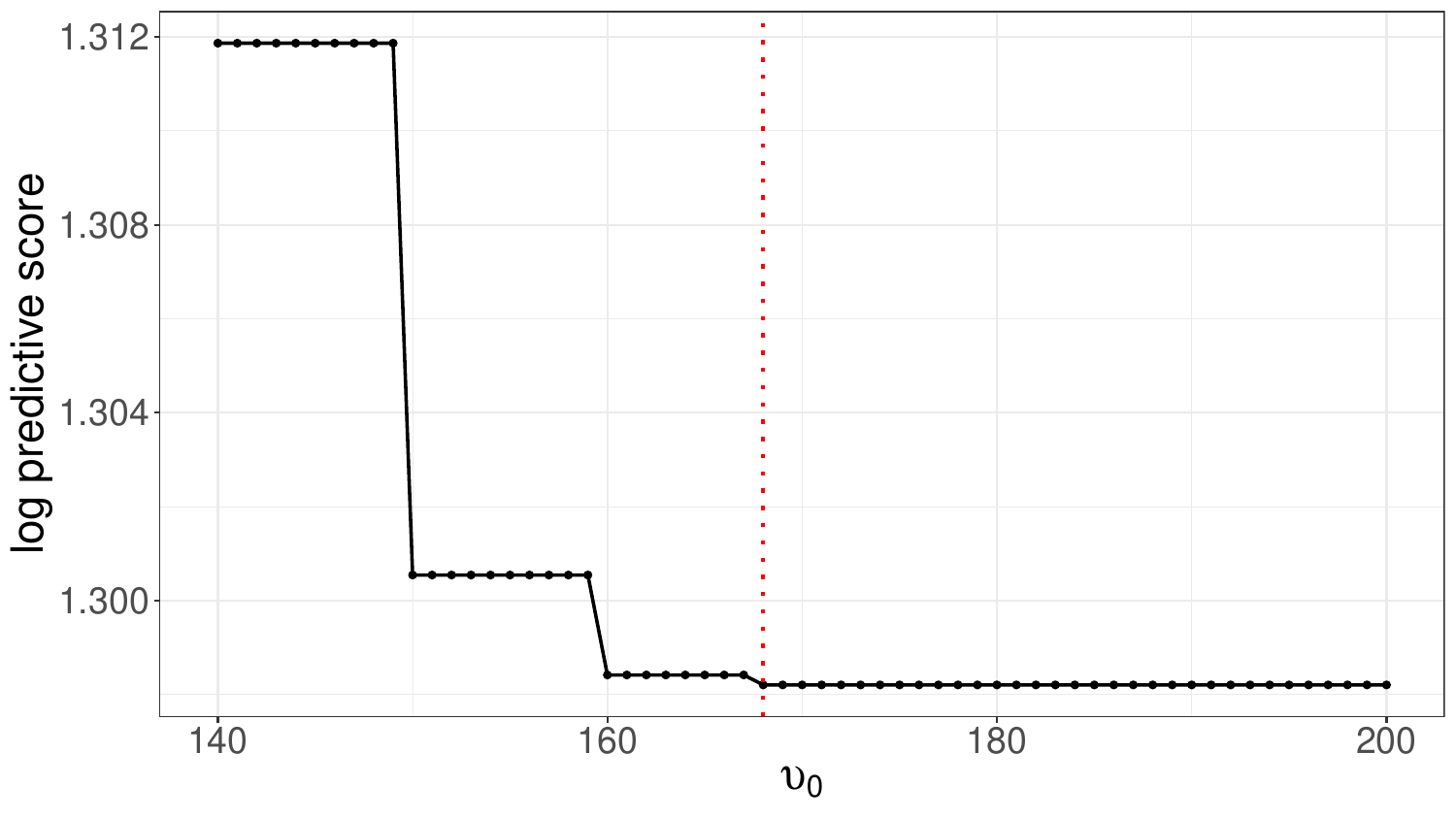}}\quad
\subfigure[large: MIP]{\label{fig:post_boxplots_inflation_lag4_MIP}\includegraphics[width=0.45\textwidth]{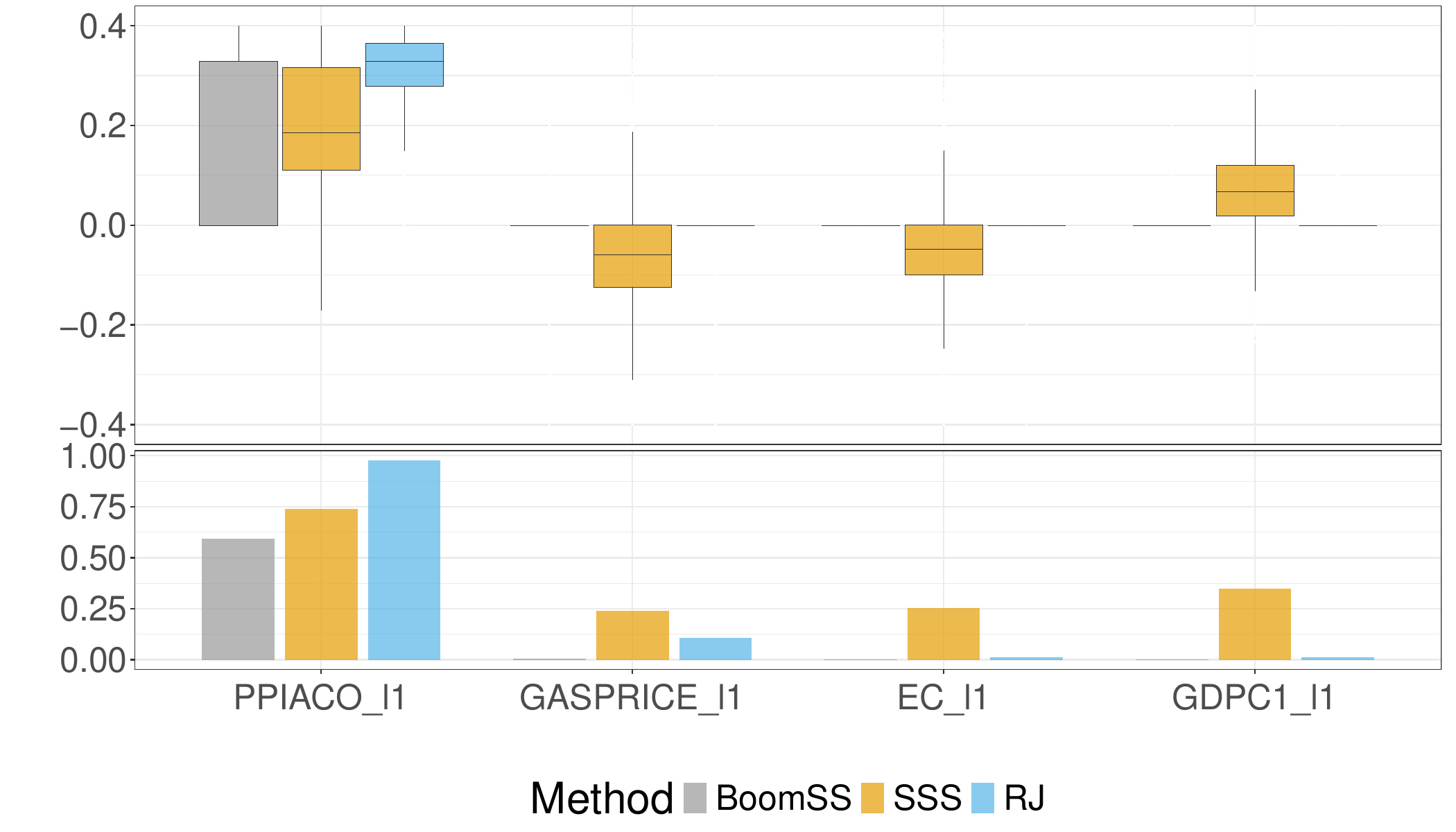}}
%
\caption{Inflation example. Predictive scores and summary of the posterior draws for the linear model fitted to the entire sample using various methods. The top panels consider the small dataset while the bottom panels consider the large one. The panels on the left show MCMC estimates of the log predictive score (its minimum is denoted by a vertical dotted red line). The right-hand panels show bar plots of the MIP and boxplots of the posterior distributions for the relevant regression coefficients.}
\label{fig:ex_inflation_post_summary}
\end{center}
\end{figure}
%
Figure \ref{fig:ex_inflation_post_summary} displays the posterior distributions of the regression coefficients and marginal inclusion probabilities (MIP) for variables selected in more than 20\% of iterations by at least one of the competing methods. The comparison includes results from the RJ algorithm, the Rao-Blackwellized SSVS method by \cite{ghosh_clyde.2011} and the Scalable Spike-and-Slab method by \cite{biswas_etal.2022} (implemented in the R package \texttt{ScaleSpikeSlab}, or SSS). These posterior distributions are derived from 200,000 post-burn-in samples. The two left panels of Figure \ref{fig:ex_inflation_post_summary} also present cross-validation results for the scale parameter $\upsilon_0$, evaluated using the log predictive score for each dataset. Notably, the results from all approaches are consistent, suggesting that each method performs similarly in capturing the underlying relationships within the data. However, the SSS method deviates slightly, showing some differences in its performance relative to the other approaches.\par
Posterior exploration of the space of $2^p$ competing models is another crucial feature of RJ algorithms. Table \ref{tab:inflation_posterior_models}  reports the posterior model probability (PMP), i.e. the probability that a given model $\boldsymbol{\gamma}$ is visited, for the top five models. These top five models have been identified as those with the highest relative visitation frequency reported by the MCMC chain, as shown in the column labeled $\widehat{\text{PMP}}$.
The primary drawback of both the BoomSS and SSS algorithms is their inability to provide the marginal likelihood. Therefore, we compare the $\widehat{\text{PMP}}$ with the \qmo true\qmcsp PMP as calculated by complete model enumeration which can be done because the number of covariates is small. Thus, if the RJ algorithm effectively explores the entire space of competing models, then $\widehat{\text{PMP}}_k$ is an unbiased estimator of $\text{PMP}_k$ for all $k = 1, \dots, 2^p$.
%
\begin{table}[!t]
\caption{For the three datasets, the first column  provides the model rank.  The following columns report the relative frequencies of visiting the corresponding model ($\widehat{\text{PMP}}$), the true normalized posterior delivered by complete enumeration  ($\text{PMP}$, available only in the small dataset), and the model dimension ($\text{dim}$), i.e. the number of covariates included in the model.}
%
\centering
\begin{tabular*}{\textwidth}{@{\extracolsep\fill}rccccccc}
\toprule
& \multicolumn{3}{c}{Inflation, small} & \multicolumn{2}{c}{Inflation, large} & \multicolumn{2}{c}{Bardet-Biedl}\\
\cmidrule{2-4} \cmidrule{5-6} \cmidrule{7-8}
Rank & $\widehat{\text{PMP}}$ & $\text{PMP}$ & $\text{dim}$ &  $\widehat{\text{PMP}}$ & dim  & $\widehat{\text{PMP}}$ & dim\\  
\midrule
  1  &  0.461 & 0.462 & 1 & 0.528 & 1 & 0.107 &   12 \\ 
  2  &  0.213 & 0.218 & 2 & 0.074 & 2 & 0.056 &   13\\ 
  3  &  0.034 & 0.033 & 3 & 0.027 & 2 & 0.048 &   15\\ 
  4  &  0.022 & 0.023 & 4 & 0.021 & 2 & 0.042 &   13\\ 
  5  &  0.021 & 0.021 & 2 & 0.016 & 2 & 0.036 &   14\\ 
%
\bottomrule 
\end{tabular*}
\label{tab:inflation_posterior_models} 
\end{table} 
%
\noindent Comparing the posterior distribution of the models selected by the RJ algorithm ($\widehat{\text{PMP}}$) with the distribution provided by complete model enumeration ($\text{PMP}$) in Table \ref{tab:inflation_posterior_models}, we note that the two columns almost coincide, which indicates that the RJ algorithm provides excellent exploration of the space of competing models.\par
%
%
We now assess the predictive accuracy of the RJ with R updating algorithm using a rolling window approach. We implemented a rolling window of $4\times 25$ quarters, meaning that at each step, the model was trained on the past $100$ quarters of data, and tested on the subsequent quarter. This process was repeated iteratively, sliding the window one quarter forward each time, resulting in a series of predictions and corresponding errors. To evaluate the model predictive performance, we calculated the root mean squared prediction error (RMSPE)  for each prediction.
This rolling window approach enables a dynamic evaluation of the model's performance, capturing potential shifts in the time series structure across different periods. The results are presented in Table \ref{tab:inflation_posterior_models_RW}. In summary, the analysis compares four primary approaches: RJ utilizing the R factorization, BoomSS, SSS and the Bayesian model averaging leveraging the RJ method of \cite{zeugner_feldkircher.2015} (as implemented in the R package \texttt{BMS}, or BMS). To summarize the posterior draws, we employ Bayesian model averaging (BMA), the median probability model \citep{barbieri_berger.2004} (MPM), and the MaP, where applicable. For each method, we contrast two key approaches: one that fixes the scale parameter $\upsilon_0$ based on the recommendation by \cite{narisetty_he.2014}, and another using the cross-validation (CV) method suggested by \cite{lamnisos_etal.2012}. These approaches are then evaluated against the full OLS estimate, stepwise OLS, regularized methods like ridge, lasso, adaptive-lasso, and elastic net \citep{hastie_etal.2009} and best subset selection \citep{hastie_etal.2020}. The adaptive weights for the adaptive-lasso are the reciprocal of the absolute values of the OLS-estimated coefficients.
The results indicate that the RJ method that leverages the R decomposition generally performs better than other approaches, particularly when using the BMA and selecting $\upsilon_0$ by CV techniques. Specifically, RJ BMA CV achieves the best performance with a value of RMSE of $0.067$ and $0.063$, for the small and large dataset, respectively. Among other RJ methods, the RJ MaP with $\upsilon_0$ selected by CV also shows good performance with $0.075$ for the large dataset. In comparison, Boom SS BMA and SSS BMA methods yield slightly higher values, but SSS BMA performs quite well with a RMSE of $0.075$ on the small dataset. 
On the other hand, traditional methods like OLS and stepwise-OLS show the worst performance, with values as high as $1.769$ and $7.587$ for the small and large dataset, respectively. Regularization techniques, such as ridge, adaptive-lasso, and elastic net, also demonstrate competitive performance, with values around $0.068$-$0.071$, positioning them as strong alternatives to the Bayesian methods, with the only exception of the ridge for the larger dataset. Finally, the best subset approach stands out with the best performance on the small dataset, achieving a value of $0.051$. However, it lacks comparable results for the larger datasets.\par
In conclusion, the RJ method utilizing the R factorization demonstrates superior predictive performance in forecasting the Consumer Price Index, particularly when employing Bayesian model averaging and cross-validation for hyper-parameter selection, outperforming traditional approaches and alternative Bayesian methods in both small and large datasets. Moreover, the provided R updating algorithms serve the purpose of comprehensive model building by significantly enhancing the speed of both model selection and cross-validation.
%
\begin{table}[!t]
\setlength{\tabcolsep}{10 pt}
\caption{Average RMSPE for the Inflation and the Bardet-Biedl Syndrome data analysis. 
\qmo CV\qmcsp refers to the case where $\upsilon_0$ is selected through cross-validation, while \qmo BMA\qmcsp denotes the Bayesian model averaging estimate, \qmo MPM\qmcsp and \qmo MaP\qmcsp (available only for the RJ algorithm) are the median probability model and the maximum-a-posteriori estimates of the coefficients, respectively.  For the Bardet-Biedl dataset, the last column report either the  number of selected probe sets  or their average (for RJ and BoomSS) when we fit the different models to the complete data set.}
%
\centering
\resizebox{\columnwidth}{!}{
\begin{tabular}{llcccc}\\
\toprule
\multirow{2}{*}{Algorithm}&\multirow{2}{*}{Method}&\multicolumn{2}{c}{Inflation}&\multicolumn{2}{c}{Bardet-Biedl}\\
\cmidrule(l){3-4}\cmidrule(l){5-6}
&&\textit{small dataset} & \textit{large dataset}& \multirow{2}{*}{RMSPE}& \multirow{2}{*}{\# of probes} \\  
&&RMSPE&RMSPE\\
\midrule
RJ  &BMA & 0.086 &0.093 & 0.402&15\\ 
RJ &MPM & 0.130 &0.135& 0.438 &17\\ 
RJ & MaP & 0.199  &0.282& 0.432&14 \\ 
RJ&BMA CV & 0.067& 0.063 & 0.334&11 \\
RJ &MPM CV &0.162 &0.120 &  0.432&12\\
RJ &MaP CV& 0.161 & 0.075 & 0.425&10 \\
BoomSS &  BMA & 0.096 &  0.134 &  0.897&4 \\ 
BoomSS&MPM & 0.136 &0.164& 0.985&5  \\ 
SSS &BMA & 0.075  &0.153 & & \\ 
SSS & MPM & 0.112&  0.136& & \\ 
BMS& BMA & 0.111 &1.281& & \\ 
\midrule
&OLS & 1.769  &7.587 & & \\ 
& stepwise& 1.769 &7.587& & \\ 
   \midrule
&  ridge & 0.068 & 0.982 &1.145  &\\ 
&  adaptive-lasso & 0.071 &0.071  & 1.241 &65 \\ 
&  elastic net & 0.071 & 0.071 & 1.191&32 \\ 
    \midrule
&  best subset & 0.051 & & \\ 
\bottomrule 
\end{tabular}
} 
\label{tab:inflation_posterior_models_RW} 
\end{table}
%
%
\subsection{Bardet–Biedl syndrome gene expression study}
%
\noindent We now analyze a microarray data set consisting of gene expression measurements from the eye tissue of $120$ laboratory rats. As for the inflation examples, Figure \ref{fig:ex_Bardet-Beidl_post_summary} summarizes the results concerning the posterior distribution  and the MIP of the regression coefficients for variables selected in more than 20\% of iterations by at least one of the competing methods, as delivered by the RJ algorithm and the Rao-Blackwellized SSVS method of \cite{ghosh_clyde.2011}. For the RJ algorithm, posterior distributions are approximated using $1,000,000$  post burn-in posterior draws, while only $200,000$ posterior draws are considered for the alternative method. The scalable spike-and-slab approach of \cite{biswas_etal.2022} was excluded due to its inefficiency when $p$ is very large. BoomSS and RJ  methods show completely different outcomes. This discrepancy may be due to the differing inference strategies of each method: while BoomSS builds on an enhanced version of the traditional SSVS model, RJ might be influenced by a more scalable approach with flexible parameters, leading to more varied variable selection.\par
%
\begin{figure}[!t]
\begin{center}
\subfigure[log predictive score]{\label{fig:Bardet-Beidl_CV}\includegraphics[width=0.45\textwidth]{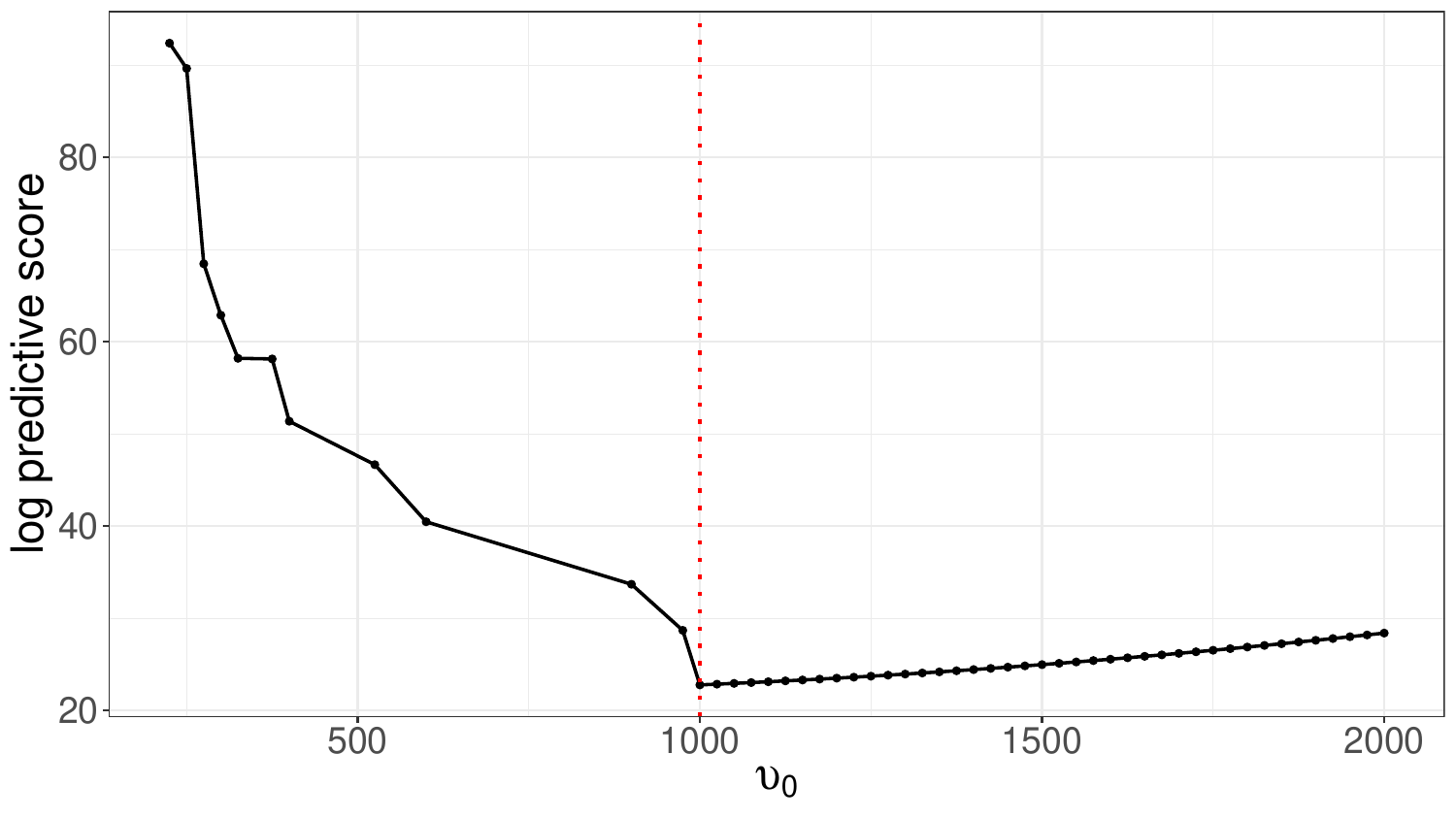}}\quad
\subfigure[Coefficients]{\label{fig:post_boxplots_Bardet-Beidl_coef}\includegraphics[width=0.45\textwidth]{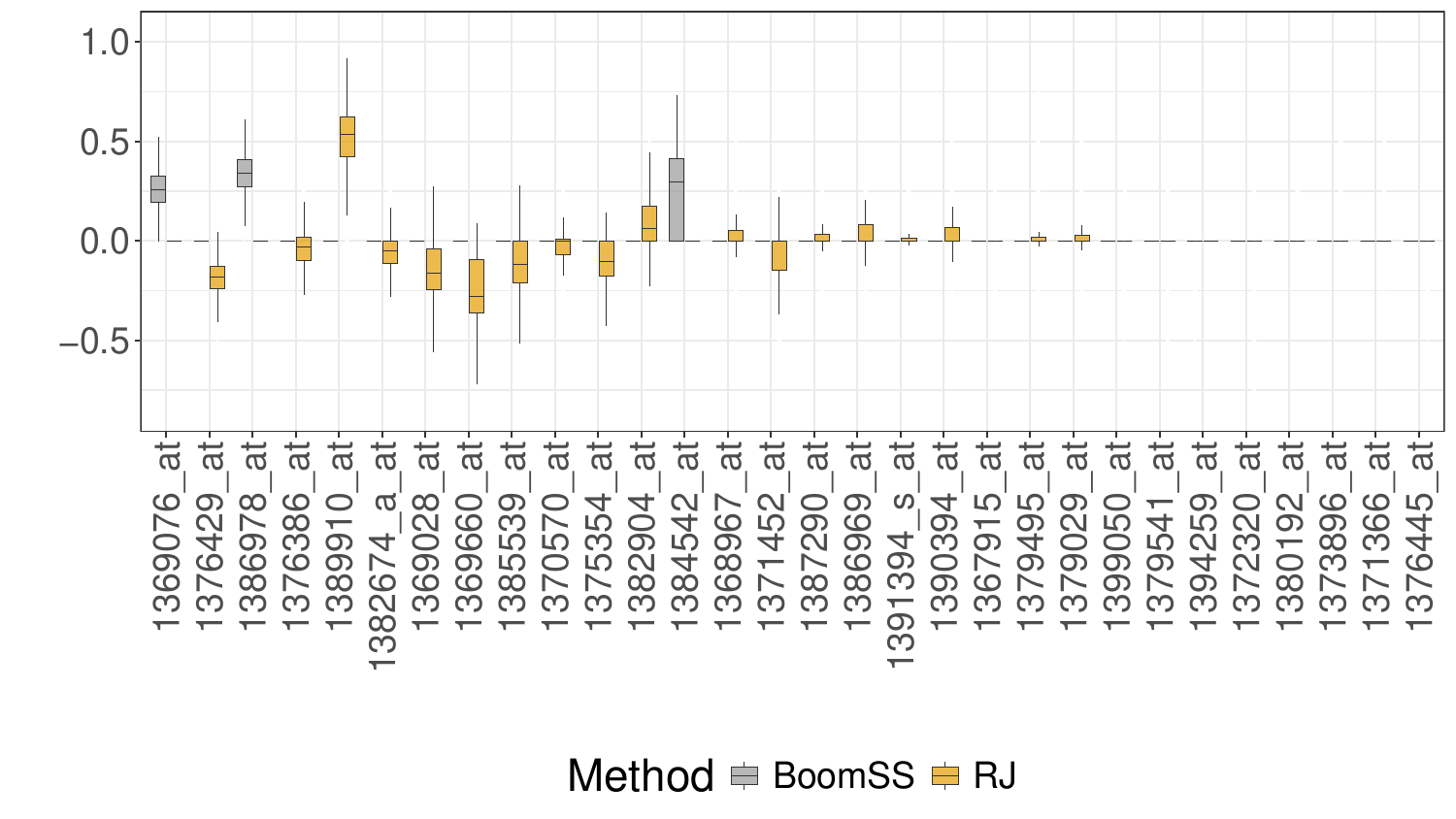}}\\
\subfigure[MIP]{\label{fig:post_boxplots_Bardet-Beidl_MIP}\includegraphics[width=0.45\textwidth]{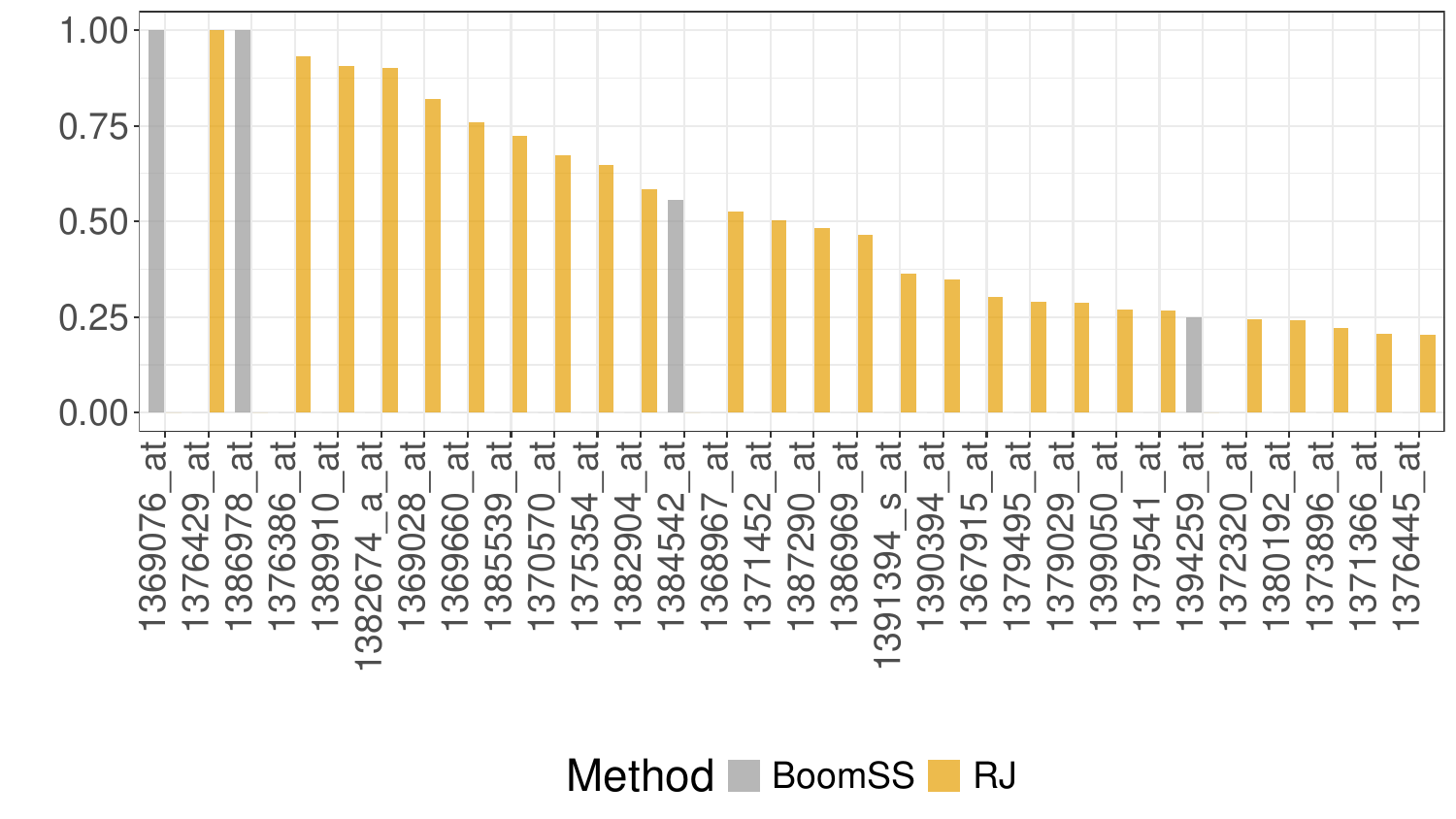}}
\caption{Bardet-Biedl syndrome gene expression study. Predictive scores and a summary of the posterior draws for the linear model fitted to the entire sample using various methods.}
\label{fig:ex_Bardet-Beidl_post_summary}
\end{center}
\end{figure}
%
%
Table \ref{tab:Bardet-Biedl_summary_stat} (postponed in Section \ref{appG} of the supplementary material to save space) reports the posterior summaries of the five best models selected by the RJ algorithm in the Bardet–Biedl syndrome gene expression study. The table provides a comprehensive comparison of model-based summaries, which is particularly valuable for assessing the contribution of different genes to the model space and for identifying those with the strongest evidence of association with the syndrome.
%
%
To evaluate the predictive accuracy of the models, we replicated the analysis conducted by \cite{bai_etal_2021}. We began by randomly splitting the dataset into $90$ training observations and $30$ test observations. We fitted the models to the training set and utilized the estimated coefficients $\widehat{\boldsymbol{\beta}}_{\text{train}}$ to compute the RMSPE. This procedure was repeated $100$ times to ensure the robustness of our results, and we computed the average RMSPE across these iterations. For the analysis, we compared the results for the RJ and Boom spike-and-slab methods, employing the same frameworks previously applied to the Inflation dataset. We excluded the scalable spike-and-slab algorithm due to its prohibitive computational demands, as well as best subset selection and ordinary least squares (OLS) methods, which were deemed infeasible in this context.\par
The results of our analysis are summarized in the last two columns of Table \ref{tab:inflation_posterior_models_RW} that report the RMSPE as well as the number of selected probe sets when we fit the different models to the complete data set. Notably, the RJ BMA CV method exhibited the lowest out-of-sample RMSPE, signifying the highest predictive power among the methods evaluated. Meanwhile, the Boom SS BMA method identified the most parsimonious model, selecting only four probe sets from the $18,976$ available. However, it is important to note that  RMSPE for Boom SS was significantly higher than that of the other approaches. In contrast, the RJ method selected from $14$ to $17$ probe sets, which is fewer than the $32$ and $71$ selected by the lasso and elastic net methods, respectively. This analysis highlights that, for this particular dataset, the RJ method achieved an admirable balance of predictive performance and model parsimony, making it the most effective choice in this instance.

\section{Discussion}
\label{sec:discussion}
%
Fast R updating and downdating algorithms are  valuable for other tasks besides least square estimation, such as multivariate regression, semi- and non-parametric modeling, graph selection and hyper-parameter tuning in penalized methods. In this section, we illustrate concrete scenarios where repeated, fine-grained updates of the R matrix are essential.\par
%
%
Many existing approaches, particularly MM algorithms \citep{lange.2016} and ADMM schemes \citep{boyd_etal.2011}, tend to rely on repeated matrix inversions, which become computationally burdensome when the number of predictors is large relative to the sample size. R updating techniques offer a highly efficient alternative, enabling the rapid recalculation of residual sums of squares and related quantities. This makes the inclusion and exclusion of predictors feasible, even in high dimensions. These techniques also play a crucial role in penalized regression procedures, such as least angle regression (LARS) \citep{efron_etal.2004}, where the evolving active set $\mathcal{A}$ requires fast updates of $(\mathbf{X}_{\mathcal{A}}^\top \mathbf{X}_{\mathcal{A}})^{-1}$.\par
%
%
In addition to regression and penalized estimation, R updates provide significant advantages for sequential hypothesis testing, best subset approaches and stepwise model selection \citep[see, e.g.][]{hastie_etal.2020}. In these scenarios, full recomputation can quickly become impractical, particularly in high-dimensional problems or when repeated regression fits are necessary. The proposed methods update the R decomposition directly, enabling the relevant test quantities to be recalculated quickly without restarting the computational pipeline. This results in significant efficiency gains in applications such as real-time monitoring, including quality control settings \citep{yao_etal.1999}, anomaly detection \citep{chandola_etal.2009}, adaptive clinical trials in which treatment decisions evolve with the accumulation of patient data \citep{zhao_etal.2013}, and graphical model inference  \citep{sanghavi_etal.2007}, in which sequential conditional independence tests guide structure learning in complex networks. 
For practitioners, this efficiency translates into the ability to explore a much wider range of candidate models in less time, leading to more reliable identification of important predictors.\par
%
%
%
%
These ideas naturally extend to multivariate and seemingly unrelated regression models \citep{zellner.1962}, which must explicitly account for correlations among multiple outcomes. In such settings, sparsity and joint covariate selection play an increasingly prominent role \citep{brown_etal.1998,bottolo_etal.2020}. As in the univariate case, R updating enables efficient exploration of these models by streamlining the required factorizations, maintaining computational feasibility even when both the number of predictors and the number of outcomes are large.\par
%
%
The advantages of fast updating methods extend beyond linear regression. Similar benefits arise in semi-parametric regression \citep{ruppert_etal.2003}, where nonlinear effects are captured through spline or penalized spline representations \citep[see, for example,][]{wood.2017,eilers_marx.2021,lang.2004}. As these models typically involve group-wise inclusion or exclusion of sets of basis functions, rapid updates are particularly valuable for iterative feature selection and the repeated smoothing parameter adjustments that are often required in estimation routines.\par 
%
%
The same principles also apply to non-parametric and structured models. Gaussian Markov random fields \citep{rue_held.2005} and state-space models \citep{durbin_koopman.2012}, for example, impose temporal or spatial dependencies that give rise to large, structured precision matrices. Updating the R factor substantially reduces the computational burden while preserving stability under sequential modifications. Non-linear machine learning approaches, such as Bayesian additive regression trees \citep{chipman_etal.2010}, also benefit from this. Updating the R decomposition enables the efficient addition or removal of tree components, thereby improving scalability without compromising accuracy.\par
%
%
A complementary set of challenges arises when rows are added or removed. This scenario frequently occurs in cross-validation \citep{konishi_kitagawa.2008}, when evaluating model goodness-of-fit on new data, in sequential learning \citep{lu_etal.2015,luo_song.2020,van_wieringen_binder.2022}, and to facilitate parallel computing \citep{adams_etal.1996,kontoghiorghes.2006,ko_etal.2022}. Methods for updating QR and R matrices tailored to row modifications can deliver substantial computational gains in these settings, especially in high-dimensional problems.\par
However, only certain algorithmic families benefit directly from these row-wise strategies, namely those whose updates require solving regularized normal equations. Prominent examples include specific MM algorithms \citep[see references in][]{lange.2016,sun_etal.2017} and, most notably, ADMM \citep{boyd_etal.2011}, whose iterations involve generalized ridge matrix inversions. 
When combined with R updates, ADMM is particularly effective for hyper-parameter tuning and model selection across the aforementioned high-dimensional regression methods. Similar gains arise in convex clustering \citep{chi_lange.2015} and graphical model estimation \citep{friedman_etal.2007}, where tuning parameters are usually chosen via cross-validation.\par
In practice, model selection and hyper-parameter tuning often account for the majority of the computational cost. Procedures such as cross-validation and generalized cross-validation (GCV) require repeated refitting on slightly perturbed datasets. Although there are analytic shortcuts for specific models, e.g. efficient leave-one-out cross-validation for penalized splines, see \cite{eilers_marx.2021}, row-wise R updates are essential in online or sequential settings where fast recalculation is needed for recursive least squares, online ridge-type methods and state-space models \citep{bottegal_pillonetto.2018}. Significant extensions to GCV for non-regular problems have also been developed \citep{jansen.2015}. A related issue arises when the goal shifts from tuning-parameter selection to evaluating a model's out-of-sample performance. Here, too, repeated row-wise modifications of the design matrix are required, and fast, stable R updates provide substantial efficiency gains.\par
Overall, the evidence highlights the versatility and scalability of R-updating algorithms. By enabling rapid and stable recomputation without full decompositions, these algorithms provide a unifying computational tool that can be used with linear, multivariate, semi-parametric, non-parametric and graphical models, as well as with tasks such as feature selection, covariance estimation, tuning and out-of-sample evaluation. The advantages of these algorithms are most evident in high-dimensional and iterative environments, where they transform otherwise impracticable procedures into efficient components of modern statistical workflows.

\section{Conclusion}
\label{sec:conclu}
This paper introduces efficient algorithms for the rapid updating and downdating of QR factorizations, delivering substantial computational gains by avoiding repeated recomputation of the full decomposition when data are modified, a common requirement in statistical applications. Building on classical QR updating techniques, we extend them in two main directions: first, by reducing computational cost through avoiding explicit recomputation of the Q factor, and second, by allowing simultaneous updates of multiple, potentially non-adjacent, rows and columns. For each extension, we formalize the corresponding algorithm and derive its exact computational cost, enabling a precise assessment of the resulting reductions in FLOPS, memory usage, and runtime. Numerical experiments on both simulated and real datasets confirm the practical benefits of the proposed methods, demonstrating improved efficiency, stability, and scalability in high-dimensional statistical applications.\par
To place these contributions in the broader context of modern numerical linear algebra, it is worth noting that recent research has focused extensively on parallel and communication-avoiding QR factorizations tailored to multicore and manycore architectures. While parallelism plays a central role in contemporary linear algebra libraries, it is not the primary objective of the present work. The proposed R-update algorithms are designed to incrementally modify an existing triangular factor rather than recomputing a full QR decomposition. As these updates rely on sequential adjustments to individual rows or columns, they offer limited scope for effective parallelization. However, this sequential structure is well matched to statistical settings in which frequent, small-scale data modifications dominate computational costs, making incremental updates more advantageous than large parallel factorizations.\par
Within this same perspective, one might also consider the Sherman-Morrison-Woodbury (SMW) identity as an alternative strategy for low-rank updates, particularly when working with Gram matrices. However, its computational cost for rank-one updates is essentially comparable to that of QR-based methods, as shown in Proposition~\ref{prop:onecolinv} in Section~\ref{appC} of the supplementary material. We therefore do not pursue SMW-based approaches for two main reasons. First, our objective is to maintain an upper-triangular factor R, akin to a Cholesky factor, whereas SMW updates generate dense intermediate quantities and do not preserve triangular structure. Second, in the statistical applications motivating this work, such as likelihood-ratio tests, model selection, and repeated evaluation of quadratic forms, the availability of a triangular and sparse factor substantially simplifies the repeated solution of linear systems, which typically dominates overall computational cost. Updating R thus aligns naturally with these downstream tasks and provides structural advantages that SMW does not offer. For completeness, we note that \cite{sturm_christensen.2012} compare naive, QR, Cholesky, and SMW updates and conclude that, although all non-naive methods are efficient, no single approach is uniformly optimal across all scenarios.\par
Overall, the proposed algorithms provide a flexible and computationally efficient framework for QR updating and downdating, well suited to modern statistical workflows involving complex and high-dimensional data.\par

\section*{Supplementary information}\par
\noindent
The supplementary material provides the proofs of the main results in Section \ref{sec:thinqrupdate_algo}; outline the algorithms for updating and downdating QR and R factorizations, as well as for updating the matrix $\boldsymbol{\Sigma}=(\mathbf{X}^\top\mathbf{X})^{-1}$; provide formal derivations of the computational costs for each algorithm in terms of floating-point operations. Furthermore, they contain the description of the procedure for selecting the prior hyper-parameters used in the simulated and real data examples of Sections \ref{sec:simulations}-\ref{sec:applications}; present additional simulation results, and provide a detailed description of the inflation dataset besides additional results for real data applications.

\section*{Funding}
The authors acknowledge support from the European Union - Next Generation EU, Mission 4 Component 2 - CUP C53D23002580006 via the MUR-PRIN grant 2022SMNNKY. Mauro Bernardi also acknowledges partial funding by the BERN BIRD2222 01 - BIRD 2022 grant from the University of Padua.

\section*{Declarations}
\textbf{Conflicts of interests}
 The authors declare no competing interests or
conflicts of interest.

\begin{appendix}

\section{Updating blocks of columns or rows}
\label{secA1}
\subsection{Householder reflections}
Householder reflections can be employed in order to zero more than on cell of a matrix at a time.
Consider $\mathbf{x}_i$, column $i$ of matrix $\mathbf{X}$, and assume that entries from $j>i$ to $k>j$ are different from zero, then they can be zeroed through premultiplication of matrix $\mathbf{X}$ by the Householder matrix $\mathbf{H}_i(j,k)$ with normal vector $\mathbf{v}_i(j,k)\in\mathbb{R}^N$ 
\begin{equation*}
\mathbf{H}_i(j,k)= \mathbf{I}_N - \tau \mathbf{v}_i(j,k) \mathbf{v}^{\top}_i(j,k), \quad \tau = \frac{2}{\|\mathbf{v}_i(j,k)\|_2^2},
\end{equation*}
where $\|\mathbf{v}_i(j,k)\|_2=\sqrt{\mathbf{v}^\top_i(j,k)\mathbf{v}_i(j,k)}$ is the $\ell_2$-norm of the vector $\mathbf{v}_i(j,k)$ and $\mathbf{v}_{i}(j,k)=\widetilde{\mathbf{v}}_{i}(j,k)/\widetilde{\mathbf{v}}_i(j,k)[i]$ where
\begin{equation}
\label{eq:householder_generic}
\widetilde{\mathbf{v}}_{i (j,k)}=  \left[ 
\begin{array}{l}
\mathbf{0}_{i-1}\\
\mathbf{x}_i[i] + \mathsf{sign}\left(\mathbf{x}_{i}\left[ i \right] \right) \Vert   \big[ \mathbf{x}_i[i]\,\,\mathbf{x}_i[j:k] \big]  \Vert_2 \\
\mathbf{0}_{j-i-1}\\
\mathbf{x}_i[j:k]\\
\mathbf{0}_{N-k}\\
\end{array}
\right].
\end{equation}
The definition of the normal vector given in equation \eqref{eq:householder_generic} is useful for computational costs calculation, as it takes advantage of the presence of zeros in the matrix, while the usual definition of Householder reflection has normal vector $\mathbf{v}_i(i+1,N)$. The Householder matrix is an $N \times N$ symmetric and orthogonal matrix, and the QR decomposition of $\mathbf{X}$ can be computed through a sequence of $p$ Householder reflections applied to $\mathbf{X}$:
\begin{equation*}
\mathbf{H}_{p}(p+1, N) \mathbf{H}_{p-1}(p, N) \cdots \mathbf{H}_{1}(2, N) = \mathbf{Q}^{\top}.
\end{equation*}

\subsection{QR updating algorithms for blocks of rows or columns}
Hereafter, we denote by $P_m(k,l)$ the permutation matrix that moves the $m$ rows starting in position $k$ to the $m$ rows starting in position $l$.\par
Consider the addition of $m$ rows, in this instance
\begin{equation}
\label{eq:qr_add_block_rows}
\mathbf{X}^{+} = \begin{bmatrix} \mathbf{X}\left[ 1 : (k-1), \;\; \right] \\ \mathbf{U}_\star \\ \mathbf{X}\left[ k : N, \;\; \right] \end{bmatrix} \quad \text{ and } \quad \mathcal{P}_{m}(k, N+1) \mathbf{X}^{+} = \begin{bmatrix} \mathbf{X} \\ \mathbf{U}_\star \end{bmatrix},
\end{equation}
which leads to
\begin{equation}
\begin{bmatrix} \mathbf{Q}^{\top} & \mathbf{0}_{N, m}\\ \mathbf{0}_{N, m}^\top & \mathbf{I}_m \end{bmatrix} \mathcal{P}_{m}(k, N+1) \mathbf{X}^{+} = \begin{bmatrix} \mathbf{R} \\ \mathbf{U}_\star \end{bmatrix} = \widetilde{\mathbf{R}}.
\label{eq:24}
\end{equation}
Then, a sequence of $p$ Householder reflections is applied to $\widetilde{\mathbf{R}}$, and the same set of reflections is employed for obtaining $\mathbf{Q}^{+}$:
\begin{align*}
\mathbf{R}^{+} = & \; \mathbf{H}_p (N+1, N+m)\cdots\mathbf{H}_1(N+1, N+m) \widetilde{\mathbf{R}}, \\
\mathbf{Q}^{+} = & \; (\mathcal{P}_{m}(k, N+1))^{\top} \begin{bmatrix} \mathbf{Q} & \mathbf{0}_{N, m}\\ \mathbf{0}_{N, m}^\top & \mathbf{I}_m \end{bmatrix} \mathbf{H}_1(N+1, N+m)  \cdots \mathbf{H}_p (N+1, N+m), \nonumber
\end{align*}
where $\mathbf{H}_i(j,k)$, $i = 1,2, \dots, p$, is the Householder matrix with normal vector $\widetilde{\mathbf{v}}_i(j, k) \in \mathbb{R}^{N+m}$ defined in equation \eqref{eq:householder_generic}. After the update, we obtain  $\mathbf{X}^{+} = \mathbf{Q}^{+} \mathbf{R}^{+}$.\par

\indent Now, consider when a block of $m$ rows starting at position $k$ is removed, then 
\begin{equation*}
\mathbf{X}^{-} = \begin{bmatrix} \mathbf{X}\left[ 1 : (k-1), \;\; \right] \\ \mathbf{X}\left[ (k+m) : N, \;\; \right] \end{bmatrix} \quad \text{ and } \quad \mathcal{P}_{m}(k,1) \mathbf{X} = \begin{bmatrix} \mathbf{U}_\star \\ \mathbf{X}^{-} \end{bmatrix}.
\end{equation*}
The updated matrices $\mathbf{Q}^{-}$ and $\mathbf{R}^{-}$ satisfy the following equation:
\begin{equation*}
\mathcal{P}_{m}(k,1) \mathbf{X} = \begin{bmatrix} \mathbf{I}_m & \mathbf{0}_{m, N-m} \\ \mathbf{0}_{m, N-m}^\top & \mathbf{Q}^{-} \end{bmatrix} \begin{bmatrix} \mathbf{Z}_\star \\ \mathbf{R}^{-} \end{bmatrix}=\mathcal{P}_{m}(k,1)\mathbf{Q}\mathbf{R}=\mathbf{Q}_p\mathbf{R},
\end{equation*}
where $\mathbf{Q}_p=\mathcal{P}_{m}(k,1)\mathbf{Q}$ and $\mathbf{Z}_\star$ is a matrix of dimension $m\times p$. Therefore, $\sum_{k=1}^{m}(N - k)$ Givens rotations are applied to $\mathbf{Q}_p$, that is equation \eqref{eq:deleteonerow_givens} is repeatedly applied to the set of rows $\mathbf{Q}_p[j,]$, for $j=1,2,\dots,m$, giving
\begin{align*}
&\mathcal{P}_{m}(k,1)\mathbf{Q} \mathbf{G}_1(N-1, N) \cdots \mathbf{G}_1(1, 2) \cdots \mathbf{G}_m(N-1, N)\cdots \nonumber\\
&\qquad\qquad\qquad\qquad\qquad\qquad\qquad\times\mathbf{G}_m(m, m+1) = \begin{bmatrix} \mathbf{I}_m & \mathbf{0}_{m, N-m} \\ \mathbf{0}_{m, N-m}^\top & \mathbf{Q}^{-} \end{bmatrix}, \\
&\mathbf{G}_m(m, m+1)^{\top} \cdots \mathbf{G}_m(N-1, N)^{\top} \cdots \mathbf{G}_1(1, 2)^{\top} \cdots \mathbf{G}_1(N-1, N)^{\top} \mathbf{R}=\begin{bmatrix} \mathbf{Z}_\star \\ \mathbf{R}^{-} \end{bmatrix}. 
\end{align*}
The update yields $\mathbf{X}^{-} = \mathbf{Q}^{-} \mathbf{R}^{-}$. Note that all the results in this section can be straightforwardly extended  to the case of non adjacent rows by employing the appropriate permutation matrix which moves the rows to be deleted on top of matrix $\mathbf{X}$ or the rows to be added at the bottom of matrix $\mathbf{X}$.\par

When a block of $m$ columns, $\mathbf{U}_\star$, is added starting at position $k$, so $\mathbf{X}^{+}= \begin{bmatrix} \mathbf{X}\left[ ,\; 1 : (k-1) \right] & \mathbf{U}_\star & \mathbf{X}\left[,\;  k : p\right] \end{bmatrix}$, then
\begin{equation*}
\mathbf{Q}^{\top} \mathbf{X}^{+} = \begin{bmatrix} \mathbf{R}\left[ , \;\; 1:(k-1) \right] & \mathbf{Z}_\star &  \mathbf{R}\left[ , \;\; k:p \right]\end{bmatrix} = \widetilde{\mathbf{R}},
\end{equation*}
where $\mathbf{Z}_\star = \mathbf{Q}^{\top}  \mathbf{U}_\star$.
Computing $\mathbf{R}^{+}$ requires setting to zero many elements of $\mathbf{Z}_\star$ and filling $\mathbf{R}\left[(i+1) : (i+m), \;\; i \right]$, $i = k, \dots, p$.
This can be achieved by sequentially applying Givens rotations to $\widetilde{\mathbf{R}}$. Then the new matrices $\mathbf{R}^{+}$ and $\mathbf{Q}^{+}$ are obtained as
\begin{align}
\label{eq:addmcols1}
\mathbf{R}^{+} &=\mathbf{G}_{k+m-1} (k+m-1, k+m)^{\top}\cdots\mathbf{G}_{k+m-1} (N-1, N)^{\top}\cdots\nonumber\\
&\qquad\qquad \times \mathbf{G}_{k} (k, k+1)^{\top} \cdots \mathbf{G}_k(N-1, N)^{\top} \widetilde{\mathbf{R}},\\ 
\label{eq:addmcols2}
\mathbf{Q}^{+}&=\mathbf{Q} \mathbf{G}_k(N-1, N) \cdots \mathbf{G}_k (k, k+1)\cdots\nonumber\\
&\qquad\qquad\times \mathbf{G}_{k+m-1} (N-1, N)\cdots\mathbf{G}_{k+m-1} (k+m-1, k+m). 
\end{align}
The update yields  $\mathbf{X}^{+} = \mathbf{Q}^{+} \mathbf{R}^{+}$. In equations \eqref{eq:addmcols1}-\eqref{eq:addmcols2} the Givens rotations can be replaced by Householder reflections.\par

Finally, if $ \mathbf{X}^{-}= \begin{bmatrix} \mathbf{X}\left[, \; 1 : (k-1) \right] & \mathbf{X}\left[,  \; (k+m) : p \right] \end{bmatrix}$, so that a block of $m$ columns is removed, then the updated QR decomposition requires to set to 0 all non-zero elements under the diagonal of matrix $\mathbf{R}$, with the columns from $k$ to $k+m-1$ removed.
New matrices $\mathbf{R}^{-}$ and $\mathbf{Q}^{-}$ are obtained by exploiting a sequence of Householder reflections. Let $\widetilde{\mathbf{R}}$ be matrix $\mathbf{R}$ with columns from $k$ to $k+m-1$ removed, then
\begin{align*}
\mathbf{R}^{-}&=\mathbf{H}_{p-m}(p-m+1, p) \cdots \mathbf{H}_{k}(k+1, k+m) \widetilde{\mathbf{R}},\nonumber\\
\mathbf{Q}^{-}&= \mathbf{Q} \mathbf{H}_{k}(k+1, k+m)^\top  \cdots \mathbf{H}_{p-m}(p-m+1, p)^\top.
\nonumber
\end{align*}
After the update, it yields  $\mathbf{X}^{-} = \mathbf{Q}^{-} \mathbf{R}^{-}$. Note that when $1 < m<p$ non-adjacent columns are removed from $\mathbf{R}$ at positions $k_{1}, \ldots, k_{m}$, $\mathbf{R}^{-}$ is computed by setting to zero all non-zero elements below the diagonal through an appropriate set of Gives rotations or Householder reflections. 

\subsection{R updating algorithms for blocks of rows or columns}

The updated matrix $\mathbf{R}_1^+$ after the addition of $m$ rows, $\mathbf{U}_\star \in \mathbb{R}^{m \times p}$, to $\mathbf{X}$, as in equation \eqref{eq:qr_add_block_rows}, can be computed as
\begin{equation*}
\begin{bmatrix}\mathbf{R}_1^{+}\\
\mathbf{0}_{m,p}
\end{bmatrix} = \mathbf{H}_p (p+1, p+m) \cdots \mathbf{H}_1(p+1, p+m) \widetilde{\mathbf{R}}_1,
\end{equation*}
where $ \widetilde{\mathbf{R}}_1=\begin{bmatrix} \mathbf{R}_1 \\ \mathbf{U}_{\star} \end{bmatrix} $, see proof in Section \ref{appA} of the supplementary material.\par
Challenges analogous to the case of deletion of one row are faced when $m$ rows are removed from matrix $\mathbf{X}$. In order to avoid computation of quantities related to matrix $\mathbf{Q}$, steps outlined in Algorithm \ref{alg:thinQRaddrows} of the supplementary material for the addition of a block of $m$ rows in matrix $\mathbf{X}$ can be retraced backwards. Specifically, we need to solve
%
\begin{equation*}
\begin{bmatrix}\mathbf{R}_1\\\mathbf{0}_{m, p}\end{bmatrix} = \mathbf{H}_p (p+1, p+m) \cdots \mathbf{H}_1(p+1, p+m) \begin{bmatrix} \mathbf{R}_1^- \\ \mathbf{U}_\star \end{bmatrix},
\end{equation*}
where $\mathbf{R}_1$ and $\mathbf{U}_{\star}$ are known. Entries of matrix $\mathbf{R}^{-}_1$ can thus be computed iteratively by applying Algorithm \ref{alg:thinQRdeleterows}, see Section \ref{appB} of the supplementary material.\par
When $m$ columns are added at the end of matrix $\mathbf{X}$, that is when updating $\mathbf{R}_1$ for $\mathbf{X}^{+} = \begin{bmatrix} \mathbf{X} & \mathbf{U}_{\star}  \end{bmatrix}$, where $\mathbf{U}_{\star} \in \mathbb{R}^{N \times m}$, then
\begin{equation*}
\begin{bmatrix} \mathbf{Q}_1^\top \\  \mathbf{Q}_2^\top \end{bmatrix} \mathbf{X}^{+} = \begin{bmatrix} \mathbf{R}_1 & \mathbf{Z}_{\star 1} \\   \mathbf{0}_{N-p, p} & \mathbf{Z}_{\star 2} \end{bmatrix} = \widetilde{\mathbf{R}}_1^{+},
\end{equation*}
where $\mathbf{Z}_{\star 1}=\mathbf{Q}_{1}^\top \mathbf{U}_{\star}$ and $\mathbf{Z}_{\star 2}= \mathbf{Q}_{2}^\top \mathbf{U}_{\star}$. However, this approach requires the evaluation of matrices $\mathbf{Q}_{1}$ and $\mathbf{Q}_{2}$. In order to avoid such computation, $\mathbf{Z}_{\star 1}$ can be determined by solving the linear system $\mathbf{R}_1^\top \mathbf{Z}_{\star 1} = \mathbf{X}^\top \mathbf{U}_{\star}$. Entries $\mathbf{R}^{+}_1\left[p+i, p+j \right]$, for $i=1, \ldots, m$ and $j \geq i, \ldots, m$ can be computed iteratively exploting the relationship $(\mathbf{X}^{+})^{\top} \mathbf{X}^{+} = (\mathbf{R}_{1}^{+})^{ \top}\mathbf{R}_{1}^{+}$, see Section \ref{appA} and Algorithm \ref{alg:thinQRaddcols} in Section \ref{appB} of the supplementary material.\par
Finally, the deletion of $m$ columns can be analogously accomplished.
Let $\mathbf{X}^-$ be the reduced form of $\mathbf{X}$ after the deletion of a block of $m$ columns starting at position $k$, then $\widetilde{\mathbf{R}}_1 = \begin{bmatrix}  \mathbf{R}_1 [, 1:(k-1)] & \mathbf{R}_1[, (k+m):p]\end{bmatrix}$.
Updated matrix $\mathbf{R}^-_1$ can be obtained through triangularization of matrix $\widetilde{\mathbf{R}}_1$. As with the full QR, see Section \ref{subsec:add_del_cols}, this can be done by applying a set of Householder reflections as follows
\begin{equation*}
\begin{bmatrix}\mathbf{R}_1^{-} \\
\mathbf{0}_{m,p-m}
\end{bmatrix}=\mathbf{H}_{p-m}(p-m+1, p) \cdots     \mathbf{H}_{k}(k+1, k+m) \widetilde{\mathbf{R}}_1.
\end{equation*}
Eventually, the upper triangular sub-matrix $\mathbf{R}_1^- \in \mathbb{R}^{(p-m) \times (p-m)}$ is selected.
Note that, similarly to the full QR update, if $\mathbf{X}^-$ is the reduced form of $\mathbf{X}$ after the deletion of $m$ columns at non adjacent positions $k_{1}, \ldots, k_{m}$, then updated matrix $\mathbf{R}_{1}^-$ can be obtained through triangularization of matrix $\mathbf{R}_{1}$ after the deletion of columns $k_{1}, \ldots, k_{m}$ by applying either Givens rotations or Householder reflections, depending on the number of elements to be zeroed below the main diagonal. 
\end{appendix}



\newpage
\clearpage
\vfill\eject

\setlength{\textwidth}{152mm}
\setlength{\topmargin}{-0mm}
\setlength{\textheight}{150mm}

\renewcommand{\theequation}{S.\arabic{equation}}
\renewcommand{\thesection}{S.\arabic{section}}
\renewcommand{\thetable}{S.\arabic{table}}
\renewcommand{\theproposition}{S.\arabic{proposition}}

\setcounter{equation}{0}
\setcounter{table}{0}
\setcounter{section}{0}
\setcounter{proposition}{0}
\setcounter{page}{1}
\setcounter{footnote}{0}

\begin{center}
{\Large Supplementary material for:}
\vskip3mm
\centerline{\Large\bf Fast QR updating methods for statistical applications}
\vskip7mm
\centerline{\normalsize\sc M. Bernardi, C. Busatto and M. Cattelan}
	\vskip5mm
	\centerline{\textit{Department of Statistical Sciences,
University of Padova}}{}
\end{center}

\noindent These supplementary materials are organized as follows. Section \ref{appA} provides the proofs of the main results in Section \ref{sec:thinqrupdate_algo}. Section \ref{appB} outlines the algorithms for updating and downdating QR and R factorizations, as well as for updating  $\boldsymbol{\Sigma}=(\mathbf{X}^\top\mathbf{X})^{-1}$, including supporting routines for Givens rotations and Householder reflections. Section \ref{appC} provides formal derivations of the computational costs for each algorithm in terms of floating-point operations (FLOPS). Section \ref{app:prior_hyperparameters} describes the selection of prior hyper-parameters used in the simulated and real data examples presented in Sections \ref{sec:simulations}–\ref{sec:applications} of the main paper. Section \ref{appD} presents additional simulation results, including full performance metrics for all correlation structures considered. Section \ref{sec:dataset_description} provides a detailed description of the inflation dataset. Section \ref{appG} provides additional results for real data applications.
%
\appendix
%
\section{Proofs}
\label{appA}
%
\subsection{Proofs of $\mathbf{R}_{1}$ updates when adding rows}
Consider the addition of one row at position $k$ and its permutation at row $p+1$, and assume the division in blocks $\mathbf{Q}_1=\begin{bmatrix}  \mathbf{Q}_{1,1} \\ \mathbf{Q}_{1,2}\end{bmatrix}= \begin{bmatrix}  \mathbf{Q}[1:p,1:p] \\ \mathbf{Q}[(p+1):N, 1:p]\end{bmatrix}$ and $\mathbf{Q}_2=\begin{bmatrix}  \mathbf{Q}_{2,1} \\ \mathbf{Q}_{2,2}\end{bmatrix}= \begin{bmatrix}  \mathbf{Q}[1:p,(p+1):N] \\ \mathbf{Q}[(p+1):N, (p+1):N]\end{bmatrix}$. In the thin QR context, equation \eqref{eq:12} becomes
\begin{equation}
\begin{bmatrix} \mathbf{Q}_{1,1}^\top & \mathbf{0}_p &  \mathbf{Q}_{1,2}^\top \\ \mathbf{0}^{\top}_p & 1 & \mathbf{0}^{\top}_{N-p} \\ \mathbf{Q}_{2,1}^\top & \mathbf{0}_{N-p}  & \mathbf{Q}_{2,2}^\top \end{bmatrix} \mathcal{P}(k,p+1) \mathbf{X}^{+} = \begin{bmatrix} \mathbf{R}_1 \\ \mathbf{x}_{\star}^\top  \\ \mathbf{0}_{N-p,p} \end{bmatrix} =  \begin{bmatrix} \widetilde{\mathbf{R}}_1  \\ \mathbf{0}_{N-p,p} \end{bmatrix}.
\label{eq:51}
\end{equation}
The QR factorization of $\mathbf{X}^+$ can be obtained by multiplying the matrices of the QR factorization of $\mathbf{X}$ by the sequence of Gives rotations that zero the $p$-th row of the right-hand side of equation \eqref{eq:51}. Specifically, matrix $\mathbf{R}_1^+$ can obtained by applying the following set of  Givens rotations 
\begin{equation*}
\begin{bmatrix}\mathbf{R}_1^{+} \\
\mathbf{0}^{\top}_p
\end{bmatrix}= \mathbf{G}_p \left(p, p+1\right)^\top \cdots \mathbf{G}_1\left(1, p+1\right)^\top\widetilde{\mathbf{R}}_1.
\end{equation*}
Note that this procedure requires only matrix $\mathbf{R}_1$ and the new row $\mathbf{x}_{\star}$.\par
When $m$ rows, $\mathbf{U}_\star \in \mathbb{R}^{m \times p}$, are added to $\mathbf{X}$, as in equation \eqref{eq:qr_add_block_rows}, the analogous of equation \eqref{eq:24} becomes
\begin{equation}
\begin{bmatrix} \mathbf{Q}_{1,1}^\top & \mathbf{0}_{p, m} & \mathbf{Q}_{1,2}^\top\\ \mathbf{0}_{p, m}^\top & \mathbf{I}_m & \mathbf{0}_{N-p, m}^\top \\ \mathbf{Q}_{2,2}^\top & \mathbf{0}_{N-p, m} & \mathbf{Q}_{2,2}^\top \end{bmatrix} \mathcal{P}_{m}(k, p+1) \mathbf{X}^{+} = \begin{bmatrix} \mathbf{R}_1 \\ \mathbf{U}_{\star} \\ \mathbf{0}_{N-p,p} \end{bmatrix} = \begin{bmatrix} \widetilde{\mathbf{R}}_1  \\ \mathbf{0}_{N-p,p} \end{bmatrix}.
\label{eq:123}
\end{equation}
Analogously to the previous case, the QR factorization of matrix $\mathbf{X}^+$ is obtained by sequentially setting to zero the $m$ rows $\mathbf{U}_{\star}$ of the right-hand side of equation \eqref{eq:123}. Specifically, new matrix $\mathbf{R}_1^+$ can be obtained as follows
\begin{equation*}
\begin{bmatrix}\mathbf{R}_1^{+}\\
\mathbf{0}_{m,p}
\end{bmatrix} = \mathbf{H}_p (p+1, p+m)\times \cdots\times \mathbf{H}_1(p+1, p+m) \widetilde{\mathbf{R}}_1,
\end{equation*}
where $\mathbf{H}_i(j,k)$, $i = 1,2, \dots, p$, is the Householder matrix with normal vector $\mathbf{v}_i(j, k) \in \mathbb{R}^{p+m}$ defined in equation (\ref{eq:householder_generic}). Eventually, the upper triangular sub-matrix $\mathbf{R}_1^+ \in \mathbb{R}^{p \times p}$ is selected. 

\subsection{Proofs of $\mathbf{R}_{1}$ updates when adding columns}

Assume that a new column is added at the end of matrix $\mathbf{X}$, i.e. $\mathbf{X}^{+}= \begin{bmatrix} \mathbf{X} & \mathbf{x}_\star  \end{bmatrix}$ where $\mathbf{X}^{+} \in \mathbb{R}^{N \times (p+1)}$, then equation \eqref{eq:addonecol} becomes 
\begin{equation*}
\begin{bmatrix} \mathbf{Q}^\top_1 \\  \mathbf{Q}^\top_2 \end{bmatrix} \mathbf{X}^{+} = \begin{bmatrix} \mathbf{R}_1 & \mathbf{z}_{\star 1} \\   \mathbf{0}_{N-p, p} & \mathbf{z}_{\star 2} \end{bmatrix} = \widetilde{\mathbf{R}}^{+},
\end{equation*}
where $\mathbf{z}_{\star 1}= \mathbf{Q}^\top_{1} \mathbf{x}_{\star}$ and $\mathbf{z}_{\star 2}= \mathbf{Q}^\top_{2} \mathbf{x}_{\star}$. The QR factorization of $\mathbf{X}^+$ can be obtained by setting to zero the last $N-p-1$ elements of the last column of $\widetilde{\mathbf{R}}^{+}$ as follows
\begin{equation*}
\mathbf{R}^{+} = \mathbf{G}_p (p+1, p+2)^{\top} \times\cdots\times \mathbf{G}_p(N-1, N)^{\top} \widetilde{\mathbf{R}}^{+} = \begin{bmatrix} \mathbf{R}_1 & \mathbf{z}_{\star 1} \\   \mathbf{0}_{p}^{\top} & r_{\star 2} \\
\mathbf{0}_{N-p-1, p} & \mathbf{0}_{N-p-1} \end{bmatrix}. 
\end{equation*}
Hence $\mathbf{R}_{1}^{+}= \begin{bmatrix} \mathbf{R}_1 & \mathbf{z}_{\star 1} \\   \mathbf{0}_{p}^{\top} & r_{\star 2} \end{bmatrix} $, however, this procedure requires the evaluation of matrices $\mathbf{Q}_{1}$ and $\mathbf{Q}_{2}$. In order to avoid such computation, one can directly compute $\mathbf{z}_{\star 1}$ and $r_{\star 2}$ by exploiting the relation
\begin{equation*}
(\mathbf{X}^+)^\top \mathbf{X}^+ = \begin{bmatrix} \mathbf{X}^{\top} \\  \mathbf{x}_{\star}^{\top}\end{bmatrix} \begin{bmatrix} \mathbf{X}^{\top} &  \mathbf{x}_{\star}^{\top}\end{bmatrix}    = (\mathbf{R}_1^{+})^\top \mathbf{R}_1^+ = \begin{bmatrix} \mathbf{R}_1^{\top} & \mathbf{0}_{p} \\ \mathbf{z}_{\star 1}^{\top} & r_{\star 2} \end{bmatrix} \begin{bmatrix} \mathbf{R}_1 & \mathbf{z}_{\star 1} \\   \mathbf{0}_{p}^{\top} & r_{\star 2} \end{bmatrix},
\end{equation*}
which implies that $ \mathbf{z}_{\star 1}$ can be computed by solving the linear system $\mathbf{R}^\top_1 \mathbf{z}_{\star 1} = \mathbf{X}^\top \mathbf{x}_{\star}$, while $r_{\star 2}$ can be computed exploiting the relation $\mathbf{z}_{\star 1}^{\top} \mathbf{z}_{\star 1} + r_{\star 2}^{2}= \mathbf{x}^\top_{\star} \mathbf{x}_{\star}$, so
\begin{equation*}
\mathbf{R}^{+}_1 = \begin{bmatrix} \mathbf{R}_1 & \mathbf{R}^{-\top}_1 \mathbf{X}^\top \mathbf{x}_{\star} \\
\mathbf{0}_{p}^{\top} & (\mathbf{x}^\top_{\star}\mathbf{x}_{\star} - \mathbf{z}_{\star 1}^{\top} \mathbf{z}_{\star 1})^{1/2} \end{bmatrix}.
\end{equation*}
An analogous situation is faced when $m$ columns are added at the end of matrix $\mathbf{X}$, that is when updating $\mathbf{R}_1$ for $\mathbf{X}^{+} = \begin{bmatrix} \mathbf{X} & \mathbf{U}_{\star}  \end{bmatrix}$, where $\mathbf{U}_{\star} \in \mathbb{R}^{N \times m}$. Then
\begin{equation*}
\begin{bmatrix} \mathbf{Q}_1^\top \\  \mathbf{Q}_2^\top \end{bmatrix} \mathbf{X}^{+} = \begin{bmatrix} \mathbf{R}_1 & \mathbf{Z}_{\star 1} \\   \mathbf{0}_{N-p, p} & \mathbf{Z}_{\star 2} \end{bmatrix} = \widetilde{\mathbf{R}}^{+},
\end{equation*}
where $\mathbf{Z}_{\star 1}=\mathbf{Q}_{1}^\top \mathbf{U}_{\star}$ and $\mathbf{Z}_{\star 2}= \mathbf{Q}_{2}^\top \mathbf{U}_{\star}$. Matrix $\mathbf{R}^{+}$ can be obtained through triangularization of matrix $\mathbf{Z}_{\star 2}$ as
\begin{align*}
\mathbf{R}^{+}&=\mathbf{G}_{p+m} (p+m, p+m+1)^{\top}\times \cdots \times\mathbf{G}_{p+m} (N-1, N)^{\top}\times\nonumber\\
&\qquad\qquad\cdots \times \mathbf{G}_{p+1} (p+1, p+2)^{\top} \times \cdots \times\mathbf{G}_{p+1}(N-1, N)^{\top} \widetilde{\mathbf{R}}^{+},
\end{align*}
or alternatively using Householder reflections. Hence $\mathbf{R}_{1}^{+}= \begin{bmatrix} \mathbf{R}_1 & \mathbf{Z}_{\star 1} \\   \mathbf{0}_{m,p} & \mathbf{R}_{\star 2} \end{bmatrix} $, but again this procedure requires the evaluation of matrices $\mathbf{Q}_{1}$ and $\mathbf{Q}_{2}$. In order to avoid such computation, $\mathbf{Z}_{\star 1}$ can be determined by solving the linear system $\mathbf{R}_1^\top \mathbf{Z}_{\star 1} = \mathbf{X}^\top \mathbf{U}_{\star}$, while $\mathbf{R}_{\star 2}$ can be computed by solving $ \mathbf{U}_{\star}^{\top} \mathbf{U}_{\star} =  \mathbf{Z}_{\star 1}^{\top} \mathbf{Z}_{\star 1} + \mathbf{R}_{\star 2}^{\top}\mathbf{R}_{\star 2}$, see Algorithm \ref{alg:thinQRaddcols}.
%
\section{Algorithms}
\label{appB}
%
This supplementary section provides a comprehensive outline of the algorithms used to update the $\mathbf{Q}$ and $\mathbf{R}$ matrices in the QR factorization following the addition or removal of rows or columns from matrix $\mathbf{X}$. For completeness, we also include the algorithm that directly updates the square matrix $\boldsymbol{\Sigma} = (\mathbf{X}^\top\mathbf{X})^{-1}$ via the Sherman-Morrison rank-one update. Additionally, we describe the algorithms for Givens rotations and Householder reflections, included here to detail the exact computational costs presented in Section \ref{appC}.\par
%
%
\begin{algorithm}[H]
\caption{Rank one update of $\mathbf{B}=(\mathbf{X}^{\top}\mathbf{X})^{-1}$  when one column is added at position $1\leq k\leq p+1$, $\mathbf{B}^{+}=\mathsf{onecolinvadd}\left(\mathbf{B}, \mathbf{X}, k,\mathbf{x}_k\right)$}
 \label{alg:onecolinv_update_add}
\begingroup
    \fontsize{9pt}{11pt}
 {\textbf{Input}: $\mathbf{B} \in \mathbb{R}^{p \times p}$, $\mathbf{X} \in \mathbb{R}^{N \times p}$,  $k \in \{1, \ldots, p+1\}$,  $\mathbf{x}_k \in \mathbb{R}^{N}$\;}
	$\mathbf{u}_1 = \mathbf{X}^{\top}\mathbf{x}_k$\;
	$\mathbf{u}_2 = \mathbf{B}\mathbf{u}_1$\;
	$ d = 1 / (\mathbf{x}_k^{\top}\mathbf{x}_k - \mathbf{u}_1^{\top}\mathbf{u}_2)$\;
	$\mathbf{u}_3 =d\mathbf{u}_2$\;
	$\mathbf{F} = \mathbf{B} + \mathbf{u}_3\mathbf{u}^{\top}_2$\;
	$\mathbf{B}^+ = \begin{bmatrix} \mathbf{F} & -\mathbf{u}_3 \\ -\mathbf{u}_3^{\top} & d \end{bmatrix}$\;
	Permute last column and last row of $\mathbf{B}^+$ to position $k$\;
\textbf{return} $\mathbf{B}^+$\;
\endgroup
\end{algorithm}
%
\begin{algorithm}[H]
     \caption{Rank one update of $\mathbf{B}=(\mathbf{X}^{\top}\mathbf{X})^{-1}$  when one column is deleted at position $1\leq k\leq p$, $\mathbf{B}^{-}=\mathsf{onecolinvdel}\left(\mathbf{B},k,\mathbf{x}_k\right)$} 
 \label{alg:onecolinv_update_del}
\begingroup
    \fontsize{9pt}{11pt}
 {\textbf{Input}: $\mathbf{B} \in \mathbb{R}^{p \times p}$,  $k \in \{1, \ldots, p\}$,  $\mathbf{x}_k \in \mathbb{R}^{N}$\;}
	Permute $k$-th column and $k$-th row of $\mathbf{B}$ to position $p$\;
	$\mathbf{F} = \mathbf{B} \left[ 1:(p-1), \; 1:(p-1) \right] $\;
	$ d = \mathbf{B}\left[p, \; p\right]$\;
	$\mathbf{u}_3 =  \mathbf{B} \left[ 1:(p-1), \;p \right]$\;
	$\mathbf{u}_2 = \mathbf{u}_3 / d$\;
	$\mathbf{B}^- = \mathbf{F}- \mathbf{u}_3 \mathbf{u}_2^{\top}$\;
\textbf{return} $\mathbf{B}^-$\;
\endgroup
\end{algorithm}
%
\begin{algorithm}[H]
 	\caption{Householder reflection, $(\tau,\mathbf{v}, \mu )=\textsf{householder}\left(a,\mathbf{x}\right)$} 
\label{alg:householder_reflection}
	\begingroup
    	\fontsize{9pt}{11pt}
 	{\textbf{Input}: $a \in \mathbb{R}$, $\mathbf{x} \in \mathbb{R}^N$}\;
 	$ s = \Vert \mathbf{x} \Vert_2^2, \quad \mathbf{v} = \begin{bmatrix}1 & \mathbf{x}^\top\end{bmatrix}^\top $\;
 	\uIf(){$(s == 0) \; \& \; (a >= 0)$} {
		$\tau = 0$\;
		$\mu=a$\;}
 	\uElseIf(){$(s == 0) \; \& \; (a < 0)$} {
 	 	$\tau = 2$\;
		$\mu=-a$\;
 	 }   
	 \Else(){ 
 		$\mu = \sqrt{s + a^2} $\;
  		\eIf() {$(a \le 0 )$} {
  			$\mathbf{v}\left[1\right] = a - \mu$\; 
		}  {
  		        $\mathbf{v}\left[1\right] = - s / (a + \mu ) $\; 
		}
 	
 $b=(\mathbf{v}[1])^{2}$, $ \tau = 2b / ( s + b), \quad \mathbf{v} = \begin{bmatrix}1 & \big(\mathbf{v}[2:(N+1)]/\mathbf{v}[1]\big)^\top\end{bmatrix}^\top$\;
 }
\textbf{return} $ (\tau,\mathbf{v}, \mu ) $\;
\endgroup
\end{algorithm}
%
%
%
\begin{algorithm}[H]
 \caption{QR decomposition with Householder reflections, $ (\mathbf{R}, \mathbf{Q}) =\textsf{householderQR}\left(\mathbf{X}\right)$} 
\label{alg:qr_householder}
\begingroup
    \fontsize{9pt}{11pt}
 {\textbf{Input}: $\mathbf{X} \in \mathbb{R}^{N \times p}$}\;
 $ \mathbf{R} = \mathbf{X}, \quad \mathbf{Q} = \mathbf{I}_N $\;
 \For(){$\left(i = 1\mbox{\; }i \le p\mbox{\; } i ++\right)$} {
 	 $(\tau,\mathbf{v}, \mu) =\textsf{householder} (\mathbf{R}\left[i, \; i \right],\mathbf{R}\left[ (i+1) : N, \; i \right] )$ as in Algorithm \ref{alg:householder_reflection}\;
 	\tcp{update $\mathbf{R}$}
 	$ \mathbf{R}\left[ i, \; i \right]=\mu $\;
 	$ \mathbf{R}\left[(i+1):N, \; i \right] = \mathbf{0} $\;
 	\If() {$(i < p)$} {
 		$ \mathbf{R}\left[ i:N, \; (i+1):p \right] = \mathbf{R}\left[ i:N, \; (i+1):p \right] - \tau\times(\mathbf{v} \mathbf{v}^{\top} \mathbf{R}\left[ i:N, \; (i+1):p \right] ) $\;
	}
	 	\tcp{update $\mathbf{Q}$}
	$ \mathbf{Q}\left[1:N, \; i:N \right] = \mathbf{Q}\left[1:N, \; i:N \right] - \tau \times ( \mathbf{Q}\left[1:N, \; i:N \right] \mathbf{v} \mathbf{v}^{\top} ) $\;
} 
\textbf{return} $ (\mathbf{R}, \mathbf{Q} ) $\;
\endgroup
\end{algorithm}
%
%
\begin{algorithm}[H]
 \caption{Givens rotation, $ (c,s ) =\textsf{givens}\left(a,b\right)$} 
\label{alg:givens_rotation}
\begingroup
    \fontsize{9pt}{11pt}
 {\textbf{Input}: $a \in \mathbb{R}$, $b \in \mathbb{R}$}\;
 \eIf(){$(b == 0)$}
{$c = 1$\; 
$s = 0$\; } { 
 \eIf{$(\left| b \right| > \left| a \right|)$} { 
 	$r = - a / b$\; 
 	$s = 1 / \sqrt{ 1 + r^2 } $\; 
   	$c = s * r $\;
	\lIf() {$(b> 0)$} {$c=-c$, $s=-s$}
} 
{
  	$ r = -b / a $\; 
	$ c = 1 / \sqrt{ 1 + r^2 } $\; 
	$ s = c * r $\; 
	\lIf() {$(a< 0)$} {$c=-c$, $s=-s$}
	}
  }
\textbf{return} $ (c,s ) $\;
\endgroup
\end{algorithm}
%
%
\begin{algorithm}[H]
\label{alg:givens_QR}
\begingroup
    \fontsize{9pt}{11pt}
 \caption{Givens QR, $ (\mathbf{R}, \mathbf{Q}) =\mathsf{givensQR}\left(\mathbf{X}\right)$} 
 {\textbf{Input}: $\mathbf{X} \in \mathbb{R}^{N \times p}$}\;
 $ \mathbf{R} = \mathbf{X}, \quad \mathbf{Q} = \mathbf{I}_N $\;
 \For(){$\left(j = 1\mbox{\; }j \le p\mbox{\; } j ++\right)$} {
	 \For(){$\left(i = N\mbox{\; }i > j\mbox{\; } i -\right)$} {
		$(c, \; s ) = \textsf{givens} (R\left[i-1, j \right], \; R\left[i, j \right] )$ as in Algorithm \ref{alg:givens_rotation}\;
 		$ \mathbf{G} = \begin{bmatrix} c & s \\ -s & c \end{bmatrix} $\;
		\tcp{update $\mathbf{R}$}
		$\mathbf{R}\left[i-1:i, j:p\right] = \mathbf{G}^\top \mathbf{R}\left[i-1:i, j:p\right]$\;
		\tcp{update $\mathbf{Q}$}
		$\mathbf{Q}\left[1:N, i-1:i\right] = \mathbf{Q}\left[1:N, i-1:i\right] \mathbf{G}$\;
	}
}
\textbf{return} $ (\mathbf{R}, \mathbf{Q}) $\;
\endgroup
\end{algorithm}
\begin{algorithm}[H]
 \caption{QR update when one row is added at position $1\leq k \leq N+1$, $\left(\mathbf{R}^{+},\mathbf{Q}^{+} \right)=\mathsf{qraddrow}\left(\mathbf{Q},\mathbf{R},k,\mathbf{x}_k\right)$} 
\label{alg:qr_add_one_row}
\begingroup
    \fontsize{9pt}{11pt}
 {\textbf{Input}: $\mathbf{Q} \in \mathbb{R}^{N \times N}$, $\mathbf{R} \in \mathbb{R}^{N \times p}$, $k \in \{1, \ldots, N+1\}$, $\mathbf{x}_k \in \mathbb{R}^{p}$}\;
 \eIf{$k==N+1$} {$\mathbf{Q}^+ = \begin{bmatrix} \mathbf{Q} & 0 \\ 0 & 1 \end{bmatrix} $;}
 {	$\mathbf{Q}^+ = \begin{bmatrix} \mathbf{Q} \left[ 1:(k-1), \; \right] & 0 \\ 0 & 1 \\  \mathbf{Q} \left[ k:N, \; \right] & 0 \end{bmatrix} $;}
	 \For(){$\left( i = 1\mbox{\; } i \le p\mbox{\; } i ++ \right)$} {
 	$(c, \; s ) = \textsf{givens} (\mathbf{R}\left[i, \; i \right], \; \mathbf{x}_k\left[  i \right] )$ as in Algorithm \ref{alg:givens_rotation}\;
		\tcp{update $\mathbf{R}$ and $\mathbf{x}_k$ }
 	$ \mathbf{R}\left[ i, \; i \right] = c*\mathbf{R}\left[ i, \; i \right] - s * \mathbf{x}_k\left[  i \right] $\;
 	\If() {$(i < p)$} {
 		$ \mathbf{R}\left[ i, \; (i+1):p \right] = c * \mathbf{R}\left[ i, \; (i+1):p \right] - s*\mathbf{x}_k\left[  (i+1):p \right]   $\;
		$ \mathbf{x}_k\left[  (i+1):p \right] = s*\mathbf{R}\left[ i, \; (i+1):p \right] + c* \mathbf{x}_k\left[  (i+1):p \right]$\;
	}
		\tcp{update $\mathbf{Q}$}
	$ \mathbf{Q}^+\left[, \; i \right] = c* \mathbf{Q}^+\left[, \; i \right] - s* \mathbf{Q}^+\left[, \; N+1 \right]$\;
	$ \mathbf{Q}^+\left[, \; N+1 \right] = s* \mathbf{Q}^+\left[, \; i \right] + c* \mathbf{Q}^+\left[, \; N+1 \right]$\;
	}
	$\mathbf{R}^+ = \begin{bmatrix} \mathbf{R} \\ \mathbf{0}_{1 \times p} \end{bmatrix}$\;
	\textbf{return} $ (\mathbf{R}^{+}, \;\; \mathbf{Q}^{+} ) $\;
\endgroup
\end{algorithm}
%
\begin{algorithm}[H]
 \caption{QR update when one row is removed at position $1\leq k\leq N$, $\left(\mathbf{R}^{-},\mathbf{Q}^{-} \right)=\mathsf{qrdelrow}\left(\mathbf{Q},\mathbf{R},k\right)$} 
\label{alg:qr_delete_one_row}
\begingroup
    \fontsize{9pt}{11pt}
 {\textbf{Input}: $\mathbf{Q} \in \mathbb{R}^{N \times N}$, $\mathbf{R} \in \mathbb{R}^{N \times p}$, $k \in \{1, \ldots, N\}$}\;
	$\mathbf{q} = \mathbf{Q}\left[ k, \;\; \right] $\; 
	\lIf(){$(k \ne 1)$} {
		$ \mathbf{Q}\left[ 2:k, \;  \right] =  \mathbf{Q}\left[ 1:(k-1), \; \right] $
	}
	 \For(){$ \left( i = N\mbox{\; } i > 1\mbox{\; } i - - \right)$} {
 		$(c, \; s ) = \textsf{givens} (\mathbf{q}\left[i-1 \right], \mathbf{q}\left[  i  \right] )$ as in Algorithm \ref{alg:givens_rotation}\;
 		$ \mathbf{G} = \begin{bmatrix} c & s \\ -s & c \end{bmatrix} $\;
 			\tcp{update $ \mathbf{q} $} 
		$ \mathbf{q}\left[ i-1 \right] = c*\mathbf{q}\left[i-1\right] - s* \mathbf{q}\left[  i  \right]$\;
 			\tcp{update $\mathbf{R}$}
 		\lIf() {$(i-1 \le p)$} {
 			$ \mathbf{R}\left[ (i-1) : i, \; (i-1):p \right] = \mathbf{G}^{\top} \mathbf{R}\left[ (i-1) : i, \; (i-1):p \right]$
		}
			\tcp{update $\mathbf{Q}$}
		\lIf() {$(i-1 > 1)$} {
 			$ \mathbf{Q}\left[ 2 : N, \; (i-1):i \right] = \mathbf{Q}\left[ 2 : N, \; (i-1):i \right]\mathbf{G} $
		}
	}
	$\mathbf{R}^{-} = \mathbf{R}\left[ 2:N, \;\; \right], \quad \mathbf{Q}^{-} = \mathbf{Q}\left[ 2:N, \;2:N \right]$\;
	\textbf{return} $ (\mathbf{R}^{-},\mathbf{Q}^{-} ) $\;
\endgroup
\end{algorithm}
%
%
\begin{algorithm}[H]
 \caption{QR update when one column is added at position $1\leq k\leq p+1$, $\left(\mathbf{R}^{+},\mathbf{Q}^{+} \right)=\mathsf{qraddcol}\left(\mathbf{Q},\mathbf{R},k,\mathbf{x}_k\right)$} 
\label{alg:qr_add_one_col}
\begingroup
    \fontsize{9pt}{11pt}
 {\textbf{Input}: $\mathbf{Q} \in \mathbb{R}^{N \times N}$, $\mathbf{R} \in \mathbb{R}^{N \times p}$, $k \in \{1, \ldots, p+1\}$, $\mathbf{x}_k \in \mathbb{R}^{N}$}\;
	$\mathbf{v}_k = \mathbf{Q}^{\top} \mathbf{x}_k$\;
	\eIf{$k==p+1$}{$\mathbf{R}^+ = \begin{bmatrix} \mathbf{R}  &\mathbf{v}_k   \end{bmatrix} $\;}{$\mathbf{R}^+ = \begin{bmatrix} \mathbf{R} \left[, \;\; 1:(k-1) \right] &\mathbf{v}_k & \mathbf{R} \left[, \; \;k:p\right]  \end{bmatrix} $\;}
	 \For(){$\left( i = N\mbox{\; } i > k\mbox{\; } i - - \right)$} {
 		$(c, \; s ) = \textsf{givens} (\mathbf{v}_k\left[i-1 \right], \; \mathbf{v}_k\left[  i  \right] )$ as in Algorithm \ref{alg:givens_rotation}\;
 		$ \mathbf{G} = \begin{bmatrix} c & s \\ -s & c \end{bmatrix} $\;
			\tcp{update $\mathbf{v}_k$}
		$\mathbf{v}_k\left[ i-1 \right] = c*\mathbf{v}_k\left[ i-1 \right] - s * \mathbf{v}_k\left[ i  \right] $\;
		$\mathbf{v}_k\left[ i \right] = 0$\;
 			\tcp{update $\mathbf{R}$}
 		\lIf() {$(i-1 \le p)$} {
 			$ \mathbf{R}\left[ (i-1) : i, \; (i-1):p \right] = \mathbf{G}^{\top} \mathbf{R}\left[ (i-1) : i, \; (i-1):p \right]$	}
		\tcp{update $\mathbf{Q}$}
	$ \mathbf{Q}\left[, \; (i-1):i \right] = \mathbf{Q}\left[, \; (i-1):i \right]\mathbf{G}$\;
	}
	$\mathbf{R}^{+} \left[, \;\; k \right] = \mathbf{v}_k$\;
	\textbf{return} $ (\mathbf{R}^{+}, \;\; \mathbf{Q} ) $\;
\endgroup
\end{algorithm}
%
%
\begin{algorithm}[H]
 \caption{QR update when one column is deleted at position $1\leq k\leq p$, $\left(\mathbf{R}^{-},\mathbf{Q}^{-} \right)=\mathsf{qrdelcol}\left(\mathbf{Q},\mathbf{R},k\right)$} 
\label{alg:qr_delete_one_col}
\begingroup
    \fontsize{9pt}{11pt}
 {\textbf{Input}: $\mathbf{Q} \in \mathbb{R}^{N \times N}$, $\mathbf{R} \in \mathbb{R}^{N \times p}$, $k \in \{1, \ldots, p\}$}\;
	\eIf(){$(k == p)$} {
		\textbf{return} $( \mathbf{R}^{-} =  \mathbf{R}\left[ ,\; \;1:(p-1) \right]) $\;
	} {
		$ \mathbf{R}\left[ , \;k:(p-1) \right] = \mathbf{R}\left[ , \;(k+1):p \right] $\;
	}
	 \For(){$\left(i = k\mbox{\; } i < p\mbox{\; } i + + \right)$} {
 		$(c, \; s ) =  \textsf{givens} (\mathbf{R}\left[i, \;i \right], \mathbf{R}\left[  i+1, \; i  \right] )$ as in Algorithm \ref{alg:givens_rotation}\;
 		$ \mathbf{G} = \begin{bmatrix} c & s \\ -s & c \end{bmatrix} $\;
  			\tcp{update $\mathbf{R}$}
		$ \mathbf{R}\left[ i, \;i \right] = c*\mathbf{R}\left[i , \;i \right] - s*\mathbf{R}\left[i+1 , \;i \right]$\;
		$ \mathbf{R}\left[i+1 , \; i \right] = 0 $\;
		\If(){$i<p-1$}{
		$ \mathbf{R}\left[ i : i+1, \; (i+1):(p-1) \right] = \mathbf{G}^{\top} \mathbf{R}\left[ i : i+1, \; (i+1):(p-1) \right]$\;
		}
			\tcp{update $\mathbf{Q}$}
 		$ \mathbf{Q}\left[ , \;i : (i+1) \right] = \mathbf{Q}\left[, \; i:(i+1) \right]\mathbf{G} $\;
	}

	$\mathbf{R}^{-} = \mathbf{R}\left[, \;\; 1:(p-1) \right]$\;

	\textbf{return} $ (\mathbf{R}^{-}, \;\; \mathbf{Q}) $\;
\endgroup
\end{algorithm}
%
\begin{algorithm}[H]
 \caption{QR update when a block of $m\geq 2$ rows is added from position $1\leq k\leq N+1$ to position $k+m-1$, $ (\mathbf{R}^{+},\mathbf{Q}^{+} ) =\mathsf{qraddblockrows}\left(\mathbf{Q},\mathbf{R},k,\mathbf{U}_k\right)$} 
\label{alg:qrupdate_m_rows_added}
\begingroup
    \fontsize{9pt}{11pt}
 {\textbf{Input}: $\mathbf{Q} \in \mathbb{R}^{N \times N}$, $\mathbf{R} \in \mathbb{R}^{N \times p}$, $k \in \{1, \ldots, N+1\}$, $\mathbf{U}_k \in \mathbb{R}^{m \times p}$ }\;
	 \eIf{$k==N+1$} {$\mathbf{Q}^+ = \begin{bmatrix} \mathbf{Q} & \mathbf{0}_{N\times m} \\  \mathbf{0}_{m\times N} & \mathbf{I}_m  \end{bmatrix} $;}
 {	$\mathbf{Q}^+ = \begin{bmatrix} \mathbf{Q} \left[ 1:(k-1), \; \right] & \mathbf{0}_{(k-1)\times m} \\ \mathbf{0}_{m\times N} & \mathbf{I}_m \\  \mathbf{Q} \left[ k:N, \; \right] & \mathbf{0}_{(N-k+1)\times m} \end{bmatrix} $\;}
	\For(){$\left( i = 1\mbox{\; } i \le p\mbox{\; } i ++ \right)$} {
	
	\eIf(){$(i < p)$} {
 	$(\tau, \; \mathbf{v}, \; \mu ) = \textsf{householder} (\mathbf{R}\left[i, \; i \right], \; \mathbf{U}_k\left[  , \; i \right] )$, as in Algorithm \ref{alg:householder_reflection}\;
		\tcp{update $\mathbf{R}$}
	$\mathbf{v}_1 = \tau * \mathbf{v}[2:(m+1)]$\;
 	$ \mathbf{R}\left[ i, \; i \right] = \mu$\;
 		\tcp{update $\mathbf{U}_k$}
		$\mathbf{r} = \tau * \left(\mathbf{R}\left[ i, \; (i+1):p \right] + \mathbf{v}[2:(m+1)]^{\top} \mathbf{U}_k \left[ , \; (i+1):p \right]\right)$\;
		 $ \mathbf{R}\left[ i, \; (i+1):p \right] = \mathbf{R}\left[ i, \; (i+1):p \right] - \mathbf{r}$\;
		$\mathbf{U}_k\left[, \; (i+1):p \right] = \mathbf{U}_k\left[, \; (i+1):p \right] - \mathbf{v}[2:(m+1)]  \mathbf{r}$
		
	}{
	$ \mathbf{R}\left[ i, \; i \right] = \sqrt{(\mathbf{R}[i,i])^2 + \Vert \mathbf{U}_k\left[, i\right]\Vert_2^2}$\;
	}
	\tcp{update $\mathbf{Q}$}
	$\mathbf{q} =  \tau * \left( \mathbf{Q}^+\left[, \;\; i \right] +  \mathbf{Q}^+\left[, \; (N+1):(N+m) \right] \mathbf{v}[2:(m+1)]\right)$\;
	$ \mathbf{Q}^{+}\left[, \;\; i \right] = \mathbf{Q}^+\left[, \;\; i \right] -\mathbf{q}$\;
	$ \mathbf{Q}^{+}\left[, \; (N+1):(N+m) \right] = \mathbf{Q}^+\left[, \; (N+1):(N+m)\right] - \mathbf{q}\mathbf{v}[2:(m+1)]^{\top}$\;
	}
Compute:	$\mathbf{R}^+ = \begin{bmatrix} \mathbf{R} \\ \mathbf{0}_{m \times p} \end{bmatrix}$\;
	\textbf{return} $ (\mathbf{R}^{+},\mathbf{Q}^{+} ) $\;	
\endgroup
\end{algorithm}
%
%
%
\begin{algorithm}[H]
\caption{QR update when a block of $2\leq m<N$ rows is deleted from position $1\leq k\leq N-m+1$ to $k+m-1$, $ (\mathbf{R}^{+},\mathbf{Q}^{+} ) =\mathsf{qrdelblockrows}\left(\mathbf{Q},\mathbf{R},k,m\right)$} 
\label{alg:qrupdate_m_rows_deleted}
	\begingroup
    	\fontsize{9pt}{11pt}
	 {\textbf{Input}: $\mathbf{Q} \in \mathbb{R}^{N \times N}$, $\mathbf{R} \in \mathbb{R}^{N \times p}$, $k \in \{1, \ldots, N-m+1\}$, $m \in \{2, \ldots N-1\}$}\;
	
	Set $\mathbf{W} = \mathbf{Q}\left[k:(k+m-1), \right]$\;
	\lIf(){$(k \ne 1)$} {$\mathbf{Q} \left[ (m+1):(k+m-1), \right] = \mathbf{Q} \left[ 1:(k-1), \right]$}
	
	 \For(){$ \left( j = 1\mbox{\; } j \le m\mbox{\; } j++ \right)$} {
	 	\For(){$ \left( i = N-1\mbox{\; } i \ge j\mbox{\; } i- \right)$} {
	 		$(c, \; s ) =  \mbox{\textsf{givens}} (\mathbf{W}\left[j, \; i \right], \mathbf{W}\left[  j ,\; i+1\right] )$ as in Algorithm \ref{alg:givens_rotation}\;
 			$ \mathbf{G} = \begin{bmatrix} c & s \\ -s & c \end{bmatrix} $\;
 			\tcp{update $ \mathbf{W} $} 
			$ \mathbf{W}\left[ j, \; i \right] = c * \mathbf{W}\left[ j, \; i \right] - s * \mathbf{W}\left[ j, \; i+1 \right]$\;
			\If{$(j < m)$} {$\mathbf{W}\left[ (j+1):m, \; i:(i+1) \right] = \mathbf{W}\left[ (j+1):m, \; i:(i+1) \right] \mathbf{G}$}
			\tcp{update $\mathbf{R}$}
 			\If(){$(i \le p + j - 1)$} {
 				$\mathbf{R} \left[ i:(i+1), \; (i-j+1):p\right] = \mathbf{G}^{\top} \mathbf{R} \left[ i:(i+1), \; (i-j+1):p\right]$\;
 			}
				\tcp{update $\mathbf{Q}$}
			$\mathbf{Q} \left[ (m+1):N, \; i:(i+1)\right] = \mathbf{Q} \left[ (m+1):N, \; i:(i+1)\right]  \mathbf{G}$\;
	 	}
	}
	$\mathbf{R}^{-} = \mathbf{R}\left[ (m+1):N, \; \right], \quad \mathbf{Q}^{-} = \mathbf{Q}\left[ (m+1):N, \;(m+1):N \right]$\;
	\textbf{return} $ (\mathbf{R}^{-},\mathbf{Q}^{-} ) $\;
	\endgroup
\end{algorithm}
%
\begin{algorithm}[H]
 	\caption{QR update when a block of $m\geq 2$ columns is added from position $1\leq k\leq p+1$ to $k+m-1$, $(\mathbf{R}^{+},\mathbf{Q}^{+}) =\mathsf{qraddblockcols}\left(\mathbf{Q},\mathbf{R},k,\mathbf{U}_k\right)$} 
\label{alg:qraddblockcolumns}
	\begingroup
	\fontsize{9pt}{11pt}
 	{\textbf{Input}: $\mathbf{Q} \in \mathbb{R}^{N \times N}$, $\mathbf{R} \in \mathbb{R}^{N \times p}$, $k \in \{ 1, \ldots, p+1\}$, $\mathbf{U}_k \in \mathbb{R}^{N \times m}$}\;
 	
 	\tcp{add columns}
	Set $\mathbf{V}_k = \mathbf{Q}^{\top} \mathbf{U}_k$\;
	\For(){$\left( j = 1\mbox{\; } j \le m\mbox{\; } j + + \right)$}{
		\For(){$\left( i = N\mbox{\; } i \geq k+j\mbox{\; } i - - \right)$} {
 			$(c,s ) = \textsf{givens} (\mathbf{V}\left[ i-1 , \; j \right], \;\mathbf{V}\left[ i , \; j \right])$ as in Algorithm \ref{alg:givens_rotation}\;
 			$ \mathbf{G} = \begin{bmatrix} c & s \\ -s & c \end{bmatrix}$\;
				\tcp{update $\mathbf{V}_k$}
			$\mathbf{V}\left[ i-1 , \; j \right] = c*\mathbf{V}\left[ i-1 , \; j\right] - s * \mathbf{V}\left[ i ,\; j\right]$\;
			$\mathbf{V}\left[ i, \; j\right] = 0$\;
 			\lIf() {$(j < m)$} {
 				$ \mathbf{V}\left[ (i-1) : i, \; (j+1):m \right] = \mathbf{G}^{\top} \mathbf{R}\left[ (i-1) : i, \; (j+1):m \right]$}
				\tcp{update $\mathbf{R}$}
			\If(){$ (i \le p + j)$} {
 				$\mathbf{R} \left[ (i-1):i, \; (i-j):p\right] = \mathbf{G}^{\top} \mathbf{R} \left[ (i-1):i, \; (i-j):p\right]$\;
			}
			\tcp{update $\mathbf{Q}$}
			$ \mathbf{Q}\left[, \; (i-1):i \right] = \mathbf{Q}\left[, \; (i-1):i \right]\mathbf{G}$\;
		}
	}
	$\mathbf{R}^{+} = \begin{bmatrix} \mathbf{R} \left[, \; 1:(k-1) \right] &\mathbf{V}_k & \mathbf{R} \left[, \; k:p\right]  \end{bmatrix}$\;
	\textbf{return} $ (\mathbf{R}^{+},\mathbf{Q} ) $\;
	\endgroup
\end{algorithm}
%
\begin{algorithm}[H]
 \caption{QR update when a block of $2\leq m< p$ columns is deleted from position $1\leq k\leq p-m+1$ to $k+m-1$, $ (\mathbf{R}^{-},\mathbf{Q}^{-} ) =\mathsf{qrdelblockcols}\left(\mathbf{Q},\mathbf{R},k,m\right)$} 
\label{alg:qrdeleteblockcolumns}
\begingroup
    \fontsize{9pt}{11pt}    
 {\textbf{Input}: $\mathbf{Q} \in \mathbb{R}^{N \times N}$, $\mathbf{R} \in \mathbb{R}^{N \times p}$, $k \in \{1, \ldots, p-m+1 \}$, $m \in \{2, \ldots, p-1 \}$}\;
	 	\tcp{remove columns}
	 \eIf{ $(k == p-m+1 )$ } {
	 	\textbf{return} $(\mathbf{R}^{-} = \mathbf{R} \left[, \; 1:(k-1) \right], \;\;\mathbf{Q})$\;
	} {
		$\mathbf{R} \left[, \; k:(p-m) \right] = \mathbf{R} \left[, \; (k+m):p \right]$\;
	}
	 \For(){$\left( i = k\mbox{\; } i \le p-m\mbox{\; } i ++ \right)$} {
 	$(\tau,\mathbf{v}, \mu ) = \textsf{householder} (\mathbf{R}\left[i, \; i \right], \; \mathbf{R}\left[  (i+1):(i+m), \; i \right] )$ as in Algorithm \ref{alg:householder_reflection}\;
		\tcp{update $\mathbf{R}$}
	$ \mathbf{R}\left[ i, \; i \right] = \mu$\;
	$ \mathbf{R}\left[ (i+1):(i+m), \; i\right] = \mathbf{0}_{m}$\;
	$\mathbf{v}_1 = \tau * \mathbf{v}$\;
	\lIf(){$(i < p-m)$}{
		$\mathbf{R}\left[ i:(i+m), \; (i+1):(p-m)\right] =  \mathbf{R}\left[ i:(i+m), \; (i+1):(p-m)\right] -\mathbf{v}_1(\mathbf{v}^\top\mathbf{R}\left[ i:(i+m), \; (i+1):(p-m)\right])$
	}
 		\tcp{update $\mathbf{Q}$}
	$ \mathbf{Q}\left[, \; i:(i+m) \right] = \mathbf{Q}\left[, \; i:(i+m) \right] - (  \mathbf{Q}\left[, \; i:(i+m) \right] \mathbf{v}_1) \mathbf{v}^{\top}$\;
	}
	$\mathbf{R}^{-} = \mathbf{R}\left[, \; 1:(p-m) \right]$\;
	\textbf{return} $ (\mathbf{R}^{-},\mathbf{Q} ) $\;
	\endgroup
\end{algorithm}
%
\begin{algorithm}[H]
	\begingroup
    	\fontsize{9pt}{11pt}
 	 \caption{Apply either Givens rotation or Householder reflection to column $i$, $\left(\mathbf{R}, \mathbf{Q}\right) = \mathsf{qrstep}\left(\mathbf{Q}, \mathbf{R}, i, a\right)$} 
	\label{alg:QRstep}
 	{\textbf{Input}: $\mathbf{Q} \in \mathbb{R}^{N \times N}$, $\mathbf{R} \in \mathbb{R}^{N \times l}$, $i \in \left\{1, \dots, l \right\}$, $a \in \{1, \ldots, N-i\}$}\;
	
	\eIf(){$\left( a > 1 \right)$} {
		\tcp{Householder reflection}
		$\left(\tau,\mathbf{v},  \mu \right) = \textsf{householder} \left(\mathbf{R} \left[i, i\right], \mathbf{R} \left[(i+1):(i+a), i\right]\right)$ as in Algorithm \ref{alg:householder_reflection}\;
		
		\tcp{update $\mathbf{R}$}
		$\mathbf{v}_s = \tau * \mathbf{v}$\;
		$\mathbf{R}_1\left[i, i \right] = \mu$\;
 		$\mathbf{R}_1\left[(i+1):(i+a), i \right] = \mathbf{0}_a$\;
 		\lIf(){$\left( i < l \right)$}{
 			$\mathbf{R}\left[i:(i+a), (i+1):l \right] = \mathbf{R}\left[i:(i+a), (i+1):l \right] - \mathbf{v}_s\left(\mathbf{v}^\top\mathbf{R}\left[i:(i+a), (i+1):l \right]\right)$}
 			
 		\tcp{update $\mathbf{Q}$}
 		$\mathbf{Q} \left[, i:(i+a)\right] = \mathbf{Q} \left[, i:(i+a)\right] - \mathbf{Q} \left[, i:(i+a)\right] \mathbf{v}_s \mathbf{v}^\top$\;
 	
 	} {	
 		$(c,  s ) =  \textsf{givens} \left(\mathbf{R}\left[i, i\right], \mathbf{R}\left[  i+1, i  \right] \right)$ as in Algorithm \ref{alg:givens_rotation}\;
 		$ \mathbf{G} = \begin{bmatrix} c & s \\ -s & c \end{bmatrix} $\;
 		
 		\tcp{update $\mathbf{R}$}
		$ \mathbf{R}\left[ i, i \right] = c*\mathbf{R}\left[i , i \right] - s*\mathbf{R}\left[i+1 , i \right]$; \\
		$ \mathbf{R} \left[i+1 , i \right] = 0 $\;

		$ \mathbf{R} \left[ i : (i+1), (i+1):l \right] = \mathbf{G}^{\top} \mathbf{R}\left[ i : (i+1), (i+1):l \right]$\;
		
		\tcp{update $\mathbf{Q}$}
 		$\mathbf{Q} \left[, i:(i+1)\right] = \mathbf{Q} \left[, i:(i+1)\right] \mathbf{G}$\;

	}
	\textbf{return} $\left( \mathbf{R}, \mathbf{Q}\right) $\;
	\endgroup
\end{algorithm}
%
\begin{algorithm}[H]
	\begingroup
    	\fontsize{9pt}{11pt}
 	 \caption{Delete $m$ non-adjacent columns, $\left(\mathbf{R}^{-}, \mathbf{Q}^{-}\right) = \mathsf{qrdelblockcols\_nonadj}\left(\mathbf{Q}, \mathbf{R}, \mathbf{k}\right)$} 
	\label{alg:QRdeletenonadjcols}
 	{\textbf{Input}: $\mathbf{Q} \in \mathbb{R}^{N \times N}$, $\mathbf{R} \in \mathbb{R}^{N \times p}$, $\mathbf{k}[i] \in \{1, \ldots, p\}, i=1, \ldots, m, \, k[i] < k[j] \, \forall \, i<j, i,j =1, \ldots, m$}\;
		\tcp{delete $m$ columns}
		\lIf(){$\left(m = 1\right)$}{ \textbf{return} $ \textsf{qrdelcol}\left(\mathbf{Q},\mathbf{R}, \mathbf{k}\left[1\right]\right)$}
	\lIf(){$\left(\left(\mathbf{k}\left[m\right] - \mathbf{k}\left[1\right]\right) = (m-1)\right)$}{ \textbf{return} $\textsf{qrdelblockcols}\left(\mathbf{Q}, \mathbf{R}, \mathbf{k}\left[1\right], m\right)$}
	$\mathbf{e} = 1:p$\;
	$\overline{\mathbf{k}} = \mathbf{e} \setminus \mathbf{k}$\;
	$l = \overline{\mathbf{k}}[p-m]$\;
	$q=m-(p-l)$\;
	$\mathbf{k}=\mathbf{k}[1:q]$\;
	$\mathbf{R} = \mathbf{R} \left[1:l, 1:l\right]$\;
	\lIf(){$\left(q = 1\right)$}{ \textbf{return} $ \textsf{qrdelcol}\left(\mathbf{Q},\mathbf{R}, \mathbf{k}\left[1\right]\right)$}
	\lIf(){$\left(\left(\mathbf{k}\left[q\right] - \mathbf{k}\left[1\right]\right) = (q-1)\right)$}{ \textbf{return} $\textsf{qrdelblockcols}\left(\mathbf{Q}, \mathbf{R}, \mathbf{k}\left[1\right], q\right)$}
	
		\tcp{delete columns}
	$\mathbf{R} = \mathbf{R} \left[, \overline{\mathbf{k}}\right]$\;	
	\tcp{compute $\mathbf{a}[1]$}
	$\overline{\mathbf{k}} = \overline{\mathbf{k}}\left[\mathbf{k}\left[1\right]:(l-q)\right]$\;
	$\mathbf{a}[1] = \overline{\mathbf{k}}[1] - \mathbf{k}[1]$\;
	
	\tcp{update $\mathbf{Q}$ and $\mathbf{R}$}
	\For(){$\left( i = 1 \mbox{\; } i \le (l-q-\mathbf{k}[1]+1) \mbox{\; } i ++ \right)$} {		
		$\left(\mathbf{Q}, \mathbf{R}\right) = \textsf{qrstep}\left(\mathbf{Q}, \mathbf{R}, i+\mathbf{k}[1]-1, \mathbf{a}[i]\right)$\;
		\lIf(){$\left(i < (l-q-\mathbf{k}[1]+1)\right)$}{$\mathbf{a}[i+1] = \mathbf{a}[i] + \left(\overline{\mathbf{k}}[i+1] - \overline{\mathbf{k}}[i]\right) - 1$}
	}

	\textbf{return} $\left(\mathbf{R}, \mathbf{Q} \right) $\;
	\endgroup
\end{algorithm}
%
\begin{algorithm}[H]
\fontsize{9pt}{11pt}
\caption{R update when a row is added at position $k=N+1$, $\mathbf{R}_1^{+} = \mathsf{thinqraddrow}\left(\mathbf{R}_1, \mathbf{u}\right)$}
\label{alg:thinQRaddrow}{\textbf{Input}: $\mathbf{R}_1 \in \mathbb{R}^{p \times p}$, $\mathbf{u} \in \mathbb{R}^{p}$ \;}
	
	\tcp{add one row}
	\For(){$\left( i = 1\mbox{\; } i \le p\mbox{\; } i ++ \right)$} {
		$(c, s ) = \textsf{givens} (\mathbf{R}_1\left[i, i \right], \mathbf{u} \left[  i \right] )$ as in Algorithm \ref{alg:givens_rotation}\;
 	
 		\tcp{update $\mathbf{R}_1$ and $\mathbf{u}$}
		$ \mathbf{R}_1\left[ i, i \right] = c*\mathbf{R}_1\left[ i, i \right] - s * \mathbf{u}\left[  i \right] $\;
 		\If() {$(i < p)$} {
 			$ \mathbf{r}_i =  \mathbf{R}_1\left[ i, (i+1):p \right]$\;
 			$ \mathbf{R}_1\left[ i, (i+1):p \right] = c * \mathbf{r}_i - s*\mathbf{u} \left[  (i+1):p \right]   $\;
 			$ \mathbf{u} \left[  (i+1):p \right] = s* \mathbf{r}_i + c* \mathbf{u} \left[  (i+1):p \right]$\;
		}
	}
	\textbf{return} $ \mathbf{R}_1 $\;
\end{algorithm}
%
%
\begin{algorithm}[H]
    	\fontsize{9pt}{11pt}
 	\caption{R update when a row is deleted at position $1\leq k\leq N$, $\mathbf{R}_1^{-} = \mathsf{thinqrdelrow}\left(\mathbf{R}_1, \mathbf{u}_k\right)$} 
 	\label{alg:thinQRdeleterow}
 	{\textbf{Input}: $\mathbf{R}_1 \in \mathbb{R}^{p \times p}$, $\mathbf{u}_k \in \mathbb{R}^{p}$}\;
 	\tcp{delete one row}
 	
 	\For(){$\left( i = 1\mbox{\; } i \le p\mbox{\; } i ++ \right)$} {
 		\tcp{update $\mathbf{R}_1$ and $\mathbf{u}_k$}
 		$ r = \mathbf{R}_1 \left[i, i\right] $\; 
 		$ \mathbf{R}_1 \left[i, i\right]  = \sqrt{ \left\vert (\mathbf{R}_1\left[i, i\right])^2 - (\mathbf{u}_k\left[i\right])^2\right\vert}$\;
 		
 		\If() {$\left(i < p \right)$}  {
 			$c = \mathbf{R}_1\left[i, i\right] / r$\;
 			$s = -\mathbf{u}_k\left[i\right] / r$\;
 		
 			$\mathbf{R}_1 \left[i, (i+1):p \right] = \left( \mathbf{R}_1\left[i, (i+1):p\right] + s * \mathbf{u}_k\left[(i+1):p\right] \right) / c$\;
 			$\mathbf{u}_k \left[(i+1):p\right] = s *\mathbf{R}_1 \left[i, i:p \right] + c * \mathbf{u}_k \left[(i+1):p\right]$\;
 		}
	}
	\textbf{return} $ \mathbf{R}_1 $\;
\end{algorithm}
%
%
\begin{algorithm}[H]
    	\fontsize{9pt}{11pt}
 	\caption{R update when a column is added at position $k=p+1$, $\mathbf{R}_1^{+} = \mathsf{thinqraddcol}\left(\mathbf{R}_1, \mathbf{X}, \mathbf{u}\right)$} 
 	\label{alg:thinQRaddcol}
 	{\textbf{Input}: $\mathbf{R}_1 \in \mathbb{R}^{p \times p}$, $\mathbf{X} \in \mathbb{R}^{N \times p}$, $\mathbf{u} \in \mathbb{R}^{N}$}\;
 	\tcp{add one column}
 	
 	 Solve $\mathbf{R}_1^\top \mathbf{r}_{12} = \mathbf{X}^\top \mathbf{u}$ with respect to $\mathbf{r}_{12}$ with forward substitution algorithm\;
 	 $\mathbf{R}_1 = \begin{bmatrix} \mathbf{R}_1 & \mathbf{r}_{12} \\ \mathbf{0}_{1 \times p} & 0 \end{bmatrix} $\;  
 	 \tcp{update $\mathbf{R}_1$}
 	 $\mathbf{R}_1 \left[p+1, p+1\right] = \Vert \mathbf{u} \Vert_{2}^{2} - \Vert \mathbf{r}_{12} \Vert_{2}^{2}$\;
	$\mathbf{R}_1 \left[p+1, p+1\right] = \sqrt{\vert \mathbf{R}_1 \left[p+1, p+1\right]\vert}$\;
	\textbf{return} $ \mathbf{R}_1 $\;
\end{algorithm}
%
%
\begin{algorithm}[H]
    	\fontsize{9pt}{11pt}
 	\caption{R update when a column is deleted at position $1\leq k\leq p$, $\mathbf{R}_1^{-} = \mathsf{thinqrdelcol}\left(\mathbf{R}_1, k \right)$} 
 	\label{alg:thinQRdeletecol}
 	{\textbf{Input}: $\mathbf{R}_1 \in \mathbb{R}^{p \times p}$, $k \in \{1, \ldots, p\}$}\;
 	\tcp{delete one column}
 	\lIf() {$(k = p)$} {\textbf{return}  $\mathbf{R}_1 \left[1:(p-1),  1:(p-1) \right] $}
	$\mathbf{R}_1\left[, k:(p-1) \right] = \mathbf{R}_1 \left[, (k+1):p \right]$\;
	\For(){$\left(i = k \mbox{\; } i < p\mbox{\; } i + + \right)$} {
 		$(c,  s ) =  \textsf{givens} \left(\mathbf{R}_1\left[i, i\right], \mathbf{R}_1\left[  i+1, i  \right] \right)$ as in Algorithm \ref{alg:givens_rotation}\;
 		$ \mathbf{G} = \begin{bmatrix} c & s \\ -s & c \end{bmatrix} $\;
 		\tcp{update $\mathbf{R}_1$}
		$ \mathbf{R}_1\left[ i, i \right] = c*\mathbf{R}_1\left[i , i \right] - s*\mathbf{R}_1\left[i+1 , i \right]$\;
		$ \mathbf{R}_1\left[i+1 , i \right] = 0 $\;
		\If(){$(i < p-1)$}{
		$ \mathbf{R}_1\left[ i : i+1, (i+1):(p-1) \right] = \mathbf{G}^{\top} \mathbf{R}_1\left[ i : i+1, (i+1):(p-1) \right]$\;}
	}
	\textbf{return} $\mathbf{R}_1= \mathbf{R}_1\left[1:(p-1),  1:(p-1) \right] $\;
\end{algorithm}
%
%
\begin{algorithm}[H]
    	\fontsize{9pt}{11pt}
 	\caption{R update when $m\geq 2$ rows are added from position $N+1$ to $N+m$, $\mathbf{R}_1^{+} = \mathsf{thinqraddblockrows}\left(\mathbf{R}_1, \mathbf{U}\right)$} 
 	\label{alg:thinQRaddrows}
 	{\textbf{Input}: $\mathbf{R}_1 \in \mathbb{R}^{p \times p}$, $\mathbf{U} \in \mathbb{R}^{m \times p}$ }\;
 	\tcp{add $m$ rows}
 	\For(){$\left( i = 1\mbox{\; } i \le p-1\mbox{\; } i ++ \right)$} {
		 \tcp{update $\mathbf{R}_1$ and $\mathbf{U}$}
			$(\tau, \mathbf{v}, \mu ) = \textsf{householder} (\mathbf{R}_1\left[i, i \right], \mathbf{U} \left[  , i \right] )$ as in Algorithm \ref{alg:householder_reflection}\;
			$\mathbf{R}_1\left[i, i \right] = \mu$\;
			$ \mathbf{l} =  \tau * \left( \mathbf{R}_1\left[ i, (i+1):p\right]+ (\mathbf{v}\left[2:(m+1)\right])^\top \mathbf{U} \left[ ,  (i+1):p \right]\right)^\top $\;
 			$ \mathbf{R}_1\left[ i,  (i+1):p \right] = \mathbf{R}_1\left[ i, (i+1):p \right] - \mathbf{l}^\top $\;

			$\mathbf{U}\left[,  (i+1):p \right] = \mathbf{U}\left[,  (i+1):p \right] - \mathbf{v}\left[2:(m+1)\right] \mathbf{l}^\top$
}
	\tcp{update $\mathbf{R}_1\left[p, p \right]$}
		$\mathbf{R}_1\left[p, p \right] = \sqrt{(\mathbf{R}_1\left[p, p \right])^2 + \Vert \mathbf{U} \left[  , p \right]\Vert_{2}^{2}}$\;
	\textbf{return} $ \mathbf{R}_1 $\;
\end{algorithm}
%
%
\begin{algorithm}[H]
    	\fontsize{9pt}{11pt}
 	\caption{R update when $m\geq 2$ rows are deleted from position $k$ to $k+m-1$, $\mathbf{R}_1^{-} = \mathsf{thinqrdelblockrows}\left(\mathbf{R}_1, \mathbf{U}_k\right)$} 
 	\label{alg:thinQRdeleterows}
 	{\textbf{Input}: $\mathbf{R}_1 \in \mathbb{R}^{p \times p}$, $\mathbf{U}_k \in \mathbb{R}^{m \times p}$ }\;
 	\tcp{delete $m$ rows}
 	\For(){$\left( i = 1\mbox{\; } i \le p\mbox{\; } i ++ \right)$} {
 	 	\tcp{update $\mathbf{R}_1\left[i, i\right]$}
 	 	$s =\Vert \mathbf{U}_k\left[, i\right]\Vert_2^2$\;
 	 	$r = \mathbf{R}_1 \left[i, i\right]$\;
 		$\mathbf{R}_1 \left[i, i\right] = \sqrt{\vert (\mathbf{R}_1\left[i, i\right])^2 - s\vert}$\;  
 		
 		\If() {$\left(i < p\right)$} {
 			$\mathbf{v} = \begin{bmatrix} (\mathbf{R}_1 \left[i, i\right] - r) & \mathbf{U}_k \left[, i\right]^\top \end{bmatrix}^\top$, \qquad $b=(\mathbf{v}\left[1\right])^2$\;
 			$\tau = 2 * b / \left(b + s\right)$\;
			$\mathbf{v} = \begin{bmatrix}1 & \mathbf{v}[2:(m+1)]^{\top} / \mathbf{v}\left[1\right] \end{bmatrix}^{\top}$\;
			
			\tcp{update $\mathbf{R}_1$ and $\mathbf{U}_k$}
 			$\mathbf{w}= \tau *  \mathbf{U}_k\left[, (i+1):p\right]^\top\mathbf{v}\left[2:(m+1)\right]$\;
 			$\mathbf{R}_1\left[i, (i+1):p\right] = \left(\mathbf{R}_1\left[i, (i+1):p\right] + \mathbf{w}^\top\right) / \left(1 - \tau \right)$\;
			$\mathbf{U}_k \left[, (i+1):p\right] = \mathbf{U}_k \left[, (i+1):p\right] - \mathbf{v}\left[2:(m+1)\right]  \big( \tau * \mathbf{R}_1\left[i, (i+1):p\right] + \mathbf{w}^\top \big)$\;
		}
	}
	\textbf{return} $ \mathbf{R}_1 $\;
\end{algorithm}
%
%
\begin{algorithm}[H]
    	\fontsize{9pt}{11pt}
\caption{R update when $m\geq 2$ columns are added from position $k=p+1$ to $k+m-1$, $\mathbf{R}_1^{+} = \mathsf{thinqraddblockcols}\left(\mathbf{R}_1, \mathbf{X}, \mathbf{U}\right)$} 
 	\label{alg:thinQRaddcols}
 	{\textbf{Input}: $\mathbf{R}_1 \in \mathbb{R}^{p \times p}$, $\mathbf{X} \in \mathbb{R}^{N \times p}$, $\mathbf{U} \in \mathbb{R}^{N \times m}$}\;
 	\tcp{add $m$ columns}
	\tcp{compute $\mathbf{R}_{12}$}
 	 Solve $\mathbf{R}_1^\top \mathbf{R}_{12} = \mathbf{X}^\top \mathbf{U}$ with respect to $\mathbf{R}_{12}$ with forward substitution algorithm\;
 
 	\tcp{compute $\mathbf{R}_{22}$}
	$\mathbf{R}_{22} = \mathbf{0}_{m\times m}$\;
	$\mathbf{R}_{22}\left[1, 1\right] = \sqrt{\vert \Vert\mathbf{U}[, 1]\Vert_2^2 - \Vert\mathbf{R}_{12}[, 1]\Vert_2^2 \vert}$\;
	$\mathbf{R}_{22} \left[1, 2:m\right] = \left(\mathbf{U} \left[, 1\right]^\top \mathbf{U} \left[,2:m\right] - \mathbf{R}_{12} \left[, 1\right]^\top \mathbf{R}_{12} \left[, 2:m\right]\right)/\mathbf{R}_{22}\left[1, 1\right]$\;

 	\For(){$\left( i = 2\mbox{\; } i \le m\mbox{\; } i ++ \right)$} {
		$\mathbf{R}_{22} \left[i, i\right] = \sqrt{\vert\Vert \mathbf{U} \left[, i\right] \Vert_{2}^{2} -  \Vert \mathbf{R}_{12} \left[, i\right] \Vert_{2}^{2} - \Vert \mathbf{R}_{22}\left[1:(i-1), i\right] \Vert_{2}^{2}\vert}$\;
		\lIf {$\left(i < m\right)$} {
			$\mathbf{R}_{22}\left[i, (i+1):m\right] =  \big(\mathbf{U} \left[, i\right]^\top \mathbf{U} \left[,(i+1):m\right] - \mathbf{R}_{12} \left[, i\right]^\top \mathbf{R}_{12} \left[, (i+1):m\right] 
			- \mathbf{R}_{22} \left[1:(i-1), i\right]^\top \mathbf{R}_{22} \left[1:(i-1), (i+1):m\right] \big)/\mathbf{R}_{22}\left[i, i\right]$}
	}
	$\mathbf{R}_1 = \begin{bmatrix} \mathbf{R}_{1} & \mathbf{R}_{12} \\ \mathbf{0}_{m \times p} & \mathbf{R}_{22} \end{bmatrix} $\;  
	\textbf{return} $ \mathbf{R}_1 $\;
\end{algorithm}
%
%
\begin{algorithm}[H]
	\begingroup
    	\fontsize{9pt}{11pt}
 	 	\caption{R update when $2\leq m< p$ columns are deleted from position $1\leq k\leq p-m+1$ to $k+m-1$, $\mathbf{R}_1^{-} = \mathsf{thinqrdelblockcols}\left(\mathbf{R}_1, k, m\right)$} 
	\label{alg:thinQRdeletecols}
 	{\textbf{Input}: $\mathbf{R}_1 \in \mathbb{R}^{p \times p}$, $k \in \{1, \ldots, p-m+1\}$, $m \in \{2, \ldots, p+1-k\}$}\;
	\tcp{delete $m$ columns}
	\lIf() {$(k = p-m+1)$} {\textbf{return}  $\mathbf{R}_1 \left[1:(p-m),  1:(p-m) \right] $}
	
	\tcp{permute columns}
	$\mathbf{R}_1\left[, k:(p-m) \right] = \mathbf{R}_1 \left[, (k+m):p \right]$\;
	
	\For(){$\left( i = k\mbox{\; } i \le p-m-1\mbox{\; } i ++ \right)$} {
 			$(\tau,\mathbf{v}, \mu ) = \textsf{householder} (\mathbf{R}_1\left[i, i \right], \mathbf{R}_1\left[  (i+1):(i+m), i \right] )$ as in Algorithm \ref{alg:householder_reflection}\;
			$\mathbf{R}_1\left[i, i \right] = \mu$\;
 			$\mathbf{R}_1\left[(i+1):(i+m), i \right] = \mathbf{0}_m $\;
 	$ \mathbf{R}_1\left[ i:(i+m), (i+1):(p-m) \right] = \mathbf{R}_1\left[ i:(i+m), (i+1):(p-m) \right] -  (\tau * \mathbf{v}) \big(\mathbf{v}^\top \mathbf{R}_1\left[ i:(i+m), (i+1):(p-m) \right]\big)$\;
	}
		\tcp{update $\mathbf{R}_1\left[p-m, p-m\right] $}
	$\mathbf{R}_1\left[p-m, p-m\right] = \sqrt{\Vert\mathbf{R}_1\left[  (p-m):p, p \right]\Vert_2^2}$\;
	\textbf{return} $\mathbf{R}_1 \left[1:(p-m),  1:(p-m) \right] $\;
	\endgroup
\end{algorithm}
%
\begin{algorithm}[H]
	\begingroup
    	\fontsize{9pt}{11pt}
 	 \caption{Apply either Givens rotation or Householder reflection to column $i$, $\mathbf{R}_1 = \mathsf{thinqrstep}\left(\mathbf{R}_1, i, a\right)$} 
	\label{alg:thinQRstep}
 	{\textbf{Input}: $\mathbf{R}_1 \in \mathbb{R}^{p \times l}$, $i \in \left\{1, \dots, l-1\right\}$, $a \in \{1, \ldots, p-i\}$}\;
	
	\eIf(){$\left( a > 1 \right)$} {
	\tcp{Householder reflection}
 			$\left( \tau,\mathbf{v}, \mu \right) = \textsf{householder} \left(\mathbf{R}_1 \left[i, i\right], \mathbf{R}_1 \left[(i+1):(i+a), i\right]\right)$ as in Algorithm \ref{alg:householder_reflection}\;
			$\mathbf{R}_1\left[i, i \right] = \mu$\;
 			$\mathbf{R}_1\left[i:(i+a), (i+1):l \right] = \mathbf{R}_1\left[i:(i+a), (i+1):l \right] - \left(\tau * \mathbf{v}\right)\left(\mathbf{v}^\top\mathbf{R}_1\left[i:(i+a), (i+1):l \right]\right)$\;
 	 		$\mathbf{R}_1\left[(i+1):(i+a), i \right] = \mathbf{0}_a$\;
 	} {	
 		$(c,  s ) =  \textsf{givens} \left(\mathbf{R}_1\left[i, i\right], \mathbf{R}_1\left[  i+1, i  \right] \right)$ as in Algorithm \ref{alg:givens_rotation}\;
 		$ \mathbf{G} = \begin{bmatrix} c & s \\ -s & c \end{bmatrix} $\;
 		\tcp{update $\mathbf{R}_1$}
		$ \mathbf{R}_1\left[ i, i \right] = c*\mathbf{R}_1\left[i , i \right] - s*\mathbf{R}_1\left[i+1 , i \right]$; \\
		$ \mathbf{R}_1\left[i+1 , i \right] = 0 $\;
		$ \mathbf{R}_1\left[ i : (i+1), (i+1):l \right] = \mathbf{G}^{\top} \mathbf{R}_1\left[ i : (i+1), (i+1):l \right]$\;
	}
	\textbf{return} $\mathbf{R}_1 $\;
	\endgroup
\end{algorithm}
%
\begin{algorithm}[H]
	\begingroup
    	\fontsize{9pt}{11pt}
 	 \caption{R update when $m$ non-adjacent columns are deleted, $\mathbf{R}_1^{-} = \mathsf{thinqrdelblockcols\_nonadj}\left(\mathbf{R}_1, \mathbf{k}\right)$} 
	\label{alg:thinQRdeletenonadjcols}
	
 	{\textbf{Input}: $\mathbf{R}_1 \in \mathbb{R}^{p \times p}$, $\mathbf{k}[i] \in \{1, \ldots, p\}, i=1, \ldots, m, \, k[i] < k[j] \, \forall \, i<j, i,j =1, \ldots, m$}\;
		\tcp{delete $m$ columns}
	\lIf(){$\left(m = 1\right)$}{ \textbf{return} $ \textsf{thinqrdelcol}\left(\mathbf{R}_1, \mathbf{k}\left[1\right]\right)$ in Algorithm \ref{alg:thinQRdeletecol}}
	\lIf(){$\left(\left(\mathbf{k}\left[m\right] - \mathbf{k}\left[1\right]\right) = (m-1)\right)$}{ \textbf{return} $\textsf{thinqrdelblockcols}\left(\mathbf{R}_1, \mathbf{k}\left[1\right], m\right)$ in Algorithm \ref{alg:thinQRdeletecols}}
	$\mathbf{e} = 1:p$\;
	$\overline{\mathbf{k}} = \mathbf{e} \setminus \mathbf{k}$\;
	$l = \overline{\mathbf{k}}[p-m]$\;
	$q=m-(p-l)$\;
	$\mathbf{k}=\mathbf{k}[1:q]$\;
	$\mathbf{R}_1 = \mathbf{R}_1 \left[1:l, 1:l\right]$\;
	\lIf(){$\left(q = 1\right)$}{ \textbf{return} $ \textsf{thinqrdelcol}\left(\mathbf{R}_1, \mathbf{k}\left[1\right]\right)$ in Algorithm \ref{alg:thinQRdeletecol}}
	\lIf(){$\left(\left(\mathbf{k}\left[q\right] - \mathbf{k}\left[1\right]\right) = (q-1)\right)$}{ \textbf{return} $\textsf{thinqrdelblockcols}\left(\mathbf{R}_1, \mathbf{k}\left[1\right], q\right)$ in Algorithm \ref{alg:thinQRdeletecols}}
		\tcp{delete columns}
	$\mathbf{R}_1 = \mathbf{R}_1 \left[, \overline{\mathbf{k}}\right]$\;
	
	$\overline{\mathbf{k}} = \overline{\mathbf{k}}\left[\mathbf{k}\left[1\right]:(l-q)\right]$\;
	\tcp{compute $\mathbf{a}[1]$}
	$\mathbf{a} = \mathbf{0}_{l-q-\mathbf{k}[1]+1};  \quad
 \mathbf{a}[1] = \overline{\mathbf{k}}[1] - \mathbf{k}[1]$\;
	
	\tcp{update $\mathbf{R}_1$}
	\For(){$\left( i = 1 \mbox{\; } i \le (l-q-\mathbf{k}[1]) \mbox{\; } i ++ \right)$} {		
		$\mathbf{R}_1 = \textsf{thinqrstep}\left(\mathbf{R}_1, i+\mathbf{k}[1]-1, \mathbf{a}[i]\right)$ in Algorithm \ref{alg:thinQRstep}\;
		$\mathbf{a}[i+1] = \mathbf{a}[i] + \left(\overline{\mathbf{k}}[i+1] - \overline{\mathbf{k}}[i]\right) - 1$
	}
	$\mathbf{R}_1[l-q-\mathbf{k}[1]+1, l-q-\mathbf{k}[1]+1] = \sqrt{\Vert \mathbf{R}_{1}[(l-q-\mathbf{k}[1]+1):(l-\mathbf{k}[1]+1), l-q-\mathbf{k}[1]+1] \Vert^{2}_{2}}$\;
	\textbf{return} $\mathbf{R}_1[1:(l-q),]$\;
	\endgroup
\end{algorithm}
%
%
\section{Computational costs of updating algorithms}
\label{appC}
%
In this supplementary section, we detail the computational costs of all algorithms presented in Section \ref{appB}. As described in \citeSupp[\S 1.1.15]{golub_van_loan.2013}, a common way to quantify computational workload is by counting floating-point operations (FLOPS), where each flop represents an addition, subtraction, multiplication, or division of floating-point numbers. Although this count is not a precise measure of computation time, it provides a useful estimate of the algorithm computational demand by summing the required arithmetic operations. According to IEEE standards, the 64-bit double format allocates one bit for the sign of a floating-point number \cite[\S 2.7.2]{golub_van_loan.2013}. Since changing the sign is not considered an arithmetic operation, it is excluded from the count. Under the IEEE 754-2008 standard, a square root operation is considered a single floating-point operation, though it may take longer than other operations; for simplicity, we follow this convention here. To solve a linear system $\mathbf{R}\mathbf{x} = \mathbf{b}$ with respect to $\mathbf{x} \in \mathbb{R}^p$, where $\mathbf{R} \in \mathbb{R}^{p \times p}$ is triangular and $\mathbf{b} \in \mathbb{R}^p$, forward or backward substitution requires exactly $p^2$ FLOPS \citep[see, e.g.][]{golub_van_loan.2013, bjorck.2015}. Lastly, we use the indicator function $\mathbbm{1}_c$, which equals $1$ when condition $c$ is true and $0$ otherwise.
The exact computational costs of all the algorithms, and their leading-order term, are summarized in Tables \ref{tab:QRcostsfull} and \ref{tab:Rcosts} at the end of this Section.
For comparison purposes, before dealing with the costs of the QR and thin QR updating algorithms that represent the major contribution of this work, we address the computational cost of the naive algorithms that directly update the matrix  $\boldsymbol{\Sigma}=\big(\mathbf{X}^{\intercal} \mathbf{X}\big)^{-1}$ after either the addition of one column to the design matrix $\mathbf{X}\in\mathbb{R}^{N\times p}$ or the deletion of one column from the design matrix $\mathbf{X}\in\mathbb{R}^{N\times p}$. 
\begin{proposition}
\label{prop:onecolinv}
The exact number of flops needed to update the matrix $\boldsymbol{\Sigma}=\big(\mathbf{X}^{\intercal} \mathbf{X}\big)^{-1}$ after the addition of a column to the design matrix $\mathbf{X}$ at position $1\leq k\leq p+1$ using Algorithm \ref{alg:onecolinv_update_add} is $4p^2+2N(p+1)+p$ which is of order $\mathcal{O}\left(Np\right)$. The computational cost of updating the matrix $\boldsymbol{\Sigma}$ after the deletion of a column from the design matrix $\mathbf{X}$ at position $1\leq k\leq p$ using Algorithm \ref{alg:onecolinv_update_del} is $2p^2-3p+1$ which is of order $\mathcal{O}\left(p^2\right)$ independently of the number of rows $N$.
\end{proposition}
\begin{proof}
For $\mathbf{X}\in\mathbb{R}^{N\times p}$ the computational cost of updating $\boldsymbol{\Sigma}=\big(\mathbf{X}^{\intercal} \mathbf{X}\big)^{-1}$ by adding a column is the cost of performing the following few operations as detailed in Algorithm \ref{alg:onecolinv_update_add}:
\begin{itemize}
\item[{\it r3:}] one matrix-vector multiplication: $p(2N-1)$;
\item[{\it r4:}] one vector-matrix multiplication:  $p(2p-1)$;
\item[{\it r5:}] two inner products, a sum and a division: $2N+2p$;
\item[{\it r6:}] a scalar product: $p$;
\item[{\it r7:}] an outer product and a matrix sum: $2p^2$.
\end{itemize}
totalling $4p^2+2N(p+1)+p$.
Similarly, the computational cost of updating $\boldsymbol{\Sigma}$ by removing a column from $\mathbf{X}\in\mathbb{R}^{N\times p}$ is related of the computational cost of performing the following few operations as detailed in Algorithm \ref{alg:onecolinv_update_del}:
\begin{itemize}
\item[{\it r15:}] a scalar product: $p-1$;
\item[{\it r16:}] an outer product and a matrix sum: $2(p-1)^2$.
\end{itemize}
totalling $2(p-1)^2+p-1$, which completes the proof.
\end{proof}
%
%
\subsection{QR updating algorithms}
%
\begin{proposition}[Algorithm \ref{alg:qr_add_one_row}, add one row]
\label{prop:qrupdate_add_one_row}
The computational cost of adding a row at position $1\leq k\leq N+1$ using Algorithm \ref{alg:qr_add_one_row} is equal to $12p+6Np+3p^2$, which is of order $\mathcal{O}(Np)$.
\end{proposition}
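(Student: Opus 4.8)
# Proof proposal for Proposition~\ref{prop:qrupdate_add_one_row}

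The plan is to carry out a line-by-line FLOP count of Algorithm~\ref{alg:qr_add_one_row}, using the conventions laid out at the start of Section~\ref{appC}: each addition, subtraction, multiplication and division counts as one flop, a square root counts as one flop, and sign changes are free. First I would record the cost of a single Givens-rotation computation, $\textsf{givens}(a,b)$ (Algorithm~\ref{alg:givens_rotation}): in the generic branch it performs one division, one square root, one addition and one multiplication inside the $1/\sqrt{1+r^2}$ and $c*r$ computations, plus the possible sign flips which are free; I would fix this constant (call it $c_G$) once and reuse it. Then I would treat the main \texttt{for} loop over $i=1,\dots,p$ and count, for each $i$, the three blocks inside: the Givens computation ($c_G$ flops), the update of $\mathbf{R}[i,i]$ and of the trailing row $\mathbf{R}[i,(i{+}1){:}p]$ together with $\mathbf{x}_k[(i{+}1){:}p]$ in the $i<p$ branch, and the update of the two columns $\mathbf{Q}^{+}[,i]$ and $\mathbf{Q}^{+}[,N{+}1]$.

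The key observation that makes the arithmetic clean is that the $\mathbf{Q}^{+}$ update dominates and is independent of $i$: each of the two statements $\mathbf{Q}^{+}[,i] = c\,\mathbf{Q}^{+}[,i] - s\,\mathbf{Q}^{+}[,N{+}1]$ and its companion touches $N{+}1$ entries, costing $2(N{+}1)$ multiplications and $(N{+}1)$ additions, i.e.\ $3(N{+}1)$ flops each, hence $6(N{+}1)$ flops per iteration, giving $6p(N{+}1)=6Np+6p$ over the whole loop. The $\mathbf{R}$/$\mathbf{x}_k$ update in the $i<p$ branch acts on vectors of length $p-i$: the three statements ($\mathbf{R}[i,i]$ scalar update, the $\mathbf{R}[i,(i{+}1){:}p]$ update, the $\mathbf{x}_k[(i{+}1){:}p]$ update) contribute a constant plus something linear in $p-i$, and summing $\sum_{i=1}^{p-1}(p-i)=\binom{p}{2}$ produces the $3p^2$-type term (after tracking the exact per-entry constants: two mults and one add for each of the two vector updates gives $6$ per entry, times $\binom{p}{2}$, i.e.\ $3p^2-3p$, which then combines with the scalar and Givens constants). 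Finally I would add the fixed per-iteration overhead ($c_G$ for the rotation, $\mathcal{O}(1)$ for the scalar $\mathbf{R}[i,i]$ update) times $p$, plus the cost of forming $\mathbf{Q}^{+}$ by copying blocks (no arithmetic — only data movement, hence zero flops under the stated convention) and of appending the zero row to get $\mathbf{R}^{+}$ (also zero flops). Collecting the three contributions $6Np+6p$, $3p^2-3p$, and the leftover linear-in-$p$ constants, and checking that they sum to exactly $12p+6Np+3p^2$, completes the count; the dominant term is manifestly $6Np$, so the complexity is $\mathcal{O}(Np)$ since $N>p$.

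The main obstacle I anticipate is purely bookkeeping precision: getting the per-entry constants exactly right in the two vector updates (distinguishing, e.g., whether the algorithm as written reuses an already-computed product or recomputes it, and whether the scalar $\mathbf{R}[i,i]$ update shares a multiplication with the vector update) so that the linear-in-$p$ terms land on exactly $12p$ rather than some nearby value, and correctly handling the boundary iteration $i=p$ where the $i<p$ branch is skipped but the $\mathbf{Q}^{+}$ update and the Givens call still execute. I would double-check by evaluating the formula at a small case (say $p=1$, where only the Givens call and the two length-$(N{+}1)$ $\mathbf{Q}$ updates occur) and confirming $12+6N+3$ matches the direct count. No genuinely hard mathematics is involved; the proposition is a careful arithmetic audit of the pseudocode.
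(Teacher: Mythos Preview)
Your approach is correct and essentially identical to the paper's: a line-by-line FLOP count of the algorithm, yielding $6p$ for the Givens calls, $3p$ for the scalar $\mathbf{R}[i,i]$ update, $3p(p-1)$ for the paired vector updates of $\mathbf{R}$ and $\mathbf{x}_k$, and $3p(N+1)$ for each of the two $\mathbf{Q}^{+}$ column updates, summing to $12p+6Np+3p^2$. The only minor slip is in your informal description of $\textsf{givens}$, where you list four operations rather than the six the paper counts (two divisions, two multiplications, one addition, one square root), but since you explicitly say you would pin down $c_G$ carefully before proceeding, this does not affect the validity of the plan.
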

%
%
\begin{proof}
The computational cost is 
\begin{itemize}
\item[{\it r8:}] $6p$;
\item[{\it r9:}] $3p$;
\item[{\it r11-12:}] $\sum_{i = 1}^{p-1}6(p-i)=6p(p-1)-6\frac{(p-1)p}{2}=3p(p-1)$;
\item[{\it r14:}] $3p(N+1)$;
\item[{\it r15:}] $3p(N+1)$,
\end{itemize}
totalling  $12p+6Np+3p^2$, which completes the proof.
\end{proof}
%
\begin{proposition}[Algorithm \ref{alg:qr_delete_one_row}, remove one row]
\label{prop:qrupdate_del_one_row}
The computational cost of deleting a row at position $1\leq k\leq N$ using Algorithm \ref{alg:qr_delete_one_row} is equal to $3(N-1)(2N-1)+3(p+2)(p-1)+6$, which is of order $\mathcal{O}(N^2)$.
\end{proposition}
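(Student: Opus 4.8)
The plan is to carry out a line-by-line FLOP count of Algorithm~\ref{alg:qr_delete_one_row}, following exactly the conventions fixed at the start of Section~\ref{appC}: each arithmetic operation counts as one flop, square roots count as one flop, and a Givens rotation applied to a pair of vectors of length $\ell$ costs $6\ell$ flops (three multiplications and one subtraction per coordinate, doubled because two output rows are formed). The key structural observation is that the algorithm zeros out the $N-1$ subdiagonal-type entries of the first row $\mathbf{q}$ of the permuted $\mathbf{Q}_p$ via a sweep of $N-1$ Givens rotations indexed by $i = N, N-1, \ldots, 2$, and each such rotation updates three objects: the vector $\mathbf{q}$ itself (one scalar update, since only $\mathbf{q}[i-1]$ needs recomputing and $\mathbf{q}[i]$ becomes $0$), a block of $\mathbf{R}$ (only when $i-1 \le p$, i.e. for the last $p$ values of $i$, touching columns $i-1$ through $p$), and a block of $\mathbf{Q}$ (when $i-1 > 1$, touching all rows $2$ through $N$ in two columns).

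First I would tally the cost of updating $\mathbf{q}$: line r16 computes the rotated scalar $\mathbf{q}[i-1] = c\,\mathbf{q}[i-1] - s\,\mathbf{q}[i]$, which is $3$ flops, done $N-1$ times, contributing $3(N-1)$. Next, the $\mathbf{R}$ update in line r18: it runs only for those $i$ with $i-1 \le p$, i.e. $i = 2, \ldots, p+1$, and the Givens rotation acts on the two rows over columns $i-1$ to $p$, a length-$(p-i+2)$ segment, at cost $6(p-i+2)$; summing over $i=2,\ldots,p+1$ gives $\sum_{j=1}^{p} 6(j+1) = 6\left(\tfrac{p(p+1)}{2} + p\right) = 3p(p+3)$. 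Then the $\mathbf{Q}$ update in line r20 runs for $i-1>1$, i.e. $i = 3, \ldots, N$ (that is $N-2$ rotations), each acting on rows $2$ to $N$ — a length-$(N-1)$ column segment — in two columns at cost $6(N-1)$; this contributes $6(N-1)(N-2)$. Finally there is the Givens-coefficient computation: the call to \textsf{givens} in line r14, per Algorithm~\ref{alg:givens_rotation}, costs a small constant (one division, one square root, one multiplication, plus possibly two sign flips which are free) — I would pin this down as $3$ flops per call from the dominant branch, but one must check Algorithm~\ref{alg:givens_rotation}'s exact count, and it may be that the intended convention charges a different small constant; this is where I need to be careful to match the paper's bookkeeping.

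The remaining arithmetic is the assembly: combining $3(N-1) + 3p(p+3) + 6(N-1)(N-2)$ and then reconciling with the claimed total $3(N-1)(2N-1) + 3(p+2)(p-1) + 6$. Expanding the claimed expression gives $6N^2 - 9N + 3 + 3p^2 + 3p - 6 + 6 = 6N^2 - 9N + 3p^2 + 3p + 3$, and expanding my tentative sum (with the Givens-coefficient contributions added in) should match this; the leading term $6N^2$ comes cleanly from $6(N-1)(N-2)$, confirming the $\mathcal{O}(N^2)$ order regardless of the lower-order accounting. The main obstacle — and the only real subtlety — is getting the constant-cost pieces exactly right: precisely which iterations of the loop trigger the \texttt{if} guards (the off-by-one boundaries at $i-1 \le p$ and $i-1 > 1$), the exact flop count charged to a single \textsf{givens} call, and whether the first-row permutation in line r3 (a pure data move) is correctly counted as zero flops. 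Once those are nailed down, the algebra collapses to the stated closed form. I would present the proof in the same itemized, line-referenced style as Proposition~\ref{prop:qrupdate_add_one_row}, listing the per-line costs and then summing.
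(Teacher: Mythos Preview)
Your approach is exactly the paper's: a line-by-line FLOP tally over the $N-1$ Givens sweeps, with the same guards on the $\mathbf{R}$ and $\mathbf{Q}$ updates. Two arithmetic points to fix before the sum closes. First, the paper charges $6$ flops per \textsf{givens} call (one division for $r$, then square, add, square-root, reciprocal for $s$, then one multiplication for $c$), so the coefficient line contributes $6(N-1)$, not $3(N-1)$. Second, your substitution in the $\mathbf{R}$-update sum slips: with $j = p-i+2$ running from $1$ to $p$, the sum is $\sum_{j=1}^{p} 6j = 3p(p+1) = 3p^2 + 3p$, not $\sum_{j=1}^{p} 6(j+1) = 3p(p+3)$. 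With those two corrections the four pieces are $6(N-1) + 3(N-1) + 3p(p+1) + 6(N-1)(N-2) = 3(N-1)(2N-1) + 3(p+2)(p-1) + 6$, matching the claim.
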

%
%
\begin{proof}
The computational cost is 
\begin{itemize}
\item[{\it r5:}] $6(N-1)$;
\item[{\it r7:}] $3(N-1)$;
\item[{\it r8:}] $\sum_{i=2}^{p+1}6(p-i+2)=3(p+2)(p-1)+6$;
\item[{\it r9:}] $6(N-2)(N-1)$,
\end{itemize}
totalling  $3(N-1)(2N-1)+3(p+2)(p-1)+6$, which completes the proof.
\end{proof}
%
\begin{proposition}[Algorithm \ref{alg:qr_add_one_col}, add one column]
\label{prop:qrupdate_add_one_col}
The computational cost of adding a column at position $1\leq k\leq p+1$ using Algorithm \ref{alg:qr_add_one_col}  is $8N^2 + 8N-(6N+9)(p+1)$ if $k = p+1$, since line {\it r9} is not computed and $3(p-k)^2 + 9(p - 2k) + 8(N^2+N) - 6Nk + 6$ if $1\leq k\leq p$. Therefore, the computational cost  is of order $\mathcal{O}(N^2)$. The computational cost is decreasing as  $k$ increases for $1\leq k\leq p+1$.
\end{proposition}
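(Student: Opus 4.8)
The plan is to count the floating-point operations performed in Algorithm \ref{alg:qr_add_one_col} line by line, exactly as in the proofs of Propositions \ref{prop:qrupdate_add_one_row} and \ref{prop:qrupdate_del_one_row}, and then verify the two claimed closed forms (the $k=p+1$ case and the general $1\le k\le p$ case) together with the monotonicity in $k$. First I would fix the conventions from Section \ref{appC}: a Givens rotation via \textsf{givens} costs a bounded number of FLOPS (the branchy code of Algorithm \ref{alg:givens_rotation} uses one division, one square root, one multiplication, plus sign flips which do not count, so $3$ FLOPS), applying a $2\times2$ rotation $\mathbf{G}^\top$ to a pair of length-$\ell$ rows costs $6\ell$ FLOPS, updating a scalar pair as on line {\it r8--r10} of \textsf{qraddcol} costs a small constant, and the matrix-vector product $\mathbf{v}_k=\mathbf{Q}^\top\mathbf{x}_k$ on line {\it r2} costs $2N^2-N$ FLOPS (or $N(2N-1)$).

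Then I would tabulate: line {\it r2} gives $N(2N-1)$; the loop over $i$ from $N$ down to $k+1$ has $N-k$ iterations; inside each iteration line {\it r7} (\textsf{givens}) contributes a constant, line {\it r9} updates the scalar $\mathbf{v}_k[i-1]$ for another constant, line {\it r12} applies $\mathbf{G}^\top$ to a block of $\mathbf{R}$ — but this is guarded by $i-1\le p$, so it only fires when $i\le p+1$, i.e. always in the loop since $i\le N$ only matters relative to $p$; here the row length is $p-(i-1)+1=p-i+2$, contributing $6(p-i+2)$ FLOPS, and line {\it r14} updates two columns of $\mathbf{Q}$ of length $N$, contributing $6N$ FLOPS. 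Summing $6N$ over $N-k$ iterations gives $6N(N-k)$, and summing $6(p-i+2)$ for $i$ from $k+1$ to $p$ (the terms with $i\ge p+1$ vanish because then $i-1>p$ fails the guard, except $i=p+1$ gives length $1$) gives a quantity of the form $3(p-k)^2+O(p-k)$. Adding the constants per iteration (say $c$ per iteration, $c(N-k)$ total) and the line {\it r2} cost $N(2N-1)=2N^2-N$, and noting $8(N^2+N)=8N^2+8N$ does \emph{not} match $2N^2$, I realize the dominant $8N^2$ must instead come from a different FLOP-accounting of line {\it r2} — so the key bookkeeping step is to recheck whether $\mathbf{Q}^\top\mathbf{x}_k$ is counted as $2N^2$ (two ops per entry of the $N\times N$ times $N$ product, i.e. $2N^2$ multiply-adds counted as $2$ each would be $4N^2$, but counting a multiply-add as $2$ FLOPS on $N^2$ entries gives $2N^2$; the factor discrepancy tells me line {\it r14}'s column updates, $6N$ per iteration over up to $N-1$ iterations when $k$ is small, contribute $\approx6N^2$, and together with $2N^2$ from {\it r2} this yields the $8N^2$). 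The plan is therefore: (i) write $\sum_{i=k+1}^{N}6N = 6N(N-k)$, which for the worst case $k=1$ is $6N(N-1)$, and (ii) add $2N^2-N$ from {\it r2}, giving $8N^2-7N$ at $k=1$, then fold in the $\mathbf{R}$-block sums and constants and reconcile against the stated $3(p-k)^2+9(p-2k)+8(N^2+N)-6Nk+6$.

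The special case $k=p+1$ is handled separately: the loop runs for $i$ from $N$ down to $p+2$, so $N-p-1$ iterations, each skipping line {\it r12} (the comment in the proposition says ``line {\it r9} is not computed'' — I would double-check which line is meant; presumably the $\mathbf{R}$-update is trivial because $\mathbf{v}_k$ has no entries to eliminate against $\mathbf{R}$'s existing structure, so only lines {\it r7}, {\it r9}, {\it r14} and the scalar updates run); summing $6N$ over $N-p-1$ iterations plus constants plus the $2N^2-N$ from {\it r2} should reduce to $8N^2+8N-(6N+9)(p+1)$ after collecting terms. Finally, monotonicity: the exact cost as a function of $k$ is $N(2N-1)$ (constant in $k$) plus $\sum_{i=k+1}^{N}\big(6N + \text{const} + 6(p-i+2)\mathbbm{1}_{i\le p+1}\big)$; each summand is nonnegative, and decreasing $k$ adds more nonnegative terms, so the total is nonincreasing in $k$ — strictly decreasing as long as $6N+\text{const}>0$, which is immediate. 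The main obstacle will be the exact low-order bookkeeping: correctly handling the guard $i-1\le p$ in line {\it r12} (which truncates the $\mathbf{R}$-update sum precisely at $i=p+1$ with a length-$1$ row), pinning down the constant FLOP count of \textsf{givens} and of lines {\it r8--r10}/{\it r9}, and matching the affine-in-$k$ remainder $9(p-2k)-6Nk+6$ against my raw summation — a purely mechanical but error-prone reconciliation that I would verify by evaluating both my formula and the claimed formula at $k=1$ and $k=p$.
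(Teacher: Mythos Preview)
Your approach is exactly the paper's: a line-by-line FLOP count over the $N-k$ loop iterations, with the $8N^2$ arising as $2N^2-N$ from $\mathbf{Q}^\top\mathbf{x}_k$ plus $6N(N-k)$ from the $\mathbf{Q}$-column update. Two bookkeeping fixes will make your numbers match. First, \textsf{givens} costs $6$ FLOPS, not $3$: in the generic branch of Algorithm~\ref{alg:givens_rotation} one forms $r=-a/b$, then $r^2$, $1+r^2$, $\sqrt{\cdot}$, $1/\sqrt{\cdot}$, and $s\cdot r$ --- two divisions, two multiplications, one addition, one square root --- consistent with Proposition~\ref{prop:qrupdate_add_one_row} where line {\it r8} costs $6p$ for $p$ calls. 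Second, the paper's line labels (with algorithm2e numbering the else/end tokens) are r2 ($2N^2-N$), r9 (\textsf{givens}, $6(N-k)$), r11 (scalar $\mathbf{v}_k$ update, $3(N-k)$), r13 (guarded $\mathbf{R}$-block update, $\sum_{i=k+1}^{p+1}6(p-i+2)=3(p-k+1)(p-k+2)$), and r14 ($\mathbf{Q}$ update, $6N(N-k)$); summing gives $3(p-k)^2+9(p-2k)+8(N^2+N)-6Nk+6$ directly. You were right to flag the phrase ``line {\it r9} is not computed'' for $k=p+1$: what actually vanishes is the guarded $\mathbf{R}$-update on line r13, since $i-1\le p$ fails for every $i\ge p+2$; the paper's proof simply asserts the $k=p+1$ formula, and your plan (sum r2, r9, r11, r14 over $N-p-1$ iterations) produces $8N^2+8N-(6N+9)(p+1)$ exactly.
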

%
\begin{proof}
If $k = p+1$ the computational cost is $8N^2 + 8N-(6N+9)(p+1)$ which is $\mathcal{O}(N^2)$, since line {\it r9} below is not computed. Otherwise, if $1 \le k \le p$ the computational cost is 
\begin{itemize}
\item[{\it r2:}] $2N^2-N$;
\item[{\it r9:}] $6(N-k)$;
\item[{\it r11:}] $3(N-k)$;
\item[{\it r13:}] $\sum_{i=k+1}^{p+1}6(p-i+2)=6(p+2)(p-k+1)+\frac{6(k+1)k}{2}-\frac{6(p+2)(p+1)}{2}=6(p+2)\left(\frac{p+1}{2}-k\right)+\frac{6(k+1)k}{2}$;
\item[{\it r14:}] $6N(N-k)$,
\end{itemize}
totalling  $8N^2 + 2N(4-3k) - 9k+6(p+2)\left(\frac{p+1}{2}-k\right)+\frac{6(k+1)k}{2}=3(p-k)^2 + 9(p - 2k) + 8(N^2+N) - 6Nk + 6$, which is of order $\mathcal{O}(N^2)$.
\end{proof}
%
\begin{proposition}[Algorithm \ref{alg:qr_delete_one_col}, delete one column]
\label{prop:qrupdate_del_one_col}
The computational cost of deleting a column at position $1\leq k\leq p$ using Algorithm \ref{alg:qr_delete_one_col}  is null if $k=p$ and  it is equal to $3(p-k)^{2}+6 (N+1) (p-k)$ if $1 \le k < p$, which is $\mathcal{O}(N(p-k))$.
\end{proposition}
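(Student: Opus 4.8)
The plan is to tally the floating-point operations performed by Algorithm~\ref{alg:qr_delete_one_col} line by line, exactly as was done for Propositions~\ref{prop:qrupdate_add_one_row}--\ref{prop:qrupdate_add_one_col}, and then sum the resulting expressions. First I would dispose of the trivial case: if $k=p$, the algorithm returns $\mathbf{R}\left[,1:(p-1)\right]$ immediately (line {\it r3}), which involves only index selection and no arithmetic, so the cost is null. For $1\le k<p$, the algorithm shifts columns $k+1,\dots,p$ of $\mathbf{R}$ left by one position (line {\it r5}, again pure data movement, no FLOPS) and then runs the loop over $i=k,\dots,p-1$, at each iteration computing a Givens rotation and applying it to a shrinking trailing block of row pairs of $\mathbf{R}$ while also updating the two affected columns of $\mathbf{Q}$.

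The key steps are therefore: (i) invoke the cost of \textsf{givens} (Algorithm~\ref{alg:givens_rotation}), which is a small constant, per iteration; (ii) count the arithmetic in updating $\mathbf{R}\left[i,i\right]$ and zeroing $\mathbf{R}\left[i+1,i\right]$ (a fixed number of multiplications/subtractions, incurred on every pass); (iii) count the cost of the block update $\mathbf{R}\left[i:i+1,(i+1):(p-1)\right]=\mathbf{G}^\top\mathbf{R}\left[i:i+1,(i+1):(p-1)\right]$, which touches $p-1-i$ columns and costs $6$ FLOPS per column (two $2\times2$-times-$2$-vector products, consistent with the convention used in the earlier proofs), giving $\sum_{i=k}^{p-1}6(p-1-i)$; this $\mathbbm{1}_{i<p-1}$-guarded term is the one producing the $3(p-k)^2$ leading behaviour; and (iv) count the $\mathbf{Q}$ update $\mathbf{Q}\left[,i:(i+1)\right]=\mathbf{Q}\left[,i:(i+1)\right]\mathbf{G}$, which is a full $N\times2$ block times $\mathbf{G}$, costing $6N$ FLOPS per iteration, hence $\sum_{i=k}^{p-1}6N=6N(p-k)$. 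Adding the per-iteration constants over the $p-k$ iterations and the two summations, then simplifying $\sum_{i=k}^{p-1}6(p-1-i)=3(p-k-1)(p-k)$ and combining the linear-in-$(p-k)$ contributions, should collapse to $3(p-k)^2+6(N+1)(p-k)$. The dominant term is $6N(p-k)$, so the cost is $\mathcal{O}(N(p-k))$, matching the "Most relevant term" column $6N(p-k)$ in Table~\ref{tab:QRcosts}.

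The main obstacle I anticipate is bookkeeping the exact constants: getting the arithmetic count of \textsf{givens} right (it branches, and the count must reflect the worst or the convention the authors adopt), correctly accounting for the guard $\mathbbm{1}_{i<p-1}$ that suppresses the block-update line on the last iteration so that the summation upper limit is $p-2$ rather than $p-1$ there, and making sure the "update $\mathbf{R}\left[i,i\right]$" line and the explicit zero-assignment are charged consistently with how analogous lines ({\it r11--r12} of Algorithm~\ref{alg:qr_add_one_row}, {\it r13} of Algorithm~\ref{alg:qr_add_one_col}) were charged in the preceding proofs. Once those per-line tallies are fixed in a way consistent with the rest of Section~\ref{appC}, the algebra is a routine geometric-type sum. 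I would present the proof in the same itemized "line {\it rX}: cost" format used for the other propositions, list the contributions from lines {\it r7} (the \textsf{givens} call plus the $\mathbf{R}\left[i,i\right]$ update), {\it r11} (the block $\mathbf{R}$ update, summed with the guard), and {\it r13} (the $\mathbf{Q}$ update, summed to $6N(p-k)$), then state that the total simplifies to $3(p-k)^{2}+6(N+1)(p-k)$, which is $\mathcal{O}(N(p-k))$, completing the proof.
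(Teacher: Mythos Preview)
Your proposal is correct and follows essentially the same line-by-line FLOP-counting approach as the paper's proof, with the same per-iteration costs ($6$ for \textsf{givens}, $3$ for the $\mathbf{R}[i,i]$ update, $6(p-1-i)$ for the guarded block update, and $6N$ for the $\mathbf{Q}$ update) summing to $3(p-k)^2+6(N+1)(p-k)$. Your only slips are the line-number references (the paper's algorithm places these operations at lines {\it r8}, {\it r10}, {\it r13}, {\it r15}, not {\it r7}, {\it r11}, {\it r13}), and the paper lists the \textsf{givens} call and the $\mathbf{R}[i,i]$ update as separate items rather than combining them.
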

%
%
\begin{proof}
If $k=p$, the computational cost is $0$, since the algorithm just deletes the last column. Otherwise, if $1 \le k < p$, the computational cost is 
\begin{itemize}
\item[{\it r8:}] $6(p-k)$;
\item[{\it r10:}] $3(p-k)$;
\item[{\it r13:}] $\sum_{i=k}^{p-2}6(p-i-1)=6(p-1)(p-k-1)-\frac{6(p-2)(p-1)}{2}+\frac{6k(k-1)}{2}$;
\item[{\it r15:}] $6N(p-k)$,
\end{itemize}
totalling $3(p-k)^{2}+6 (N+1) (p-k)$, which completes the proof. 
\end{proof}
%
\begin{proposition}[Algorithm \ref{alg:qrupdate_m_rows_added}, add a block of $m\geq 2$ rows]
\label{prop:qrupdate_add_block_rows}
The computational cost of adding a block of $m\geq 2$ rows from position $1\leq k\leq N+1$ to $k+m-1$ using Algorithm \ref{alg:qrupdate_m_rows_added} is equal to $p^{2}(2m+1)+2p\big( 2m(m+1)+N(2m+1)+4\big)-2m-7$, which is of order $\mathcal{O}(mNp)$.
\end{proposition}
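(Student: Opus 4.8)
The plan is to count the floating-point operations line by line in Algorithm \ref{alg:qrupdate_m_rows_added}, exactly as was done for the single-row case in Proposition \ref{prop:qrupdate_add_one_row}, and then sum over the loop in $i=1,\dots,p$. The algorithm applies a Householder reflection at each column $i$; for $i<p$ the reflection acts on an $(m+1)$-vector (the diagonal entry of $\mathbf{R}$ stacked on the $i$-th column of $\mathbf{U}_k$) and then updates the trailing $p-i$ columns of both $\mathbf{R}$ and $\mathbf{U}_k$, as well as two blocks of $\mathbf{Q}^{+}$; for $i=p$ only the norm update at line {\it r19} is needed. First I would record the cost of the one-time block assembly of $\mathbf{Q}^{+}$ (lines {\it r2--r4}), which is $0$ FLOPS since it is only a rearrangement with zero-padding. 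Then I would tally, for each $i$ in the main loop: the call to \textsf{householder} on an $(m+1)$-vector (constant plus $O(m)$ cost, to be taken from the cost of Algorithm \ref{alg:householder_reflection}); line {\it r9} ($\mathbf{v}_1 = \tau\mathbf{v}$, i.e. $m$ multiplications); the $\mathbf{R}$/$\mathbf{U}_k$ updates at lines {\it r11--r13}, which form $\mathbf{v}^\top\mathbf{U}_k$ over $p-i$ columns ($m(p-i)$ mult-adds), scale by $\tau$, subtract from $\mathbf{R}$, and rank-one update $\mathbf{U}_k$ ($m(p-i)$ more), giving roughly $2\cdot 3 m(p-i)$ plus lower-order terms; and the $\mathbf{Q}^{+}$ update at lines {\it r15--r18}, which is a rank-one update of an $N\times m$ block against an $N$-vector, costing $\Theta(Nm)$ independent of $i$.

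Next I would perform the summation. The column-dependent work sums as $\sum_{i=1}^{p-1} c_1 m(p-i) = c_1 m \binom{p}{2} = \tfrac{c_1}{2} m p(p-1)$ for an appropriate small constant $c_1$ (coming out to the $p^2(2m+1)$-type terms once the $\mathbf{R}$ diagonal/norm bookkeeping is folded in); the $\mathbf{Q}^{+}$-block work sums to $p\cdot\Theta(Nm)$, producing the $2Np(2m+1)$-type leading term; and the remaining per-iteration constants (the Householder call, the $\mathbf{v}_1$ scaling, the final $i=p$ branch at line {\it r19}) contribute the $O(p)$ and $O(m)$ tails, i.e. the $8p$, $-2m$, $-7$ pieces and the $4m(m+1)p$ cross term. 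Matching coefficients against the claimed closed form $p^{2}(2m+1)+2p\big(2m(m+1)+N(2m+1)+4\big)-2m-7$ is then a matter of careful arithmetic; it is essentially forced once one has the correct per-line costs, and I would present it by collecting the sums into a short itemized list of contributions keyed to line numbers, as in the earlier propositions.

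The main obstacle I anticipate is pinning down the exact per-line FLOP counts in the $i<p$ branch, since lines {\it r11--r13} and {\it r15--r18} each involve a matrix–vector product followed by a rank-one correction, and the bookkeeping of which quantities are reused (e.g. $\mathbf{v}[2:(m+1)]$, the intermediate $\mathbf{r}$ and $\mathbf{q}$) determines whether a factor is $2m(p-i)$ or $3m(p-i)$; getting the constant right is exactly what separates the exact expression from its $\mathcal{O}(mNp)$ dominant term. I would resolve this by using the convention (as in \cite[\S 1.1.15]{golub_van_loan.2013}) that a length-$\ell$ inner product costs $2\ell-1$ FLOPS and an $\ell$-vector axpy costs $2\ell$, and by tracking the special handling of the zeroed sub-diagonal entries (set, not computed) and the leading unit entry of the Householder vector. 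A secondary, minor subtlety is the boundary term at $i=p$: line {\it r19} computes $\sqrt{(\mathbf{R}[i,i])^2+\|\mathbf{U}_k[\,,i]\|_2^2}$, which is $2m+2$ FLOPS by the square-root-as-one-op convention, and it must not be double-counted against the $i<p$ formula. Once these conventions are fixed the summation is routine and the stated total follows; the order statement $\mathcal{O}(mNp)$ is then immediate from the $2Np(2m+1)$ term dominating when $N\ge p$.
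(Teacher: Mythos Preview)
Your approach is correct and matches the paper's line-by-line tally exactly: the Householder call at cost $3m+9$ per iteration (done $p-1$ times), the $\mathbf{R}/\mathbf{U}_k$ trailing updates summing via $\sum_{i=1}^{p-1}(p-i)$ to the $p^2(2m+1)$ block, the $\mathbf{Q}^+$ rank-one update executed for all $p$ iterations, and the $i=p$ branch contributing $2m+2$. The one slip in your sketch is that $\mathbf{Q}^+$ has $N+m$ rows, not $N$, so its per-iteration cost is $(2m+1)(N+m)+(N+m)+2m(N+m)$; the extra $m$ in the factor $(N+m)$ is precisely what generates the $4m^2p$ piece of the $4m(m+1)p$ cross term, rather than the per-iteration constants you pointed to.
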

%
%
\begin{proof}
The computational cost is 
\begin{itemize}
\item[{\it r9:}] $(p-1)(3m+9)$ for performing the Householder reflections;
\item[{\it r10:}] $(p-1)m$;
\item[{\it r12:}] $\sum_{i=1}^{p-1}(2m+1)(p-i)=(2m+1)\frac{(p-1)p}{2}$;
\item[{\it r13:}] $\sum_{i=1}^{p-1}p-i=\frac{(p-1)p}{2}$;
\item[{\it r14:}] $\sum_{i=1}^{p-1}2(p-i)m=m(p-1)p$;
\item[{\it r16:}] $2m+2$;
\item[{\it r18:}] $p(2m+1)(N+m)$;
\item[{\it r19:}] $(N+m)p$;
\item[{\it r20:}] $2(N+m)mp$,
\end{itemize}
totalling $p^{2}(2m+1)+2p\big( 2m(m+1)+N(2m+1)+4\big)-2m-7$, which completes the proof.
\end{proof}
%
\begin{proposition}[Algorithm \ref{alg:qrupdate_m_rows_deleted}, remove a block of $2\leq m<N$ rows]
\label{prop:qrupdate_del_block_rows}
The computational cost of deleting a block of $2\leq m<N$ rows from position $1\leq k\leq N-m+1$ to $k+m-1$ using Algorithm \ref{alg:qrupdate_m_rows_deleted} is equal to $3Nm (2N -2m+1)+m\big(2m^{2}+\frac{3}{2}m -\frac{7}{2}\big) + 3 m p (p+1)$ which is of order $\mathcal{O}(mN^2)$.
\end{proposition}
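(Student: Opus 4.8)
The plan is to read off the cost of Algorithm \ref{alg:qrupdate_m_rows_deleted} line by line, exactly as was done for the previous propositions, using the FLOP conventions established in Section \ref{appC} (each add/subtract/multiply/divide counts as one, a \textsf{givens} call costs a fixed small number of flops, applying a $2\times2$ Givens rotation to a vector of length $\ell$ costs $6\ell$, and a sign change is free). The structure of the algorithm is a double loop: the outer loop runs $j=1,\dots,m$, and for each $j$ the inner loop runs $i=N-1$ down to $j$. So the dominant contributions come from summing the per-iteration costs over the index set $\{(i,j): 1\le j\le m,\ j\le i\le N-1\}$, whose cardinality is $\sum_{j=1}^m (N-j) = mN - \tfrac{m(m+1)}{2}$.

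First I would account for the one-time set-up outside the loops: line \textit{r2} (copying $\mathbf{W} = \mathbf{Q}[k:(k+m-1),]$) and line \textit{r3} (the conditional row shift of $\mathbf{Q}$) are data moves, not arithmetic, so they contribute $0$; likewise the final extractions of $\mathbf{R}^-$ and $\mathbf{Q}^-$ are free. Then I would tabulate the inner-loop body: the \textsf{givens} call (fixed cost, as in Algorithm \ref{alg:givens_rotation}); the update of $\mathbf{W}[j,i]$ (line \textit{r7}, $3$ flops: one multiply, one multiply, one subtract — matching the pattern used in Prop.\ \ref{prop:qrupdate_del_one_row}); the conditional update of $\mathbf{W}[(j+1):m, i:(i+1)]$ when $j<m$ (line \textit{r8}, a $2\times2$ rotation applied to $m-j$ rows, costing $6(m-j)$); the conditional update of $\mathbf{R}[i:(i+1),(i-j+1):p]$ when $i\le p+j-1$ (line \textit{r9}, a rotation applied to $p-(i-j+1)+1 = p-i+j$ columns, costing $6(p-i+j)$); and the update of $\mathbf{Q}[(m+1):N, i:(i+1)]$ (line \textit{r10}, a rotation applied to $N-m$ rows, costing $6(N-m)$). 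Summing line \textit{r10} over all $(i,j)$ gives $6(N-m)\big(mN-\tfrac{m(m+1)}{2}\big)$, which is the source of the leading $\mathcal{O}(mN^2)$ term $6mN^2$; summing line \textit{r9} gives a term of order $\mathcal{O}(mp^2)$ producing the $3mp(p+1)$ contribution; the \textsf{givens}, \textit{r7}, and \textit{r8} costs collect into the lower-order polynomial in $m$, namely $m\big(2m^2+\tfrac32 m-\tfrac72\big)$. Adding the three blocks and simplifying should reproduce $3Nm(2N-2m+1)+m\big(2m^2+\tfrac32 m-\tfrac72\big)+3mp(p+1)$.

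The main obstacle is bookkeeping rather than conceptual: one must be careful that line \textit{r9} is gated by the condition $i\le p+j-1$, so for a given $j$ the column-update cost is incurred only for $i$ in the range $j\le i\le \min(N-1,\ p+j-1)$, and one has to decide whether the stated formula implicitly assumes $N-1 \ge p+j-1$ for all relevant $j$ (i.e.\ $N\ge p+m$) — this is the typical regime but the off-by-one interplay between the two triangular summations (over $i$ within each $j$, and over $j$) is exactly where arithmetic slips occur. A secondary subtlety is correctly counting the fixed cost of each \textsf{givens} invocation and the exact flop count of line \textit{r7} versus lines like \textit{r8}--\textit{r10} (whether the multiplications $c*\mathbf{W}[j,i]$ and $s*\mathbf{W}[j,i+1]$ are both counted, or one is folded into the rotation applied on the next iteration). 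Once these conventions are pinned down consistently with the earlier proofs, the computation is a routine evaluation of $\sum_{j=1}^m\sum_{i=j}^{N-1}(\cdot)$ and collecting terms, and the $\mathcal{O}(mN^2)$ order claim follows immediately from the line \textit{r10} term dominating.
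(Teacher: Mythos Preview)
Your proposal is correct and follows exactly the paper's approach: a line-by-line FLOP count over the double loop $\{(i,j):1\le j\le m,\ j\le i\le N-1\}$, with the same per-line costs (6 for each \textsf{givens}, $3$ for the scalar $\mathbf{W}[j,i]$ update, $6(m-j)$ for the $\mathbf{W}$ block, $6(p+j-i)$ for the $\mathbf{R}$ block under the gate $i\le p+j-1$, and $6(N-m)$ for the $\mathbf{Q}$ block). Your line labels are shifted relative to the paper's numbering (what you call r7, r8, r9, r10 are the paper's r8, r10, r13, r15), and your remark that the $m$-only polynomial arises solely from the \textsf{givens}/$\mathbf{W}$ lines is slightly imprecise---after expanding $6(N-m)\big(mN-\tfrac{m(m+1)}{2}\big)$ and $m(m-1)(3N-m-1)$, the $\mathbf{Q}$ and $\mathbf{W}$-block lines also shed $m$-only terms---but the decomposition and the identification of the dominant $6mN^2$ contribution are exactly as in the paper. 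Your flag about the implicit regime assumption $N\ge p+m$ behind the $\mathbf{R}$-update sum is also on point and matches the paper's summation limits.
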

%
%
\begin{proof}
The computational cost is 
\begin{itemize}
\item[{\it r6:}] $6\sum_{k=1}^m(N-k)=6mN-3m(m+1)$;
\item[{\it r8:}] $3mN-3\frac{m(m+1)}{2}$;
\item[{\it r10:}] $m(m-1)(3N-m-1)$;
\item[{\it r13:}] $\sum_{j=1}^{m} \sum_{i=j}^{p+j-1} 6 (p+j-i) = 3 m p (p+1)$;
\item[{\it r15:}] $6(N-m)\Big(mN-\frac{(m+1)m}{2}\Big)$,
\end{itemize}
totalling  $3Nm (2N -2m+1)+m\big(2m^{2}+\frac{3}{2}m -\frac{7}{2}\big) + 3 m p (p+1)$, which completes the proof.
\end{proof}
%
\begin{proposition}[Algorithm \ref{alg:qraddblockcolumns}, add a block of $m\geq 2$ columns]
\label{prop:qrupdate_add_block_cols}
The computational cost of adding a block of $m\geq 2$ columns from position $1\leq k\leq p$ to $k+m-1$ using Algorithm \ref{alg:qraddblockcolumns} is equal to $2N(4mN+4m-3mk)+3m\big[p(p+3)+k(k-m-2p-5)+\frac{17}{6}\big]-m^2\big(m+\frac{3}{2}\big)$, which is of order $\mathcal{O}(mN^2)$ and is equal to $8mN^2+2mN-6mp(N+1)-m\big(m^2+\frac{9}{2}m+\frac{7}{2}\big)-3m^2p$ if $k=p+1$, which is of order $\mathcal{O}(mN^2)$.
\end{proposition}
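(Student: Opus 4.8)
The plan is to follow the same line-by-line accounting used in the proofs of Propositions \ref{prop:qrupdate_add_one_row}--\ref{prop:qrupdate_del_block_rows}: walk through Algorithm \ref{alg:qraddblockcolumns}, assign to each executed line its exact FLOP count under the convention of \cite[\S 1.1.15]{golub_van_loan.2013} (a square root counted as one flop, sign changes free, a triangular solve as $p^2$), and then add everything up. Two regimes must be separated from the outset, exactly as in Proposition \ref{prop:qrupdate_add_one_col}: when $k=p+1$ the new columns are appended at the right end and the inner index $i$ ranges only over rows strictly below $p+j$, so the ``update $\mathbf{R}$'' branch (guarded by $i\le p+j$) never fires, which is why the two closed forms differ; when $1\le k\le p$ that branch does fire and contributes the bulk of the lower-order terms. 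The final assembly $\mathbf{R}^{+}=\begin{bmatrix}\mathbf{R}[,1:(k-1)] & \mathbf{V}_k & \mathbf{R}[,k:p]\end{bmatrix}$ is pure memory movement and costs nothing.

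First I would record the one-off cost of line {\it r2}, $\mathbf{V}_k=\mathbf{Q}^{\top}\mathbf{U}_k$, namely $(2N-1)Nm$, which already supplies half of the dominant term. Then, for the double loop, I would note that for fixed $j$ the inner index $i$ runs from $N$ down to $k+j$, i.e.\ over $N-k-j+1$ values (replace $k$ by $p+1$, giving $N-p-j$ values, in the $k=p+1$ case), and that each pass costs $6$ for generating the Givens pair, $3$ for the scalar update of $\mathbf{V}_k$, and $6N$ for the update of $\mathbf{Q}$, plus $6(m-j)$ for the trailing-block update of $\mathbf{V}_k$ when $j<m$, plus $6(p-i+j+1)$ for the trailing-block update of $\mathbf{R}$ when $i\le p+j$. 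Summing the three unconditional contributions over $j=1,\dots,m$ produces elementary arithmetic series, and the $\mathbf{V}_k$-block contribution is the triangular double sum $6\sum_{j=1}^{m-1}(m-j)(N-k-j+1)$.

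The delicate part, and the step I expect to be the main obstacle, is the $\mathbf{R}$-block contribution $6\sum_{j}\sum_{i}(p-i+j+1)$: one must intersect the loop range $k+j\le i\le N$ with the guard $i\le p+j$, so for generic $N>p$ the inner sum is capped at $\min\{N,p+j\}$ and the computation splits according to whether $j\ge N-p$; in the $k=p+1$ regime this entire term collapses to zero, whereas in the $1\le k\le p$ regime carrying out the split, collecting all pieces, and simplifying is where the bookkeeping is heaviest. After collecting everything I would check that the $N^2$-coefficient is $2m$ from {\it r2} plus $6m$ from the $\mathbf{Q}$-updates, i.e.\ $8mN^2$ as claimed, confirm the $-6mNk$ (resp.\ $-6mNp$) correction coming from the number of Givens rotations applied, and finally match the fully expanded polynomial to the two stated expressions, obtaining an overall cost of order $\mathcal{O}(mN^2)$.
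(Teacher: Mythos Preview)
Your approach is exactly the paper's: line-by-line FLOP accounting over Algorithm \ref{alg:qraddblockcolumns}, with the same breakdown into {\it r2}, {\it r5}, {\it r7}, {\it r9}, {\it r11}, {\it r13}, and the same two regimes $1\le k\le p$ versus $k=p+1$ distinguished by whether {\it r11} ever fires.

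One remark that will save you from the obstacle you flag: the ``delicate'' intersection of the loop range $k+j\le i\le N$ with the guard $i\le p+j$ is not delicate at all. For the updated matrix $\mathbf{X}^{+}\in\mathbb{R}^{N\times(p+m)}$ to admit a QR factorization in the sense used throughout the paper, the standing assumption $N\ge p+m$ is in force; hence $p+j\le p+m\le N$ for every $j=1,\dots,m$, so $\min\{N,p+j\}=p+j$ always, and no case split on $j\ge N-p$ is needed. The paper accordingly writes {\it r11} directly as $\sum_{j=1}^{m}\sum_{i=k+j}^{p+j}6(p-i+j+1)=3m\big[p(p+3)+k(k-2p-3)+2\big]$ and the bookkeeping is routine.
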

%
%
\begin{proof}
The computational cost when $1\leq k\leq p$ is
\begin{itemize}
\item[{\it r2:}] $(2N-1)mN$;
\item[{\it r5:}] $6m(N-k+1)-6\frac{(m+1)m}{2}=6m(N-k+1)-3m(m+1)$;
\item[{\it r7:}] $3m(N-k+1)-3\frac{(m+1)m}{2}$;
\item[{\it r9:}] $\sum_{j=1}^{m-1} 6(m-j)(N-k-j+1)=m (m-1)\big[ 3 (N-k) -m+2\big]$;
\item[{\it r11:}] $\sum_{j=1}^m \sum_{i=k+j}^{p+j} 6 (p-i+j+1) =  3 m \big[ p(p+3) + k ( k-2p-3) + 2 \big]$;
\item[{\it r13:}] $6N\sum_{j=1}^{m}(N-k-j+1)=6Nm(N-k+1)-3Nm(m+1)$,
\end{itemize}
totalling  $2N(4mN+4m-3mk)+3m\big[p(p+3)+k(k-m-2p-5)+\frac{17}{6}\big]-m^2\big(m+\frac{3}{2}\big)$, while the computational cost for the case where $k=p+1$ is equal to the previous computational costs for rows ${\it r2}$-${\it r9}$ and ${\it r13}$, while ${\it r11}$ is not run,
totalling $8mN^2+2mN-6mp(N+1)-m\big(m^2+\frac{9}{2}m+\frac{7}{2}\big)-3m^2p$ which completes the proof.
\end{proof}
%
\begin{proposition}[Algorithm \ref{alg:qrdeleteblockcolumns}, remove a block of $2\leq m<p$ columns]
\label{prop:qrupdate_del_block_cols}
The computational cost of removing a block of $2\leq m<p$ columns from position $1\leq k\leq p-m+1$ to $k+m-1$ using Algorithm \ref{alg:qrdeleteblockcolumns} is null if $k = p-m+1$, and equal to $p^{2}\big(2m+\frac{3}{2} \big)+p \big[m(4N+3-4m-4k)+3(N-k)+\frac{23}{2} \big]+N \big[m(1-4m-4k)-3k+3\big]+m^{2} \big(2m+4k -\frac{9}{2} \big)-m \big(2k^{2}-3k-\frac{15}{2} \big)+\frac{3}{2}k^{2}-\frac{23}{2}k+10$ if $1 \leq k < p-m+1$, which is of order $\mathcal{O}(mNp)$.
\end{proposition}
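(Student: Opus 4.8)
The plan is to reproduce the line-by-line FLOPS accounting used in the proofs of Propositions \ref{prop:qrupdate_add_block_rows} and \ref{prop:qrupdate_add_block_cols}, now applied to Algorithm \ref{alg:qrdeleteblockcolumns}, under the conventions of \cite[\S 1.1.15]{golub_van_loan.2013}: square roots count as a single operation, sign changes and data movement are free, an inner product of two length-$\ell$ vectors costs $2\ell-1$, and a call to \textsf{householder} on a length-$m$ vector costs $3m+9$ exactly as established in the proof of Proposition \ref{prop:qrupdate_add_block_rows}.

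First I would dispose of the degenerate case $k=p-m+1$: here the algorithm returns $\mathbf{R}\left[\,,1:(k-1)\right]$ without entering the loop, so no arithmetic is performed and the cost is null. For $1\le k<p-m+1$ the only nontrivial work sits in the loop $i=k,\dots,p-m$. The re-indexing $\mathbf{R}\left[\,,k:(p-m)\right]=\mathbf{R}\left[\,,(k+m):p\right]$ and the final extraction $\mathbf{R}^{-}=\mathbf{R}\left[\,,1:(p-m)\right]$ are pure data movement and contribute nothing; within each iteration I would charge $3m+9$ for the \textsf{householder} call on the $m$ subdiagonal entries of column $i$, $m+1$ for forming $\mathbf{v}_1=\tau\mathbf{v}$, and zero for setting the subdiagonal block to $\mathbf{0}_m$. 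The $\mathbf{R}$-update is executed only when $i<p-m$, and it applies a rank-one correction to the $(m+1)\times(p-m-i)$ trailing block: $(2m+1)(p-m-i)$ for $\mathbf{v}^{\top}\mathbf{R}$, $(m+1)(p-m-i)$ for the outer product with $\mathbf{v}_1$, and $(m+1)(p-m-i)$ for the subtraction, totalling $(4m+3)(p-m-i)$; the $\mathbf{Q}$-update is the analogous correction of the $N\times(m+1)$ block $\mathbf{Q}\left[\,,i:(i+m)\right]$ at cost $N(4m+3)$ per iteration.

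Summing over the $p-m-k+1$ iterations I obtain the constant-per-iteration contributions, the $\mathbf{Q}$-update contribution $(p-m-k+1)\,N(4m+3)$, and, reindexing the $\mathbf{R}$-update sum by $j=p-m-i$, the triangular sum $(4m+3)\sum_{j=1}^{p-m-k}j=(4m+3)\tfrac{(p-m-k)(p-m-k+1)}{2}$. Expanding and collecting powers of $p$, $k$, $m$ (with a single $N$-linear block coming from the $\mathbf{Q}$-updates) then yields the claimed closed form, whose dominant term $4mN(p-m-k)$ gives order $\mathcal{O}(mNp)$; since $N\ge p$, this dominates the $\mathcal{O}(mp^2)$ contribution of the $\mathbf{R}$-updates.

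The step I expect to be the main obstacle is not conceptual but the exact bookkeeping: getting the loop bounds and the $i<p-m$ guard exactly right (so that precisely one term is dropped from the $\mathbf{R}$-update sum, while the \textsf{householder} step is still performed at $i=p-m$), nailing down the small additive constants hidden in \textsf{householder} and in the vector scalings, and then carrying out the error-prone simplification of a sum of three quadratics in $(p,k,m)$ plus an $N$-linear term into the precise polynomial stated, matching every lower-order coefficient down to the constants $\tfrac{3}{2}k^{2}-\tfrac{23}{2}k+10$.
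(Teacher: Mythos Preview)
Your proposal is correct and follows essentially the same line-by-line FLOPS accounting as the paper's own proof: the paper charges $(p-m-k+1)(3m+9)$ for the \textsf{householder} calls, $(p-m-k+1)(m+1)$ for forming $\mathbf{v}_1$, $\sum_{i=k}^{p-m-1}(4m+3)(p-m-i)$ for the conditional $\mathbf{R}$-update, and $(p-m-k+1)N(4m+3)$ for the $\mathbf{Q}$-update, then asserts the total without displaying the intermediate algebra. Your identification of the loop bounds, the $i<p-m$ guard, and the per-line costs all match exactly.
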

%
%
\begin{proof}
The computational cost is 
\begin{itemize}
\item[{\it r8:}] $(p-m-k+1)(3m+9)$ for performing the Householder reflections;
\item[{\it r11:}] $(p-m-k+1) (m+1)$;
\item[{\it r12:}] $\sum_{i=k}^{p-m-1} \big[ (p-m-i)(4m+3) \big]=(p-m)(4m+3)\big(\frac{p-m+1}{2}-k\big)+(4m+3)k\big(\frac{k-1}{2}\big)$;
\item[{\it r13:}] $(p-m-k+1)\big(4Nm+3N\big)$,
\end{itemize}
totalling $p^{2}\big(2m+\frac{3}{2} \big)+p \big[m(4N+3-4m-4k)+3(N-k)+\frac{23}{2} \big]+N \big[m(1-4m-4k)-3k+3\big]+m^{2} \big(2m+4k -\frac{9}{2} \big)-m \big(2k^{2}-3k-\frac{15}{2} \big)+\frac{3}{2}k^{2}-\frac{23}{2}k+10$, which completes the proof.
\end{proof}
%
\begin{proposition}[Algorithm \ref{alg:QRstep}, choice between Givens rotation or Householder reflection]
\label{prop:qr_QRstep}
The computational cost of the updates related to column $i$ of matrices $\mathbf{R} \in \mathbb{R}^{N \times l}$  and $\mathbf{Q} \in \mathbb{R}^{N \times N}$ when a total of $a$ columns before column $i$ have been delated is $9+6N+ 6 (l-i)$ if $a=1$ and $4a+10+N(4a+3)+\mathbbm{1}_{i<l}\big((l-i)(4a+3) \big)$ if $a>1$.
\end{proposition}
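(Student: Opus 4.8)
The plan is to unwind Algorithm~\ref{alg:QRstep} line by line, exactly as was done in Propositions~\ref{prop:qrupdate_add_one_row}--\ref{prop:qrupdate_del_block_cols}, and to sum the FLOPS contributed by each executed line in the two mutually exclusive branches determined by the test $a>1$ on line~{\it r2}. For the Givens branch ($a=1$) I would account for: the call to \textsf{givens} (which, by the standing convention established for Algorithm~\ref{alg:givens_rotation}, is counted as a fixed number of FLOPS—here $6$), the scalar update of $\mathbf{R}_1\left[i,i\right]$ ($3$ FLOPS, one multiplication, one multiplication, one subtraction; the zeroing of $\mathbf{R}\left[i+1,i\right]$ is free since it is an assignment of a constant), the application of the $2\times2$ rotation $\mathbf{G}^\top$ to the $2\times(l-i)$ block $\mathbf{R}\left[i:(i+1),(i+1):l\right]$ ($6(l-i)$ FLOPS, at $6$ per column), and the application of $\mathbf{G}$ on the right to the $N\times 2$ block $\mathbf{Q}\left[,i:(i+1)\right]$ ($6N$ FLOPS). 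Collecting these gives $6+3+6(l-i)+6N=9+6N+6(l-i)$, matching the claimed expression.

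For the Householder branch ($a>1$) I would proceed analogously: the call to \textsf{householder} acting on a vector of length $a+1$ contributes its fixed cost (the count used for Algorithm~\ref{alg:householder_reflection}, namely $3a+9$, mirroring line {\it r8} in the proof of Proposition~\ref{prop:qrupdate_del_block_cols}); the scaling $\mathbf{v}_s=\tau\mathbf{v}$ costs $a+1$; writing $\mu$ into $\mathbf{R}_1\left[i,i\right]$ and zeroing $\mathbf{R}_1\left[(i+1):(i+a),i\right]$ are free. The $\mathbf{R}$-update on line {\it r8} of Algorithm~\ref{alg:QRstep} is executed only when $i<l$, and it updates the $(a+1)\times(l-i)$ block by $\mathbf{R}-\mathbf{v}_s(\mathbf{v}^\top\mathbf{R})$: forming $\mathbf{v}^\top\mathbf{R}$ costs $(2(a+1)-1)(l-i)$ and the rank-one subtraction costs $2(a+1)(l-i)$, but the consolidated per-column bookkeeping used throughout the paper gives $(4a+3)(l-i)$, so this term carries the indicator $\mathbbm{1}_{i<l}$ as stated. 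The $\mathbf{Q}$-update $\mathbf{Q}-\mathbf{Q}\mathbf{v}_s\mathbf{v}^\top$ on the $N\times(a+1)$ block costs $(4a+3)N$ by the same reasoning (it is unconditional, since it is outside the \texttt{if}). Adding $(3a+9)+(a+1)+(4a+3)N+\mathbbm{1}_{i<l}\big((4a+3)(l-i)\big)=4a+10+N(4a+3)+\mathbbm{1}_{i<l}\big((l-i)(4a+3)\big)$ yields the claimed formula.

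The main obstacle—really the only subtlety—is making the per-column FLOPS counts for the rank-one and rotation block updates consistent with the conventions already fixed in the preceding propositions (e.g. the ``$6$ per column'' used for Givens blocks in the proof of Proposition~\ref{prop:qrupdate_add_one_col} and the ``$4a+3$ per column'' pattern visible in Proposition~\ref{prop:qrupdate_del_block_cols}), and confirming that the count for \textsf{householder} on a length-$(a+1)$ input is $3a+9$ here rather than some variant. Once those building-block costs are pinned down by cross-referencing the earlier proofs and Algorithms~\ref{alg:householder_reflection} and~\ref{alg:givens_rotation}, the summation is entirely routine and the two branch totals assemble immediately into the stated expression.

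\begin{proof}
The algorithm branches on the test $a>1$ on line {\it r2}.

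\emph{Case $a=1$ (Givens rotation).} The costs are
\begin{itemize}
\item[{\it r13:}] $6$ for the call to \textsf{givens};
\item[{\it r16:}] $3$ for the scalar update of $\mathbf{R}\left[i,i\right]$ (the assignment $\mathbf{R}\left[i+1,i\right]=0$ is free);
\item[{\it r17:}] $6(l-i)$ for applying $\mathbf{G}^\top$ to the $2\times(l-i)$ block;
\item[{\it r19:}] $6N$ for applying $\mathbf{G}$ on the right to the $N\times2$ block,
\end{itemize}
totalling $9+6N+6(l-i)$.

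\emph{Case $a>1$ (Householder reflection).} The costs are
\begin{itemize}
\item[{\it r5:}] $3a+9$ for the call to \textsf{householder} on a vector of length $a+1$;
\item[{\it r6:}] $a+1$ for $\mathbf{v}_s=\tau\mathbf{v}$ (writing $\mu$ and zeroing the $a$ entries below are free);
\item[{\it r8:}] $\mathbbm{1}_{i<l}(l-i)(4a+3)$ for the rank-one update of the $(a+1)\times(l-i)$ block, performed only when $i<l$;
\item[{\it r10:}] $N(4a+3)$ for the rank-one update of the $N\times(a+1)$ block of $\mathbf{Q}$,
\end{itemize}
totalling $4a+10+N(4a+3)+\mathbbm{1}_{i<l}\big((l-i)(4a+3)\big)$.

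Combining the two cases gives the stated computational cost.
\end{proof}
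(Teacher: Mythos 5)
Your proof is correct and follows essentially the same line-by-line FLOP accounting as the paper's own proof: the same component costs ($3a+9$ for the Householder call, $a+1$ for scaling $\mathbf{v}$, $(4a+3)$ per column for the rank-one updates of $\mathbf{R}$ and $\mathbf{Q}$, and $6$, $3$, $6(l-i)$, $6N$ in the Givens branch), with only the line labels differing. Both assemble the two branch totals identically, so there is nothing to add.
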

%
%
\begin{proof}
The computational cost is 
\begin{itemize}
\item[{\it r3:}] $3a+9$;
\item[{\it r4:}] $a+1$;
\item[{\it r7:}] $(4a+3)(l-i)$;
\item[{\it r8:}] $N(4a+3)$;
\item[{\it r10:}] $6$;
\item[{\it r12:}] $3$;
\item[{\it r14:}] $6 (l-i)$;
\item[{\it r15:}] $6N$,
\end{itemize}
where rows 3 to 8 are performed when $a>1$, the others are performed otherwise. Hence the total is $4a+10+N(4a+3)+\mathbbm{1}_{i<l}\big((l-i)(4a+3) \big)$ if $a>1$ and $9+6N+6 (l-i)$ if $a=1$, which completes the proof.
\end{proof}
%
\begin{proposition}[Algorithm \ref{alg:QRdeletenonadjcols}, delete a block of $m\geq 1$ non-adjacent columns]
\label{prop:qr_QRdeletenonadjcols}
The computational cost of deleting a block of $m\geq 1$ non-adjacent columns at positions $\mathbf{k}=(k_1,\dots,k_m)$ using Algorithm \ref{alg:QRdeletenonadjcols} is equal to 
\begin{itemize}
\item[{\it (i)}] if $m=1$ and $k_1=p$, zero;
\item[{\it (ii)}] if $m=1$ and $1\leq k_1<p$, $3(p-k_1)^2+6(N+1)(p-k_1)$;
\item[{\it (iii)}] if $k_m-k_1=m-1$ and $k_{1}=p-m+1$, zero;
\item[{\it (iv)}] if $k_m-k_1=m-1$ and $k_{1} < p-m+1$, $p^{2}\big(2m+\frac{3}{2} \big)+p \big[m(4N+3-4m-4k_1)+3(N-k_1)+\frac{23}{2} \big]+N \big[m(1-4m-4k_1)-3k_1+3\big]+m^{2} \big(2m+4k_1 -\frac{9}{2} \big)-m \big(2k_1^{2}-3k_1-\frac{15}{2} \big)+\frac{3}{2}k_1^{2}-\frac{23}{2}k_1+10$;
\item[{\it (v)}]  if $q=1$, $3(l-k_1)^2+6(N+1)(l-k_1)+2$;
\item[{\it (vi)}]  if $k_q-k_1=q-1$, $q>1$, $l^{2}\big(2q+\frac{3}{2} \big)+l \big[q(4N+3-4q-4k_1)+3(N-k_1)+\frac{23}{2} \big]+N \big[q(1-4q-4k_1)-3k_1+3\big]+q^{2} \big(2q+4k_1 -\frac{9}{2} \big)-q \big(2k_1^{2}-3k_1-\frac{15}{2} \big)+\frac{3}{2}k_1^{2}-\frac{23}{2}k_1+12$;
\item[{\it (vii)}] if $k_q-k_1\neq q-1$, $q>1$, and $1 < k_2-k_1 \leq l-q-k_1$
\begin{align}
&\frac{3}{2}(l-q)^{2}+3(k_2-k_1)(l-q)-3k_1(k_1-k_2+3)+3(l-q-k_1+1)\nonumber\\
&\qquad-\frac{1}{2}k_2(3 k_2-1)+N(3(l+k_2-2k_1)+q)\nonumber \\
& \qquad -\frac{3}{2}q+\frac{11}{2}l+11+ 4(N+1+l-q) \sum_{i=k_2-k_1}^{l-q-k_1} a_i -4  \sum_{i=k_2-k_1}^{l-q-k_1} i \, a_i;\nonumber
\end{align}
\item[{\it (viii)}] if $k_q-k_1\neq q-1$, $q>1$, and $k_2-k_1 = l-q-k_1+1$, 
\begin{align*}
&6(N+l-q)(k_2-k_1)+12(k_2-k_1)-3(k_2-k_1)^{2}-3(N+2l)\nonumber \\
&\qquad +10q+4Nq+1+3(l-q-k_1+1);
\end{align*}
\item[{\it (ix)}] if $k_q-k_1\neq q-1$, $q>1$, and $k_2-k_1=1$ 
\begin{align}
&\frac{3}{2}(l-q)^{2}+N(q+3(l-k_1+1))-\frac{3}{2}\big(k_1\big)^{2}-\frac{17}{2}(k_1-l) -\frac{9}{2}q +10\nonumber \\
& \qquad +3(l-q-k_1+1)+ 4\left(N+1+l-q\right) \sum_{i=1}^{l-q-k_1} a_i  - 4 \sum_{i=1}^{l-q-k_1} i  \, a_i,\nonumber
\end{align}
\end{itemize}
which is of order $\mathcal{O}((p-m)^2)$.
\end{proposition}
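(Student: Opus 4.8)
The plan is to prove Proposition \ref{prop:qr_QRdeletenonadjcols} by following the control flow of Algorithm \ref{alg:QRdeletenonadjcols} branch by branch, and in each branch reduce the FLOP count either to a previously established proposition or to an explicit summation that can be evaluated in closed form. The algorithm has three structural phases: (1) an initial sequence of early-exit tests that dispatch to $\mathsf{qrdelcol}$ or $\mathsf{qrdelblockcols}$ on the \emph{full} $N\times p$ matrix; (2) after slicing $\mathbf{R}$ down to its leading $l\times l$ block and reducing the index vector to its first $q$ entries, a second identical pair of early-exit tests, now dispatching on the $N\times l$ matrix with $q$ columns to delete; (3) if no early exit has fired, a loop over $i=1,\dots,l-q-k_1+1$ that calls $\mathsf{qrstep}$ with a step size $\mathbf{a}[i]$ that grows by the gaps between consecutive retained column indices. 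Cases {\it (i)}--{\it (ii)} are immediate from Proposition \ref{prop:qrupdate_del_one_col} (noting the substitution $k\mapsto k_1$, $p$ unchanged), and cases {\it (iii)}--{\it (iv)} are immediate from Proposition \ref{prop:qrupdate_del_block_cols} with $k\mapsto k_1$. Cases {\it (v)}--{\it (vi)} are the same two propositions but applied to the reduced $N\times l$ matrix with $q$ deletions in place of $p$ and $m$; the only subtlety is the extra additive constant ($+2$ in {\it (v)}, $+12$ versus $+10$ in {\it (vi)}), which must be accounted for by the slicing/index-bookkeeping lines of the algorithm (the assignments forming $\mathbf{e}$, $\overline{\mathbf{k}}$, $l$, $q$, and the re-slice of $\mathbf{R}$) that are executed before the early exit is reached — I would tally those cheap integer/pointer operations explicitly as a fixed overhead.

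The substantive work is cases {\it (vii)}--{\it (ix)}, where the $\mathsf{qrstep}$ loop actually runs. Here I would invoke Proposition \ref{prop:qr_QRstep}: the cost of iteration $i$ is $9+6N+6(l-i')$ when $\mathbf{a}[i]=1$ and $4\mathbf{a}[i]+10+N(4\mathbf{a}[i]+3)+\mathbbm{1}_{i'<l}\big((l-i')(4\mathbf{a}[i]+3)\big)$ when $\mathbf{a}[i]>1$, where $i'=i+k_1-1$ is the actual column index passed. The recursion $\mathbf{a}[i+1]=\mathbf{a}[i]+(\overline{\mathbf{k}}[i+1]-\overline{\mathbf{k}}[i])-1$ with $\mathbf{a}[1]=\overline{\mathbf{k}}[1]-k_1$ means $\mathbf{a}[i]$ is a cumulative sum of gap-minus-one increments; summing Proposition \ref{prop:qr_QRstep}'s per-step cost over $i$ therefore produces terms linear in $\sum_i \mathbf{a}[i]$ and $\sum_i i\,\mathbf{a}[i]$, which is exactly why those two sums $\sum_{i}a_i$ and $\sum_i i\,a_i$ appear unevaluated in the statement of {\it (vii)} and {\it (ix)}. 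The three sub-cases are distinguished by the value of $\mathbf{a}[1]=k_2-k_1$: when $k_2-k_1=1$ the first step is a Givens rotation (case {\it (ix)}, contributing the $9+6N+\dots$ term once), when $1<k_2-k_1\le l-q-k_1$ the first step is already a Householder reflection (case {\it (vii)}), and when $k_2-k_1=l-q-k_1+1$ all retained columns past $k_1$ are contiguous so the loop degenerates to a single Householder step plus the final renormalization line, yielding the closed-form {\it (viii)} with no residual sum. I would also separately account for the two non-loop lines that always execute in this branch: the re-slice $\mathbf{R}=\mathbf{R}[,\overline{\mathbf{k}}]$ and the final square-root/norm update $\mathbf{R}_1[\dots]=\sqrt{\Vert\cdots\Vert_2^2}$-type line — the latter contributes a small fixed count ($2(l-q-k_1+1)+1$ or similar) which is what separates the trailing constants across the sub-cases.

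The order of attack: first handle the two blocks of early-exit dispatches (cases {\it (i)}--{\it (vi)}) by citation plus a bookkeeping overhead count, since these are routine; then set up the general per-iteration cost from Proposition \ref{prop:qr_QRstep} and the explicit formula $\mathbf{a}[i]=\overline{\mathbf{k}}[i]-k_1-(i-1)$ for the step sizes; then split on $\mathbf{a}[1]$ into {\it (vii)}, {\it (viii)}, {\it (ix)}, carefully isolating the first iteration (Givens vs.\ Householder) from the rest; finally collect the non-loop overhead and verify the stated closed forms by matching coefficients of $N$, $l$, $q$, $k_1$, $k_2$, the quadratic terms, and the two residual sums. The main obstacle I anticipate is the index bookkeeping — keeping straight the relationship between the loop counter $i$, the shifted column index $i+k_1-1$ fed to $\mathsf{qrstep}$, the width $l$ of the working matrix (which is the relevant "$l$" in Proposition \ref{prop:qr_QRstep}, not $p$), and the off-by-one in whether the last iteration triggers the $\mathbbm{1}_{i<l}$ term in Proposition \ref{prop:qr_QRstep} or instead falls through to the separate final norm-update line. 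Getting the boundary iteration right is what determines every one of the small additive constants ($+2$, $+10$, $+11$, $+12$) that distinguish the seven sub-cases, so that is where the care is needed; the order-$\mathcal{O}((p-m)^2)$ conclusion then follows because $l-q\le p-m$ and the dominant surviving term in each branch is the $\tfrac32 l^2$-type contribution from line {\it r7} of $\mathsf{qrstep}$ summed over the loop.
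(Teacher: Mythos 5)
Your overall skeleton matches the paper's proof: cases \emph{(i)}--\emph{(iv)} by direct citation of Propositions \ref{prop:qrupdate_del_one_col} and \ref{prop:qrupdate_del_block_cols}, cases \emph{(v)}--\emph{(vi)} by the same propositions on the reduced $N\times l$ problem plus a small fixed overhead from the index-arithmetic lines (the paper charges exactly $2$ flops there, which is what forces the $+2$), and cases \emph{(vii)}--\emph{(ix)} by summing the per-iteration cost of Proposition \ref{prop:qr_QRstep} over the loop, with the step-size recursion leaving the two residual sums $\sum_i a_i$ and $\sum_i i\,a_i$ unevaluated. That is the same route the paper takes.

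However, there is a concrete error at the heart of the part you call ``the substantive work.'' You assert $\mathbf{a}[1]=k_2-k_1$ and on that basis assign Givens to case \emph{(ix)} and Householder to the start of case \emph{(vii)}. In fact $\mathbf{a}[1]=\overline{\mathbf{k}}[1]-k_1$, the number of subdiagonal entries of the first retained column at or after position $k_1$, and the relationship is the \emph{opposite} of what you state: if $k_2-k_1>1$ then position $k_1+1$ is retained, so $\mathbf{a}[i]=1$ for $i=1,\dots,k_2-k_1-1$ and case \emph{(vii)} opens with a run of $k_2-k_1-1$ Givens rotations (this run is precisely the source of the $(9+6N+\cdots)(k_2-k_1-1)$ block in the paper's derivation); if $k_2-k_1=1$ then $\mathbf{a}[1]\ge 2$ and case \emph{(ix)} uses the Householder branch from the very first step, with no initial Givens term at all. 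Your reading of case \emph{(viii)} inherits the same inversion: when $k_2-k_1=l-q-k_1+1$ the loop does not collapse to ``a single Householder step'' but consists of $k_2-k_1-1$ Givens steps followed by one final Householder step of size $q$, which is exactly how the closed form \emph{(viii)} arises. Finally, the ``final square-root/norm update'' line you propose to charge does not exist in Algorithm \ref{alg:QRdeletenonadjcols} (it belongs to the thin-R variant, Algorithm \ref{alg:thinQRdeletenonadjcols}); here the loop itself runs through $i=l-q-k_1+1$, and the remaining in-loop bookkeeping is the update $\mathbf{a}[i+1]=\mathbf{a}[i]+(\overline{\mathbf{k}}[i+1]-\overline{\mathbf{k}}[i])-1$, charged at $3$ flops for each of the $l-q-k_1$ non-final iterations. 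As written, your case split and per-case step inventory would not reproduce the stated coefficients or the small additive constants, so the plan needs this correction before the coefficient-matching step can succeed.
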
%
%
%
\begin{proof}
The computational cost is 
\begin{itemize}
\item[{\it r7:}] $2$;
\item[{\it r14:}] $1$;
\item[{\it r16:}] \begin{align*}
& \sum_{i=1}^{l-q-k_1+1} \left[\mathbbm{1}_{a_i=1} (9+6N+6(l-q-i)) + \right. \nonumber \\
& \quad  \left. +\mathbbm{1}_{a_i>1} \left( 4a_i+10+N(4a_i+3)+ \mathbbm{1}_{i+k_1-1<l-q}(l-q-i)(4a_i+3) \right)\right],
\end{align*}
its computation is detailed below;
\item[{\it r17:}] $3 (l-q-k_1)$.
\end{itemize}
The computational cost of performing row {\it r16} is:
\begin{itemize}
\item[{\it (i)}] if $1<k_2-k_1\leq l-q-k_1$
\begin{align*}
&(9+6N+6(l-q)) (k_2-k_1-1) -3 (k_2-k_1-1)(k_2-k_1)\nonumber \\
& \qquad +\sum_{i=k_2-k_1}^{l-q-k_1} \Big[ (4N+4)a_i+10+3N+(l-q-i)(4a_i+3) \Big]\nonumber\\
&\qquad+(4N+4)q + 10 + 3N\nonumber \\
&= (k_2-k_1-1) \left\{9+6N+6(l-q)-\frac{3}{2}(k_2-k_1) \right\}\nonumber\\
&\qquad+(10+3(N-l-q)) (l-q-k_2+1) \nonumber \\
& \qquad -\frac{3}{2}(l-q-k_1)((l-q-k_1+1)\nonumber\\
&\qquad+4(N+1+l-q) \sum_{i=k_2-k_1}^{l-q-k_1} a_i -4  \sum_{i=k_2-k_1}^{l-q-k_1} i \, a_i\nonumber \\
&= \frac{3}{2}(l-q)^{2}+3(k_2-k_1)(l-q)-3k_1(k_1-k_2+3)\nonumber\\
&\qquad-\frac{1}{2}k_2(3 k_2-1)+N(3(l+k_2-2k_1)+q)\nonumber \\
& \qquad -\frac{3}{2}q+\frac{11}{2}l+11+ 4(N+1+l-q) \sum_{i=k_2-k_1}^{l-q-k_1} a_i -4  \sum_{i=k_2-k_1}^{l-q-k_1} i \, a_i;
\end{align*}
\item[{\it (ii)}] if $k_2-k_1 =l-q-k_1+1$ 
\begin{align*}
& \sum_{i=1}^{k_2-k_1-1} \Big[9+6N+6(l-q-i)\Big] +(4N+4)q +10+3N\nonumber \\
&=6(N+l-q)(k_2-k_1)+12(k_2-k_1)\nonumber\\
&\qquad-3(k_2-k_1)^{2}-3(N+2l)+10q+4Nq+1;
\end{align*}
\item[{\it (iii)}] if $k_2-k_1=1$
\begin{align*}
& \sum_{i=1}^{l-q-k_1} \Big[ 4a_i+10+N(4a_i+3)+(l-q-i)(4a_i+3)\Big]+ 4 q +10 + N(4q+3)\nonumber \\
&=  \frac{3}{2}(l-q)^{2}+N(q+3(l-k_1+1))-\frac{3}{2}\big(k_1\big)^{2}-\frac{17}{2}(k_1-l) -\frac{9}{2}q +10\nonumber \\
& \qquad + 4\left(N+1+l-q\right) \sum_{i=1}^{l-q-k_1} a_i  - 4 \sum_{i=1}^{l-q-k_1} i  \, a_i,
\end{align*}
\end{itemize}
which completes the proof.
\end{proof}
%
\subsection{R updating algorithms}
%
%
\begin{proposition}[Algorithm \ref{alg:thinQRaddrow}, add one row]
\label{prop:thinqr_update_add_one_row}
The computational cost of adding a row at position $k=N+1$ using Algorithm \ref{alg:thinQRaddrow} is equal to $3p(p+2)$ which is of order $\mathcal{O}(p^2)$.
\end{proposition}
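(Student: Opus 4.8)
The plan is to count the floating-point operations line by line in Algorithm \ref{alg:thinQRaddrow}, exactly in the style already used for Proposition \ref{prop:qrupdate_add_one_row} and the other QR cost propositions. The algorithm consists of a single \texttt{for} loop over $i = 1, \dots, p$, and inside the loop it (a) computes a Givens rotation via \textsf{givens}, (b) updates the diagonal entry $\mathbf{R}_1[i,i]$, and (c) when $i<p$, applies the rotation to the trailing parts of row $i$ of $\mathbf{R}_1$ and of $\mathbf{u}$. So the total is a sum over $i$ of a constant term (the Givens computation plus the diagonal update) and a term proportional to $p-i$ (the trailing-vector updates).

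First I would fix the per-line costs using the conventions stated in the supplement: a call to \textsf{givens} (Algorithm \ref{alg:givens_rotation}) costs a constant number of FLOPS — reading off its body, the dominant branch does a division, a squaring, an addition, a square root, a reciprocal and a multiplication, so it is charged as $6$ FLOPS, consistent with the ``$6$'' coefficient appearing in line {\it r8} of Proposition \ref{prop:qrupdate_add_one_row}. The diagonal update $\mathbf{R}_1[i,i] = c\ast\mathbf{R}_1[i,i] - s\ast\mathbf{u}[i]$ is $2$ multiplications and $1$ subtraction, i.e. $3$ FLOPS. When $i<p$, the two updates $\mathbf{R}_1[i,(i+1):p] = c\ast\mathbf{r}_i - s\ast\mathbf{u}[(i+1):p]$ and $\mathbf{u}[(i+1):p] = s\ast\mathbf{r}_i + c\ast\mathbf{u}[(i+1):p]$ each act on a vector of length $p-i$ and cost $2(p-i)$ multiplications plus $(p-i)$ additions, i.e. $3(p-i)$ FLOPS each, for $6(p-i)$ together.

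Then I would assemble the sum. The constant contributions give $\sum_{i=1}^{p}(6+3) = 9p$, and the vector contributions give $\sum_{i=1}^{p-1}6(p-i) = 6\cdot\frac{(p-1)p}{2} = 3p(p-1) = 3p^2 - 3p$. Adding these yields $9p + 3p^2 - 3p = 3p^2 + 6p = 3p(p+2)$, which matches the claimed exact cost; the dominant term is $3p^2$, so the cost is $\mathcal{O}(p^2)$, completing the proof.

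I do not expect any real obstacle here — the only point requiring care is bookkeeping consistency, namely that the FLOP charge assigned to \textsf{givens} and to the scalar and vector updates matches the charges used in the companion propositions (so that, e.g., line {\it r8} of Algorithm \ref{alg:thinQRaddrow} is counted exactly as line {\it r8} of Algorithm \ref{alg:qr_add_one_row}). Once those per-operation conventions are pinned down, the rest is the elementary arithmetic-series evaluation above. I would present it as a short \texttt{proof} environment with an itemized line-by-line tally (mirroring the existing proofs) followed by the one-line summation.

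\begin{proof}
The computational cost is
\begin{itemize}
\item[{\it r3:}] $6p$ for the Givens rotations;
\item[{\it r5:}] $3p$;
\item[{\it r7-8:}] $\sum_{i=1}^{p-1}6(p-i) = 3p(p-1)$,
\end{itemize}
totalling $9p + 3p(p-1) = 3p^2 + 6p = 3p(p+2)$, which completes the proof.
\end{proof}
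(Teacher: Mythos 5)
Your proposal is correct and follows essentially the same line-by-line FLOP tally as the paper's own proof: $6p$ for the Givens rotations, $3p$ for the diagonal updates, and $\sum_{i=1}^{p-1}6(p-i)=3p(p-1)$ for the two trailing-vector updates (which the paper splits into two separate $\frac{3}{2}p(p-1)$ items), giving $3p(p+2)$. The only differences are cosmetic, namely the labelling of the algorithm lines and the merging of the two vector-update lines into a single term.
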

%
%
\begin{proof}
The computational cost is 
\begin{itemize}
\item[{\it r3:}] $6p$;
\item[{\it r4:}] $3p$;
\item[{\it r7:}] $\sum_{i=1}^{p-1} 3(p-i)= \frac{3}{2} p(p-1)$;
\item[{\it r8:}] $\sum_{i=1}^{p-1} 3(p-i)= \frac{3}{2} p(p-1)$,
\end{itemize}
totalling $3p(p+2)$, which completes the proof.
\end{proof}
%
\begin{proposition}[Algorithm \ref{alg:thinQRdeleterow}, delete one row]
\label{prop:thinqr_update_del_one_row}
The computational cost of deleting one row at position $1\leq k\leq N$ using Algorithm \ref{alg:thinQRdeleterow} is equal to $3p^{2}+3p-2$ which is of order $\mathcal{O}(p^2)$.
\end{proposition}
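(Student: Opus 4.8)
The plan is to count the floating-point operations executed by Algorithm \ref{alg:thinQRdeleterow} line by line, using the conventions fixed at the start of Section \ref{appC} of the supplementary material: a square root counts as one flop, a sign change (and hence an absolute value) costs nothing, and every multiplication, division, addition and subtraction counts as one flop. The only structural remark needed is that the outer loop runs over $i=1,\dots,p$, whereas the inner \texttt{if} block is executed only for $i=1,\dots,p-1$.

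First I would dispatch the diagonal update. The assignment $r=\mathbf{R}_1[i,i]$ is free, while $\mathbf{R}_1[i,i]=\sqrt{\lvert(\mathbf{R}_1[i,i])^2-(\mathbf{u}_k[i])^2\rvert}$ costs $4$ flops (two squarings, one subtraction, one square root; the absolute value is free). Summed over all $p$ iterations this contributes $4p$.

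Next I would handle the \texttt{if} block. Computing $c=\mathbf{R}_1[i,i]/r$ and $s=-\mathbf{u}_k[i]/r$ requires two divisions (the negation is free), i.e. $2$ flops, for a total of $2(p-1)$. The update $\mathbf{R}_1[i,(i+1):p]=\bigl(\mathbf{R}_1[i,(i+1):p]+s\,\mathbf{u}_k[(i+1):p]\bigr)/c$ acts on $p-i$ entries, each costing one multiplication, one addition and one division, hence $3(p-i)$ flops; summing gives $\sum_{i=1}^{p-1}3(p-i)=\tfrac{3}{2}p(p-1)$. The update of $\mathbf{u}_k[(i+1):p]$ is a linear combination of two vectors of length $p-i$ (the slice $\mathbf{R}_1[i,i:p]$ is used only to supply the $p-i$ scalars needed to overwrite $\mathbf{u}_k[(i+1):p]$), costing $3(p-i)$ flops per iteration and so again $\tfrac{3}{2}p(p-1)$. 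Adding the four contributions yields $4p+2(p-1)+3p(p-1)=3p^{2}+3p-2$, which is $\mathcal{O}(p^2)$.

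I do not expect a genuine obstacle here, since the computation is a routine flop tally; the only points requiring care are reading off the lengths of the array slices correctly — in particular treating the line involving $\mathbf{R}_1[i,i:p]$ as producing $p-i$ (not $p-i+1$) scalar updates, which is forced by the fact that it overwrites $\mathbf{u}_k[(i+1):p]$, a vector of length $p-i$ — and applying the square-root and sign conventions uniformly; the arithmetic-series summations $\sum_{i=1}^{p-1}(p-i)=\tfrac12 p(p-1)$ are then immediate.
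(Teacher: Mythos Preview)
Your proposal is correct and mirrors the paper's own proof almost exactly: the paper tallies line r4 as $4p$, lines r6 and r7 as $p-1$ each (you combine them into $2(p-1)$), and lines r8 and r9 as $\sum_{i=1}^{p-1}3(p-i)=\tfrac{3}{2}p(p-1)$ each, summing to $3p^{2}+3p-2$. Your explicit remark about the apparent length mismatch in the $\mathbf{R}_1[i,i:p]$ slice is a nice clarification that the paper leaves implicit.
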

%
%
\begin{proof}
The computational cost is 
\begin{itemize}
\item [{\it r4:}] $4p$;
\item [{\it r6:}] $p-1$;
\item [{\it r7:}] $p-1$;
\item [{\it r8:}] $\sum_{i=1}^{p-1} 3(p-i)= \frac{3}{2} p(p-1)$;
\item [{\it r9:}] $\sum_{i=1}^{p-1} 3(p-i)= \frac{3}{2} p(p-1)$,
\end{itemize}
totalling  $3p^{2}+3p-2$, which completes the proof.
\end{proof}
%
%
\begin{proposition}[Algorithm \ref{alg:thinQRaddcol}, add one column]
\label{prop:thinqr_update_add_one_col}
The computational cost of adding a column at position $k=p+1$ using Algorithm \ref{alg:thinQRaddcol} is equal to $p^2+p(2N+1)+2N$ which is of order $\mathcal{O}(Np)$.
\end{proposition}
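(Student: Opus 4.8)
The plan is to prove the claim by a direct line-by-line count of the floating-point operations performed by Algorithm \ref{alg:thinQRaddcol}, following the accounting conventions recalled at the start of Section \ref{appC}: an inner product of two vectors of length $n$ costs $2n-1$ FLOPS, solving a $p\times p$ triangular system by forward (or backward) substitution costs exactly $p^2$ FLOPS, a square root is a single FLOP, and taking an absolute value costs nothing.

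First I would isolate the one line carrying the dominant cost, line {\it r2}. It first forms the right-hand side $\mathbf{X}^\top\mathbf{u}$, i.e. the product of the $p\times N$ matrix $\mathbf{X}^\top$ with the vector $\mathbf{u}\in\mathbb{R}^N$, which amounts to $p$ inner products of length $N$ at a cost of $p(2N-1)=2Np-p$, and then solves the lower-triangular system $\mathbf{R}_1^\top\mathbf{r}_{12}=\mathbf{X}^\top\mathbf{u}$ by forward substitution at a cost of $p^2$; hence line {\it r2} contributes $p^2+2Np-p$. Line {\it r3}, which merely stacks $\mathbf{R}_1$, $\mathbf{r}_{12}$ and a zero row and column into the $(p+1)\times(p+1)$ array, performs no arithmetic.

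Next I would tally the two scalar updates. Line {\it r4} computes $\Vert\mathbf{u}\Vert_2^2=\mathbf{u}^\top\mathbf{u}$ (a length-$N$ inner product, $2N-1$ FLOPS), $\Vert\mathbf{r}_{12}\Vert_2^2=\mathbf{r}_{12}^\top\mathbf{r}_{12}$ (a length-$p$ inner product, $2p-1$ FLOPS) and one subtraction, giving the subtotal $2N+2p-1$; line {\it r5} performs one absolute value (free) followed by one square root, contributing $1$. Summing the three nonzero contributions,
\begin{equation*}
\big(p^2+2Np-p\big)+\big(2N+2p-1\big)+1=p^2+p(2N+1)+2N,
\end{equation*}
which is the stated exact cost, and since $N\ge p$ the $p^2$ and $2Np$ terms dominate, so the cost is of order $\mathcal{O}(Np)$.

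The argument is elementary, so there is no genuine obstacle beyond careful bookkeeping. The only points needing attention---and the ones on which the exact constant, as opposed to the leading order, is sensitive---are using $2n-1$ (not $2n$) for inner products, remembering that the right-hand side $\mathbf{X}^\top\mathbf{u}$ must be formed and counted explicitly within the solve step, charging one FLOP for the square root while charging nothing for the absolute value in line {\it r5}, and recognizing that the block-assembly line {\it r3} is free. Ensuring these choices match the rest of Section \ref{appC} is all that the proof requires.
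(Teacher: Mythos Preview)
Your proof is correct and follows the same line-by-line accounting as the paper, arriving at identical subtotals for lines {\it r2}, {\it r4}, and {\it r5}; the paper's proof is terser but the approach and arithmetic match exactly.
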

%
%
\begin{proof}
The computational cost is 
\begin{itemize}
\item[{\it r2:}] $p^{2}-p+2Np$;
\item[{\it r4:}] $2(N+p)-1$;
\item[{\it r5:}] 1,
\end{itemize}
totalling  $p^2+p(2N+1)+2N$, which completes the proof.
\end{proof}
%
\begin{proposition}[Algorithm \ref{alg:thinQRdeletecol}, delete one column]
\label{prop:thinqr_update_del_one_col}
The computational cost of deleting one column at position $1\leq k\leq p$ using Algorithm \ref{alg:thinQRdeletecol} is either zero if $k=p$, or $3(p-k)^{2}+6(p-k)$ otherwise, which is of order $\mathcal{O}((p-k)^2)$.
\end{proposition}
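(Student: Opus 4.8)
The plan is to obtain the cost by a direct line-by-line flop count of Algorithm~\ref{alg:thinQRdeletecol}, in keeping with the convention of \cite[\S 1.1.15]{golub_van_loan.2013} used throughout Section~\ref{appC}: additions, subtractions, multiplications, divisions and square roots each count one flop, whereas sign changes, comparisons, and the relabelling or copying of sub-blocks count zero. The two regimes in the statement correspond to the two branches of the algorithm, so I would split the argument accordingly.

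For $k=p$ the algorithm returns at its first conditional by extracting the leading $(p-1)\times(p-1)$ block of $\mathbf{R}_1$; this is pure data movement, giving cost $0$. For $1\le k<p$, the column shift that overwrites columns $k,\dots,p-1$ of $\mathbf{R}_1$ with columns $k+1,\dots,p$ is again arithmetic-free, so all the work lies in the for-loop, which is executed for $i=k,k+1,\dots,p-1$, i.e. $p-k$ times. Within iteration $i$ I would tally three contributions: the Givens rotation (Algorithm~\ref{alg:givens_rotation}), costing $6$ flops just as in the count for Proposition~\ref{prop:qrupdate_del_one_col}; the update of the pivot entry $\mathbf{R}_1[i,i]=c\,\mathbf{R}_1[i,i]-s\,\mathbf{R}_1[i+1,i]$, which is two multiplications and a subtraction, $3$ flops; and — only when $i<p-1$ — the application of $\mathbf{G}^\top$ to the two-row trailing block $\mathbf{R}_1[i{:}i{+}1,(i{+}1){:}(p-1)]$, a $2\times 2$ times $2\times(p-1-i)$ product that costs $6$ flops per column (four multiplications, two additions), hence $6(p-1-i)$ flops. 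The zeroing $\mathbf{R}_1[i+1,i]=0$ contributes nothing.

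Summing, the first two items give $(6+3)(p-k)=9(p-k)$, while the trailing-block cost is $\sum_{i=k}^{p-2}6(p-1-i)$, which after the substitution $j=p-1-i$ becomes $\sum_{j=1}^{p-1-k}6j=3(p-1-k)(p-k)$. Adding these yields $9(p-k)+3(p-k)(p-k-1)=3(p-k)^2+6(p-k)$, whose dominant term is $3(p-k)^2$, matching the entry in Table~\ref{tab:Rcosts}; hence the cost is $\mathcal{O}((p-k)^2)$.

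There is no substantive obstacle — the computation is entirely elementary — and the only care needed is bookkeeping: distinguishing the arithmetic lines from the pure slicing operations (the initial conditional and the column shift), getting the loop endpoints right, and in particular noting that the inner block $(i+1){:}(p-1)$ has $p-1-i$ columns rather than $p-i$, so that the trailing-block cost is $6(p-1-i)$ and the resulting arithmetic sum telescopes to $3(p-1-k)(p-k)$ as stated. Consistency with the per-rotation flop conventions already fixed in the companion proposition for the full QR column deletion (Proposition~\ref{prop:qrupdate_del_one_col}) should also be checked.
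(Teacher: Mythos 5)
Your proof is correct and follows essentially the same route as the paper's: a line-by-line flop count assigning $6$ flops per Givens rotation, $3$ per pivot update, and $\sum_{i=k}^{p-2}6(p-1-i)=3(p-k-1)(p-k)$ for the trailing-block applications, summing to $3(p-k)^2+6(p-k)$ for $1\le k<p$ and zero for $k=p$. The bookkeeping details you flag (loop endpoints, the $p-1-i$ column count) match the paper's computation exactly.
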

%
%
\begin{proof}
The computational cost is 
\begin{itemize}
\item[{\it r5:}] $6(p-k)$;
\item[{\it r7:}] $3(p-k)$;
\item[{\it r10:}] $\sum_{i=k}^{p-2}6(p-i-1)=6(p-1)(p-k-1)-\frac{6(p-2)(p-1)}{2}+\frac{6k(k-1)}{2}$,
\end{itemize}
totalling  $3(p-k)^{2}+6(p-k)$, which completes the proof.
\end{proof}
%
%
%
\begin{proposition}[Algorithm \ref{alg:thinQRaddrows}, add a block of $m\geq 2$ rows]
\label{prop:thinqr_update_add_block_rows}
The computational cost of adding a block of rows from position $N+1$ to $N+m$ using Algorithm \ref{alg:thinQRaddrows} is equal to $(2m+1)p^2+p(m+8)-m-7$ which is of order $\mathcal{O}(mp^2)$.
\end{proposition}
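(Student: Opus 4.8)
The plan is to establish the exact FLOP count by a direct, line-by-line accounting of Algorithm~\ref{alg:thinQRaddrows}, in the same style as the proofs of Propositions~\ref{prop:thinqr_update_add_one_row}--\ref{prop:thinqr_update_del_one_col} and, most closely, the proof of Proposition~\ref{prop:qrupdate_add_block_rows}. Indeed, \textsf{thinqraddblockrows} performs exactly the same sequence of $p-1$ Householder reflections as \textsf{qraddblockrows}, but it never touches the orthogonal factor $\mathbf{Q}$ and, in addition, it does not materialise the auxiliary vector $\tau\,\mathbf{v}[2:(m+1)]$: the scaling by $\tau$ is folded once into the vector $\mathbf{l}$ and then reused both for the update of $\mathbf{R}_1[i,(i+1):p]$ and for the rank-one update of $\mathbf{U}[,(i+1):p]$. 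Throughout I will use the conventions of \cite[\S1.1.15, \S2.7.2]{golub_van_loan.2013} (sign changes and copies are free, a square root counts as one flop) together with the fact, already invoked in the proof of Proposition~\ref{prop:qrupdate_add_block_rows}, that one call to \textsf{householder} (Algorithm~\ref{alg:householder_reflection}) applied to a vector of length $m$ costs $3m+9$ flops.

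The core step is the per-iteration bookkeeping for the loop $i=1,\dots,p-1$. For a fixed $i$: the \textsf{householder} call costs $3m+9$; the assignment $\mathbf{R}_1[i,i]=\mu$ is free; forming $\mathbf{l}=\tau(\mathbf{R}_1[i,(i+1):p]+\mathbf{v}[2:(m+1)]^{\top}\mathbf{U}[,(i+1):p])^{\top}$ costs $(2m+1)(p-i)$, namely $(2m-1)(p-i)$ for the $1\times m$ by $m\times(p-i)$ product, $(p-i)$ for the vector addition and $(p-i)$ for the multiplication by the scalar $\tau$; overwriting $\mathbf{R}_1[i,(i+1):p]$ with $\mathbf{R}_1[i,(i+1):p]-\mathbf{l}^{\top}$ costs $(p-i)$; and the rank-one update $\mathbf{U}[,(i+1):p]-\mathbf{v}[2:(m+1)]\mathbf{l}^{\top}$ costs $2m(p-i)$. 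Summing over $i$ with $\sum_{i=1}^{p-1}(p-i)=\tfrac{(p-1)p}{2}$ yields the four contributions $(p-1)(3m+9)$, $(2m+1)\tfrac{(p-1)p}{2}$, $\tfrac{(p-1)p}{2}$ and $m(p-1)p$. The single line executed after the loop, $\mathbf{R}_1[p,p]=\sqrt{(\mathbf{R}_1[p,p])^2+\Vert\mathbf{U}[,p]\Vert_2^2}$, contributes $2m+2$ flops (an $\ell_2$ norm-squared on a length-$m$ vector, one square, one addition, one square root). Adding the five terms and collecting the $mp^2$, $p^2$, $mp$, $p$ and constant parts gives $(2m+1)p^2+p(m+8)-m-7$, whose dominant term $2mp^2$ yields the $\mathcal{O}(mp^2)$ order.

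The step I expect to require the most care is the handling of the loop boundary and its relation to the full-QR version. Algorithm~\ref{alg:thinQRaddrows} deliberately runs $i$ only up to $p-1$ and finalises the last diagonal entry $\mathbf{R}_1[p,p]$ with the separate post-loop assignment, whereas \textsf{qraddblockrows} runs $i$ up to $p$ with an internal \texttt{if}-branch; I must verify that this final assignment is counted exactly once and that the generic per-iteration formulas, which refer to the length-$(p-i)$ subrow $(i+1):p$, are never instantiated with a non-positive length. I also need to argue that the absence of a separate ``$\tau\,\mathbf{v}$'' assignment is intentional and correct --- i.e. that $\tau$ is applied precisely once --- since this is exactly what accounts for the $m(p-1)$-flop difference between the present count and the count of Proposition~\ref{prop:qrupdate_add_block_rows} restricted to the lines that do not update $\mathbf{Q}$. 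Once these boundary and re-use points are settled, the remainder is the routine algebra of summing the arithmetic-progression sums and collecting terms, as sketched above.
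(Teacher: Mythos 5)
Your proposal is correct and follows essentially the same route as the paper: a line-by-line FLOP count of Algorithm \ref{alg:thinQRaddrows}, with identical per-iteration costs ($(p-1)(3m+9)$ for the Householder calls, $(2m+1)(p-i)$ for forming $\mathbf{l}$, $(p-i)$ for the row update, $2m(p-i)$ for the rank-one update of $\mathbf{U}$, and $2m+2$ for the final diagonal entry), summing to $(2m+1)p^2+p(m+8)-m-7$ exactly as in the paper's proof.
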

%
%
\begin{proof}
The computational cost is 
\begin{itemize}
\item[{\it r3:}] $(p-1)(3m+9)$;
\item[{\it r5:}] $\sum_{i=1}^{p-1} \big[(p-i)(2m-1) +2(p-i)\big]=(m+\frac{1}{2})p(p-1)$;
\item[{\it r6:}] $\sum_{i=1}^{p-1} (p-i)=\frac{p(p-1)}{2}$;
\item[{\it r7:}] $\sum_{i=1}^{p-1} 2m(p-i)=mp(p-1)$;
\item[{\it r9:}] $2m+2$,
\end{itemize}
totalling $(2m+1)p^2+p(m+8)-m-7$, which completes the proof.
\end{proof}
%
\begin{proposition}[Algorithm \ref{alg:thinQRdeleterows}, delete a block of $m\geq 2$ rows]
\label{prop:thinqr_update_del_block_rows}
The computational cost of deleting a block of $m\geq 2$ rows from position $1\leq k\leq N-m+1$ to $k+m-1$ using Algorithm \ref{alg:thinQRdeleterows} is equal to $2p^2(m+1)+p(6+m)-m-6$ which is of order $\mathcal{O}(mp^2)$.
\end{proposition}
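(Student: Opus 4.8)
The plan is to carry out the same line-by-line flop count used for Propositions~\ref{prop:thinqr_update_add_one_row}--\ref{prop:thinqr_update_del_block_rows}, now applied to Algorithm~\ref{alg:thinQRdeleterows}. The outer loop runs over $i=1,\dots,p$, and three of its lines are executed at every pass: the computation of $s=\Vert\mathbf{U}_k[,i]\Vert_2^2$, an inner product of two $m$-vectors costing $2m-1$ FLOPS; the pure assignment $r=\mathbf{R}_1[i,i]$, which is free; and the diagonal update $\mathbf{R}_1[i,i]=\sqrt{\vert(\mathbf{R}_1[i,i])^2-s\vert}$, which is one squaring, one subtraction and one square root, i.e.\ $3$ FLOPS (the absolute value is not counted, consistently with the conventions recalled at the start of Section~\ref{appC} and with the count used in Proposition~\ref{prop:thinqr_update_del_one_row}). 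Over $i=1,\dots,p$ these contribute $p(2m-1)+3p$.

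Next I would handle the block guarded by the test $i<p$, executed only for $i=1,\dots,p-1$. Forming the Householder vector and scalar costs $m+5$ FLOPS per pass: the scalar $\mathbf{R}_1[i,i]-r$ together with $b=(\mathbf{v}[1])^2$ is $2$ FLOPS, $\tau=2b/(b+s)$ is $3$ FLOPS, and normalising $\mathbf{v}[2:(m+1)]$ by $\mathbf{v}[1]$ is $m$ divisions. The three remaining lines apply a rank-one-type transformation to the trailing part of row $i$ of $\mathbf{R}_1$ and to the trailing $m\times(p-i)$ block of $\mathbf{U}_k$: the matrix--vector product and scaling defining $\mathbf{w}$ cost $2m(p-i)$; the update $\mathbf{R}_1[i,(i+1):p]=(\mathbf{R}_1[i,(i+1):p]+\mathbf{w}^\top)/(1-\tau)$ costs $1+2(p-i)$ (one subtraction for $1-\tau$, then $p-i$ additions and $p-i$ divisions); and the update of $\mathbf{U}_k[,(i+1):p]$ costs $2(m+1)(p-i)$ (a scaling and vector addition to build $\tau\mathbf{R}_1[i,(i+1):p]+\mathbf{w}^\top$, an outer product against $\mathbf{v}[2:(m+1)]$, and a matrix subtraction). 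Each of these is summed over $i=1,\dots,p-1$ using $\sum_{i=1}^{p-1}(p-i)=p(p-1)/2$, yielding $(m+5)(p-1)$, $mp(p-1)$, $p^2-1$, and $(m+1)p(p-1)$ respectively.

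Finally I would sum all the contributions. The $p^2$ terms combine to $2(m+1)p^2$, the terms linear in $p$ to $p(m+6)$, and the constants to $-(m+6)$, giving exactly $2p^2(m+1)+p(6+m)-m-6$, which is $\mathcal{O}(mp^2)$. A useful sanity check is that this routine simply retraces Algorithm~\ref{alg:thinQRaddrows} in reverse, so its cost stays of the same order, $\mathcal{O}(mp^2)$, as in Proposition~\ref{prop:thinqr_update_add_block_rows}; the small leading-order difference is traceable to the fact that the downdating step cannot reuse a single auxiliary vector across the $\mathbf{R}_1$ and $\mathbf{U}_k$ updates the way the addition step does, forcing an extra rescaling of $\mathbf{R}_1[i,(i+1):p]$. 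The only delicate point I anticipate is the precise flop accounting of those last two lines---in particular deciding that $\tau\mathbf{R}_1[i,(i+1):p]+\mathbf{w}^\top$ is formed afresh rather than recovered from $\mathbf{w}$, and that the outer product and the matrix subtraction each cost $m(p-i)$---since a slip there would perturb the lower-order coefficients while leaving the $\mathcal{O}(mp^2)$ order unchanged.
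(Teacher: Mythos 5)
Your line-by-line FLOP count matches the paper's own proof: the per-iteration costs you assign ($2m-1$ for $s$, $3$ for the diagonal update, $m+5$ for forming $\mathbf{v}$, $b$, $\tau$, then $2m(p-i)$, $1+2(p-i)$, and $2(m+1)(p-i)$ for the three trailing updates) coincide with the paper's tallies for rows r3, r5, r7--r9, r10, r11, and r12, and they sum to exactly $2p^2(m+1)+p(m+6)-m-6$. The proposal is correct and takes essentially the same approach as the paper.
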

%
%
\begin{proof}
The computational cost is 
\begin{itemize}
\item[{\it r3:}] $p(2m-1)$;
\item[{\it r5:}] $3p$;
\item[{\it r7:}] $2(p-1)$;
\item[{\it r8:}] $3(p-1)$;
\item[{\it r9:}] $(p-1)m$;
\item[{\it r10:}] $\sum_{i=1}^{p-1}(p-i)+\sum_{i=1}^{p-1}(2m-1)(p-i)=mp(p-1)$;
\item[{\it r11:}] $\sum_{i=1}^{p-1}2(p-i)+p-1=(p+1)(p-1)$;
\item[{\it r12:}] $\sum_{i=1}^{p-1}\big[2(p-i)+2m(p-i)\big]=p(m+1)(p-1)$,
\end{itemize}
totalling  $2p^2(m+1)+p(6+m)-m-6$, which completes the proof.
\end{proof}
%
%
\begin{proposition}[Algorithm \ref{alg:thinQRaddcols}, add a block of $m\geq 2$ columns]
\label{prop:thinqr_update_add_block_cols}
The computational cost of adding a block of $m\geq 2$ columns from position $k=p+1$ to $k+m-1$ using Algorithm \ref{alg:thinQRaddcols} is equal to $2Nmp+m p^{2} + m^{2}(N+p)+m(N+\frac{1}{3}m^2-\frac{1}{3})$ which is of order $\mathcal{O}(mNp)$.
\end{proposition}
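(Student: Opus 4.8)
The plan is to tally the floating-point operations line by line in Algorithm \ref{alg:thinQRaddcols}, using the conventions fixed at the start of Section \ref{appC}: an inner product (equivalently, one entry of a matrix--matrix product) of two length-$k$ vectors costs $2k-1$ FLOPS, a squared $\ell_2$-norm of a length-$k$ vector costs $2k-1$ FLOPS, a square root counts as one FLOP, and forward substitution for a $p\times p$ triangular system costs exactly $p^2$ FLOPS.

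First I would handle the two preprocessing lines. Forming $\mathbf{X}^\top\mathbf{U}$ with $\mathbf{X}\in\mathbb{R}^{N\times p}$ and $\mathbf{U}\in\mathbb{R}^{N\times m}$ requires $mp$ inner products of length $N$, i.e.\ $mp(2N-1)$ FLOPS, and then solving $\mathbf{R}_1^\top\mathbf{R}_{12}=\mathbf{X}^\top\mathbf{U}$ amounts to $m$ forward substitutions with the $p\times p$ triangular matrix $\mathbf{R}_1^\top$, costing $mp^2$ FLOPS; together these already yield the dominant terms $2Nmp+mp^2$. Building $\mathbf{R}_{22}$ is the delicate part: I would show that for each $i=1,\dots,m$ the diagonal entry $\mathbf{R}_{22}[i,i]$ costs $2N+2p+2i-2$ FLOPS (the squared norms of $\mathbf{U}[,i]$ and $\mathbf{R}_{12}[,i]$ contributing $2N-1$ and $2p-1$, the squared norm of $\mathbf{R}_{22}[1:(i-1),i]$ contributing $2(i-1)-1$, together with two subtractions and one square root; for $i=1$ the empty norm and the single subtraction make the same formula hold), and that for each $i=1,\dots,m-1$ the off-diagonal block $\mathbf{R}_{22}[i,(i+1):m]$ costs $(m-i)(2N+2p+2i-2)$ FLOPS, since each of its $m-i$ entries needs three inner products of lengths $N$, $p$ and $i-1$, two subtractions and one division. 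Here I would take care to check that the lines computing the first row of $\mathbf{R}_{22}$ agree with the general formula, so that the two sums can be written uniformly over $i$.

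It then remains to evaluate $\sum_{i=1}^{m}(2N+2p+2i-2)+\sum_{i=1}^{m-1}(m-i)(2N+2p+2i-2)$ using $\sum_{i=1}^{m}i=\tfrac{m(m+1)}{2}$ and $\sum_{i=1}^{m-1}i^2=\tfrac{(m-1)m(2m-1)}{6}$, add the $mp(2N-1)+mp^2$ from the preprocessing step, and collect terms. The $mp$-linear contributions cancel, the quadratic and cubic terms in $m$ combine through $(m-1)m(2m-1)=2m^3-3m^2+m$, and the surviving expression simplifies to $2Nmp+mp^{2}+m^{2}(N+p)+m\big(N+\tfrac13 m^{2}-\tfrac13\big)$, whose leading term is $2mNp$, giving the claimed order $\mathcal{O}(mNp)$. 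The main obstacle will be the bookkeeping inside the nested loop: keeping the three different inner-product lengths straight, reconciling the hand-coded $i=1$ case with the loop body, and carrying the exact lower-order terms through the two summations without dropping a constant.
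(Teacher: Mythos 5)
Your proposal is correct and follows essentially the same line-by-line FLOP count as the paper's proof: the same costs for forming $\mathbf{X}^\top\mathbf{U}$ and the $m$ forward substitutions, and the same per-entry costs $2N+2p+2i-2$ for the diagonal and off-diagonal entries of $\mathbf{R}_{22}$. The only (cosmetic) difference is that you fold the hand-coded first row of $\mathbf{R}_{22}$ into the general sums over $i$, whereas the paper tallies lines {\it r4}--{\it r5} separately before summing from $i=2$; both bookkeepings reduce to the stated total $2Nmp+mp^{2}+m^{2}(N+p)+m\big(N+\tfrac{1}{3}m^{2}-\tfrac{1}{3}\big)$.
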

%
%
\begin{proof}
The computational cost is 
\begin{itemize}
\item[{\it r2:}] $m(p^{2}-p+2Np)$;
\item[{\it r4:}] $2(N+p)$;
\item[{\it r5:}] $2(m-1)(N+p)$;
\item[{\it r7:}] $\sum_{i=2}^{m} \big[ (2N-1)+(2p-1)+(2(i-1)-1)+3\big]=2(m-1)(N+p-1)+ m(m+1)-2$;
\item[{\it r8:}] $\sum_{i=2}^{m-1}\big[ (2N-1)(m-i) + (2p-1)(m-i)+(2(i-1)-1)(m-i)+3(m-i)\big]=(N+p)(m^2-3m+2)+m(\frac{1}{3}m^{2}-m+\frac{2}{3})$;
\end{itemize}
totalling  $2Nmp+m p^{2} + m^{2}(N+p)+m(N+\frac{1}{3}m^2-\frac{1}{3})$, which completes the proof.
\end{proof}
%
%
\begin{proposition}[Algorithm \ref{alg:thinQRdeletecols}, delete a block of $2\leq m< p$ columns]
\label{prop:thinqr_update_del_block_cols}
The computational cost of deleting a block of $m\geq 2$ columns from position $1\leq k\leq p-m+1$ to $k+m-1$ using Algorithm \ref{alg:thinQRdeletecols} is equal to zero if $k=p-m+1$ and $p^2\big(2m+\frac{3}{2}\big)-p\big(4m^2+k(4m+3)-3m-\frac{23}{2}\big)+m\big(2m^2-\frac{9}{2}m-\frac{19}{2}\big)+mk(4m+2k-3)+\frac{3}{2}k^2-\frac{23}{2}k+2$ otherwise, which is of order $\mathcal{O}(mp^2)$ otherwise.
\end{proposition}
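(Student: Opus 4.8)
The plan is to follow the bookkeeping convention of the preceding propositions and count the floating-point operations of Algorithm~\ref{alg:thinQRdeletecols} line by line, treating the loop over $i$ as a finite sum over its index range. When $k=p-m+1$ the algorithm merely returns the leading $(p-m)\times(p-m)$ block after discarding the trailing $m$ columns, so no arithmetic is performed and the cost is zero. Assume now $1\le k<p-m+1$. The column permutation $\mathbf{R}_1[,k:(p-m)]=\mathbf{R}_1[,(k+m):p]$ is pure data movement and contributes nothing.

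The loop runs for $i=k,\dots,p-m-1$, that is $t:=p-m-k$ iterations. Each iteration performs: (i) a Householder reflection on the $(m+1)$-vector $\big(\mathbf{R}_1[i,i],\mathbf{R}_1[(i+1):(i+m),i]\big)$, which costs $3m+9$ FLOPS by the count of \textsf{householder} established in the proof of Proposition~\ref{prop:thinqr_update_add_block_rows}; and (ii) the rank-one update $\mathbf{R}_1[i:(i+m),(i+1):(p-m)]\leftarrow\mathbf{R}_1[i:(i+m),(i+1):(p-m)]-(\tau\mathbf{v})\big(\mathbf{v}^{\top}\mathbf{R}_1[i:(i+m),(i+1):(p-m)]\big)$, acting on a submatrix with $p-m-i$ columns. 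Exactly as in lines {\it r11}--{\it r12} of the proof of Proposition~\ref{prop:qrupdate_del_block_cols}, forming $\tau\mathbf{v}$ costs $m+1$ FLOPS once per iteration, while the inner product, the outer product and the subtraction together cost $4m+3$ FLOPS per column, so step (ii) costs $(m+1)+(4m+3)(p-m-i)$. After the loop, the normalization $\mathbf{R}_1[p-m,p-m]=\sqrt{\Vert\mathbf{R}_1[(p-m):p,p]\Vert_2^2}$ costs $2m+2$ FLOPS (the squared $2$-norm of an $(m+1)$-vector plus one square root).

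Summing these contributions and putting $j=p-m-i$ in the loop sum yields
\begin{equation*}
S=t(3m+9)+t(m+1)+(4m+3)\,\frac{t(t+1)}{2}+(2m+2)=\frac{4m+3}{2}\,t^{2}+\frac{12m+23}{2}\,t+2m+2,
\end{equation*}
with $t=p-m-k$. The final step is to substitute $t=p-m-k$ and collect terms by powers of $p$ and $k$; this is a routine expansion of a quadratic in $t$ that reproduces the stated closed form $p^{2}\big(2m+\frac{3}{2}\big)-p\big(4m^{2}+k(4m+3)-3m-\frac{23}{2}\big)+m\big(2m^{2}-\frac{9}{2}m-\frac{19}{2}\big)+mk(4m+2k-3)+\frac{3}{2}k^{2}-\frac{23}{2}k+2$, whose leading term is $\frac{4m+3}{2}(p-m-k)^{2}$, so the cost is $\mathcal{O}(mp^{2})$.

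The main obstacle is combinatorial rather than conceptual: one must match the loop bounds precisely, noting that the loop of Algorithm~\ref{alg:thinQRdeletecols} stops at $p-m-1$, one index short of its QR counterpart in Algorithm~\ref{alg:qrdeleteblockcolumns}, which is exactly why the last diagonal entry must be fixed by a separate square-root step that supplies the $2m+2$ term; and one must avoid double-counting the vector $\tau\mathbf{v}$, which is formed once per iteration, not once per affected column. Once the per-line counts are pinned down, the remainder is elementary algebra.
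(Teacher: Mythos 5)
Your proposal is correct and follows essentially the same route as the paper's own proof: the per-line counts ($3m+9$ per Householder reflection, $m+1+(4m+3)(p-m-i)$ for the rank-one update at iteration $i$, and $2m+2$ for the final square-root step) coincide exactly with the paper's tally, and your quadratic in $t=p-m-k$ expands to the stated closed form. The only difference is cosmetic: you collect the loop sum via the substitution $t=p-m-k$, whereas the paper leaves it as a sum over $i$ before simplifying.
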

%
%
\begin{proof}
The computational cost is 
\begin{itemize}
\item[{\it r5:}] $(3m+9)(p-k-m)$;
\item[{\it r8:}] $\sum_{i=k}^{p-m-1}\big[m+1+(4m+3)(p-m-i)\big]=(4mp+3p-4m^2-2m+1)(p-m-k)- (4m+3) \frac{(p-m-1)(p-m)}{2}+ (4m+3)\frac{k(k-1)}{2}$;
\item[{\it r10:}] $2m+2$,
\end{itemize}
totalling $p^2\big(2m+\frac{3}{2}\big)-p\big(4m^2+k(4m+3)-3m-\frac{23}{2}\big)+m\big(2m^2-\frac{9}{2}m-\frac{19}{2}\big)+mk(4m+2k-3)+\frac{3}{2}k^2-\frac{23}{2}k+2$, which completes the proof.
\end{proof}
\begin{proposition}[Algorithm \ref{alg:thinQRstep}, choice between Givens rotation or Householder reflection]
\label{prop:thinqr_thinQRstep}
 When a total of $a$ columns previous to column $i$ of matrix $\mathbf{R}_{1} \in \mathbb{R}^{p \times l}$ are delated, the total cost for updating column $i$ is $9+ 6 (l-i)$ if $a=1$ and $(l-i)(4a+3)+4a+10$ if $a>1$.
\end{proposition}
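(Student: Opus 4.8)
The plan is to establish the claim by a direct, line-by-line count of floating-point operations in Algorithm~\ref{alg:thinQRstep}, splitting the analysis into the two branches $a=1$ and $a>1$, exactly in the spirit of Proposition~\ref{prop:qr_QRstep} for the full-QR analogue. The essential simplification relative to that proposition is that Algorithm~\ref{alg:thinQRstep} never touches a $\mathbf{Q}$ factor, so its cost is obtained from the \textsf{qrstep} cost simply by discarding the lines that update $\mathbf{Q}$ (the terms $6N$ in the Givens branch and $N(4a+3)$ in the Householder branch). I will use the already-established unit costs of the two subroutines: a call to \textsf{householder} on a vector of length $a$ costs $3a+9$ FLOPS, and a call to \textsf{givens} costs $6$ FLOPS.

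First I would treat the Householder branch, $a>1$. The call to \textsf{householder} on $\mathbf{R}_1[(i+1):(i+a),i]$ costs $3a+9$; the assignment $\mathbf{R}_1[i,i]=\mu$ and the zeroing of $\mathbf{R}_1[(i+1):(i+a),i]$ are free; forming $\tau\mathbf{v}$ costs $a+1$ (the vector $\mathbf{v}$ has length $a+1$); and the rank-one update of the block $\mathbf{R}_1[i:(i+a),(i+1):l]$, which has $a+1$ rows and $l-i$ columns, costs $(4a+3)(l-i)$ — namely $(2(a+1)-1)(l-i)$ to form $\mathbf{v}^\top\mathbf{R}_1[\cdots]$, $(a+1)(l-i)$ for the outer product, and $(a+1)(l-i)$ for the subtraction. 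Summing gives $4a+10+(4a+3)(l-i)$, as claimed. Note that because the algorithm restricts $i\in\{1,\dots,l-1\}$, the column block $(i+1):l$ is always nonempty, so — unlike Proposition~\ref{prop:qr_QRstep} — no indicator $\mathbbm{1}_{i<l}$ is needed.

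Next I would treat the Givens branch, $a=1$: the call to \textsf{givens} costs $6$; forming $\mathbf{G}$ and zeroing $\mathbf{R}_1[i+1,i]$ are free; the update $\mathbf{R}_1[i,i]=c\,\mathbf{R}_1[i,i]-s\,\mathbf{R}_1[i+1,i]$ costs $3$; and applying $\mathbf{G}^\top$ to the two-row block $\mathbf{R}_1[i:(i+1),(i+1):l]$ costs $6(l-i)$, since each of the $l-i$ columns requires a $2\times2$ matrix times length-two vector product ($4$ multiplications and $2$ additions). The total is $9+6(l-i)$, as claimed.

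The computation itself is routine; the only points requiring care — and hence the ``main obstacle'' — are getting the block dimensions right (the Householder block has $a+1$ rows, not $a$), correctly attributing the subroutine costs to the relevant lines, and confirming that the $\mathbf{Q}$-free structure of \textsf{thinqrstep} is precisely what removes the $N$-dependent terms present in \textsf{qrstep}.
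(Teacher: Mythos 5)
Your proof is correct and follows essentially the same route as the paper's: a line-by-line FLOP count of Algorithm \ref{alg:thinQRstep}, attributing $3a+9$ to the \textsf{householder} call, $a+1$ plus $(4a+3)(l-i)$ to the rank-one block update in the $a>1$ branch, and $6$, $3$, $6(l-i)$ to the Givens branch, exactly as in the paper's tally. Your extra remarks (the $(a+1)$-row block dimension and the absence of the $N$-dependent and indicator terms compared with \textsf{qrstep}) are consistent with, and merely make explicit, what the paper's count already encodes.
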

%
%
\begin{proof}
The computational cost is 
\begin{itemize}
\item[{\it r3:}] $3a+9$;
\item[{\it r5:}] $(4a+3)(l-i)+a+1$;
\item[{\it r8:}] $6$;
\item[{\it r10:}] $3$;
\item[{\it r12:}] $6 (l-i)$,
\end{itemize}
where rows 3 and 5 are performed when $a>1$, the others are performed otherwise. Hence the total is $(l-i)(4a+3)+4a+10$ if $a>1$ and $9+6 (l-i)$ if $a=1$, which completes the proof. 
\end{proof}
%

\begin{proposition}[Algorithm \ref{alg:thinQRdeletenonadjcols}, delete  a block of $m\geq 1$ non-adjacent columns]
\label{prop:thinqr_thinQRdeletenonadjcols}
The computational cost of deleting a block of $m\geq 1$ non-adjacent columns at positions $\mathbf{k}=(k_1,\dots,k_m)$ using Algorithm \ref{alg:thinQRdeletenonadjcols} is equal to 
\begin{itemize}
\item[{\it (i)}] if $m=1$ and $k_1=p$, zero;
\item[{\it (ii)}] if $m=1$ and $1\leq k_1< p$, $3(p-k_1)^2+6(p-k_1)$;
\item[{\it (iii)}] if $k_m-k_1=m-1$ and $k_1=p-m+1$, zero;
\item[{\it (iv)}]  if $k_m-k_1=m-1$ and $k_1< p-m+1$, $p^2\big(2m+\frac{3}{2}\big)-p\big(4m^2+k_1(4m+3)-3m-\frac{23}{2}\big)+m\big(2m^2-\frac{9}{2}m-\frac{19}{2}\big)+mk_1(4m+2k_1-3)+\frac{3}{2}k_1^2-\frac{23}{2}k_1+2$;
\item[{\it (v)}] if $q=1$, $3(l-k_1)^2+6(l-k_1)+2$;
\item[{\it (vi)}]  if $k_q-k_1=q-1$, $q>1$, $l^2\big(2q+\frac{3}{2}\big)-l\big(4q^2+k_1(4q+3)-3q-\frac{23}{2}\big)+q\big(2q^2-\frac{9}{2}q-\frac{19}{2}\big)+qk_1(4q+2k_1-3)+\frac{3}{2}k_1^2-\frac{23}{2}k_1+4$;
\item[{\it (vii)}] if $k_q-k_1\neq q-1$, $q>1$, and $1 < k_2-k_1 \leq l-q-k_1$
\begin{align*}&\frac{3}{2}(l-q)^{2}+\frac{11}{2}(l-q)-3(k_1-k_2)(l-q)-3k_1(k_1-k_2+3) \nonumber \\
&\qquad-\frac{1}{2}k_2(3k_2-1)+1+4(l-q+1) \sum_{i=k_2-k_1}^{l-q-k_1} a_i - 4 \sum_{i=k_2-k_1}^{l-q-k_1} i a_i\nonumber\\
&\qquad+3(l-k_1)-q+5;
\end{align*}
\item[{\it (viii)}] if $k_q-k_1\neq q-1$, $q>1$, and $k_2-k_1 = l-q-k_1+1$, 
\begin{align*} -3(k_1-k_2)^{2}+6(k_2-k_1)(l-q+2)-6(l-q)-4+3(l-k_1)-q;
\end{align*}
\item[{\it (ix)}] if $k_q-k_1\neq q-1$, $q>1$, and $k_2-k_1=1$
\begin{align*}
&\frac{3}{2}(l-q)^{2} + \frac{17}{2}(l-q-k_1)-\frac{3}{2}k_1^{2}\nonumber\\
&\qquad\qquad+ 4 (l-q+1) \sum_{i=1}^{l-q-k_1} a_i-4 \sum_{i=1}^{l-q-k_1} ia_i+3(l-k_1)-q+5,
\end{align*}
\end{itemize}
which is of order $\mathcal{O}((p-m)^2)$.
\end{proposition}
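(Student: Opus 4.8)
The plan is to proceed exactly as in the proofs of the preceding propositions for the elementary R-updating algorithms (Propositions \ref{prop:thinqr_update_del_one_col}, \ref{prop:thinqr_update_del_block_cols}, and \ref{prop:thinqr_thinQRstep}): I would walk through Algorithm \ref{alg:thinQRdeletenonadjcols} line by line, attach to each executable line its exact FLOP count, and then sum. The structure of the algorithm is a cascade of early-exit branches followed by a main loop, so the proof naturally splits into the nine cases listed in the statement, most of which simply defer to an already-proven cost formula.

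First I would dispose of the trivial reductions. Cases {\it (i)}--{\it (ii)} (when $m=1$) follow from the call to \textsf{thinqrdelcol} and hence from Proposition \ref{prop:thinqr_update_del_one_col}; cases {\it (iii)}--{\it (iv)} (when the columns are in fact adjacent, $k_m-k_1=m-1$) follow from the call to \textsf{thinqrdelblockcols} and Proposition \ref{prop:thinqr_update_del_block_cols}. After the restriction step that recomputes $l$, $q$, $\overline{\mathbf{k}}$ and truncates $\mathbf{R}_1$ to its leading $l\times l$ block, the same two sub-routines are invoked again with $(l,q,k_1)$ in place of $(p,m,k_1)$, giving cases {\it (v)} (from Proposition \ref{prop:thinqr_update_del_one_col} with $q=1$, plus the constant $2$ from lines {\it r7} and {\it r14}) and {\it (vi)} (from Proposition \ref{prop:thinqr_update_del_block_cols} with the substitution, plus constants). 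For each of these I only need to verify that no arithmetic is performed before the early \textbf{return} except the $2$ FLOPS on lines {\it r7} and {\it r14} that have already been accounted for where relevant.

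The substantive content is in cases {\it (vii)}--{\it (ix)}, which correspond to the main loop actually running. Here the key observation is that on the $i$-th iteration the algorithm calls \textsf{thinqrstep} on column $i+k_1-1$ with parameter $\mathbf{a}[i]$, so by Proposition \ref{prop:thinqr_thinQRstep} the cost of that call is $9+6(l-q-i-k_1+1)$ when $\mathbf{a}[i]=1$ (a Givens rotation) and $(l-q-i-k_1+1)(4\mathbf{a}[i]+3)+4\mathbf{a}[i]+10$ when $\mathbf{a}[i]>1$ (a Householder reflection), together with the $1$ FLOP on line {\it r17} and the $3(l-q-k_1)$ FLOPS from the running update of $\mathbf{a}$. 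The three sub-cases are distinguished by the pattern of the gap sequence $a_i=\overline{\mathbf{k}}[i]-\mathbf{k}[1]-i$: if $k_2-k_1=1$ every $a_i$ can exceed $1$ (sub-case {\it (ix)}), if $1<k_2-k_1\le l-q-k_1$ the first $k_2-k_1-1$ steps are Givens rotations and the rest may be Householder (sub-case {\it (vii)}), and if $k_2-k_1=l-q-k_1+1$ all interior steps before the final one are Givens (sub-case {\it (viii)}). In each sub-case I would split the loop sum at the index $k_2-k_1$, evaluate the arithmetic-progression and telescoping sums for the Givens part in closed form, leave the Householder part as the sums $\sum a_i$ and $\sum i\,a_i$ that appear in the statement, add the constant terms from lines {\it r7}, {\it r14}, {\it r17} and the final square-root/norm update, and collect terms.

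The main obstacle will be purely bookkeeping: matching the constant offsets and the final norm-computation line (the call $\mathbf{R}_1[\cdots]=\sqrt{\Vert\cdots\Vert_2^2}$ outside the loop, which costs $3(l-q-k_1)$-ish FLOPS depending on how the squared-norm accumulation is counted) so that the totals agree exactly with the messy polynomials-plus-sums displayed in {\it (vii)}--{\it (ix)}, and ensuring the index shifts induced by the truncation to $l$ and the re-indexing of $\overline{\mathbf{k}}$ by $\mathbf{k}[1]:(l-q)$ are propagated consistently into every summation limit. There is no conceptual difficulty and the order-of-magnitude claim $\mathcal{O}((p-m)^2)$ is immediate once one notes $l-q\le p-m$ and $\sum_{i} i\,a_i$ is $O((p-m)^2)$ since each $a_i\le p-m$; the work is in getting the lower-order coefficients right.
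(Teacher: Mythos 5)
Your plan follows the paper's proof essentially verbatim: the same line-by-line FLOP accounting, the same deferral of cases \textit{(i)}--\textit{(vi)} to the single-column and block-column deletion propositions (with the small additive constants from the lines executed before the early returns), and the same treatment of cases \textit{(vii)}--\textit{(ix)} by splitting the loop at index $k_2-k_1$ into a Givens part evaluated in closed form and a Householder part left symbolically as $\sum_i a_i$ and $\sum_i i\,a_i$. The only deviations lie in the lower-order constants you explicitly leave open --- in the paper the trailing norm/square-root line costs $2(q+1)$ rather than roughly $3(l-q-k_1)$, and the $+2$ in case \textit{(v)} comes from the line computing $q$ alone, since the return occurs before the column-permutation line --- which is precisely the bookkeeping you flag as remaining work.
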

%
%
\begin{proof}
The computational cost is 
\begin{itemize}
\item[{\it r7:}] $2$;
\item[{\it r14:}] $1$;
\item[{\it r16:}] $\sum_{i=1}^{l-q-k_1} \left[\mathbbm{1}_{a_i=1} \big[9+6(l-q-i)\big] + \mathbbm{1}_{a_i>1} \left( (l-q-i)(4a_i+3)+4a_i+10 \right) \right]$. The cost is detailed below;
\item[{\it r17:}] $3 (l-q-k_1)$;
\item[{\it r19:}] $2 (q+1)$.
\end{itemize}
The computational cost of row {\it r16} is:
\begin{itemize}
\item[{\it (i)}] if $1 < k_2-k_1 \leq l-q-k_1$ 
\begin{align*}
&(9+6(l-q)) (k_2-k_1-1) - 6\frac{(k_2-k_1-1)(k_2-k_1)}{2}\nonumber\\
&\qquad + \sum_{i=k_2-k_1}^{l-q-k_1}\left[ (l-q-i)(4a_i+3)+4a_i+10 \right]\nonumber \\
&=(9+6(l-q)) (k_2-k_1-1) - 6\frac{(k_2-k_1-1)(k_2-k_1)}{2}   \nonumber \\
&\qquad + 4(l-q+1) \times \sum_{i=k_2-k_1}^{l-q-k_1} a_i +(3(l-q)+10)(l-q-k_2+1)\nonumber \\
& \qquad -3\left(\frac{(l-q-k_1)(l-q-k_1+1)}{2}-\frac{(k_2-k_1)(k_2-k_1-1)}{2}\right)\nonumber\\
&\qquad-4 \sum_{i=k_2-k_1}^{l-q-k_1} i a_i\nonumber\\
&=(k_2-k_1-1) \left( 9+6 (l-q) -\frac{3}{2} (k_2-k_1)\right)\nonumber\\
&\qquad-\frac{3}{2} (l-q-k_1)(l-q-k_1+1)+ (3(l-q)+10)(l-q-k_2+1)\nonumber\\
&\qquad+ 4(l-q+1) \sum_{i=k_2-k_1}^{l-q-k_1} a_i - 4 \sum_{i=k_2-k_1}^{l-q-k_1} i a_i\nonumber \\
&=\frac{3}{2}(l-q)^{2}+\frac{11}{2}(l-q)-3(k_1-k_2)(l-q)-3k_1(k_1-k_2+3) \nonumber \\
&\qquad-\frac{1}{2}k_2(3k_2-1)+1+4(l-q+1) \sum_{i=k_2-k_1}^{l-q-k_1} a_i - 4 \sum_{i=k_2-k_1}^{l-q-k_1} i a_i.
\end{align*}
\item[{\it (ii)}] if $k_2-k_1 = l-q-k_1+1$
\begin{align*}
&(9+6(l-q)) (k_2-k_1-1)-3(k_2-k_1-1)(k_2-k_1)\nonumber \\
&\qquad=-3(k_1-k_2)^{2}+6(k_2-k_1)(l-q+2)-6(l-q)-9.
\end{align*}
\item[{\it (iii)}] if $k_2-k_1=1$
\begin{align*}
&\sum_{i=1}^{l-q-k_1} \left[(l-q-i)(4a_i+3)+4a_i+10 \right]\nonumber \\
&\qquad= (3(l-q)+10)(l-q-k_1)+ 4 (l-q+1) \sum_{i=1}^{l-q-k_1} a_i\nonumber\\
&\qquad\qquad-3\frac{(l-q-k_1)(l-q-k_1+1)}{2}-4 \sum_{i=1}^{l-q-k_1} ia_i\nonumber \\
&\qquad=\frac{3}{2}(l-q)^{2} + \frac{17}{2}(l-q-k_1)-\frac{3}{2}k_1^{2}\nonumber\\
&\qquad\qquad+ 4 (l-q+1) \sum_{i=1}^{l-q-k_1} a_i-4 \sum_{i=1}^{l-q-k_1} ia_i,
\end{align*}
\end{itemize}
which completes the proof.
\end{proof}
%

\begin{table}[h]
\caption{Arithmetic operations (FLOPS) required by QR updating algorithms for performing addition or deletion of rows or columns of an $N \times p$ matrix. ``Exact'' and  ``Most relevant term'', provide the exact computational complexity and the most relevant term of the computational cost, respectively.}
\centering
\begin{tabular}{p{2.2cm} p{0.60cm}p{6.5cm}p{1.5cm}}
\toprule
Descr. & Prop.	&	Exact	&	Most relevant term	\\
\midrule
add $1$ row	&	\ref{prop:qrupdate_add_one_row}	&	$12p + 6Np + 3p^2$ 	&	$6Np$		\\
add the $k$-th col.	&	\ref{prop:qrupdate_add_one_col}	&	$8N^2 + 8N-(6N+9)(p+1)$ if $k = p+1$ and $3(p-k)^2 + 9(p - 2k) + 8(N^2+N) - 6Nk + 6$ if $1\leq k\leq p$	&	$8N^2$		\\
rem. $1$ row	&	\ref{prop:qrupdate_del_one_row}	&	$3(N-1)(2N-1)+3(p+2)(p-1)+6$	&	$6N^2$		\\
rem. the $k$-th col.	&	\ref{prop:qrupdate_del_one_col}	&	null if $k=p$ and  it is equal to $3(p-k)^{2}+6 (N+1) (p-k)$ if $1 \le k < p$	&	$6N(p-k)$		\\															
add $m$ rows	&	\ref{prop:qrupdate_add_block_rows}	&	$p^{2}(2m+1)+2p\big( 2m(m+1)+N(2m+1)+4\big)-2m-7$	&	$4mNp$	\\
add $m$ col.s	&	\ref{prop:qrupdate_add_block_cols}	&	$8mN^{2}+2mN-6mp(N+1)-m(m^{2}+\frac{9}{2}m+\frac{7}{2})-3m^{2}p $ if $k=p+1$ and $2N(4mN+4m-3mk)+3m[p(p+3)+k(k-m-2p-5)+\frac{17}{6}]-m^{2}(m+\frac{3}{2})$ otherwise	&	$8mN^2$	\\
rem. $m$ rows	&	\ref{prop:qrupdate_del_block_rows}	&	$3Nm (2N -2m+1)+m\big(2m^{2}+\frac{3}{2}m -\frac{7}{2}\big) + 3 m p (p+1)$	&	$6mN^2$		\\
rem. $m$ col.s	&	\ref{prop:qrupdate_del_block_cols}	&	null if $k = p-m+1$, or $p^{2}\big(2m+\frac{3}{2} \big)+p \big[m(4N+3-4m-4k)+3(N-k)+\frac{23}{2} \big]+N \big[m(1-4m-4k)-3k+3\big]+m^{2} \big(2m+4k -\frac{9}{2} \big)-m \big(2k^{2}-3k-\frac{15}{2} \big)+\frac{3}{2}k^{2}-\frac{23}{2}k+10$ if $1 \leq k < p-m+1$	&	$4mNp$\\					
\bottomrule
\end{tabular}
\label{tab:QRcostsfull}
\end{table}%
%
\begin{table}[!h]
\caption{Arithmetic operations (FLOPS) required by R updating algorithms for performing addition or deletion of rows or columns of an $N \times p$ matrix. ``Exact'' and  ``Most relevant term'', provide the exact computational complexity and the most relevant term of the computational cost, respectively.}
\centering
\begin{tabular}{p{2.2cm} p{0.60cm}p{6.5cm}p{1.5cm}}
\toprule
Descr. & Prop.	&	Exact	&	Most relevant term	\\
\midrule
add $1$ row	&	\ref{prop:thinqr_update_add_one_row}	&	$3p(p+2)$	&	$3p^2$		\\
add the $(p+1)$-th column	&	\ref{prop:thinqr_update_add_one_col}	&	$p^2+p(2N+1)+2N$	&	$2Np$	\\
rem. $1$ row	&	\ref{prop:thinqr_update_del_one_row}	&	$3p^{2}+3p-2$	&	$3p^2$	\\
rem. the $k$-th column	&	\ref{prop:thinqr_update_del_one_col}	&	null if $k=p$, or $3(p-k)^{2}+6(p-k)$, otherwise	&	$3(p-k)^2$		\\							
add $m$ rows	&	\ref{prop:thinqr_update_add_block_rows}	&	$(2m+1)p^2+p(m+8)-m-7$	&	$2mp^2$		\\
add $m$ columns	&	\ref{prop:thinqr_update_add_block_cols}	&	$2Nmp+m p^{2} + m^{2}(N+p)+m(N+\frac{1}{3}m^2-\frac{1}{3})$	&	$2mNp$		\\
rem. $m$ rows	&	\ref{prop:thinqr_update_del_block_rows}	&	$2p^2(m+1)+p(6+m)-m-6$	&	$2mp^2$		\\
rem. $m$ columns	&	\ref{prop:thinqr_update_del_block_cols}	&	zero if $k=p-m+1$ and $p^2\big(2m+\frac{3}{2}\big)-p\big(4m^2+k(4m+3)-3m-\frac{23}{2}\big)+m\big(2m^2-\frac{9}{2}m-\frac{19}{2}\big)+mk(4m+2k-3)+\frac{3}{2}k^2-\frac{23}{2}k+2$ otherwise	&	$2mp^2$		\\
\bottomrule
\end{tabular}
\label{tab:Rcosts}
\end{table}

\section{Prior hyper-parameters settings}
\label{app:prior_hyperparameters}
%
The spike-and-slab approach in Sections \ref{sec:simulations}-\ref{sec:applications} requires the tuning of several prior hyper-parameters $(\nu,\lambda,\xi,\varphi,\upsilon_0)$ that control for the prior residual variance $(\nu,\lambda)$, the average number of covariates included prior to observing the data $(\xi,\varphi)$ and the variance of the slab component $\upsilon_0$. Those prior hyper-parameters are usually selected in order to be as non-informative as possible. Several alternatives have been discussed in literature \citep[][]{george_foster.2000,chipman_etal.2001,liang_etal.2008,cui_george.2008,narisetty_he.2014,ghosh_clyde.2011,biswas_etal.2022}. Let  $\mathbb{E}(\theta)=\mu_\theta$ and $\mathbb{V}(\theta)=\sigma_\theta^2$ be the prior mean and variance for the parameter $\theta$, respectively, we fix $(\xi,\varphi)$ in such a way that $\mathbb{P}\big(\sum_{j=1}^{p}\gamma_j>K\big)=10^{-1}$, for a pre-specified value of $K = \max\{K_0, \log(n)\}$ and $K_0=40$ and $\sigma_\theta=10^{-1}$ as suggested by \cite{narisetty_he.2014}. Moreover, we set $\nu=0.5$ as in \cite{biswas_etal.2022} and we set 
\begin{equation*}
\lambda=\begin{cases}
5&\text{if }p<10^3\\10 & \text{if }10^3\leq p<10^4\\ 15 & \text{if } p\geq10^4.
\end{cases}
\end{equation*}
As concerns $\upsilon_0$, we again adapt the approach  developed by \cite{narisetty_he.2014} and fix
\begin{equation*}
\upsilon_0=\widehat{\sigma}_y^2\max\Big\{\frac{p^{2.1}}{100n},\log(n)\Big\},
\end{equation*}
where $\widehat{\sigma}_y^2$ is the sample variance of $y$ as default option in our simulation experiments. Alternatively, we propose a data-driven approach to select $\upsilon_0$ that consists in adapting the approach based on log predictive score minimization of \cite{lamnisos_etal.2012}. Further details are provided in Section \ref{sec:model_prior_appl}.\par 
Before concluding, it is also worth detailing the prior distribution and the prior hyper-parameters selected for the alternative models compared in the simulation and real data examples. The Rao-Blackwellized stochastic search variable selection algorithm of \cite{ghosh_clyde.2011} considers a hierarchical prior as in equation \eqref{eq:spike_slab_prior_general_def_1}, with
\begin{equation*}
\boldsymbol{\Sigma}_{\beta_{\gamma}}=\frac{\sigma^2}{\zeta_j}\mathbf{I}_{p_\gamma},\qquad\zeta_j\sim\mathrm{G}(\alpha/2,\alpha/2),\qquad j=1,\dots,p_\gamma,
\end{equation*}
independent Bernoulli priors are then placed on the model indicator $\boldsymbol{\gamma}$, i.e. $p(\boldsymbol{\gamma}\vert\boldsymbol{\theta})\sim\prod_{j=1}^{p_\gamma}\mathrm{Ber}(\theta_j)$ and $p(\sigma^2)\propto\big(\sigma^2\big)^{-1}$ for the scale parameter. The Authors set $\theta_j=0.5$ for $j=1,\dots,p_\gamma$. The same prior distributions and hyper-parameters are selected by \cite{biswas_etal.2022} for their scalable spike-and-slab approach. The prior for the scale parameter $\sigma^2$ is an Inverse Gamma distribution with $(\nu=0.5,\lambda=0.5)$.
\subsection{Calibration of $\upsilon_0$}
\label{sec:model_prior_appl}
%
\noindent 
Selecting the $\upsilon_0$ hyper-parameter is critical for achieving reliable posterior inference across the model space. Whereas prior uncertainty in $\theta$ and $\sigma^2$ can be addressed through noninformative priors, as discussed in the previous section, the choice of $\upsilon_0$ demands particular attention. This parameter plays a central role in governing the trade-off between model fit and regularization, and thus has a pronounced impact on inference quality. Empirical Bayes approaches  \citep{cui_george.2008}, for instance, offer adaptive settings for $\upsilon_0$ that may outperform vague priors often used by default. This approach enables us to concentrate on the relevant aspects of the model while still enforcing regularization, though it departs from the usual Bayesian approach of accounting for prior uncertainty across all parameters.\par
%
%
The value of $\upsilon_0$ can be chosen based on the recommendations of \cite{narisetty_he.2014} which rely on asymptotic considerations (see previous discussion), by empirical Bayes methods \citep[][]{george_foster.2000}, or alternatively, it can be selected by minimizing a predictive criterion, as proposed by \cite{lamnisos_etal.2012}. 
%
%
Here, we briefly discuss the approach proposed by \cite{lamnisos_etal.2012}, which we adopt because the availability of fast posterior model sampling algorithms substantially reduces computational burden and allows for an efficient implementation of the method.\par
\cite{lamnisos_etal.2012} propose a cross-validation (CV) approach to calibrate fixed hyper-parameters that relies on the minimization of the log-predictive score: 
\begin{equation*}
\mathcal{S}(\upsilon_0)=-\frac{1}{n} \sum_{i=1}^n \log \pi(y_i \vert \mathbf{y}_{-\kappa(i)}, \upsilon_0),
\end{equation*}
where $\kappa(i) \in\{1, \ldots, k\}$ represents the partition to which $y_i$ is allocated, $\mathbf{y}_{-\kappa(i)}$ are the observations from the remaining partitions and $\pi(y_i \vert \mathbf{y}_{-\kappa(i)}, \upsilon_0)$ denotes the posterior predictive distribution of $y_i$ for $i=1,\dots,n$. For each CV set $\kappa(i)$, the posterior predictive of $y_i$ can be obtained as:
\begin{align}
\label{eq:log_predictive_score_1}
\pi(y_i \vert \mathbf{y}_{-\kappa(i)},\mathbf{x}_i, \upsilon_0)&=  \sum_{\boldsymbol{\gamma}} \int \pi(y_i \vert\mathbf{x}_{i}, \boldsymbol{\beta}_{\gamma},\sigma^2, \boldsymbol{\gamma}) \pi(\boldsymbol{\beta}_{\gamma},\sigma^2, \boldsymbol{\gamma} \vert \mathbf{y}_{-\kappa(i)},\mathbf{X}_{-\kappa(i)}^\gamma, \upsilon_0) d \boldsymbol{\beta}_{\gamma} d\sigma^2\nonumber\\
%
%
&= \sum_{\boldsymbol{\gamma}} \mathbb{E}_{m(\boldsymbol{\gamma}\vert  \mathbf{y}_{-\kappa(i)},\mathbf{X}^{\gamma})}\Big[\pi(y_i \vert \mathbf{x}_{i}, \boldsymbol{\gamma})\Big],
\end{align}
where $\pi(\boldsymbol{\beta}_{\gamma},\sigma^2, \boldsymbol{\gamma} \vert \mathbf{y}_{-\kappa(i)},\mathbf{X}_{-\kappa(i)}^{\gamma}, \upsilon_0)$ is the joint posterior distribution, $\pi(y_i \vert \mathbf{x}_{i}, \boldsymbol{\gamma})$ is the marginal predictive which is analytically available for the Gaussian linear model, and $m(\boldsymbol{\gamma}\vert  \mathbf{y},\mathbf{X}^{\gamma})$ is the marginal likelihood in equation \eqref{eq:marginal_posterior}.  Any proper score functions \citep[][]{gneiting_raftery.2007} could replace the logarithmic score function in equation \eqref{eq:log_predictive_score_1}. When $\kappa(i)=i$ for all $i=1,\dots,n$, then the $K$-fold cross-validation becomes the leave-one-out cross-validation (LOOCV).\par
Both the asymptotic approach of \cite{narisetty_he.2014} and the method proposed by \cite{lamnisos_etal.2012} offer structured ways to determine $\upsilon_0$ balancing between model fit and regularization. However, while the former requires only the calculation of a specific formula, the latter involves simulating from the model posterior for different data partitions. Specifically, the cross-validation approach necessitates a simulation-based estimator of the log-predictive score in equation \eqref{eq:log_predictive_score_1}, which evaluates the model predictive performance across these partitions:
\begin{equation}
\label{eq:log_predictive_score_2}
\widehat{\pi}(y_i \vert \mathbf{y}_{-\kappa(i)},\mathbf{x}_i, \upsilon_0)
= \frac{1}{B}\sum_{b=1}^B\pi(y_i \vert \mathbf{x}_{i}, \boldsymbol{\gamma}),\quad \boldsymbol{\gamma}^{(b)}\sim m(\boldsymbol{\gamma}\vert  \mathbf{y}_{-\kappa(i)},\mathbf{X}^{\gamma}),\quad b=1,\dots,B.
\end{equation}
\cite{lamnisos_etal.2012} introduce alternative estimators designed to alleviate the computational burden by incorporating importance densities. However, the availability of a fast and efficient method for simulating from the marginal posterior allows us to directly utilize equation \eqref{eq:log_predictive_score_2}. More significantly, for a given fold $\kappa(i)$, the techniques for rapidly updating the R factor when modifying the columns of the design matrix prove especially beneficial. Most importantly, these algorithms can update the R factor even when the fold changes, not only when columns are altered. Additionally, the computational cost of modifying the R factor after removing a row remains constant, regardless of the row position (see Section \ref{sec:computational_costs} for a detailed discussion).
%
\section{Additional simulation results}
\label{appD}
%
\subsection{Independent covariates}
We present additional results for the first simulation scenario, where covariates were simulated under the assumption of independence. While the AUC and F1 score are discussed in Section 5 of the main paper, here we include Figures \ref{fig:TPR_Sim1}, \ref{fig:FDR_Sim1} and \ref{fig:MSE_Sim1}, which illustrate the true positive rate (TPR), false discovery rate (FDR), and mean square error (MSE) of the $\boldsymbol{\beta}$ estimates, respectively. When $n$ is much smaller than $p$, the RJ algorithm using the R update tends to include covariates outside the true model, though some correct variables are selected, resulting in a nonzero true positive rate. In contrast, the SSVS algorithm fails to select any covariates when $p_0 > 10$, leading to a true positive rate of zero and an undefined false discovery rate.
%
\begin{figure}[H]
\begin{center}
\includegraphics[width=0.8\textwidth]{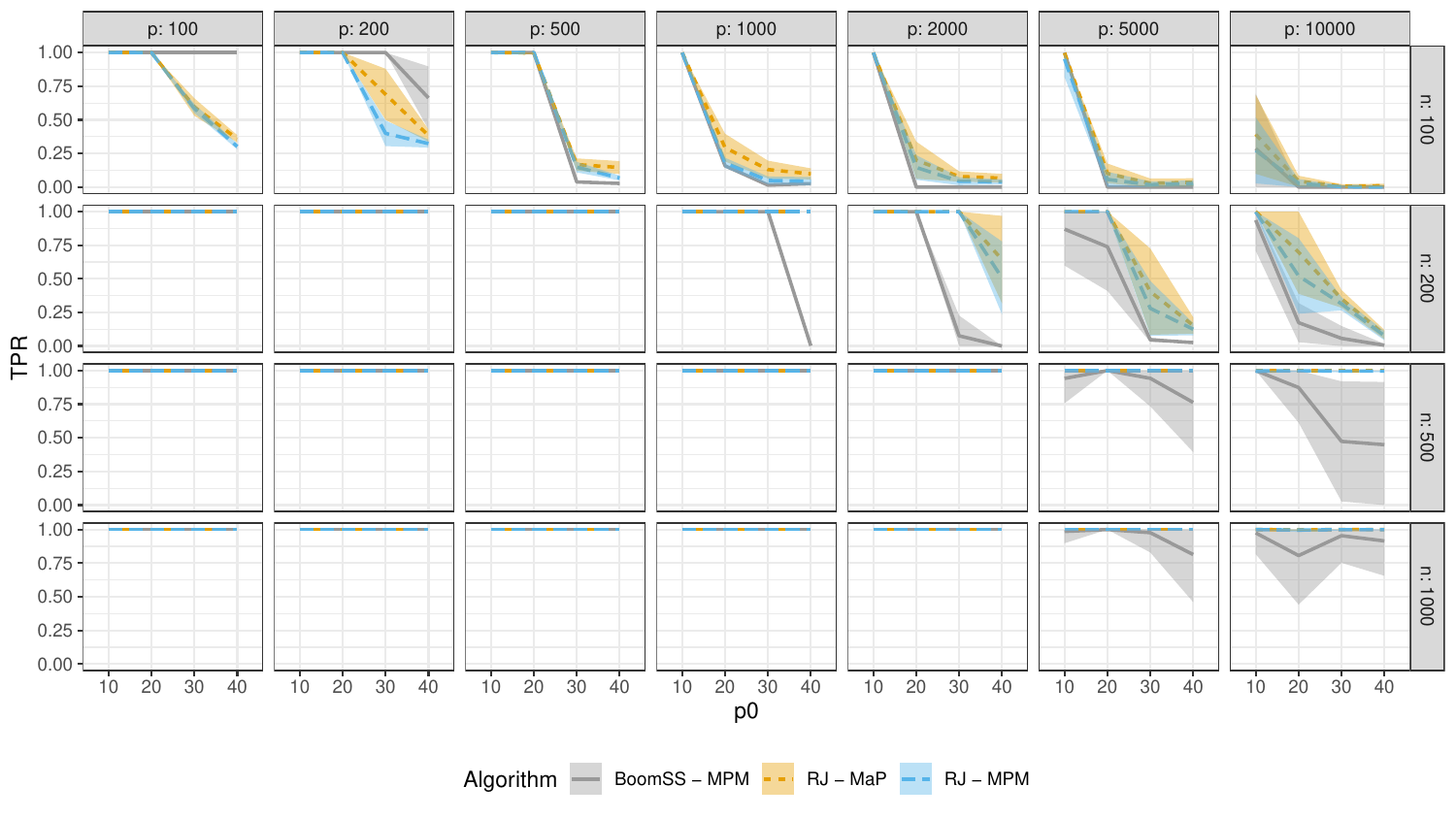}
\caption{Mean true positive rate (TPR) with 1 standard error bands of the MPM and the MaP model computed by the RJ (with R update) and the BoomSS algorithms. 40 repetitions for each setting with independent covariates.}
\label{fig:TPR_Sim1}
\end{center}
\end{figure}
\begin{figure}[H]
\begin{center}
\includegraphics[width=0.8\textwidth]{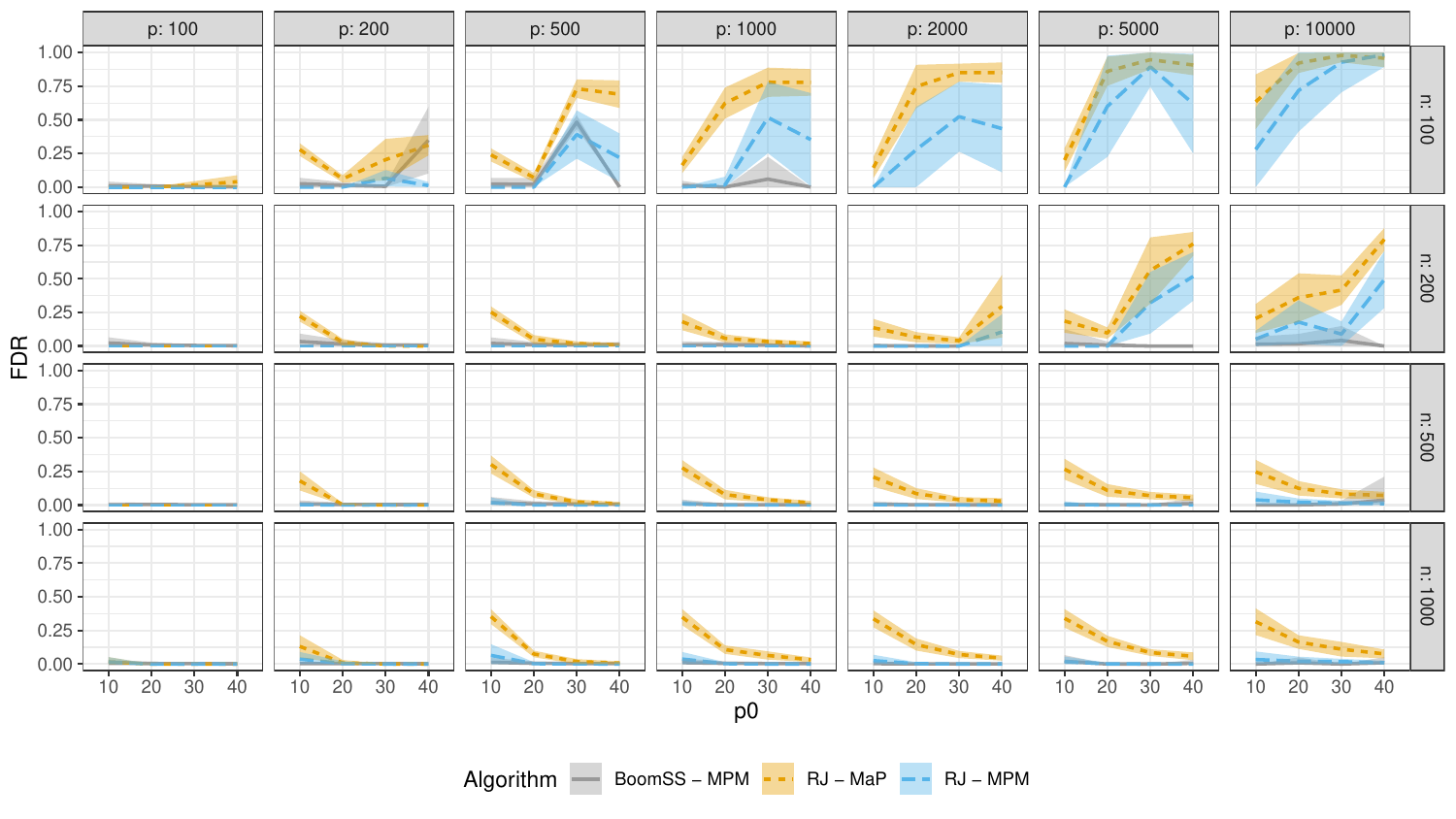}
\caption{Mean false discovery rate (FDR) with 1 standard error bands of the MPM and the MaP model computed by the RJ (with R update) and the BoomSS algorithms. 40 repetitions for each setting with independent covariates.}
\label{fig:FDR_Sim1}
\end{center}
\end{figure}
\begin{figure}[H]
\begin{center}
\includegraphics[width=0.8\textwidth]{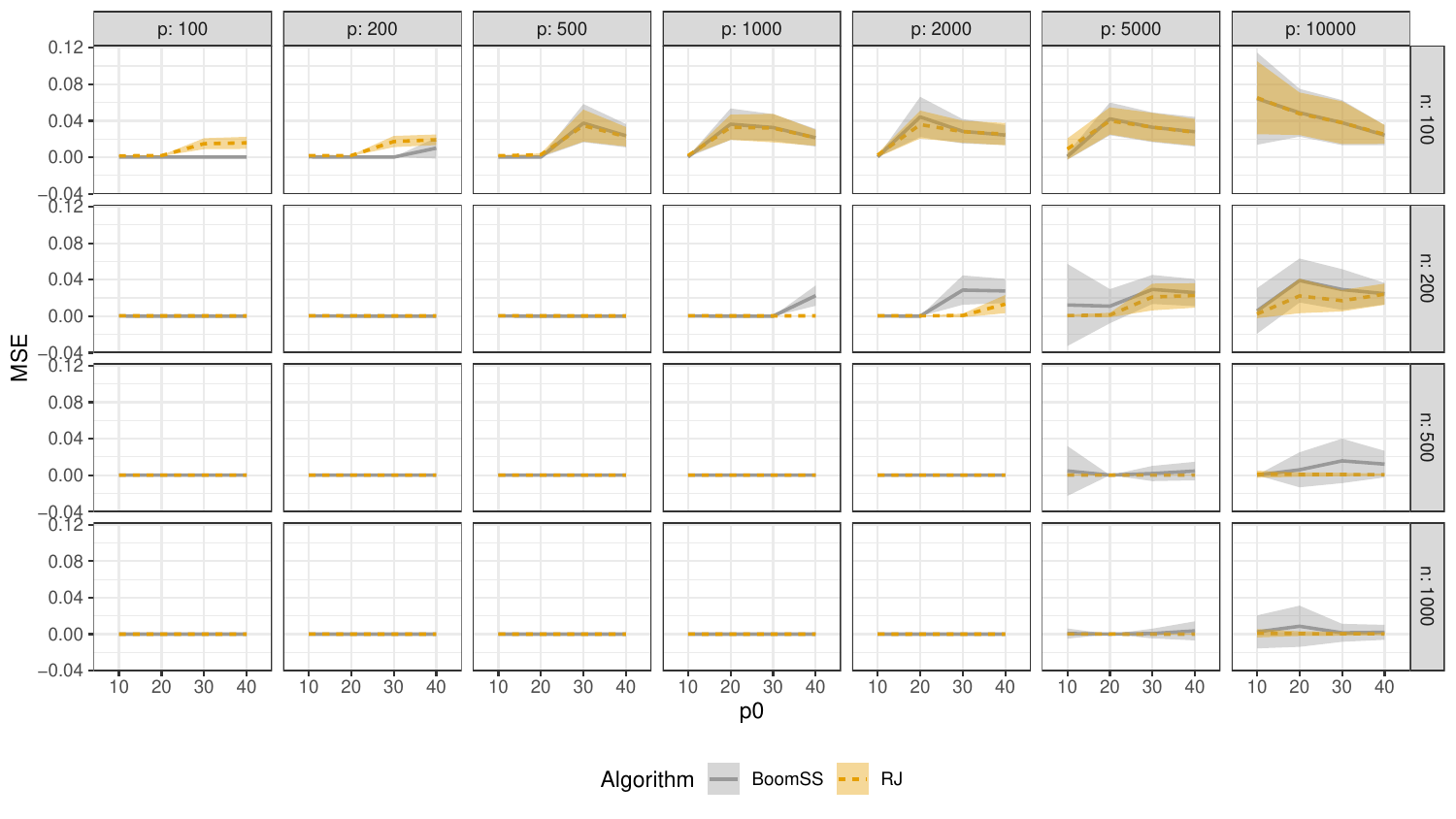}
\caption{Mean square error with 1 standard error bands of the estimates of the MPM computed by the RJ (with R update) and the BoomSS algorithms. 40 repetitions for each setting with independent covariates.}
\label{fig:MSE_Sim1}
\end{center}
\end{figure}
%
\subsection{Equicorrelated covariates}
Figures \ref{fig:auc_Sim2}--\ref{fig:MSE_Sim2} display the outcomes for the scenario where covariates are simulated with equicorrelated dependence. The trends observed here closely resemble those in the independent covariate scenario, particularly in terms of model performance metrics like AUC, TPR, and MSE. However, while the overall behavior remains similar, equicorrelated dependence introduces subtle changes. For instance, the correlation among covariates slightly influences the selection process, which may lead to minor shifts in FDR and TPR, especially in settings where $p$ is large. This similarity in performance despite the dependency structure highlights the robustness of the model, suggesting that the algorithms can manage both independent and equicorrelated structures with comparable effectiveness.
%
\begin{figure}[H]
\begin{center}
\includegraphics[width=0.8\textwidth]{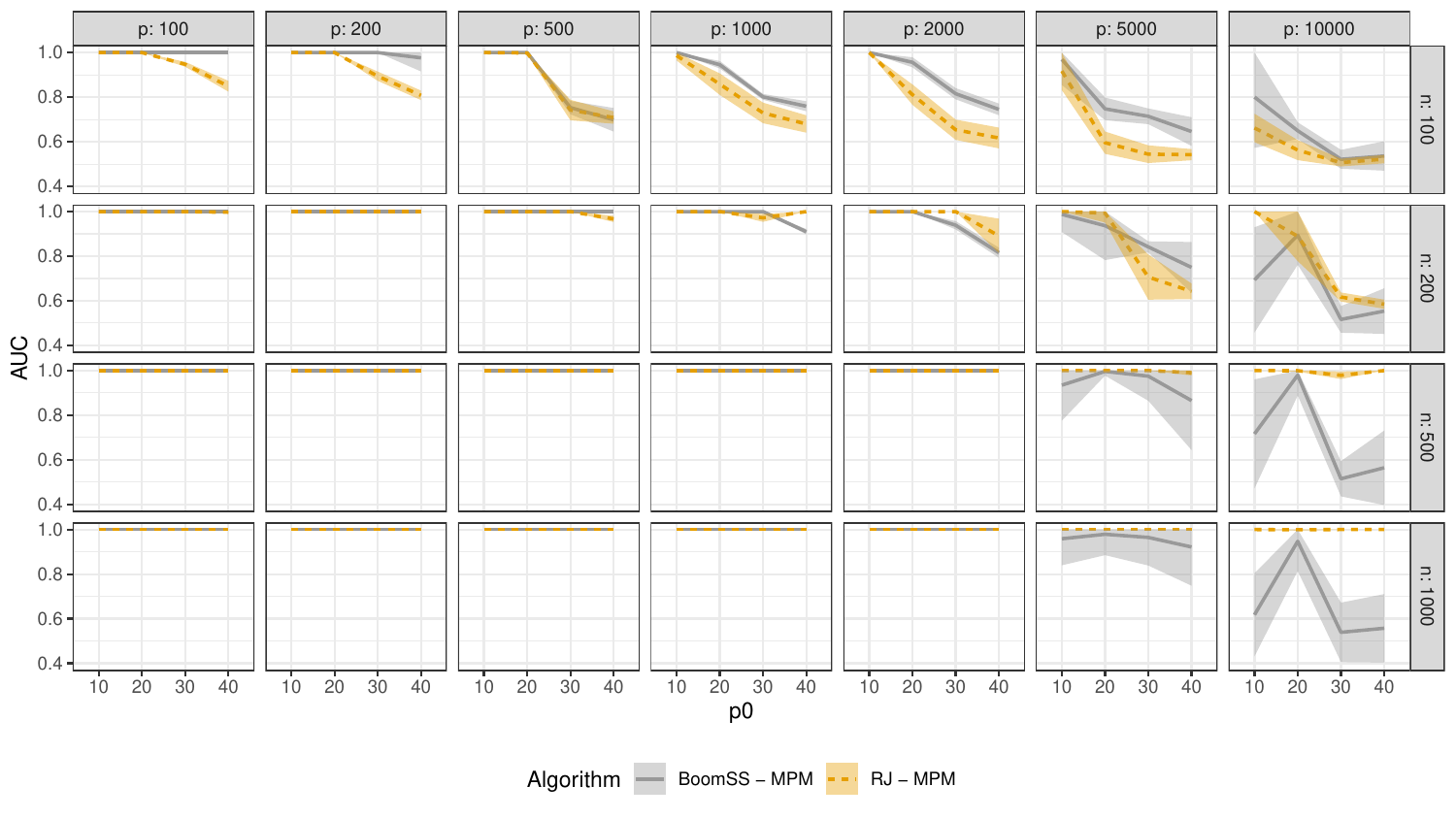}
\caption{Mean AUC with 1 standard error bands of the MPM computed by the RJ (with R update) and the BoomSS algorithms. 40 repetitions for each setting with equicorrelated covariates.}
\label{fig:auc_Sim2}
\end{center}
\end{figure}
\begin{figure}[H]
\begin{center}
\includegraphics[width=0.8\textwidth]{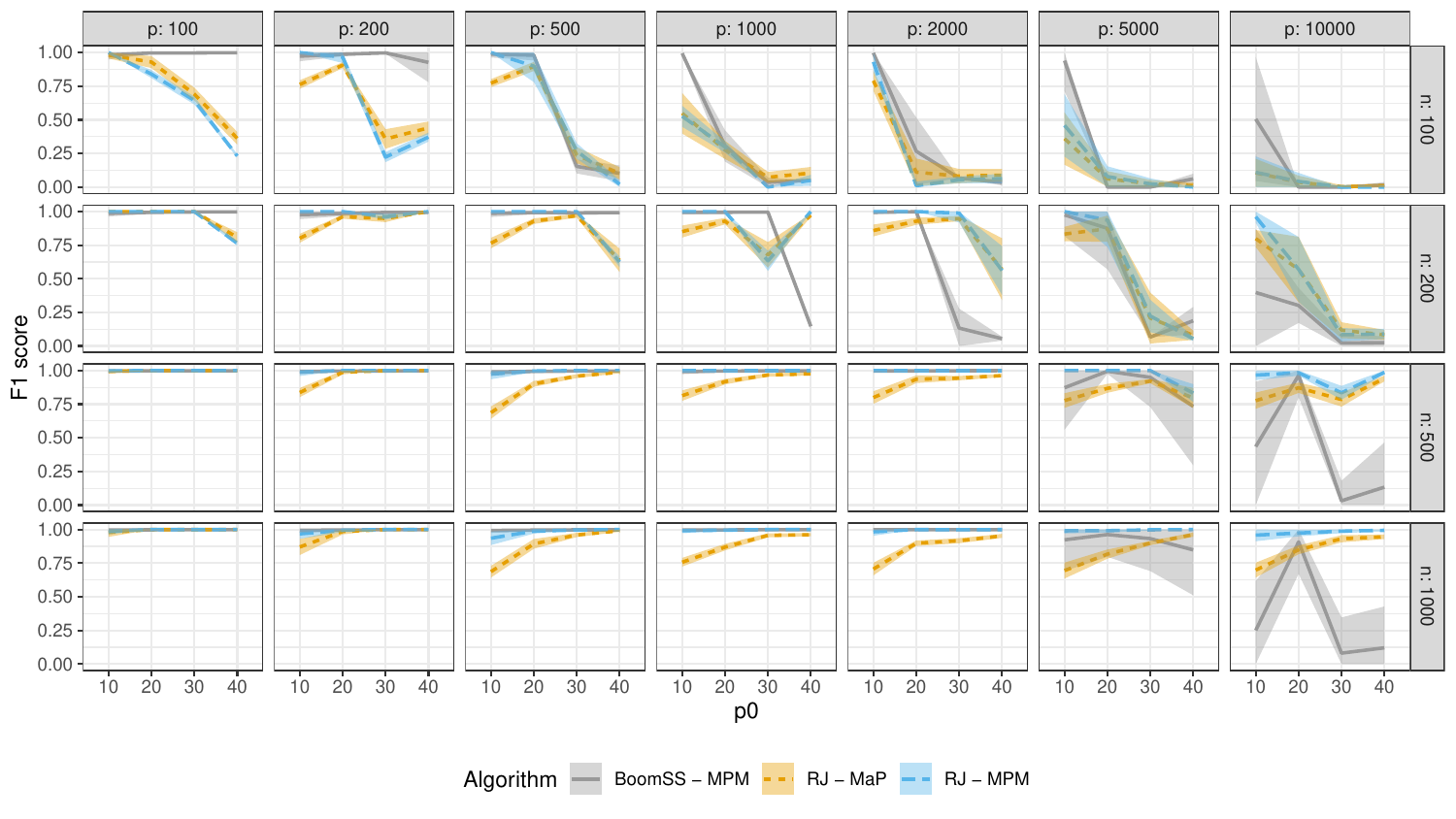}
\caption{Mean F1 score with 1 standard error bands of the MPM and the MaP model computed by the RJ (with R update) and the BoomSS algorithms. 40 repetitions for each setting with equicorrelated covariates.}
\label{fig:F1_Sim2}
\end{center}
\end{figure}
\begin{figure}[H]
\begin{center}
\includegraphics[width=0.8\textwidth]{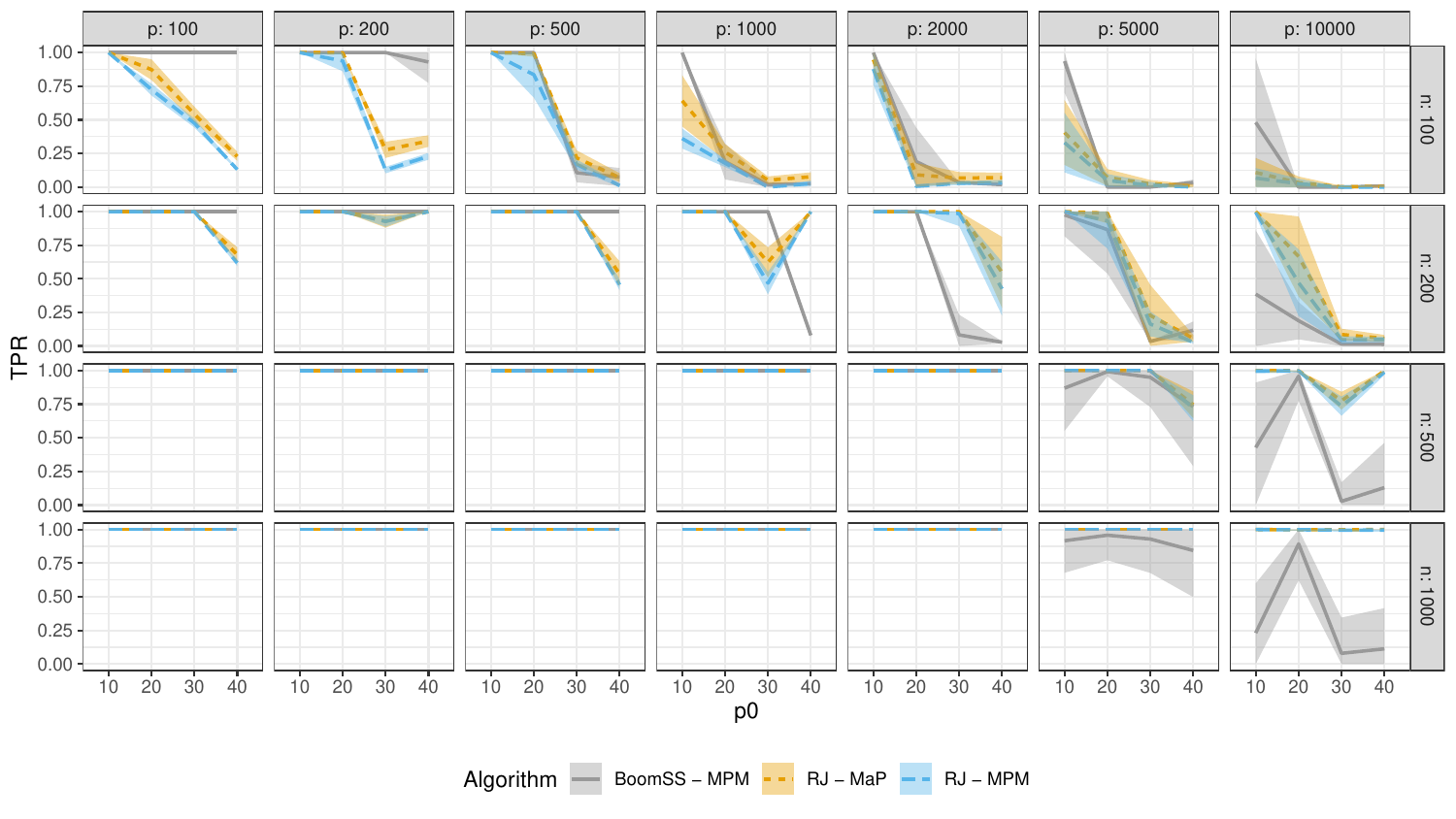}
\caption{Mean TPR with 1 standard error bands of the MPM and the MaP model computed by the RJ (with R update) and the BoomSS algorithms. 40 repetitions for each setting with equicorrelated covariates.}
\label{fig:TPR_Sim2}
\end{center}
\end{figure}
\begin{figure}[H]
\begin{center}
\includegraphics[width=0.8\textwidth]{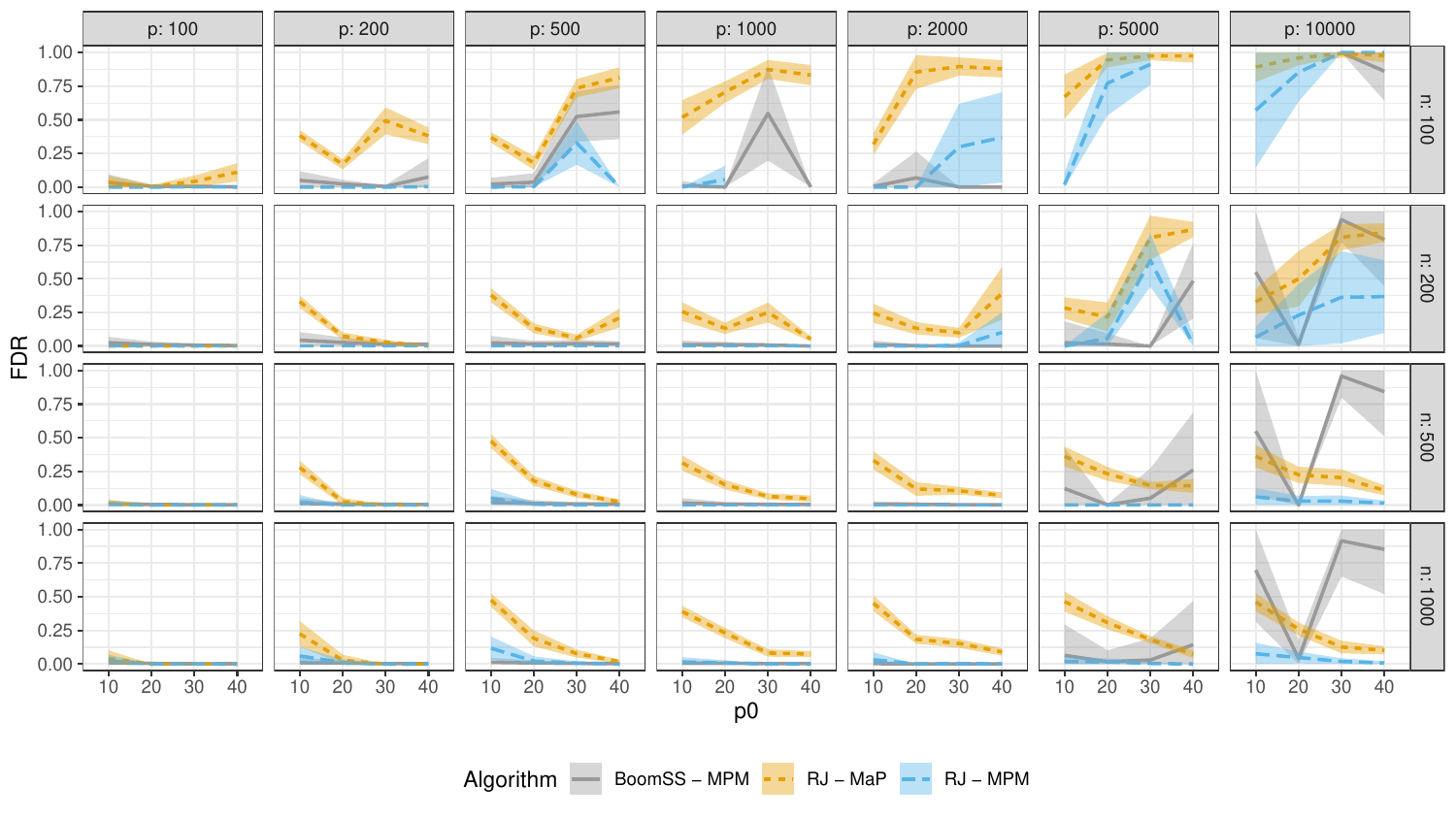}
\caption{Mean FDR with 1 standard error bands of the MPM and the MaP model computed by the RJ (with R update) and the BoomSS algorithms. 40 repetitions for each setting with equicorrelated covariates.}
\label{fig:FDR_Sim2}
\end{center}
\end{figure}
\begin{figure}[H]
\begin{center}
\includegraphics[width=0.8\textwidth]{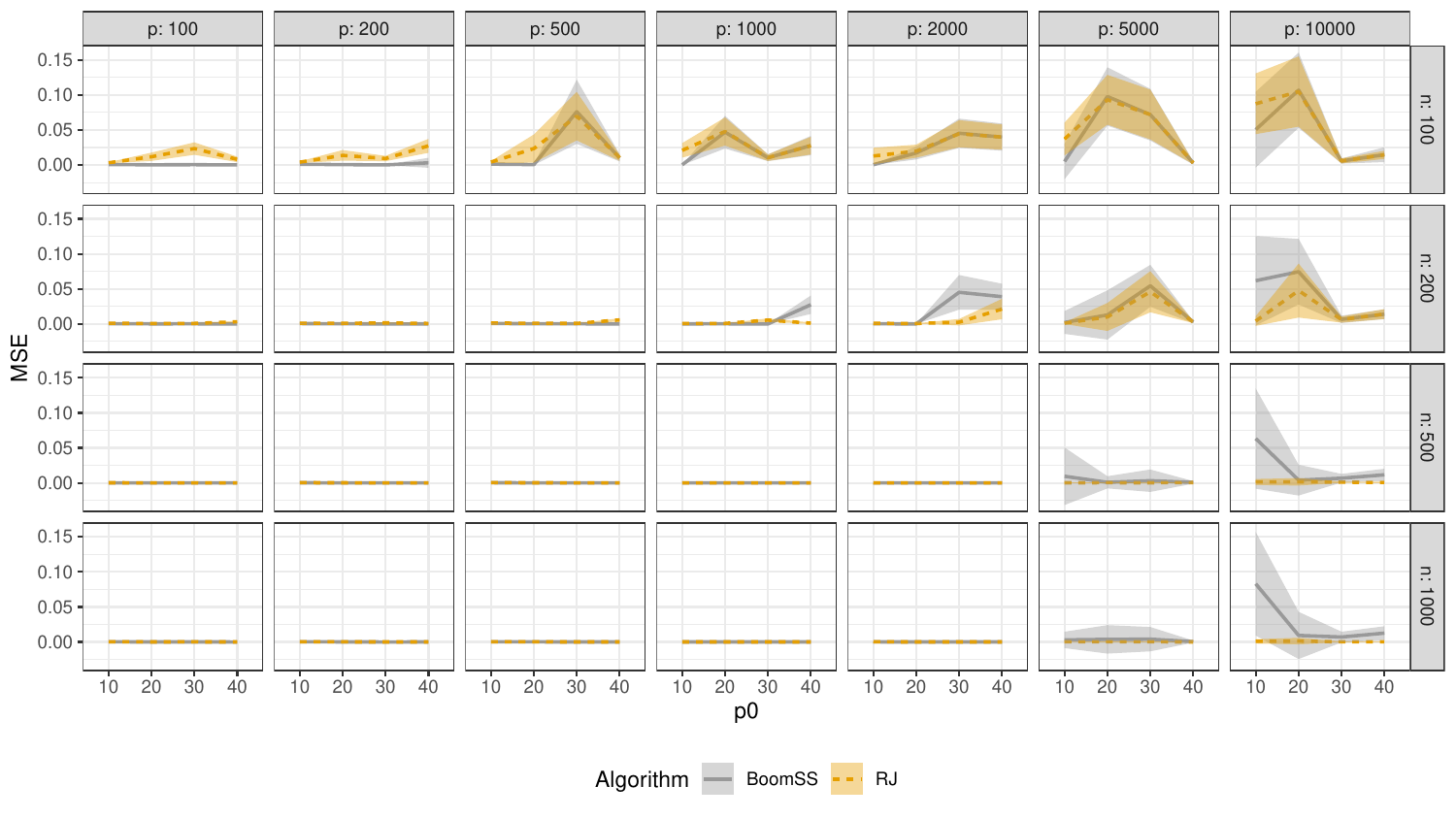}
\caption{Mean square error with 1 standard error bands of the MPM computed by the RJ (with R update) and the BoomSS algorithms. 40 repetitions for each setting with equicorrelated covariates.}
\label{fig:MSE_Sim2}
\end{center}
\end{figure}
%
\subsection{Decreasingly correlated covariates}
Figures \ref{fig:auc_Sim3}--\ref{fig:MSE_Sim3} present results for the scenario in which covariates exhibit a decreasing correlation structure.%
The model performance remains consistent with the other scenarios, indicating that the model and selection algorithms are resilient to varying levels of correlation among covariates. This suggests that the algorithm accuracy in estimating $\boldsymbol{\beta}$ and selecting relevant covariates is not significantly impacted by a gradual increase in correlation. Furthermore, the similarity in results across different correlation structures underscores the algorithm robustness and adaptability to changes in dependency patterns among covariates, an encouraging property for applications where covariate correlations may vary.
%
\begin{figure}[H]
\begin{center}
\includegraphics[width=0.8\textwidth]{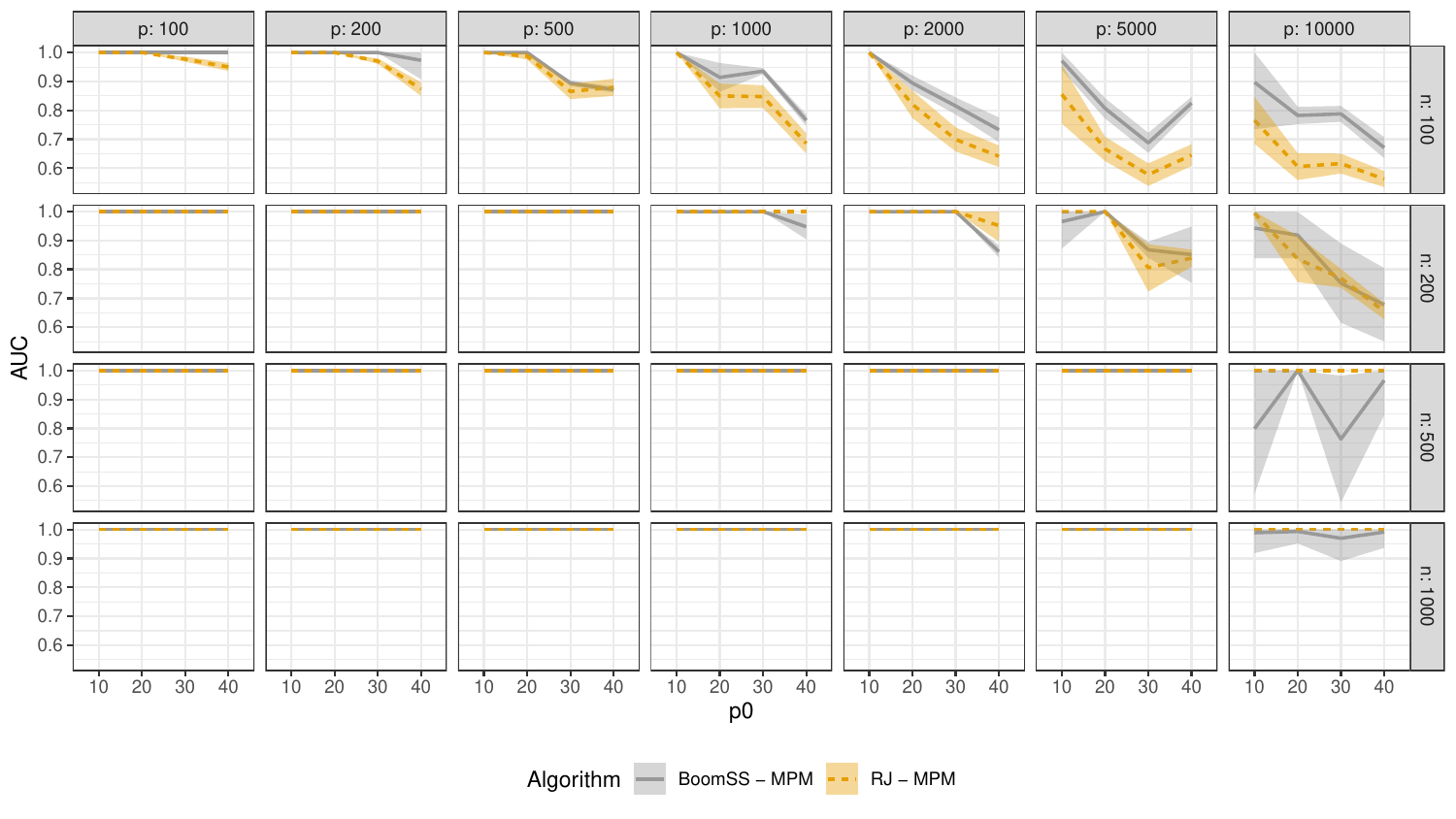}
\caption{Mean AUC with 1 standard error bands of the median probability model computed by the RJ algorithm (with R update) and the SSVS algorithm. 40 repetitions for each setting with decreasingly dependent covariates.}
\label{fig:auc_Sim3}
\end{center}
\end{figure}
\begin{figure}[H]
\begin{center}
\includegraphics[width=0.8\textwidth]{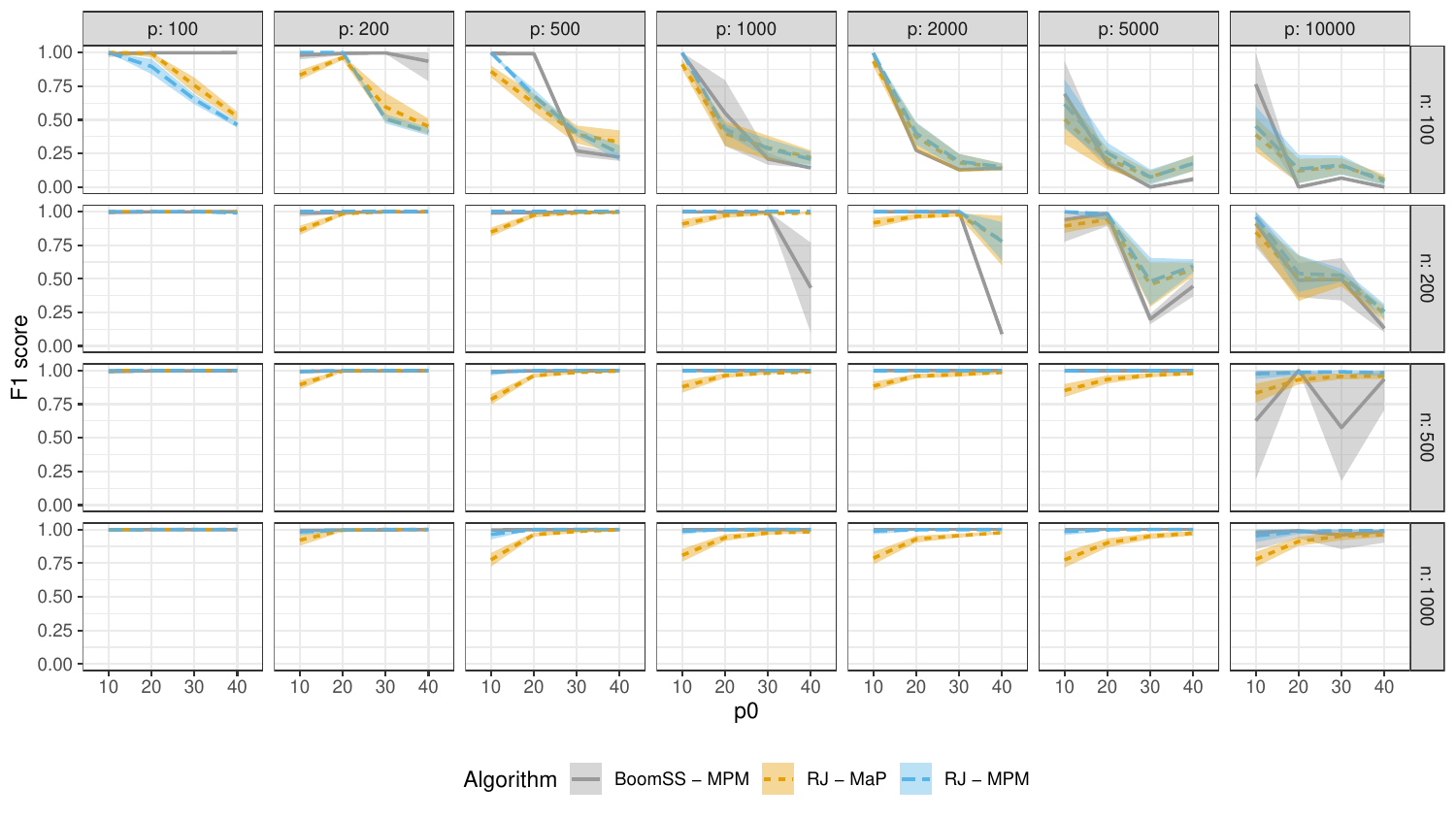}
\caption{Mean F1 score with 1 standard error bands of the MPM and the MaP model computed by the RJ (with R update) and the BoomSS algorithms. 40 repetitions for each setting with decreasingly dependent covariates.}
\label{fig:F1_Sim3}
\end{center}
\end{figure}
\begin{figure}[H]
\begin{center}
\includegraphics[width=0.8\textwidth]{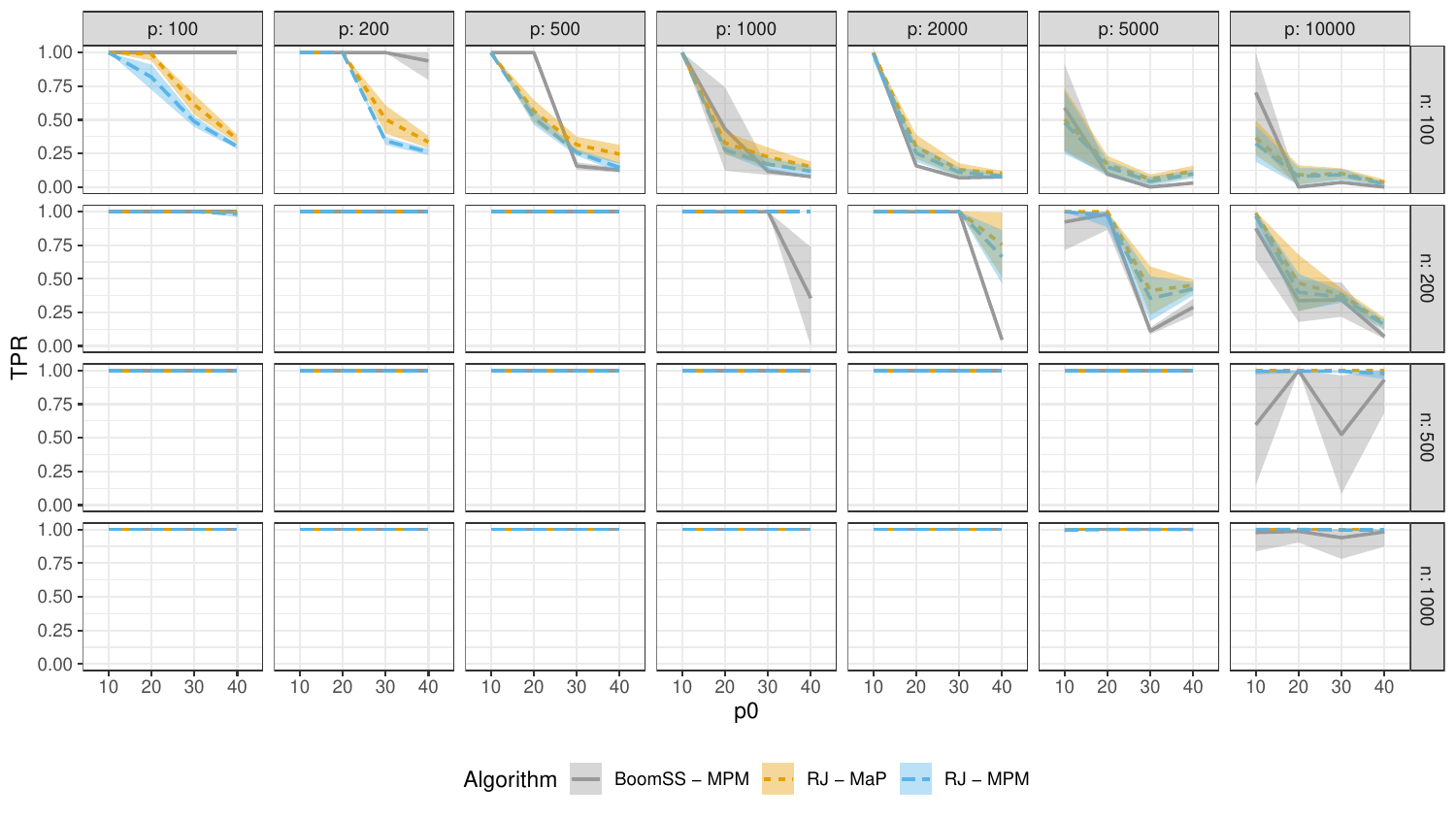}
\caption{Mean TPR with 1 standard error bands of the MPM and the MaP model computed by the RJ (with R update) and the BoomSS algorithms. 40 repetitions for each setting with decreasingly dependent covariates.}
\label{fig:TPR_Sim3}
\end{center}
\end{figure}
\begin{figure}[H]
\begin{center}
\includegraphics[width=0.8\textwidth]{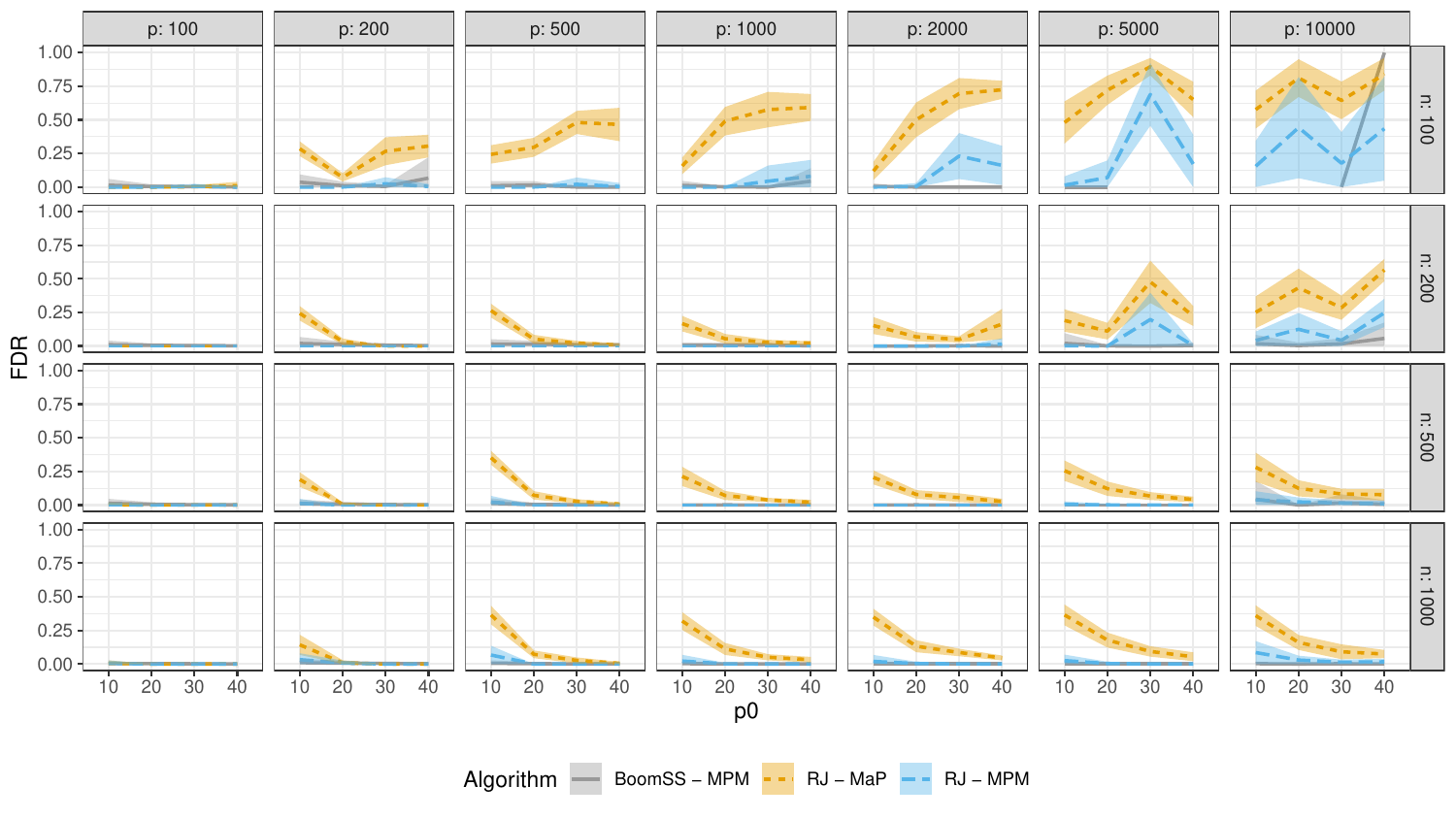}
\caption{Mean FDR with 1 standard error bands of the MPM and the MaP model computed by the RJ (with R update) and the BoomSS algorithms. 40 repetitions for each setting with decreasingly dependent covariates.}
\label{fig:FDR_Sim3}
\end{center}
\end{figure}
\begin{figure}[H]
\begin{center}
\includegraphics[width=0.8\textwidth]{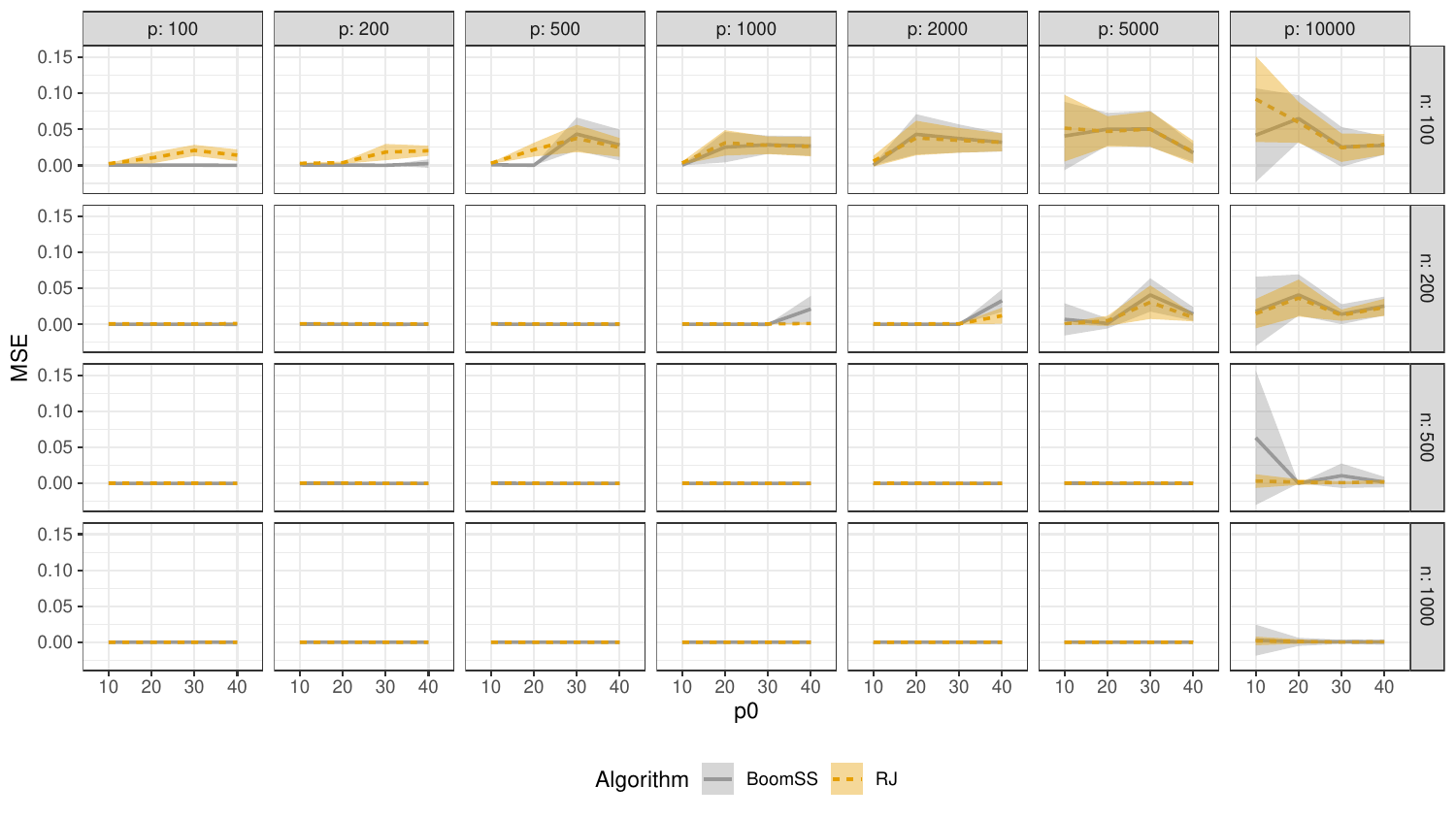}
\caption{Mean square error with 1 standard error bands of the MPM computed by the RJ (with R update) and the BoomSS algorithms. 40 repetitions for each setting with decreasingly dependent covariates.}
\label{fig:MSE_Sim3}
\end{center}
\end{figure}
%
\section{Inflation data description}
\label{sec:dataset_description}
%
We consider the quarterly changes in the Consumer Price Index (CPIAUCSL) as a measure of inflation. Inflation is predicted using quarterly data from several macroeconomic indicators downloaded from the FRED online database, see \cite{bernardi_etal.2016}. Table \ref{tab:US_inflation_data} provides a complete and detailed description of the dataset and the transformations applied to make the data stationary. In this example, we consider quarterly observations for the period from 1991-Q3 to 2023-Q4. Further details on the variables used and their sources can be found in the data appendix of \cite{bernardi_etal.2016}. 
%
\begin{table}[h]
\setlength{\tabcolsep}{5pt}
\caption{Inflation dataset. The column tcode denotes the following data transformation for a series $x$: (1) no transformation; (2) $\Delta x_t$; (3) $\log(x_t)$; (4) $\Delta\log (x_t)$; (5) Percentage change $\Delta x_t/x_{t-1}$. The FRED column gives mnemonics in FRED followed by a short description.} 
%
\centering
\resizebox{0.95\columnwidth}{!}{
\begin{tabular}{lllllll}\\
\toprule
Id& Name&Type& tcode & FRED & Description\\
\cmidrule(lr){1-1}\cmidrule(lr){2-2}\cmidrule(lr){3-3}\cmidrule(lr){4-4}\cmidrule(lr){5-5}\cmidrule(lr){6-6}
%
1	&	\textrm{DATE}	&	--	&--&--&	date	\\
\multirow{2}{*}{2}	 &	\multirow{2}{*}{\textrm{CPIAUCSL}}&\multirow{2}{*}{Index}& \multirow{2}{*}{5}& \multirow{2}{*}{\textrm{CPIAUCSL}}		&	Consumer Price Index for All Urban Consumers:	\\
&&&&& All Items in U.S. City Average\\
\multirow{2}{*}{3}	&	\multirow{2}{*}{\textrm{CPILFESL}} &\multirow{2}{*}{Index}&\multirow{2}{*}{5}& \multirow{2}{*}{\textrm{CPILFESL}}		&	Consumer Price Index for All Urban Consumers: 	\\
&&&&& All Items Less Food and Energy in U.S. City Average \\
4	&	\textrm{UNRATE} &Percent	&1&\textrm{UNRATE}	&		Unemployment Rate\\
5	&	\textrm{EC}	&	Index	&5&	\textrm{DPCERA3M086SBEA}	&	Real Personal Consumption Expenditures\\
6	&	\textrm{PRFI}	&	Level	&5&	\textrm{PRFI}	&	Private Residential Fixed Investment\\
7	&	\textrm{GDPC1}	&	Level	&5&	\textrm{GDPC1}	&	Real Gross Domestic Product 	\\
8	&	\textrm{HOUST}	&	Level	&3&	\textrm{HOUST}	&		New Privately-Owned Housing Units Started: Total Units\\
9	&	\textrm{USPRIV}&	Level	&5&	\textrm{USPRIV}	&	Employees, Total Private	\\
10	&	\textrm{TB3MS}	&	Percent	&1&	\textrm{TB3MS}	&	3-Month Treasury Bill Secondary Market Rate 	\\
\multirow{2}{*}{11}	&	\multirow{2}{*}{\textrm{GS10}}	&	\multirow{2}{*}{Percent}	&\multirow{2}{*}{1}&	\multirow{2}{*}{\textrm{GS10}}	&	Market Yield on U.S. Treasury 	\\
&&&&& Securities at 10-Year Constant Maturity \\
\multirow{2}{*}{12}	&	\multirow{2}{*}{\textrm{T10Y3MM}} &	\multirow{2}{*}{Percent}	&\multirow{2}{*}{1}&	\multirow{2}{*}{\textrm{T10Y3MM}}	&	10-Year Treasury Constant Maturity Minus 	\\
&&&&& 3-Month Treasury Constant Maturity 		\\
\multirow{2}{*}{13}	&	\multirow{2}{*}{\textrm{T10YFFM}} &	\multirow{2}{*}{Percent}	&\multirow{2}{*}{1}&	\multirow{2}{*}{\textrm{T10YFFM}}	&	10-Year Treasury Constant Maturity Minus  	\\
&&&&& Federal Funds Rate		\\
14	&	\textrm{M1SL}	&	Level	&5&\textrm{M1SL}	&		Money supply - M1\\
15	&	\textrm{MICH}	&	Percent	&1&\textrm{MICH}	&		University of Michigan: Inflation Expectation\\
16	&	\textrm{PPIACO}	&	Index	&2&	\textrm{PPIACO}	&		Producer Price Index by Commodity: All Commodities \\
17	&	\textrm{DJIA}	&	Index	&5&\textrm{DJIA}&		Dow Jones Industrial Average Index\\
18	&	\textrm{PMI}	&	Index	&2&\textrm{PMI}	&	Purchasing Manager's composite index (IMS)\\
19	&	\textrm{NAPMSDI}&	Index	&2&	\textrm{NAPMSDI}	&		NAPM vendor deliveries index\\
20	&	\textrm{OILPRICE}&	Index	&4&	\textrm{WTISPLC}	&		Spot Crude Oil Price: West Texas Intermediate (WTI)  \\
21	&	\textrm{GASPRICE}&	Index	&4&	\textrm{GASREGCOVM}	&		US Regular Conventional Gas Price \\
%
\bottomrule 
\end{tabular}
} 
\label{tab:US_inflation_data}
\end{table} 
%
\section{Additional results for real data applications}
\label{appG}
%
Table \ref{tab:Bardet-Biedl_summary_stat} presents the posterior summary for the five best models generated by the RJ algorithm. This table offers a comprehensive overview of gene expression related to this syndrome, which is instrumental in identifying potentially crucial genes involved in its development or manifestation. Notably, only three genes, Defb1, Actl7b, and Tmem230, are included across all models, with two of these genes exhibiting higher marginal inclusion probabilities. An intriguing aspect of the results is the marked disparity in the estimated probability for model $\mathcal{M}_1$, which stands at $0.10$. In contrast, the remaining models, $\mathcal{M}_2$ to $\mathcal{M}_5$ exhibit similar probabilities of approximately $0.05$ (refer to Table \ref{tab:inflation_posterior_models}). Despite this variance in model probabilities, the first four models share roughly $60\%$ of the included genes, indicating a degree of overlap and consistency among them. However, model $\mathcal{M}_5$ is markedly different, as about $65\%$ of the genes it includes have not been identified in the other models.
%
%
%
When compared to other approaches outlined in Table  \ref{tab:inflation_posterior_models_RW}, specifically the adaptive lasso and elastic net, only a small number of probes from Table \ref{tab:Bardet-Biedl_summary_stat}  are identified as relevant regressors---just $1$ for the adaptive lasso and $2$ for the elastic net. This focused feature inclusion reflects the stringent selection criteria of these models, indicating a more conservative approach compared to other methods that may capture a broader set of variables.\par
Of the 25 probe sets uniquely selected by the top five models as significantly associated with trim32, 20 had identifiable gene symbols. These genes were dgat1, eif2b3, six3, defb1, nrp1, actl7b, tmem230, tmtc4, tmrc6c, nsrp1, zfp62, il17b, rbm47, asic1, ubl7, ino80c, kcne2, nrn1, galnt10 and adss. In particular, according to \texttt{https://www.genecards.org}, only for UBL7 (Ubiquitin-Like 7) there might be indirect functional overlap in protein modification pathways, but no specific direct interaction with trim32 has been documented. The other associations found may be useful for researchers in studying the genetic factors contributing to Bardet-Biedl syndrome.

%
\begin{table}[!hp]
\setlength{\tabcolsep}{16pt} 
\renewcommand{\arraystretch}{0.85} 
\caption{Bardet-Biedl syndrome gene expression study. Posterior summary statistics for the five best models $(\mathcal{M}_1,\dots,\mathcal{M}_5)$, as estimated by the RJ algorithm.  For each probe the reported summary statistics are the mean, the standard deviation (in italics), and the marginal inclusion probability (MIP). The summary statistics are calculated using only the post burn-in draws corresponding to the $j$-th model.}
%
\centering
\footnotesize
\resizebox{1\columnwidth}{!}{
\begin{tabular}{p{1cm} p{1.0cm}p{0.8cm}p{0.80cm}p{0.8cm} p{0.80cm}p{0.80cm}p{0.5cm}}
\toprule
& &\multicolumn{6}{c}{\textit{Summary statistics}}\\
\cmidrule(lr){3-8}
Probe & gene &  $\mathcal{M}_1$ & $\mathcal{M}_2$ & $\mathcal{M}_3$ & $\mathcal{M}_4$ & $\mathcal{M}_5$   & MIP \\ 
%
\cmidrule(lr){1-1}\cmidrule(lr){2-2}\cmidrule(lr){3-8}
\multirow{2}{*}{1369660\_at} & \multirow{2}{*}{Defb1} & -0.35 & -0.31 & -0.31 & -0.37 & -0.27&\multirow{2}{*}{0.76} \\ 
  & & {\it (0.08)} & {\it (0.06)} & {\it (0.08)} & {\it (0.08)} &{\it (0.08)} \\ 
\multirow{2}{*}{1376429\_at}& \multirow{2}{*}{Actl7b} &  -0.20 & -0.19 & -0.16 & -0.18 & -0.18 & \multirow{2}{*}{0.99}\\ 
  &&  {\it (0.07)} & {\it (0.07)} & {\it (0.07)} & {\it (0.08)} & {\it (0.09)} \\ 
\multirow{2}{*}{1389910\_at}&\multirow{2}{*}{Tmem230}& 0.64 & 0.64 & 0.59 & 0.57 & 0.55 & \multirow{2}{*}{0.91}\\ 
 &&   {\it (0.11)} & {\it (0.12)}& {\it (0.11)} &{\it (0.09)} & {\it (0.11)} \\ 
\midrule
\multirow{2}{*}{1368967\_at} &\multirow{2}{*}{Eif2b3}&  -0.10 & -0.04 & -0.02 & -0.01 &&\multirow{2}{*}{0.52}\\ 
&&    {\it (0.10)} & {\it (0.10)} & {\it (0.10)} &{\it (0.09)} \\ 
\multirow{2}{*}{1369028\_at }&\multirow{2}{*}{Six3}&  -0.18 & -0.11 & -0.15 & -0.18 &&\multirow{2}{*}{0.82}\\ 
&&    {\it (0.11)} & {\it (0.11)} &{\it (0.10)} & {\it (0.10)} \\ 
\multirow{2}{*}{1370570\_at} &  \multirow{2}{*}{Nrp1}&  -0.07 & -0.05 & -0.03 & -0.04&&\multirow{2}{*}{0.67} \\ 
&&   {\it (0.10)} & {\it (0.11)} & {\it (0.09)} & {\it (0.08)} \\ 
\multirow{2}{*}{1376386\_at} &&  -0.05 & -0.02 & -0.01 & -0.02&&\multirow{2}{*}{0.93} \\ 
&&    {\it (0.09)} & {\it (0.09)} & {\it (0.08)} & {\it (0.08)} \\ 
\multirow{2}{*}{1382674\_a\_at} &\multirow{2}{*}{Tnrc6c}&  -0.07 & -0.11 & -0.07 & -0.07&&\multirow{2}{*}{0.90} \\ 
 &&  {\it (0.09)} & {\it (0.09)} & {\it (0.09)} & {\it (0.08)} \\ 
\multirow{2}{*}{1382904\_at} &   \multirow{2}{*}{Nsrp1} &  0.18 & 0.14 & 0.11 & 0.18 &&\multirow{2}{*}{0.58}\\ 
 &&   {\it (0.09)} & {\it (0.07)} & {\it (0.08)} & {\it (0.08)} \\ 
\multirow{2}{*}{1385539\_at}&&  -0.22 & -0.14 && -0.15 & -0.18 &\multirow{2}{*}{0.73} \\ 
  &&  {\it (0.10)} & {\it (0.10)} && {\it (0.11)} & {\it (0.07)} \\
\midrule
\multirow{2}{*}{1367915\_at} &\multirow{2}{*}{Dgat1}&  0.06 & 0.12 & 0.11 &&&\multirow{2}{*}{0.30}\\ 
 &&   {\it (0.10)} & {\it (0.10)} & {\it (0.10)} \\ 
\multirow{2}{*}{1379541\_at}&\multirow{2}{*}{Tmtc4}&  -0.01 & -0.10 & -0.06 &&&\multirow{2}{*}{0.26}\\ 
&&    {\it (0.14)} & {\it (0.14)} & {\it (0.14)} \\ 
\midrule
\multirow{2}{*}{1375354\_at}&& &&& -0.17 & -0.16 &\multirow{2}{*}{0.64}\\ 
 &&  &&& {\it (0.07)} & {\it (0.07)} \\ 
\multirow{2}{*}{1379495\_at} && & 0.28 & 0.23 &&&\multirow{2}{*}{0.28}\\ 
&&    &{\it (0.11)} & {\it (0.13)} &&\\ 
\midrule
\multirow{2}{*}{1371045\_at} &   \multirow{2}{*}{Asic1} & &&&& -0.10& \multirow{2}{*}{0.16}\\ 
 &&  &&&& {\it (0.08)} \\ 
1371452\_at &\multirow{2}{*}{Ubl7}& &&&& -0.12& \multirow{2}{*}{0.50}\\ 
&&   &&&& {\it (0.09)} \\ 
\multirow{2}{*}{1374479\_at} &\multirow{2}{*}{Ino80c}& &&&& 0.14 &\multirow{2}{*}{0.19}\\ 
 &&  &&&& {\it (0.10)} \\ 
\multirow{2}{*}{1376445\_at} &\multirow{2}{*}{Il17b}& &&& -0.10&&\multirow{2}{*}{0.20} \\ 
&&   &&& {\it (0.08)}& \\ 
\multirow{2}{*}{1376728\_at} &\multirow{2}{*}{Rbm47}& &&& 0.03&&\multirow{2}{*}{0.19} \\ 
 &&  &&& {\it (0.07)} \\ 
\multirow{2}{*}{1379029\_at} &\multirow{2}{*}{Zfp62}& && 0.14 &&&\multirow{2}{*}{0.28}\\ 
&&   && {\it (0.10)} \\ 
\multirow{2}{*}{1386770\_x\_at} &\multirow{2}{*}{Kcne2}&&&& & -0.03 &\multirow{2}{*}{0.17}\\ 
 && &&& & {\it (0.08)} \\ 
\multirow{2}{*}{1386969\_at} &\multirow{2}{*}{Nrn1}& &&&& 0.07 &\multirow{2}{*}{0.46}\\ 
&&   &&&& {\it (0.08)} \\ 
\multirow{2}{*}{1387290\_at} &\multirow{2}{*}{Galnt10}& &&&& -0.00&\multirow{2}{*}{0.48} \\ 
&&   &&&& {\it (0.09)} \\ 
\multirow{2}{*}{1390394\_at}&& &&&& 0.16&\multirow{2}{*}{0.34} \\ 
 &&  &&&& {\it (0.10)} \\ 
\multirow{2}{*}{1399050\_at} &\multirow{2}{*}{Adss}&&&& & -0.12&\multirow{2}{*}{0.26} \\ 
&&   &&&& {\it (0.11)} \\ 
\bottomrule 
\end{tabular}
} 
\label{tab:Bardet-Biedl_summary_stat} 
\end{table} 
%

\clearpage

\bibliographystyle{apalike} 
\bibliography{refs}

\begin{thebibliography}{64}

\bibitem{adams_etal.1996}
\begin{barticle}[author]
\bauthor{\bsnm{Adams},~\bfnm{N.~M.}\binits{N.~M.}},
  \bauthor{\bsnm{Kirby},~\bfnm{S.~P.~J.}\binits{S.~P.~J.}},
  \bauthor{\bsnm{Harris},~\bfnm{P.}\binits{P.}} \AND
  \bauthor{\bsnm{Clegg},~\bfnm{D.~B.}\binits{D.~B.}}
(\byear{1996}).
\btitle{A review of parallel processing for statistical computation}.
\bjournal{Stat. Comput.}
\bvolume{6}
\bpages{37-49}.
\end{barticle}
\endbibitem

\bibitem{bai_etal_2021}
\begin{binbook}[author]
\bauthor{\bsnm{Bai},~\bfnm{Ray}\binits{R.}},
  \bauthor{\bsnm{Ročková},~\bfnm{Veronika}\binits{V.}} \AND
  \bauthor{\bsnm{George},~\bfnm{Edward~I.}\binits{E.~I.}}
(\byear{2021}).
\btitle{Spike-and-Slab Meets LASSO: A Review of the Spike-and-Slab LASSO}
In \bbooktitle{Handbook of Bayesian Variable Selection}
\bpages{81–108}.
\bpublisher{Chapman and Hall/CRC}.
\end{binbook}
\endbibitem

\bibitem{barbieri_berger.2004}
\begin{barticle}[author]
\bauthor{\bsnm{Barbieri},~\bfnm{Maria~Maddalena}\binits{M.~M.}} \AND
  \bauthor{\bsnm{Berger},~\bfnm{James~O.}\binits{J.~O.}}
(\byear{2004}).
\btitle{Optimal predictive model selection}.
\bjournal{Ann. Stat}
\bvolume{32}
\bpages{870--897}.
\bmrnumber{2065192}
\end{barticle}
\endbibitem

\bibitem{bernardi_etal.2016}
\begin{barticle}[author]
\bauthor{\bsnm{Bernardi},~\bfnm{Mauro}\binits{M.}},
  \bauthor{\bsnm{Casarin},~\bfnm{Roberto}\binits{R.}},
  \bauthor{\bsnm{Maillet},~\bfnm{Bertrand~B.}\binits{B.~B.}} \AND
  \bauthor{\bsnm{Petrella},~\bfnm{Lea}\binits{L.}}
(\byear{2024}).
\btitle{Bayesian dynamic quantile model averaging}.
\bjournal{Ann. Oper. Res.}
\end{barticle}
\endbibitem

\bibitem{biswas_etal.2022}
\begin{binproceedings}[author]
\bauthor{\bsnm{Biswas},~\bfnm{Niloy}\binits{N.}},
  \bauthor{\bsnm{Mackey},~\bfnm{Lester}\binits{L.}} \AND
  \bauthor{\bsnm{Meng},~\bfnm{Xiao-Li}\binits{X.-L.}}
(\byear{2022}).
\btitle{Scalable spike-and-slab}.
In \bbooktitle{International Conference on Machine Learning}
\bpages{2021--2040}.
\bpublisher{PMLR}.
\end{binproceedings}
\endbibitem

\bibitem{bjorck.2015}
\begin{bbook}[author]
\bauthor{\bsnm{Bj\"{o}rck},~\bfnm{{\AA}ke}\binits{{\AA}.}}
(\byear{2015}).
\btitle{{N}umerical {M}ethods in {M}atrix {C}omputations}.
\bseries{Texts in Applied Mathematics}
\bvolume{59}.
\bpublisher{Springer, Cham}.
\bmrnumber{3288840}
\end{bbook}
\endbibitem

\bibitem{bottegal_pillonetto.2018}
\begin{barticle}[author]
\bauthor{\bsnm{Bottegal},~\bfnm{Giulio}\binits{G.}} \AND
  \bauthor{\bsnm{Pillonetto},~\bfnm{Gianluigi}\binits{G.}}
(\byear{2018}).
\btitle{The generalized cross validation filter}.
\bjournal{Automatica}
\bvolume{90}
\bpages{130-137}.
\end{barticle}
\endbibitem

\bibitem{bottolo_etal.2020}
\begin{barticle}[author]
\bauthor{\bsnm{Bottolo},~\bfnm{Leonardo}\binits{L.}},
  \bauthor{\bsnm{Banterle},~\bfnm{Marco}\binits{M.}},
  \bauthor{\bsnm{Richardson},~\bfnm{Sylvia}\binits{S.}},
  \bauthor{\bsnm{Ala-Korpela},~\bfnm{Mika}\binits{M.}},
  \bauthor{\bsnm{J{\"a}rvelin},~\bfnm{Marjo-Riitta}\binits{M.-R.}} \AND
  \bauthor{\bsnm{Lewin},~\bfnm{Alex}\binits{A.}}
(\byear{2021}).
\btitle{A computationally efficient Bayesian seemingly unrelated regressions
  model for high-dimensional quantitative trait loci discovery}.
\bjournal{J. R. Stat. Soc. Ser. C Appl. Stat.}
\bvolume{70}
\bpages{886-908}.
\end{barticle}
\endbibitem

\bibitem{boyd_etal.2011}
\begin{barticle}[author]
\bauthor{\bsnm{Boyd},~\bfnm{Stephen}\binits{S.}},
  \bauthor{\bsnm{Parikh},~\bfnm{Neal}\binits{N.}},
  \bauthor{\bsnm{Chu},~\bfnm{Eric}\binits{E.}},
  \bauthor{\bsnm{Peleato},~\bfnm{Borja}\binits{B.}} \AND
  \bauthor{\bsnm{Eckstein},~\bfnm{Jonathan}\binits{J.}}
(\byear{2011}).
\btitle{Distributed {O}ptimization and {S}tatistical {L}earning via the
  {A}lternating {D}irection {M}ethod of {M}ultipliers}.
\bjournal{Found. Trends Mach. Learn.}
\bvolume{3}
\bpages{1–122}.
\end{barticle}
\endbibitem

\bibitem{brown_etal.1998}
\begin{barticle}[author]
\bauthor{\bsnm{Brown},~\bfnm{P.~J.}\binits{P.~J.}},
  \bauthor{\bsnm{Vannucci},~\bfnm{M.}\binits{M.}} \AND
  \bauthor{\bsnm{Fearn},~\bfnm{T.}\binits{T.}}
(\byear{1998}).
\btitle{Multivariate {B}ayesian variable selection and prediction}.
\bjournal{J. R. Stat. Soc. Ser. B Stat. Methodol.}
\bvolume{60}
\bpages{627--641}.
\bmrnumber{1626005}
\end{barticle}
\endbibitem

\bibitem{chandola_etal.2009}
\begin{barticle}[author]
\bauthor{\bsnm{Chandola},~\bfnm{Varun}\binits{V.}},
  \bauthor{\bsnm{Banerjee},~\bfnm{Arindam}\binits{A.}} \AND
  \bauthor{\bsnm{Kumar},~\bfnm{Vipin}\binits{V.}}
(\byear{2009}).
\btitle{Anomaly detection: A survey}.
\bjournal{ACM Comput. Surv.}
\bvolume{41}.
\end{barticle}
\endbibitem

\bibitem{chi_lange.2015}
\begin{barticle}[author]
\bauthor{\bsnm{Chi},~\bfnm{Eric~C.}\binits{E.~C.}} \AND
  \bauthor{\bsnm{Lange},~\bfnm{Kenneth}\binits{K.}}
(\byear{2015}).
\btitle{Splitting methods for convex clustering}.
\bjournal{J. Comput. Graph. Stat.}
\bvolume{24}
\bpages{994-1013}.
\end{barticle}
\endbibitem

\bibitem{chiang_etal.2006}
\begin{barticle}[author]
\bauthor{\bsnm{Chiang},~\bfnm{Annie~P.}\binits{A.~P.}},
  \bauthor{\bsnm{Beck},~\bfnm{John~S.}\binits{J.~S.}},
  \bauthor{\bsnm{Yen},~\bfnm{Hsan-Jan}\binits{H.-J.}},
  \bauthor{\bsnm{Tayeh},~\bfnm{Marwan~K.}\binits{M.~K.}},
  \bauthor{\bsnm{Scheetz},~\bfnm{Todd~E.}\binits{T.~E.}},
  \bauthor{\bsnm{Swiderski},~\bfnm{Ruth~E.}\binits{R.~E.}},
  \bauthor{\bsnm{Nishimura},~\bfnm{Darryl~Y.}\binits{D.~Y.}},
  \bauthor{\bsnm{Braun},~\bfnm{Terry~A.}\binits{T.~A.}},
  \bauthor{\bsnm{Kim},~\bfnm{Kwang-Youn~A.}\binits{K.-Y.~A.}},
  \bauthor{\bsnm{Huang},~\bfnm{Jian}\binits{J.}},
  \bauthor{\bsnm{Elbedour},~\bfnm{Khalil}\binits{K.}},
  \bauthor{\bsnm{Carmi},~\bfnm{Rivka}\binits{R.}},
  \bauthor{\bsnm{Slusarski},~\bfnm{Diane~C.}\binits{D.~C.}},
  \bauthor{\bsnm{Casavant},~\bfnm{Thomas~L.}\binits{T.~L.}},
  \bauthor{\bsnm{Stone},~\bfnm{Edwin~M.}\binits{E.~M.}} \AND
  \bauthor{\bsnm{Sheffield},~\bfnm{Val~C.}\binits{V.~C.}}
(\byear{2006}).
\btitle{Homozygosity mapping with SNP arrays identifies {TRIM32}, an {E3}
  ubiquitin ligase, as a {B}ardet-{B}iedl syndrome gene ({BBS11})}.
\bjournal{Proc. Natl. Acad. Sci. U.S.A.}
\bvolume{103}
\bpages{6287-6292}.
\end{barticle}
\endbibitem

\bibitem{chipman_etal.2001}
\begin{bincollection}[author]
\bauthor{\bsnm{Chipman},~\bfnm{Hugh}\binits{H.}},
  \bauthor{\bsnm{George},~\bfnm{Edward~I.}\binits{E.~I.}} \AND
  \bauthor{\bsnm{McCulloch},~\bfnm{Robert~E.}\binits{R.~E.}}
(\byear{2001}).
\btitle{The practical implementation of {B}ayesian model selection}.
In \bbooktitle{Model selection}.
\bseries{IMS Lecture Notes Monogr. Ser.}
\bvolume{38}
\bpages{65--134}.
\bpublisher{Inst. Math. Statist., Beachwood, OH}.
\bmrnumber{2000752}
\end{bincollection}
\endbibitem

\bibitem{chipman_etal.2010}
\begin{barticle}[author]
\bauthor{\bsnm{Chipman},~\bfnm{Hugh~A.}\binits{H.~A.}},
  \bauthor{\bsnm{George},~\bfnm{Edward~I.}\binits{E.~I.}} \AND
  \bauthor{\bsnm{McCulloch},~\bfnm{Robert~E.}\binits{R.~E.}}
(\byear{2010}).
\btitle{B{ART}: {B}ayesian additive regression trees}.
\bjournal{Ann. Appl. Stat.}
\bvolume{4}
\bpages{266--298}.
\bmrnumber{2758172}
\end{barticle}
\endbibitem

\bibitem{colombo.2014}
\begin{barticle}[author]
\bauthor{\bsnm{Colombo},~\bfnm{Diego}\binits{D.}} \AND
  \bauthor{\bsnm{Maathuis},~\bfnm{Marloes~H.}\binits{M.~H.}}
(\byear{2014}).
\btitle{Order-independent constraint-based causal structure learning}.
\bjournal{J. Mach. Learn. Res.}
\bvolume{15}
\bpages{3921-3962}.
\end{barticle}
\endbibitem

\bibitem{cui_george.2008}
\begin{barticle}[author]
\bauthor{\bsnm{Cui},~\bfnm{Wen}\binits{W.}} \AND
  \bauthor{\bsnm{George},~\bfnm{Edward~I.}\binits{E.~I.}}
(\byear{2008}).
\btitle{Empirical {B}ayes vs. fully {B}ayes variable selection}.
\bjournal{J. Statist. Plann. Inference}
\bvolume{138}
\bpages{888--900}.
\bmrnumber{2416869}
\end{barticle}
\endbibitem

\bibitem{durbin_koopman.2012}
\begin{bbook}[author]
\bauthor{\bsnm{Durbin},~\bfnm{J.}\binits{J.}} \AND
  \bauthor{\bsnm{Koopman},~\bfnm{S.~J.}\binits{S.~J.}}
(\byear{2012}).
\btitle{Time {S}eries {A}nalysis by {S}tate {S}pace {M}ethods},
\bedition{second} ed.
\bseries{Oxford Statistical Science Series}
\bvolume{38}.
\bpublisher{Oxford University Press, Oxford}.
\bmrnumber{3014996}
\end{bbook}
\endbibitem

\bibitem{efron_etal.2004}
\begin{barticle}[author]
\bauthor{\bsnm{Efron},~\bfnm{Bradley}\binits{B.}},
  \bauthor{\bsnm{Hastie},~\bfnm{Trevor}\binits{T.}},
  \bauthor{\bsnm{Johnstone},~\bfnm{Iain}\binits{I.}} \AND
  \bauthor{\bsnm{Tibshirani},~\bfnm{Robert}\binits{R.}}
(\byear{2004}).
\btitle{Least angle regression}.
\bjournal{Ann. Stat}
\bvolume{32}
\bpages{407--499}.
\bmrnumber{2060166}
\end{barticle}
\endbibitem

\bibitem{eilers_marx.2021}
\begin{bbook}[author]
\bauthor{\bsnm{Eilers},~\bfnm{P.~H.~C.}\binits{P.~H.~C.}} \AND
  \bauthor{\bsnm{Marx},~\bfnm{B.~D.}\binits{B.~D.}}
(\byear{2021}).
\btitle{Practical Smoothing: The Joys of P-splines}.
\bpublisher{Cambridge University Press}.
\end{bbook}
\endbibitem

\bibitem{friedman_etal.2007}
\begin{barticle}[author]
\bauthor{\bsnm{Friedman},~\bfnm{Jerome}\binits{J.}},
  \bauthor{\bsnm{Hastie},~\bfnm{Trevor}\binits{T.}} \AND
  \bauthor{\bsnm{Tibshirani},~\bfnm{Robert}\binits{R.}}
(\byear{2007}).
\btitle{{Sparse inverse covariance estimation with the graphical lasso}}.
\bjournal{Biostatistics}
\bvolume{9}
\bpages{432-441}.
\end{barticle}
\endbibitem

\bibitem{george_foster.2000}
\begin{barticle}[author]
\bauthor{\bsnm{George},~\bfnm{EdwardI.}\binits{E.}} \AND
  \bauthor{\bsnm{Foster},~\bfnm{Dean~P.}\binits{D.~P.}}
(\byear{2000}).
\btitle{{Calibration and empirical Bayes variable selection}}.
\bjournal{Biometrika}
\bvolume{87}
\bpages{731-747}.
\end{barticle}
\endbibitem

\bibitem{george_mcculloch.1993}
\begin{barticle}[author]
\bauthor{\bsnm{George},~\bfnm{Edward~I.}\binits{E.~I.}} \AND
  \bauthor{\bsnm{McCulloch},~\bfnm{Robert~E.}\binits{R.~E.}}
(\byear{1993}).
\btitle{Variable selection via gibbs sampling}.
\bjournal{J. Am. Stat. Assoc.}
\bvolume{88}
\bpages{881-889}.
\end{barticle}
\endbibitem

\bibitem{george_mcculloch.1997}
\begin{barticle}[author]
\bauthor{\bsnm{George},~\bfnm{Edward~I.}\binits{E.~I.}} \AND
  \bauthor{\bsnm{McCulloch},~\bfnm{Robert~E.}\binits{R.~E.}}
(\byear{1997}).
\btitle{Approaches for Bayesian variable selection}.
\bjournal{Stat. Sin.}
\bvolume{7}
\bpages{339--373}.
\end{barticle}
\endbibitem

\bibitem{ghosh_clyde.2011}
\begin{barticle}[author]
\bauthor{\bsnm{Ghosh},~\bfnm{Joyee}\binits{J.}} \AND
  \bauthor{\bsnm{Clyde},~\bfnm{Merlise~A.}\binits{M.~A.}}
(\byear{2011}).
\btitle{Rao-{B}lackwellization for {B}ayesian variable selection and model
  averaging in linear and binary regression: a novel data augmentation
  approach}.
\bjournal{J. Am. Stat. Assoc.}
\bvolume{106}
\bpages{1041--1052}.
\bmrnumber{2894762}
\end{barticle}
\endbibitem

\bibitem{gneiting_raftery.2007}
\begin{barticle}[author]
\bauthor{\bsnm{Gneiting},~\bfnm{Tilmann}\binits{T.}} \AND
  \bauthor{\bsnm{Raftery},~\bfnm{Adrian~E}\binits{A.~E.}}
(\byear{2007}).
\btitle{Strictly proper scoring rules, prediction, and estimation}.
\bjournal{J. Amer. Statist. Assoc.}
\bvolume{102}
\bpages{359--378}.
\end{barticle}
\endbibitem

\bibitem{golub_van_loan.2013}
\begin{bbook}[author]
\bauthor{\bsnm{Golub},~\bfnm{Gene~H.}\binits{G.~H.}} \AND
  \bauthor{\bsnm{Van~Loan},~\bfnm{Charles~F.}\binits{C.~F.}}
(\byear{2013}).
\btitle{Matrix {C}omputations},
\bedition{Fourth} ed.
\bpublisher{Johns Hopkins University Press, Baltimore, MD}.
\bmrnumber{3024913}
\end{bbook}
\endbibitem

\bibitem{green.1995}
\begin{barticle}[author]
\bauthor{\bsnm{Green},~\bfnm{Peter.~J.}\binits{P.~J.}}
(\byear{1995}).
\btitle{Reversible jump Markov chain Monte Carlo computation and Bayesian model
  determination}.
\bjournal{Biometrika}
\bvolume{82}
\bpages{711-732}.
\end{barticle}
\endbibitem

\bibitem{hager.1989}
\begin{barticle}[author]
\bauthor{\bsnm{Hager},~\bfnm{William~W.}\binits{W.~W.}}
(\byear{1989}).
\btitle{Updating the inverse of a matrix}.
\bjournal{SIAM Review}
\bvolume{31}
\bpages{221--239}.
\bmrnumber{997457}
\end{barticle}
\endbibitem

\bibitem{hastie_etal.2009}
\begin{bbook}[author]
\bauthor{\bsnm{Hastie},~\bfnm{Trevor}\binits{T.}},
  \bauthor{\bsnm{Tibshirani},~\bfnm{Robert}\binits{R.}} \AND
  \bauthor{\bsnm{Friedman},~\bfnm{Jerome}\binits{J.}}
(\byear{2009}).
\btitle{The {E}lements of {S}tatistical {L}earning. Data mining, inference, and
  prediction},
\bedition{Second} ed.
\bpublisher{Springer, New York}.
\bmrnumber{2722294}
\end{bbook}
\endbibitem

\bibitem{hastie_etal.2020}
\begin{barticle}[author]
\bauthor{\bsnm{Hastie},~\bfnm{Trevor}\binits{T.}},
  \bauthor{\bsnm{Tibshirani},~\bfnm{Robert}\binits{R.}} \AND
  \bauthor{\bsnm{Tibshirani},~\bfnm{Ryan}\binits{R.}}
(\byear{2020}).
\btitle{Best subset, forward stepwise or lasso? {A}nalysis and recommendations
  based on extensive comparisons}.
\bjournal{Statist. Sci.}
\bvolume{35}
\bpages{579--592}.
\end{barticle}
\endbibitem

\bibitem{jansen.2015}
\begin{barticle}[author]
\bauthor{\bsnm{Jansen},~\bfnm{Maarten}\binits{M.}}
(\byear{2015}).
\btitle{Generalized cross validation in variable selection with and without
  shrinkage}.
\bjournal{J. Statist. Plann. Inference}
\bvolume{159}
\bpages{90-104}.
\end{barticle}
\endbibitem

\bibitem{johnson_rossell.2012}
\begin{barticle}[author]
\bauthor{\bsnm{Johnson},~\bfnm{Valen~E.}\binits{V.~E.}} \AND
  \bauthor{\bsnm{Rossell},~\bfnm{David}\binits{D.}}
(\byear{2012}).
\btitle{Bayesian model selection in high-dimensional settings}.
\bjournal{J. Am. Stat. Assoc.}
\bvolume{107}
\bpages{649--660}.
\bmrnumber{2980074}
\end{barticle}
\endbibitem

\bibitem{jolliffe_cadima.2016}
\begin{barticle}[author]
\bauthor{\bsnm{Jolliffe},~\bfnm{Ian~T.}\binits{I.~T.}} \AND
  \bauthor{\bsnm{Cadima},~\bfnm{Jorge}\binits{J.}}
(\byear{2016}).
\btitle{Principal component analysis: a review and recent developments}.
\bjournal{Philos. Trans. R. Soc. A-Math. Phys. Eng. Sci.}
\bvolume{374}
\bpages{20150202}.
\end{barticle}
\endbibitem

\bibitem{ko_etal.2022}
\begin{barticle}[author]
\bauthor{\bsnm{Ko},~\bfnm{Seyoon}\binits{S.}},
  \bauthor{\bsnm{Zhou},~\bfnm{Hua}\binits{H.}},
  \bauthor{\bsnm{Zhou},~\bfnm{Jin~J.}\binits{J.~J.}} \AND
  \bauthor{\bsnm{Won},~\bfnm{Joong-Ho}\binits{J.-H.}}
(\byear{2022}).
\btitle{{High-performance statistical computing in the computing environments
  of the 2020s}}.
\bjournal{Stat. Sci.}
\bvolume{37}
\bpages{494 -- 518}.
\end{barticle}
\endbibitem

\bibitem{konishi_kitagawa.2008}
\begin{bbook}[author]
\bauthor{\bsnm{Konishi},~\bfnm{Sadanori}\binits{S.}} \AND
  \bauthor{\bsnm{Kitagawa},~\bfnm{Genshiro}\binits{G.}}
(\byear{2008}).
\btitle{Information criteria and statistical modeling}.
\bseries{Springer Series in Statistics}.
\bpublisher{Springer, New York}.
\bmrnumber{2367855}
\end{bbook}
\endbibitem

\bibitem{kontoghiorghes.2006}
\begin{bbook}[author]
\beditor{\bsnm{Kontoghiorghes},~\bfnm{Erricos~John}\binits{E.~J.}}, ed.
(\byear{2006}).
\btitle{Handbook of {P}arallel {C}omputing and {S}tatistics}.
\bseries{Statistics: Textbooks and Monographs}
\bvolume{184}.
\bpublisher{Chapman \& Hall/CRC, Boca Raton, FL}.
\bmrnumber{2265409}
\end{bbook}
\endbibitem

\bibitem{lamnisos_etal.2012}
\begin{barticle}[author]
\bauthor{\bsnm{Lamnisos},~\bfnm{D.}\binits{D.}},
  \bauthor{\bsnm{Griffin},~\bfnm{J.~E.}\binits{J.~E.}} \AND
  \bauthor{\bsnm{Steel},~\bfnm{M.~F.~J.}\binits{M.~F.~J.}}
(\byear{2012}).
\btitle{Cross-validation prior choice in Bayesian probit regression with many
  covariates}.
\bjournal{Stat. Comput.}
\bvolume{22}
\bpages{359-373}.
\end{barticle}
\endbibitem

\bibitem{lang.2004}
\begin{barticle}[author]
\bauthor{\bsnm{Lang},~\bfnm{Stefan}\binits{S.}} \AND
  \bauthor{\bsnm{Brezger},~\bfnm{Andreas}\binits{A.}}
(\byear{2004}).
\btitle{Bayesian P-Splines}.
\bjournal{J. Comput. Graph. Stat.}
\bvolume{13}
\bpages{183--212}.
\end{barticle}
\endbibitem

\bibitem{lange.2016}
\begin{bbook}[author]
\bauthor{\bsnm{Lange},~\bfnm{Kenneth}\binits{K.}}
(\byear{2016}).
\btitle{M{M} optimization algorithms}.
\bpublisher{Society for Industrial and Applied Mathematics, Philadelphia, PA}.
\bmrnumber{3522165}
\end{bbook}
\endbibitem

\bibitem{liang_etal.2008}
\begin{barticle}[author]
\bauthor{\bsnm{Liang},~\bfnm{Feng}\binits{F.}},
  \bauthor{\bsnm{Paulo},~\bfnm{Rui}\binits{R.}},
  \bauthor{\bsnm{Molina},~\bfnm{German}\binits{G.}},
  \bauthor{\bsnm{Clyde},~\bfnm{Merlise~A.}\binits{M.~A.}} \AND
  \bauthor{\bsnm{Berger},~\bfnm{Jim~O.}\binits{J.~O.}}
(\byear{2008}).
\btitle{Mixtures of {$g$} priors for {B}ayesian variable selection}.
\bjournal{J. Amer. Statist. Assoc.}
\bvolume{103}
\bpages{410--423}.
\bmrnumber{2420243}
\end{barticle}
\endbibitem

\bibitem{lu_etal.2015}
\begin{barticle}[author]
\bauthor{\bsnm{Lu},~\bfnm{Haibo}\binits{H.}},
  \bauthor{\bsnm{Wang},~\bfnm{Zhanfeng}\binits{Z.}} \AND
  \bauthor{\bsnm{Wu},~\bfnm{Yaohua}\binits{Y.}}
(\byear{2015}).
\btitle{Sequential estimate for generalized linear models with uncertain number
  of effective variables}.
\bjournal{J. Syst. Sci. Complex}
\bvolume{28}
\bpages{424-438}.
\end{barticle}
\endbibitem

\bibitem{luo_song.2020}
\begin{barticle}[author]
\bauthor{\bsnm{Luo},~\bfnm{Lan}\binits{L.}} \AND
  \bauthor{\bsnm{Song},~\bfnm{Peter X.~K.}\binits{P.~X.~K.}}
(\byear{2020}).
\btitle{Renewable estimation and incremental inference in generalized linear
  models with streaming data sets}.
\bjournal{J. R. Stat. Soc. Ser. B Stat. Methodol.}
\bvolume{82}
\bpages{69-97}.
\end{barticle}
\endbibitem

\bibitem{monahan.2001}
\begin{bbook}[author]
\bauthor{\bsnm{Monahan},~\bfnm{John~F.}\binits{J.~F.}}
(\byear{2001}).
\btitle{Numerical {M}ethods of {S}tatistics}.
\bseries{Cambridge Series in Statistical and Probabilistic Mathematics}
\bvolume{7}.
\bpublisher{Cambridge University Press, Cambridge}.
\bmrnumber{1813549}
\end{bbook}
\endbibitem

\bibitem{narisetty_he.2014}
\begin{barticle}[author]
\bauthor{\bsnm{Narisetty},~\bfnm{Naveen~Naidu}\binits{N.~N.}} \AND
  \bauthor{\bsnm{He},~\bfnm{Xuming}\binits{X.}}
(\byear{2014}).
\btitle{Bayesian variable selection with shrinking and diffusing priors}.
\bjournal{Ann. Stat}
\bvolume{42}
\bpages{789--817}.
\bmrnumber{3210987}
\end{barticle}
\endbibitem

\bibitem{rue_held.2005}
\begin{bbook}[author]
\bauthor{\bsnm{Rue},~\bfnm{H\aa~vard}\binits{H.~v.}} \AND
  \bauthor{\bsnm{Held},~\bfnm{Leonhard}\binits{L.}}
(\byear{2005}).
\btitle{Gaussian {M}arkov {R}andom {F}ields}.
\bseries{Monographs on Statistics and Applied Probability}
\bvolume{104}.
\bpublisher{Chapman \& Hall/CRC, Boca Raton, FL}
\bnote{Theory and applications}.
\bmrnumber{2130347}
\end{bbook}
\endbibitem

\bibitem{ruppert_etal.2003}
\begin{bbook}[author]
\bauthor{\bsnm{Ruppert},~\bfnm{David}\binits{D.}},
  \bauthor{\bsnm{Wand},~\bfnm{M.~P.}\binits{M.~P.}} \AND
  \bauthor{\bsnm{Carroll},~\bfnm{R.~J.}\binits{R.~J.}}
(\byear{2003}).
\btitle{Semiparametric {R}egression}.
\bseries{Cambridge Series in Statistical and Probabilistic Mathematics}
\bvolume{12}.
\bpublisher{Cambridge University Press, Cambridge}.
\bmrnumber{1998720}
\end{bbook}
\endbibitem

\bibitem{sanghavi_etal.2007}
\begin{binproceedings}[author]
\bauthor{\bsnm{Sanghavi},~\bfnm{Sujay}\binits{S.}},
  \bauthor{\bsnm{Tan},~\bfnm{Vincent}\binits{V.}} \AND
  \bauthor{\bsnm{Willsky},~\bfnm{Alan}\binits{A.}}
(\byear{2007}).
\btitle{Learning Graphical Models for Hypothesis Testing}.
In \bbooktitle{2007 IEEE/SP 14th Workshop on Statistical Signal Processing}
\bpages{69-73}.
\end{binproceedings}
\endbibitem

\bibitem{scheetz_etal.2006}
\begin{barticle}[author]
\bauthor{\bsnm{Scheetz},~\bfnm{Todd~E.}\binits{T.~E.}},
  \bauthor{\bsnm{Kim},~\bfnm{Kwang-Youn~A.}\binits{K.-Y.~A.}},
  \bauthor{\bsnm{Swiderski},~\bfnm{Ruth~E.}\binits{R.~E.}},
  \bauthor{\bsnm{Philp},~\bfnm{Alisdair~R.}\binits{A.~R.}},
  \bauthor{\bsnm{Braun},~\bfnm{Terry~A.}\binits{T.~A.}},
  \bauthor{\bsnm{Knudtson},~\bfnm{Kevin~L.}\binits{K.~L.}},
  \bauthor{\bsnm{Dorrance},~\bfnm{Anne~M.}\binits{A.~M.}},
  \bauthor{\bsnm{DiBona},~\bfnm{Gerald~F.}\binits{G.~F.}},
  \bauthor{\bsnm{Huang},~\bfnm{Jian}\binits{J.}},
  \bauthor{\bsnm{Casavant},~\bfnm{Thomas~L.}\binits{T.~L.}},
  \bauthor{\bsnm{Sheffield},~\bfnm{Val~C.}\binits{V.~C.}} \AND
  \bauthor{\bsnm{Stone},~\bfnm{Edwin~M.}\binits{E.~M.}}
(\byear{2006}).
\btitle{Regulation of gene expression in the mammalian eye and its relevance to
  eye disease}.
\bjournal{Proc. Natl. Acad. Sci. USA}
\bvolume{103}
\bpages{14429-14434}.
\end{barticle}
\endbibitem

\bibitem{scott_berger.2010}
\begin{barticle}[author]
\bauthor{\bsnm{Scott},~\bfnm{James~G.}\binits{J.~G.}} \AND
  \bauthor{\bsnm{Berger},~\bfnm{James~O.}\binits{J.~O.}}
(\byear{2010}).
\btitle{Bayes and empirical-{B}ayes multiplicity adjustment in the
  variable-selection problem}.
\bjournal{Ann. Stat}
\bvolume{38}
\bpages{2587--2619}.
\bmrnumber{2722450}
\end{barticle}
\endbibitem

\bibitem{scott.2023}
\begin{bmanual}[author]
\bauthor{\bsnm{Scott},~\bfnm{Steven~L.}\binits{S.~L.}}
(\byear{2023}).
\btitle{BoomSpikeSlab: MCMC for Spike and Slab Regression}
\bpublisher{CRAN}
\bnote{{R} package version 1.2.6}.
\end{bmanual}
\endbibitem

\bibitem{sturm_christensen.2012}
\begin{binproceedings}[author]
\bauthor{\bsnm{Sturm},~\bfnm{Bob~L.}\binits{B.~L.}} \AND
  \bauthor{\bsnm{Christensen},~\bfnm{Mads~Græsbøll}\binits{M.~G.}}
(\byear{2012}).
\btitle{Comparison of orthogonal matching pursuit implementations}.
In \bbooktitle{2012 Proceedings of the 20th European Signal Processing
  Conference (EUSIPCO)}
\bpages{220-224}.
\end{binproceedings}
\endbibitem

\bibitem{sun_etal.2017}
\begin{barticle}[author]
\bauthor{\bsnm{Sun},~\bfnm{Ying}\binits{Y.}},
  \bauthor{\bsnm{Babu},~\bfnm{Prabhu}\binits{P.}} \AND
  \bauthor{\bsnm{Palomar},~\bfnm{Daniel~P.}\binits{D.~P.}}
(\byear{2017}).
\btitle{Majorization-minimization algorithms in signal processing,
  communications, and machine learning}.
\bjournal{IEEE Trans. Signal Process.}
\bvolume{65}
\bpages{794--816}.
\bmrnumber{3580079}
\end{barticle}
\endbibitem

\bibitem{tracy.2022}
\begin{bmisc}[author]
\bauthor{\bsnm{Tracy},~\bfnm{Kevin}\binits{K.}}
(\byear{2022}).
\btitle{A Square-Root Kalman Filter Using Only QR Decompositions}.
\end{bmisc}
\endbibitem

\bibitem{van_loan.1997}
\begin{bbook}[author]
\bauthor{\bsnm{Van~Loan},~\bfnm{Charles~F.}\binits{C.~F.}}
(\byear{1997}).
\btitle{Introduction to Scientific Computing: A Matrix-vector Approach Using
  MATLAB}.
\bpublisher{Prentice Hall}.
\end{bbook}
\endbibitem

\bibitem{van_wieringen_binder.2022}
\begin{barticle}[author]
\bauthor{\bparticle{van} \bsnm{Wieringen},~\bfnm{Wessel~N.}\binits{W.~N.}} \AND
  \bauthor{\bsnm{Binder},~\bfnm{Harald}\binits{H.}}
(\byear{2022}).
\btitle{Sequential learning of regression models by penalized estimation}.
\bjournal{J. Comput. Graph. Stat.}
\bvolume{31}
\bpages{877--886}.
\end{barticle}
\endbibitem

\bibitem{wei_etal.2020}
\begin{barticle}[author]
\bauthor{\bsnm{Wei},~\bfnm{Wei}\binits{W.}},
  \bauthor{\bsnm{Dai},~\bfnm{Hua}\binits{H.}} \AND
  \bauthor{\bsnm{Liang},~\bfnm{Weitai}\binits{W.}}
(\byear{2020}).
\btitle{Block updating/downdating algorithms for regularised least squares
  problems and applications to linear discriminant analysis}.
\bjournal{East Asian J. Appl. Math.}
\bvolume{10}
\bpages{679--697}.
\bmrnumber{4146028}
\end{barticle}
\endbibitem

\bibitem{wood.2017}
\begin{bbook}[author]
\bauthor{\bsnm{Wood},~\bfnm{Simon~N.}\binits{S.~N.}}
(\byear{2017}).
\btitle{Generalized Additive Models: An introduction with R}.
\bpublisher{CRC Press, Boca Raton, FL}.
\bmrnumber{3726911}
\end{bbook}
\endbibitem

\bibitem{wood_etal.2016}
\begin{barticle}[author]
\bauthor{\bsnm{Wood},~\bfnm{Simon~N.}\binits{S.~N.}},
  \bauthor{\bsnm{Pya},~\bfnm{Natalya}\binits{N.}} \AND
  \bauthor{\bsnm{Säfken},~\bfnm{Benjamin}\binits{B.}}
(\byear{2016}).
\btitle{Smoothing parameter and model selection for general smooth models}.
\bjournal{J. Am. Stat. Assoc.}
\bvolume{111}
\bpages{1548--1563}.
\end{barticle}
\endbibitem

\bibitem{yanev_kontoghiorghes.2004}
\begin{barticle}[author]
\bauthor{\bsnm{Yanev},~\bfnm{Petko}\binits{P.}} \AND
  \bauthor{\bsnm{Kontoghiorghes},~\bfnm{Erricos~J.}\binits{E.~J.}}
(\byear{2004}).
\btitle{Efficient algorithms for block downdating of least squares solutions}.
\bjournal{Appl. Numer. Math.}
\bvolume{49}
\bpages{3-15}.
\end{barticle}
\endbibitem

\bibitem{yao_etal.1999}
\begin{barticle}[author]
\bauthor{\bsnm{Yao},~\bfnm{D.~D.}\binits{D.~D.}} \AND
  \bauthor{\bsnm{Zheng},~\bfnm{S.}\binits{S.}}
(\byear{1999}).
\btitle{Sequential quality control in batch manufacturing}.
\bjournal{Ann. Oper. Res.}
\bvolume{87}
\bpages{3-30}.
\end{barticle}
\endbibitem

\bibitem{zellner.1962}
\begin{barticle}[author]
\bauthor{\bsnm{Zellner},~\bfnm{Arnold}\binits{A.}}
(\byear{1962}).
\btitle{An efficient method of estimating seemingly unrelated regressions and
  tests for aggregation bias}.
\bjournal{J. Am. Stat. Assoc.}
\bvolume{57}
\bpages{348--368}.
\bmrnumber{0139235}
\end{barticle}
\endbibitem

\bibitem{zeugner_feldkircher.2015}
\begin{barticle}[author]
\bauthor{\bsnm{Zeugner},~\bfnm{Stefan}\binits{S.}} \AND
  \bauthor{\bsnm{Feldkircher},~\bfnm{Martin}\binits{M.}}
(\byear{2015}).
\btitle{Bayesian model averaging employing fixed and flexible priors: the BMS
  package for {R}}.
\bjournal{J. Stat. Softw.}
\bvolume{68}
\bpages{1–37}.
\end{barticle}
\endbibitem

\bibitem{zhao_etal.2013}
\begin{barticle}[author]
\bauthor{\bsnm{Zhao},~\bfnm{Yichuan}\binits{Y.}},
  \bauthor{\bsnm{Chen},~\bfnm{Zhengjia}\binits{Z.}},
  \bauthor{\bsnm{Huang},~\bfnm{Xuelin}\binits{X.}} \AND
  \bauthor{\bsnm{Tighiouart},~\bfnm{Mourad}\binits{M.}}
(\byear{2013}).
\btitle{Adaptive and Sequential Methods for Clinical Trials}.
\bjournal{J. Probab. Stat.}
\bvolume{2013}
\bpages{386058}.
\end{barticle}
\endbibitem

\end{thebibliography}

\end{document}